%% file: main.tex
\documentclass[acmsmall,nonacm]{acmart}

\usepackage{preamble}
\input{macros.tex}

\input{sections/figures}
\acmJournal{PACMPL}
\acmYear{2027}
\acmVolume{1}
\acmNumber{POPL}
\acmArticle{1}
\acmMonth{1}
\setcopyright{none}
\Crefname{section}{\S}{\SS}
\Crefname{subsection}{\S}{\SS}
\Crefname{subsubsection}{\S}{\SS}
\Crefname{figure}{Fig.}{Fig.}
\crefname{figure}{Fig.}{Fig.}
\Crefname{equation}{Eq.}{Eq.}

\begin{document}

% \title{\sys: Foundational Constraint Solving for Expressive Refinement Typing}
\title{\Large{Foundational Constraint Solving for Expressive Refinement Typing}}

\author{Jam Kabeer Ali Khan}
% \authornote{
% The author is employed by Standard Chartered Bank and
% has contributed to this paper in a personal capacity.
% Standard Chartered Bank does not accept liability for its
% content.
% Views expressed in this paper do not necessarily represent
% the views of Standard Chartered Bank.
% }
\orcid{0009-0007-8505-9667}
\affiliation{
  \institution{Independent}
  % \city{Anonymous City}
  \country{Hong Kong SAR}
}
\email{jamkhan@connect.hku.hk}

\author{Petros Markopoulos}
\orcid{0009-0009-0370-5712}
\affiliation{
  \institution{University of California San Diego}
  \city{San Diego}
  \country{USA}
}
\email{pmarkopoulos@ucsd.edu}

\author{Nico Lehmann}
\orcid{0009-0003-6838-3714}
\affiliation{
  \institution{University of Chile}
  \city{Santiago}
  \country{Chile}
}
\email{nlehmann@dcc.uchile.cl}

\author{Ranjit Jhala}
\authornote{Corresponding author.}
\orcid{0000-0002-1802-9421}
\affiliation{
  \institution{University of California San Diego}
  \city{San Diego}
  \country{USA}
}
\email{rjhala@ucsd.edu}

\input{sections/abstract}

\keywords{Constraint Horn Clauses, Theorem Provers, Refinement Types, Verification}

\maketitle

\input{sections/01-intro}
\input{sections/02-overview}
\input{sections/03-constraints}
\input{sections/04-solvers}
\input{sections/05-verifiers}
\input{sections/07-evaluation}
\input{sections/08-related}

\bibliographystyle{ACM-Reference-Format}
\bibliography{references}

\newpage
\appendix
\input{sections/A-zap-proofs.tex}

\input{sections/B-fix-proofs.tex}
\input{sections/C-stlc-rules.tex}
\input{sections/D-imp-rules.tex}

\end{document}

%% file: macros.tex
\definecolor{RJcolor}{HTML}{0F766E}    % teal   (was red — red now reserved for TODO)
\definecolor{JKcolor}{HTML}{1F6FEB}    % blue
\definecolor{PMcolor}{HTML}{8E44AD}    % purple
\definecolor{NLcolor}{HTML}{1E8449}    % green
\definecolor{TODOcolor}{HTML}{C0392B}  % red — reserved for TODOs

\newcommand{\lrk}{\ensuremath{\lambda_{RK}}\xspace}
\newcommand{\imp}{\textsc{Imp}\xspace}

\newcommand{\liquidfix}{\textsc{LiquidFixpoint}}

%% file: sections/figures.tex
\newcommand{\figcstrsyntax}{%
\begin{figure}[t]
\centering
\[
\begin{array}{l@{\quad}r@{\quad}r@{\quad}l@{\quad}l}
\multicolumn{5}{c}{
  \textbf{\textit{Lean Term}} \quad \leanterm,\termprop \in  \Terms
  \quad \quad \quad
  \textbf{\textit{Variables}} \quad x,h \in  \Vars
  \quad \quad \quad
  \textbf{\textit{$\kvar$-variable}} \quad \kvar \in  \kvars\subset \Vars
}\\[0.5em]

\textbf{\textit{Sort}}              & b         & ::=  & \multicolumn{2}{@{}l}{\Prop \mid \Int \mid \Nat \mid
                                                                               \Bool \mid \BitVec{n} \mid
                                                                               \cdots} \\[0.5em]
\textbf{\textit{Atoms}}             & \pred  & ::=  & \termprop                   & \textit{$\kvar$-free atom}\\
                                    &        & \mid & \kvapp{\kvar}{\overline{t}} & \textit{horn application}\\[0.5em]

\textbf{\textit{Constraints}}       & \cstr  & ::=  & \pred                       & \textit{head}\\
                                    &        & \mid & \cAnd{\cstr}{\cstr}         & \textit{conjunction}\\
                                    &        & \mid & \cImp{\pred}{\cstr}         & \textit{implication}\\
                                    &        & \mid & \cAll{x}{b}{\cstr}          & \textit{quantification}\\[0.5em]

\textbf{\textit{Closed Constraint}} & \Cstr  & ::=  & \exists \overline{\kappa}. \cstr & \\[0.5em]
\textbf{\textit{Proof term}} & \pf & ::= &
                    \multicolumn{2}{@{}l} {\leanterm \mid \lambda x.\,\pf \mid
                                           \pf~\pf \mid
                                           \langle \pf,\, \pf\rangle \mid
                                           \pf.1 \mid \pf.2 \mid \mathsf{inl}\,\pf \mid
                                           \mathsf{inr}\,\pf } \\[0.2em]
    &      & \mid & \langle \leanterm,\, \pf\rangle \mid
             \letin{\langle x,\, \hyp\rangle = \pf}{\pf} \mid
             \rfl \mid \langle\rangle \\
\end{array}
\]
\caption{Grammar of the CHC fragment of Lean 4 \texttt{Expr} at sort \Prop, together with the proof-term fragment emitted by our solvers.}
\label{fig:chc-grammar}
\label{fig:chc-syntax}
\label{fig:zap-proofterms}
\end{figure}
}

\newcommand{\figscope}{
\begin{figure}[t]
\small
\begin{minipage}[t]{0.40\textwidth}
\vspace{0pt}
$\begin{array}{@{}l@{\;\;}c@{\;\;}l@{}}
\toprule
\multicolumn{3}{@{}l}{\Scopename \;:\; (\kvars \times \Cstrs) \to \PrefixTy} \\[0.1em]
\midrule
\Scope{\kvar}{\cAll{x}{b}{\cstr}}                     & \doteq & \pBind{x}{b}  \cdot \Scope{\kvar}{\cstr}   \\
\Scope{\kvar}{\cAnd{\cstr_1}{\cstr_2}} & & \\
\quad\mid\; \kvar \in \cstr_1,\; \kvar \notin \cstr_2 & \doteq & \pLeft  \cdot \Scope{\kvar}{\cstr_1} \\
\quad\mid\; \kvar \notin \cstr_1,\; \kvar \in \cstr_2 & \doteq & \pRight \cdot \Scope{\kvar}{\cstr_2} \\
\Scope{\kvar}{\cImp{\pred}{\cstr}}     &  &    \\
\quad\mid\; \kvar \notin \pred                        & \doteq & \pAsm{\pred}   \cdot \Scope{\kvar}{\cstr}   \\
\Scope{\kvar}{\cstr}                                  & \doteq & \varepsilon   \\
\bottomrule
\end{array}$
\end{minipage}
\begin{minipage}[t]{0.28\textwidth}
\vspace{0pt}
$\begin{array}{@{}l@{\;\;}c@{\;\;}l@{}}
\toprule
\multicolumn{3}{@{}l}{(\Headname) \;:\; (\Cstrs \times \PrefixTy) \to \Cstrs} \\[0.1em]
\midrule
\HeadOf{(\cAll{x}{b}{\cstr})}{\pBind{x}{b} \cdot \pfx}  & \doteq & \HeadOf{\cstr}{\pfx} \\
\HeadOf{(\cImp{\pred}{\cstr})}{\pAsm{\pred} \cdot \pfx} & \doteq & \HeadOf{\cstr}{\pfx} \\
\HeadOf{(\cAnd{\cstr_1}{\cstr_2})}{\pLeft \cdot \pfx}   & \doteq & \HeadOf{\cstr_1}{\pfx} \\
\HeadOf{(\cAnd{\cstr_1}{\cstr_2})}{\pRight \cdot \pfx}  & \doteq & \HeadOf{\cstr_2}{\pfx} \\
\HeadOf{\cstr}{\varepsilon}                             & \doteq & \cstr \\[2.5em]
\bottomrule
\end{array}$
\end{minipage}
\begin{minipage}[t]{0.30\textwidth}
\vspace{0pt}
$\begin{array}{@{}l@{\;\;}c@{\;\;}l@{}}
\toprule
\multicolumn{3}{@{}l}{(\Ctxname) \;:\; (\Cstrs \times \PrefixTy) \to \CtxTy} \\[0.1em]
\midrule
\CtxOf{(\cAll{x}{b}{\cstr})}{\pBind{x}{b} \cdot \pfx}       & \doteq & \cBind{x}{b}; \CtxOf{\cstr}{\pfx} \\
\CtxOf{(\cImp{\pred}{\cstr})}{\pAsm{\pred} \cdot \pfx}       & \doteq & \cBind{h}{\pred}; \CtxOf{\cstr}{\pfx} \\
\CtxOf{(\cAnd{\cstr_1}{\cstr_2})}{\pLeft \cdot \pfx}  & \doteq & \pLeft;  \CtxOf{\cstr_1}{\pfx} \\
\CtxOf{(\cAnd{\cstr_1}{\cstr_2})}{\pRight \cdot \pfx} & \doteq & \pRight; \CtxOf{\cstr_2}{\pfx} \\
\CtxOf{\cstr}{\varepsilon}                             & \doteq &             \varepsilon \\[2.5em]
\bottomrule
\end{array}$
\end{minipage}
\caption{
$\Scope{\kvar}{\cstr}$ is the scope of a $\kvar$ in $\cstr$;
$\HeadOf{\cstr}{\pfx}$ is the \emph{head of} $\cstr$ at $\pfx$ and
$\CtxOf{\cstr}{\pfx}$ is the \emph{context of} $\cstr$ at $\pfx$.
In \Ctxname, each guard hypothesis gets a fresh name $\hyp$.}
\label{fig:deps}
\label{fig:ctx}
\label{fig:head}
\label{fig:scope}
\end{figure}
}

\newcommand{\figzapelim}{
\begin{figure}[t]
\begin{minipage}[t]{0.50\textwidth}
\vspace{0pt}
\small
$\begin{array}{@{}l@{\;\;}c@{\;\;}l@{}}
\toprule
\multicolumn{3}{@{}l}{\Zapname :  \Cstrs \to (\Cstrs \times \ProofTy)} \\
\midrule
\Zap{\ecstr{\kvar}{\cstr}}   & \doteq & (\ecstr{\kcyc{\kvar}}{\cstr'}, \pf' \textcolor{gray}{: \ecstr{\kcyc{\kvar}}{\cstr'} \to \ecstr{\kvar}{\cstr}}) \\
\quad \kword{where}           &        & \\
\quad \quad (\kcycs{\kvar}, \kacys{\kvar}) & \doteq & \Partition{\overline{\kvar}}{\cstr} \\
\quad \quad (\cstr', \pf)    & \doteq & \Zaps{\ecstr{\kacy{\kvar}}{\cstr}} \\
\quad \quad \pf'  & \doteq & \cLam{\hyp_1}{\letin{\langle \kcycs{\kvar}, \hyp_2\rangle = \hyp_1}{}} \\
\quad \quad                  &        & \invisible{\lambda \hyp_1.} \letin{\langle \kacys{\kvar}, \hyp_3\rangle = \pf\ \hyp_2}{} \\
\quad \quad                  &        & \invisible{\lambda \hyp_1.} \langle \overline{\kvar},\, \hyp_3\rangle \\[0.5em]
\midrule
\multicolumn{3}{@{}l}{\Zapsname : \Cstrs \to (\Cstrs \times \ProofTy)} \\
\midrule
\Zaps{\exists \kvar.\, \cstr}  & \doteq & (\cstr_2, \pf_3 \textcolor{gray}{: \cstr_2 \to \exists \kvar.\,\cstr} ) \\
\quad \kword{where}             &        & \\
\quad \quad (\cstr_1, \pf_1)      & \doteq & \Zaps{\cstr} \\
\quad \quad (\cstr_2, \pf_2)    & \doteq & \Zapo{\kvar}{\cstr_1} \\
\quad \quad \pf_3  & \doteq & \cLam{\hyp_1}{\letin{\langle \kvar, \hyp_2\rangle = \pf_2\ \hyp_1}{}} \\
\quad \quad                    &        & \invisible{\lambda \hyp_1.}\langle \kvar,\, \pf_1\ \hyp_2\rangle \\
\Zaps{\cstr}                   & \doteq & (\cstr, \pfid) \\[0.1em]
\midrule
\multicolumn{3}{@{}l}{\Zaponame : (\kvars \times \Cstrs) \to (\Cstrs \times \ProofTy)} \\
\midrule
\Zapo{\kvar}{\cstr}                               & \doteq & (\cstr', \pf \textcolor{gray}{: \cstr' \to \exists\kvar.\,\cstr} ) \\
\quad \kword{where}                               &        & \\
\quad \quad (\asgn, \pfx, \cstr')                 & \doteq & \Elim{\kvar}{\cstr} \\
\quad \quad \pf  & \doteq & \cLam{\hyp}{\langle \asgn,\ \Cert{\varepsilon}{\hyp}{\cstr} \rangle}\\[1.1em]
\bottomrule
\end{array}$
\end{minipage}
\begin{minipage}[t]{0.49\textwidth}
\vspace{0pt}
\small
$\begin{array}{@{}l@{\;\;}c@{\;\;}l@{}}
\toprule
\multicolumn{3}{@{}l}{\Elimname : (\kvars \times \Cstrs) \to (\Preds \times \PrefixTy \times \Cstrs)} \\
\midrule
\Elim{\kvar}{\cstr}          & \doteq & (\asgn, \pfx, \cstr') \\
\quad \kword{where}           &        & \\
\quad \quad \pfx             & \doteq & \Scope{\kvar}{\cstr} \\
\quad \quad \asgn            & \doteq & \lambda \overline{z}.~\lambda \overline{x}.~\Sol{\kvar}{\HeadOf{\cstr}{\pfx}} \\
\quad \quad \cstr'           & \doteq & \Rem{\kvar}{\asgn}{\cstr}  \\
\quad \quad \overline{z}     & \doteq & z_0, \ldots, z_{n-1} \ \kword{where}\ n = \Rank{\kvar}  \\
\quad \quad \overline{x}     & \doteq & \Binders{\CtxOf{\cstr}{\pfx}}
\\[0.5em]
\midrule
\multicolumn{3}{@{}l}{\Remname : (\kvars \times \Preds \times \Cstrs) \to \Cstrs} \\
\midrule
\Rem{\kvar}{\asgn}{\cAnd{\cstr_1}{\cstr_2}}     & \doteq & \Rem{\kvar}{\asgn}{\cstr_1}\ \wedge \\
                                                &        & \quad \Rem{\kvar}{\asgn}{\cstr_2} \\
\Rem{\kvar}{\asgn}{\cAll{x}{b}{\cstr}}          & \doteq & \cAll{x}{b}{\Rem{\kvar}{\asgn}{\cstr}} \\
\Rem{\kvar}{\asgn}{\cImp{\pred}{\cstr}}         & \doteq & \cImp{\subasgn{\pred}{\kvar}{\asgn}}{\Rem{\kvar}{\asgn}{\cstr}} \\
\Rem{\kvar}{\asgn}{\kvapp{\kvar}{\overline{t}}} & \doteq & \top \\
\Rem{\kvar}{\asgn}{\pred}                       & \doteq & \pred \\
\midrule
\multicolumn{3}{@{}l}{\Solname : (\kvars \times \Cstrs) \to \Terms} \\
\midrule
\Sol{\kvar}{\cstr} & & \\
\quad\mid\; \kvar \notin \Heads{\cstr}    & \doteq & \bot \\
\Sol{\kvar}{\cAnd{\cstr_1}{\cstr_2}}      & \doteq & \Sol{\kvar}{\cstr_1} \vee \Sol{\kvar}{\cstr_2} \\
\Sol{\kvar}{\cAll{x}{b}{\cstr}}           & \doteq & \cExi{x}{b}{\Sol{\kvar}{\cstr}} \\
\Sol{\kvar}{\cImp{\pred}{\cstr}}          & \doteq & \cAnd{\pred}{\Sol{\kvar}{\cstr}} \\
\Sol{\kvar}{\kvapp{\kvar}{\overline{t}}}  & \doteq & \textstyle\bigwedge_{i < \Rank{\kvar}} z_i = t_i \\
\bottomrule
\end{array}$
\end{minipage}
\caption{\Zapname\ eliminates acyclic horn variables (left) using \Elimname\ to compute solutions (right).
The guarded first equation of \Solname\ collapses sub-constraints without $\kvar$ head
applications to $\bot$, so the solution is $\kvar$-free, and all its free variables are
among the $\overline{z}, \overline{x}$ bound in \Elimname.
}
\label{fig:algo:zap}
\label{fig:algo:elim}
\label{fig:algo:sol}
\label{fig:algo:rem}
\end{figure}
}

\newcommand{\figcert}{
\begin{figure}[t]
\small
\begin{minipage}[t]{0.48\textwidth}
\[
\begin{array}{@{}l@{\;\;}c@{\;\;}l@{}}
\toprule
\Certname & \;:\; & (\CtxTy \times \ProofTy \times \Cstrs) \to \ProofTy \\
\midrule
\Cert{\ctx}{\pf}{\cAll{x}{b}{\cstr}}         & \doteq & \cLam{x}{\Cert{\ctx; \cBind{x}{b}}{\pf\ x}{\cstr}}  \\
\Cert{\ctx}{\pf}{\cImp{\pred}{\cstr}}        & \doteq & \cLam{\hyp_\pred}{\Cert{\ctx; \cBind{\hyp_\pred}{\pred}}{\pf\ \hyp_\pred}{\cstr}}\\
\Cert{\ctx}{\pf}{\cAnd{\cstr_1}{\cstr_2}}    & \doteq & \langle\ \Cert{\ctx; \pLeft}{\pf.1}{\cstr_1} \\
                                             &        & ,\ \Cert{\ctx; \pRight}{\pf.2}{\cstr_2}\ \rangle \\
\Cert{\ctx}{\pf}{\kvapp{\kvar}{\overline{t}}}& \doteq & \Nav{\ctx \setminus \pfx}{\asgn(\overline{t})} \\
\Cert{\ctx}{\pf}{\pred}                      & \doteq & \pf \\[0.2em]
\bottomrule
\end{array}
\]
\end{minipage}
~
\begin{minipage}[t]{0.48\textwidth}
\[
\begin{array}{@{}l@{\;\;}c@{\;\;}l@{}}
\toprule
\Navname & \;:\; & (\CtxTy \times \Preds) \to \ProofTy \\[0.04em]
\midrule
\Nav{\pLeft;  \ctx}{\pred_L \vee \pred_R}     & \doteq & \mathsf{inl}~\Nav{\ctx}{\pred_L}\\
\Nav{\pRight; \ctx}{\pred_L \vee \pred_R}     & \doteq & \mathsf{inr}~\Nav{\ctx}{\pred_R}\\
\Nav{\cBind{x}{b}; \ctx}{\cExi{z}{b}{\pred}}  & \doteq & \big\langle x,\; \Nav{\ctx}{\subasgn{\pred}{z}{x}}\big\rangle \\
\Nav{\pf         ; \ctx}{\pred \wedge \pred'} & \doteq & \big\langle \pf,\; \Nav{\ctx}{\pred'}\big\rangle \\
\Nav{\varepsilon}{\big(\textstyle\bigwedge_i z_i = t_i\big)} & \doteq & \langle\, \rfl, \dots, \rfl \,\rangle \\
\Nav{\varepsilon}{\top}                       & \doteq & \langle\rangle \\[0.45em]
\bottomrule
\end{array}
\]

\end{minipage}
\caption{\Certname\ transforms a proof of the reduced constraint into a proof of the
original, invoking \Navname\ at each divergent $\kvar$-head to rebuild a proof of
$\asgn(\overline{t})$ by replaying the truncated context against the
$\vee/\exists/\wedge$ structure of the solution $\asgn$.}
\label{fig:walk}
\label{fig:certify}
\label{fig:nav}
\end{figure}
}

\newcommand{\figpiglet}{
\begin{figure}[t]
\small
\begin{minipage}[t]{0.49\textwidth}
\vspace{0pt}
$\begin{array}{@{}l@{\;}c@{\;}l@{}}
\toprule
\multicolumn{3}{@{}l}{\Fixname : (\Cstrs \times \QualTy) \to (\Cstrs \times \ProofTy)} \\
\midrule
\Fix{\ecstr{\kcyc{\kvar}}{\cstr}}{\quals} & \doteq & (\cstr', \pf\textcolor{gray}{: \cstr' \to \ecstr{\kcyc{\kvar}}{\cstr}}) \\
\quad \kword{where}     &        & \\
\quad \quad \asgns      & \doteq & \PredAbs{\ecstr{\kcyc{\kvar}}{\cstr}}{\quals} \\
\quad \quad \cstr'      & \doteq & \Rems{\kcycs{\kvar}}{\cstr} \\
\quad \quad \pf                                  & \doteq & \cLam{\hyp}{\langle \overline{\conjoinsol{\asgns(\kcyc{\kvar})}},\ \Certs{\varepsilon}{\hyp}{\cstr} \rangle} \\
\quad \quad \Rems{\varepsilon}{\cstr}            & \doteq & \cstr \\
\quad \quad \Rems{\kvar;\overline{\kvar}}{\cstr} & \doteq & \Rems{\overline{\kvar}}{\Rem{\kvar}{\conjoinsol{\asgns(\kvar)}}{\cstr}} \\[0.5em]
\toprule
\multicolumn{3}{@{}l}{\Certsname : (\CtxTy \times \ProofTy \times \Cstrs) \to \ProofTy} \\
\midrule
\Certs{\ctx}{\pf}{\cAll{x}{b}{\cstr}}         & \doteq & \cLam{x}{\Certs{\ctx; \cBind{x}{b}}{\pf\ x}{\cstr}}  \\
\Certs{\ctx}{\pf}{\cImp{\pred}{\cstr}}        & \doteq & \cLam{\hyp_\pred}{\Certs{\ctx; \cBind{\hyp_\pred}{\pred}}{\pf\ \hyp_\pred}{\cstr}}\\
\Certs{\ctx}{\pf}{\cAnd{\cstr_1}{\cstr_2}}    & \doteq & \langle\ \Certs{\ctx; \pLeft}{\pf.1}{\cstr_1} \\
                                              &        & ,\ \Certs{\ctx; \pRight}{\pf.2}{\cstr_2}\ \rangle \\
\Certs{\ctx}{\pf}{\kvapp{\kvar}{\overline{t}}}& \doteq & \Oracle{\appasgn{\ctx}{\asgns}}{(\conjoinsol{\asgns(\kvar)})(\overline{t})} \\
\Certs{\ctx}{\pf}{\pred}                      & \doteq & \pf \\
\bottomrule
\end{array}$
\end{minipage}%
\begin{minipage}[t]{0.50\textwidth}
\vspace{0pt}
$\begin{array}{@{}l@{\;}c@{\;}l@{}}
\toprule
\multicolumn{3}{@{}l}{\PredAbsname : (\Cstrs \times \QualTy) \to \QualAsgnTy} \\
\midrule
\PredAbs{\ecstr{\kcyc{\kvar}}{\cstr}}{\quals} & \doteq & \Fixpoint{\cstr}{\asgns} \\
\quad \kword{where}                    &        &  \\
\quad \quad \asgns                     & \doteq & [ \kvar \mapsto \Init{\kvar}{\quals} \mid \kvar \in \kcycs{\kvar}]  \\
\quad \quad \Init{\kvar}{\quals}       & \doteq & \{ q_\inst \mid q \in \quals, \inst \in \kqinst{\kvar}{q} \ \mbox{instances} \} \\[0.5em]
\midrule
\multicolumn{3}{@{}l}{\Fixpointname : (\Cstrs \times \QualAsgnTy) \to \QualAsgnTy} \\
\midrule
\Fixpoint{\cstr}{\asgns} & \doteq & \kword{if}\ \exists{\pfx}. \asgns' \doteq \Weaken{\cstr}{\asgns}{\pfx} \\
                         &        & \quad \kword{then}\ \Fixpoint{\cstr}{\asgns'}\ \kword{else}\ \asgns \\[0.5em]
\midrule
\multicolumn{3}{@{}l}{\Weakenname : (\Cstrs \times \QualAsgnTy \times \PrefixTy) \to \QualAsgnTy + \bot} \\
\midrule
\Weaken{\cstr}{\asgns}{\pfx}       & \doteq & \kword{if}\ I \subset \asgns(\kvar)\ \kword{then}\ \asgns[\kvar \mapsto I]\ \kword{else}\ \bot\\
\multicolumn{3}{@{}l}{\quad \kword{where}} \\
\multicolumn{3}{@{}l}{\quad \quad
  \begin{array}[t]{@{}l@{\;}c@{\;}l@{}}
  \ctx \vdash \kvapp{\kvar}{\overline{t}} & \doteq & \GoalOf{\cstr}{\pfx} \\
  I                                       & \doteq & \{ q_\inst \mid q_\inst \in \asgns(\kvar), \Oracle{\appasgn{\ctx}{\asgns}}{q_\inst(\overline{t})} \}
  \end{array}} \\[2.0em]
\bottomrule
\end{array}$
\end{minipage}
\caption{The $\Fixname$ procedure for reducing cyclic horn variables via predicate abstraction.}
\label{fig:fix}
\label{fig:predabs}
\label{fig:fixpoint}
\label{fig:certs}
\end{figure}
}

\newcommand{\figimpsyntax}{
\begin{figure}[t]
\centering
\[
\begin{array}{l@{\quad}r@{\;}c@{\;}l}
\textbf{\textit{Variables}}& x &:& \CVar\,(\text{alias for}\,\String)
                                \\[0.2em]

\textbf{\textit{States}}& s &:& \CVar \to \Int
                                \\[0.2em]

\textbf{\textit{Expressions}}& e     & : & \State \to \Int
                                \\[0.2em]

\textbf{\textit{Guards}}        & g & : & \State \to \Bool
                                 \\[0.2em]
\textbf{\textit{Assertions}}    & P & : & \State \to \Prop
                                 \\[0.2em]

\textbf{\textit{Commands}} & c & ::= & \cskip \mid x := e \mid c_1;c_2 \mid \ite{g}{c_1}{c_2} \mid \while{g}{c} \\
\end{array}
\]
\caption{Syntax of \imp, shallowly embedded into \sclean}
\label{fig:imp-shallow}
\label{fig:imp-syntax}
\end{figure}
}

\newcommand{\kappabox}[1]{\setlength{\fboxsep}{1pt}\colorbox{gray!20}{$#1$}}

\newcommand{\figlrksyntax}{
\begin{figure}
\centering
\[
\begin{array}{l@{\quad}r@{\,}r@{\;}c@{\;}l}
\multicolumn{5}{c}{x,y,\nu \in \Var \qquad \kappa \in \Kvar \qquad z \in \Int}\\[2pt]
\textbf{\textit{Base sort}}  & \Base \ni & b       & ::= & \bint \mid \bbool \\[2pt]
\textbf{\textit{Term}}       &           & \term   & ::= & \var \mid z \mid \ttrue \mid \tfalse \mid \term[\bint] + \term[\bint] \\[2pt]
\textbf{\textit{Formula}}    &           & \varphi & ::= & \top \mid \inteq{\term}{\term} \mid \term[\bint] \le \term[\bint] \mid \neg\varphi \mid \varphi \wedge \varphi
                                                      \mid \varphi \to \varphi \mid \forall x{:}b.\,\varphi  \\[2pt]
\textbf{\textit{Refinements}}&           & \reft   & ::= & \varphi \mid \kappabox{\kappa(\overline{\term})} \\[2pt]
\textbf{\textit{Type}}       &           & \tau    & ::= & \rtype{\nu}{b}{\reft} \mid x{:}\tau \to \tau \\[2pt]
\textbf{\textit{Expression}} &           & e       & ::= & \val \mid x \mid \oplus(\overline{e}) \mid
                                                       e\;e \mid \letin{x = e}{e} \mid \ite{e}{e}{e} \mid \asc{e}{\tau} \\[2pt]
\textbf{\textit{Values}}     &  \Val \ni & \val    & ::= & z \mid \etrue \mid \efalse \mid \lam{x}{e} \\[2pt]
\textbf{\textit{Type env.}}  &           & \Gamma  & ::= & \cdot \mid \Gamma,\, x{:}\tau
\end{array}
\]
\caption{Syntax of $\lrk$.}
\label{fig:lrk-syntax}
\end{figure}
}

\newcommand{\figoverviewfix}{
\begin{figure}[t]
$\begin{array}{c@{\;}c@{\;}c@{\;}c@{\;}c}
\begin{array}{l}
\exists \cBind{\kvar}{\Int \to \Int \to \Int \to \Prop}. \\
\forall \cBind{\mathit{lo}\ \mathit{hi}}{\Int}. \\
\quad \invisible{\wedge}\ \cPRE{\kvapp{\kvar}{\mathit{lo}, \mathit{hi}, \mathit{lo}}} \\
\quad \wedge\ \forall \cBind{i}{\Int}. \\
\quad \quad \quad \cPRE{\kvapp{\kvar}{i, \mathit{hi}, \mathit{lo}}}\ \rightarrow\ i < \mathit{hi}\ \rightarrow \\
\quad \quad \quad \quad \invisible{\wedge}\  \mathit{lo} \le i \wedge i < \mathit{hi} \\
\quad \quad \quad \quad            \wedge\  \cPRE{\kvapp{\kvar}{i+1, \mathit{hi}, \mathit{lo}}} \\
\end{array}
&
\xrightarrow[\quad\quad]{\tFix}
&
\begin{array}{l}
\\
\forall \cBind{\mathit{lo}\ \mathit{hi}}{\Int}. \\
\quad \invisible{\wedge}\ \cFRM{\top} \\
\quad \wedge\ \forall \cBind{i}{\Int}. \\
\quad \quad \quad \cFRM{\mathit{lo} \le i \wedge i \le \mathit{hi}}\ \rightarrow\ i < \mathit{hi}\ \rightarrow \\
\quad \quad \quad \quad \invisible{\wedge}\ \mathit{lo} \le i \wedge i < \mathit{hi} \\
\quad \quad \quad \quad            \wedge\  \cFRM{\top} \\
\end{array}
&
\xrightarrow[\quad]{\lean{grind}}
&
\quad \substack{\text{No} \\[4pt] \text{goals.}}
\end{array}$
\caption{(L) The CHC for \inlineflux{foreach}, (C) \tFix uses qualifiers
to reduce the \emph{cyclic} $\kvar$, replacing head occurrences with $\top$ and body occurrences with the solution, (R)
\lean{grind} then discharges the residual VC to verify \inlineflux{foreach}.}
\label{fig:overview:fix}
\end{figure}
}

\newcommand{\figoverviewzap}{
\begin{figure}[t!]
$\begin{array}{c@{\;}c@{\;}c@{\;}c@{\;}c}
\begin{array}{@{}l}
\exists \cBind{\kvar}{\Int \to \Int \to \Prop}. \\
\forall \cBind{n}{\Int}. \\
\quad \invisible{\wedge}\ \forall \cBind{j}{\Int}.\ \kvapp{\kvar}{j, n}\ \rightarrow \\
\quad \quad \quad j < n \\
\quad            \wedge\  \forall \cBind{v}{\Int}.\ (0 \leq v \wedge v < n)\ \rightarrow \\
\quad \quad \quad \kvapp{\kvar}{v, n}\\
\end{array}
&
\xrightarrow[\quad\quad]{\tZap}
&
\begin{array}{l}
\\
\forall \cBind{n}{\Int}. \\
\quad \invisible{\wedge}\ \forall \cBind{j}{\Int}.\ \cFRM{\exists v. 0 \le v \wedge v < n \wedge j = v}\ \rightarrow \\
\quad \quad \quad j < n \\
\quad            \wedge\  \forall \cBind{v}{\Int}.\ (0 \leq v \wedge v < n)\ \rightarrow \\
\quad \quad \quad \cFRM{\top}\\
\end{array}
&
\xrightarrow[\quad]{\lean{grind}}
&
\ \substack{\text{No} \\[4pt] \text{goals.}}
\end{array}$
\caption{(L) The CHC for \inlineflux{dot}, (C) \tZap
the \emph{acyclic} $\kvar$ replacing head occurrences with $\top$
and body occurrences with the strongest solution, (R) \lean{grind} then discharges the
residual VC to verify \inlineflux{dot}.}
\label{fig:overview:zap}
\end{figure}
}

%% file: sections/abstract.tex
\begin{abstract}
SMT-based program verifiers are hamstrung by two problems:
\emph{expressiveness}, because predictable verification restricts
to the boundaries of SMT decidability, and \emph{trust}, because
the solver is a large, unverified artifact whose soundness
bugs may quietly compromise every tool built on it.
We present \sys, a foundational Constrained Horn Clause (CHC) solver
implemented in \sclean, that reduces the trusted base to the kernel alone,
and allows using \sclean's entire proof ecosystem to verify low-level
systems code, via three contributions.
First, \sys encodes CHCs as plain \sclean propositions where the Horn
variables are existentially bound predicates, and shows how to implement
CHC solvers as tactics (meta-programs) that compute kernel-checkable
proofs of the CHC propositions.
Second, we show how to implement two verified CHC generators in \sclean:
a Floyd-Hoare style generator for an imperative language, and a
refinement-type-based generator for a functional calculus, which can
be composed with the solving tactics to yield the first end-to-end
foundational CHC-based verifiers.
Finally, we show how \sys allows us to leapfrog the expressiveness
limitations of SMT by unleashing \sclean's entire ecosystem of proof
machinery to prove arbitrary functional correctness properties of various
low-level Rust libraries using the \scflux refinement type checker,
and demonstrate the viability of \sys as a trustworthy CHC backend,
by showing it automatically discharges $95.7\%$ of the CHCs
from \scflux's benchmark suite.
\end{abstract}

%% file: sections/01-intro.tex
\section{Introduction}
\label{sec:intro}

SMT solvers are the beating heart of
automated verification tools~\cite{escjava,dafny,LiquidHaskell,
F*-dijkstra-monads,prusti,verus,FluxRS} that
convert messy, impure code into purely
logical formulas whose validity,
automatically checked by the solver,
guarantees the code meets critical
correctness and security requirements~\cite{ironfleet, hacl, STORM_OSDI, verus2024, TOCK}.

However, experience shows that SMT
solvers are hamstrung by two problems:
\emph{expressiveness} and \emph{trust}.
First, the solvers implement decision
procedures for a fixed set of decidable
theories~\cite{nelson1980}, and use
heuristics to reason about formulas
that lie beyond \cite{simplify}.
While the decision procedures make
short work of formulas that arise in
proving specifications over integers,
arithmetic, or uninterpreted functions,
verification becomes a slog for more
\emph{expressive} specifications with
quantifiers or user-defined predicates,
or even when reasoning with decidable
but hard theories like bit-vectors.
Here, the incomplete heuristics make solving
unpredictable due to ``proof instability''
\cite{leino2016trigger,mariposa}, and the
black-box nature of SMT --- where it simply
says whether a formula is valid or not ---
leaves the programmer flinging darts
in the dark to divine what extra hints
the solver might require to complete
a proof~\cite{Mugnier2025,LiquidUsability2025}.
Second, SMT engines are substantial artifacts:
\textsc{Z3} and \textsc{CVC4} weigh around
half a million lines of C++, with known soundness
issues~\cite{SMTbug4,SMTbug1,SMTbug2,SMTbug3} ---
a significant \emph{trusted computing base}
even before accounting for the verification formula
generators.

\mypara{\sys}
We present \sys, a new approach to engineering
program verifiers using foundational Horn
constraint solving.
\sys is based on the observation
that the task of verification can
be split into a front-end that
generates \emph{Constrained Horn Clauses} (CHCs)
and a back-end that determines
their satisfiability \cite{bjorner2015horn}.
A CHC is a logical formula over a set of
\emph{Horn variables} representing unknown
program invariants, which must satisfy
requirements captured by conjunctions
and implications over those variables.
Our key insight is that CHCs are
\emph{existentially quantified propositions}
in a foundational logic, such as that of \sclean
or \scrocq or \scisabelle.
Hence, we can entirely leapfrog the expressiveness
and trust limitations of SMT by implementing a CHC
solver \emph{within} a foundational prover, as a
meta-program that \emph{computes proofs of}
CHC propositions.
In this paper, we develop, implement and evaluate
this approach via three contributions.

\mypara{1. Foundational Solving (\Cref{sec:solver})}
Our first contribution is to show how CHCs can
be shallowly embedded as \sclean propositions,
and then show how existing (SMT-based) CHC
solving techniques can be reimagined as tactics
that \emph{reduce} away the Horn variables from
the proposition, leaving behind a plain formula
that can be discharged with existing \sclean
machinery.
To solve CHCs our tactics must compute
\emph{witness predicates} for the existentially
bound Horn variables.
Following \cite{Cosman17}, we observe that these
variables can be partitioned into \emph{acyclic} variables
which can be solved exactly or \emph{cyclic} variables
which must be approximated because they arise respectively
from ``straight-line'' or ``looping'' code.
We show how to compute this partition, and then
how to adapt the Fusion \cite{Cosman17} and
Predicate Abstraction \cite{Houdini, Rondon08} CHC
solving algorithms to the foundational setting, where
we can parameterize them with extensible \emph{oracles}
that enable synthesizing invariants over user-defined theories.
Of course, a foundational kernel will not take
any reduction of the Horn variables on faith:
it demands proof that the reduction is legitimate.
A key contribution of our tactics is to show how
to exploit the structure of CHCs and their solutions
to \emph{automatically construct certificates} that
the reduced CHC logically implies the original
one, yielding the first foundational CHC solver.

\mypara{2. Foundational Generation (\Cref{sec:verifiers})}
Our second contribution is to implement
two foundational CHC \emph{generators},
which, when connected with \sys, yield
the first end-to-end CHC-based verifiers.
As a warmup, we develop a \emph{Floyd-Hoare}
style generator \cite{Floyd1967, Hoare1969}
for an imperative language $\imp$, that follows
the textbook recipe \cite{software-foundations, concrete-semantics}
but crucially uses Horn variables to
defer loop-invariant synthesis to
the back-end solver.
Next, we present the first mechanization
of \emph{algorithmic refinement typing}~\cite{refinement-tutorial},
by deeply embedding a core functional
calculus $\lrk$ in \sclean, formalizing
a big-step operational semantics for it,
defining a semantic interpretation for
refinement types, implementing a CHC generator
that uses Horn variables for unknown
\emph{refinements}, and proving that
if the generated CHCs are satisfiable,
then the source yields --- without getting
stuck --- a value in its type's interpretation.
Together these verifiers offer a blueprint
for how to engineer CHC-based verification
tools entirely within a proof assistant.

\mypara{3. Expressive Verification (\Cref{sec:evaluation})}
Our final contribution demonstrates \sys's ultimate
payoff: liberating verification from SMT's expressiveness
and trust limitations by unleashing \sclean's entire ecosystem
of proof machinery --- theories, tactics, agents, and all ---
to prove arbitrary functional correctness properties
of low-level Rust code.
The \scflux verifier uses refinement types to distill
impure Rust code --- with the classic challenges of
mutable state like references and loops, compounded
by hairy modern constructs like traits, closures
and  asynchrony --- into purely logical CHCs.
Previously, the expressiveness of \scflux's specifications
was carefully restricted to ensure the CHCs fell
within the boundaries of SMT decidability.
\sys tears down this wall, by using \sclean
as a backend to discharge functional correctness
specifications, as illustrated by a variety
of case studies:
correctness of in-place sorting algorithms;
correctness of a hash table with chaining
for collision resolution;
modular arithmetic and bit-vector properties
required for secure memory isolation in the
\textsc{Tock} embedded kernel \citep{TOCK};
and proving that a ring-buffer-backed dequeue
from the embedded kernel \citep{tock-proper}
never reads from uninitialized memory.
Further, we show that \sys is viable as a \emph{trustworthy}
back-end CHC solver that is able to automatically discharge
more than $95\%$ of the 880 CHCs that arise
in \scflux's existing benchmark suite spanning
20K lines of Rust source.
This high degree of automation is made possible
by \sys's certificate generation machinery which
successfully reduces \emph{every} acyclic variable
across 369 CHCs without search, compared to a baseline
of \sclean's general-purpose \lean{grind} and \lean{aesop}
which close only $67\%$ and $90\%$ of CHCs while taking
$18$ to $34\times$ longer to complete the proof.
\footnote{\sys can be found at
\href{https://github.com/jam-khan/Flex}{https://github.com/jam-khan/Flex}.
}

%% file: sections/02-overview.tex
\section{Overview}
\label{sec:overview}

Let us begin with an overview of \sys.
We start by recalling how refinement type checking
reduces to \emph{generating} CHC constraints (\Cref{sec:overview:constraints}).
Next, we illustrate our first contribution by
showing how \sys \emph{solves} the constraints in
a foundational setting (\Cref{sec:overview:solvers}).
Then, we demonstrate our second contribution where
we use \sys's solvers to implement foundational
verifiers in \sclean (\Cref{sec:overview:verifiers}).
Finally, we conclude by showing how
by tapping into \sclean's full machinery,
\sys lets \scflux verify arbitrarily
expressive specifications of Rust code
(\Cref{sec:overview:expressiveness}).

\begin{figure}[t]
\begin{minipage}[t]{.53\textwidth}
\vspace{0pt}
\begin{fluxsh}
fn foreach<F>(lo:usize, hi:usize, mut f:F)
where F: FnMut(usize{v:lo<=v && v<hi})
{
  let mut i = lo;
  while i < hi {
    f(i);
    i += 1;
  }
}

fn dot(n:usize, x:&[f64][n], y:&[f64][n])
   -> f64
{
  let mut res = 0.0;
  let body = |j| { res += x[j] * y[j] };
  foreach(0, n, body);
  res
}
\end{fluxsh}
\end{minipage}
\begin{minipage}[t]{.44\textwidth}
\vspace{2pt}
$\begin{array}{rcl}
\cstr_{\mathit{for}} & \doteq &
    \exists \cBind{\kvar}{\Int \to \Int \to \Int \to \Prop}. \\
& & \forall \cBind{\mathit{lo}\ \mathit{hi}}{\Int}. \\
& & \quad \invisible{\wedge}\ \kvapp{\kvar}{\mathit{lo}, \mathit{hi}, \mathit{lo}} \\
& & \quad \wedge\ \forall \cBind{i}{\Int}. \\
& & \quad \quad \quad \kvapp{\kvar}{i, \mathit{hi}, \mathit{lo}}\ \rightarrow\ i < \mathit{hi}\ \rightarrow \\
& & \quad \quad \quad \quad \invisible{\wedge}\  \mathit{lo} \le i \wedge i < \mathit{hi} \\
& & \quad \quad \quad \quad            \wedge\  \kvapp{\kvar}{i+1, \mathit{hi}, \mathit{lo}} \\[2.6em]
\cstr_{\mathit{dot}} & \doteq &
    \exists \cBind{\kvar}{\Int \to \Int \to \Prop}. \\
& & \forall \cBind{n}{\Int}. \\
& & \quad \invisible{\wedge}\ \forall \cBind{j}{\Int}.\ \kvapp{\kvar}{j, n}\ \rightarrow \\
& & \quad \quad \quad j < n \\
& & \quad            \wedge\  \forall \cBind{v}{\Int}.\ (0 \leq v \wedge v < n)\ \rightarrow \\
& & \quad \quad \quad \kvapp{\kvar}{v, n}\\
\end{array}$
\end{minipage}
\caption{(L) Rust with \scflux specifications, (R) verification CHCs generated by \scflux }
\label{fig:dotprod}
\end{figure}

\subsection{Horn Constraints}
\label{sec:overview:constraints}

Consider the two Rust functions shown on the left in \Cref{fig:dotprod}.
The function \iflux{foreach} is a higher-order loop,
that takes as input three parameters:
a \emph{range} denoted by a lower bound \iflux{lo} and
upper bound \iflux{hi}, and a body \iflux{f}.
The function uses a \iflux{while} loop to evaluate
\iflux{f(i)} for each index \iflux{i} in the supplied range.
The function \iflux{dot} takes as input two Rust slices
\iflux{x} and \iflux{y} of dimension \iflux{n}, and
then computes the dot product of the two slices by
invoking \iflux{foreach} on the range [\iflux{0}, \iflux{n}),
and the \emph{closure} \iflux{body} that accumulates
the contribution of the \iflux{j}$^{th}$ dimension of
each slice.

\mypara{Specification: Refinement Types}
Suppose we wish to statically verify that
\iflux{dot} does not panic by performing
an out-of-bounds access.
To do so, the programmer writes
\scflux \emph{refinement type specifications} for each
function.
The specification for \iflux{foreach} says that
the closure \iflux{f} will only be called
at indices \iflux{v} that are in the range \iflux{[lo, hi)},
as stipulated by the closure's precondition: an existential
refinement on the input type of \iflux{f}.
For \iflux{dot}, the specification requires
callers pass in slices \iflux{x} and \iflux{y}
with \iflux{n} elements, as prescribed by the
indexed slice type \iflux{&[f64][n]}.

\mypara{Verification: Constraints}
To \emph{verify} both functions in \Cref{fig:dotprod},
a refinement type verifier like \scflux runs the type checker
to generate a \emph{verification constraint}
shown on the function's right.
Each constraint is a \emph{Constrained Horn Clause} (CHC)
that comprises the following elements.
At the top, a CHC has some number of existentially quantified
\emph{Horn variables} (denoted by $\kvar$) which represent
\emph{unknown} refinements or invariants.
Next, the constraint itself, comprises a sequence of nested
(1)~\emph{universal} bindings, corresponding to statically unknown program values;
(2)~\emph{conjunctions}, corresponding to multiple requirements;
(3)~\emph{implications}, corresponding to hypotheses about program values;
ending in a
(4)~\emph{head} obligation that must hold under the preceding hypotheses.

\mypara{CHC for \iflux{foreach}}
The constraint $\cstr_{\mathit{for}}$ has one Horn variable
${\kvar : \Int \to \Int \to \Int \to \Prop}$ representing the
\iflux{while}-loop invariant over the counter \iflux{i},
upper bound \iflux{hi}, and lower bound \iflux{lo}.
The first (head) conjunct $\kvapp{\kvar}{\mathit{lo}, \mathit{hi}, \mathit{lo}}$
says the invariant must hold at loop entry (when \iflux{i = lo}).
The second conjunct is the inductive step: if the invariant holds
at \iflux{i} and the loop guard $i < \mathit{hi}$ is true, then
we have two head obligations.
(a)~First, the closure's \emph{precondition}
$\mathit{lo} \le i \wedge i < \mathit{hi}$ must hold,
ensuring \iflux{f} is called within the specified range.
(b)~Second, the invariant must be preserved at $i+1$,
as mandated by the head $\kvapp{\kvar}{i+1, \mathit{hi}, \mathit{lo}}$.

\mypara{CHC for \iflux{dot}}
Closures would be quite tiresome to use if programmers
had to explicitly annotate them with their contracts.
To relieve them of this tedium, \scflux introduces
a Horn variable ${\kvar : \Int \to \Int \to \Prop}$
to represent the unknown input refinement for the closure
\iflux{body}.
The first conjunct of the constraint says
that whenever the closure executes with an index
\iflux{j}, which is assumed to satisfy the closure's
input refinement $\kvapp{\kvar}{j, n}$, the
bound $j < n$ needed to safely access
\iflux{x[j]} and \iflux{y[j]} must hold.
The second conjunct is obtained from the \emph{subtyping}
obligation between the function type of the argument
(\iflux{body}) and of the \iflux{foreach}'s
parameter (\iflux{f}).
Specifically, by contravariant input subtyping,
the constraint says that since \iflux{foreach} calls the
closure with $v$ where
$0 \leq v \wedge v < n$ (as it is called with
$\mathit{lo} = 0$, $\mathit{hi} = n$), all such
values must satisfy the input refinement $\kvapp{\kvar}{v, n}$.

\subsection{Foundational Solvers}
\label{sec:overview:solvers}

\figoverviewzap{}

The verification constraints (CHCs) distill
Rust's stateful, temporal, and impure semantics
into purely logical propositions --- precisely
the kind of thing that proof assistants were
designed to prove.
Except for one major challenge: proofs of CHC
propositions must supply \emph{explicit witnesses}
for the existentially bound Horn variables,
satisfying the head obligations on those variables.
(Indeed, we tried proving some CHCs by hand ---
and with coding agents --- but it was readily
apparent that this direct route was utterly
impractical.)

\sys addresses this challenge by reimagining two
existing CHC-solving algorithms as \emph{rewriting tactics}
for the foundational setting.

\mypara{Cyclic vs Acyclic Variables}
The Horn variables $\kvar$ in CHCs come in two flavors.
The variable $\kvar$ may be \emph{cyclic},
which means, roughly (we defer a precise
definition to \Cref{sec:partition}),
that it appears in the hypotheses
of a clause where it also appears in the
head; and otherwise we say it is \emph{acyclic}.
For example, in the constraint
$\cstr_{\mathit{for}}$ in \Cref{fig:dotprod},
$\kvar$ is cyclic as it appears in both the
hypothesis $\kvapp{\kvar}{i, \mathit{hi}, \mathit{lo}}$
and the head $\kvapp{\kvar}{i+1, \mathit{hi}, \mathit{lo}}$
of the inductive clause.
In contrast, the variable $\kvar$
in $\cstr_{\mathit{dot}}$ is acyclic
as it appears only in hypotheses in
the first clause and only in head in
the second.

\mypara{Solving Acyclic Variables}
For an \emph{acyclic} variable $\kvar$ we can compute
an exact closed form solution, perhaps
in terms of \emph{other} variables.
The Fusion algorithm of \cite{Cosman17}
shows how to compute the \emph{strongest}
solution, informally by taking the
\emph{conjunction} of all the hypotheses
that occur along the path to each head
occurence of that $\kvar$, and then
computing the \emph{disjunction} over
all the paths leading to head occurrences
of the $\kvar$.

In \sys, the above algorithm is implemented
in a tactic \tZap detailed in \Cref{sec:zap},
and shown in action in \Cref{fig:overview:zap}.
On the left, we have the original CHC proposition.
First, \tZap computes the strongest solution
by traversing the CHC to the (single) head
application of $\kvar$, collecting the
hypotheses along the path, and
existentially quantifying the
universally quantified binders:
$$ \kvar \ \doteq\ \cLam{z, n}{\exists v. 0 \le v \wedge v < n \wedge z = v}$$
Next, the \tZap tactic traverses the CHC, \emph{reducing}
all the head applications of $\kvar$ to $\top$ (\ie, $\mathit{True}$)
as they are guaranteed to hold by construction, and the body
applications of $\kvar$ with the computed solution.
These reductions are highlighted in gray.
The result is a horn variable free \emph{verification condition}
which is discharged by \sclean's \lean{grind} tactic.

\mypara{Solving Cyclic Variables}
The Horn variable $\kvar$ in $\cstr_{\mathit{for}}$
is cyclic: it appears as a hypothesis along the same
path where it is the goal.
Hence, we cannot apply the technique
% from \cite{Cosman17}
described above as it would loop indefinitely!
Instead, \sys has a \tFix tactic that implements
a form of abstract interpretation called
\emph{predicate abstraction}~\cite{Graf97,Houdini}.
This tactic proceeds in three steps.

\figoverviewfix{}

\smallskip\emph{Step 1: Qualifiers~}
The \tFix tactic solves cyclic variables as \emph{conjunctions of predicates}
from a library of \emph{qualifier} templates \cite{Rondon08}.
In \sys, a qualifier is any \sclean predicate tagged with \lean{@[qual]}, \eg,
\begin{lstlisting}[style=leanhl,basicstyle=\ttfamily\footnotesize]
    @[qual] def q_le (a b : Int) := a ≤ b
    @[qual] def q_lt (a b : Int) := a < b
    @[qual] def q_eq (a b : Int) := a = b
\end{lstlisting}

\smallskip\emph{Step 2: Initial Assignment~}
Next, \tFix uses the qualifiers to compute
an initial assignment which instantiates
\emph{all} the qualifiers with all the
possible $\kvar$ parameters that yield
a well-typed \sclean predicate.
For example, an initial assignment for the
variable $\kvar$ from $\cstr_{\mathit{for}}$
would be
$$
\kvapp{\kvar}{z, \mathit{hi}, \mathit{lo}} \ \doteq\
\{
{z \le \mathit{hi}}, {\mathit{hi} \le z}, {z \le \mathit{lo}}, {\mathit{lo} \le z}, \ldots
{z < \mathit{hi}}, {\mathit{hi} < z}, {z < \mathit{lo}}, {\mathit{lo} < z}, \ldots
\}
$$

\smallskip\emph{Step 3: Weaken till Fixpoint~}
Finally, \tFix enters an iterative weakening loop
where we use an \emph{oracle} --- which could be \lean{grind}
or any other theory specific decision procedure in \sclean\ ---
to \emph{discard} candidates that \emph{cannot} be proven
to hold at some CHC head.
Once we have eliminated the impossible, whatever candidates
remain, must be the truth. Hence, \tFix returns a solution
comprising the conjunction of all the candidates that remain
standing.
For our example, two candidates remain, yielding the solution
${\kvar\ \doteq\ \cLam{z, \mathit{hi}, \mathit{lo}}{\mathit{lo} \le z \wedge z \le \mathit{hi}}}$.
Like \tZap, $\tFix$ uses the computed solution
to reduce away $\kvar$ by replacing head occurrences
with $\top$ and body occurrences with the computed
fixpoint, as shown in \Cref{fig:overview:fix},
yielding a plain VC that can be discharged by \lean{grind}.

\mypara{Certifying Reductions}
A key challenge that \sys addresses is that in the foundational
setting, we cannot blithely rewrite the constraint with an alleged
solution: instead, we must conclusively prove that the reduction is
legitimate.
Thus, whenever \sys's CHC solving tactics \tZap and \tFix reduce
a constraint $\cstr$ to $\cstr'$ by computing the solution
for a single variable $\kvar$, they traverse $\cstr$ and the solution
to construct an explicit \emph{certificate} that demonstrates that
any proof of the reduced constraint $\cstr'$ yields a proof of the
original $\cstr$.
To this end, we observe that the structure of constraints,
solutions and hence, their certificates, form a triad
summarized in \Cref{fig:duality} (and detailed in \Cref{sec:certification}).
Their rhyming structure lets us mirror the traversal
used to compute the solution to construct an explicit
\sclean proof term (using the matching constructs
in \Cref{fig:duality}) that justifies why it is sound
to replace each head $\kvar$-application with $\top$
at that juncture in the constraint.
These proofs are then chained together to show that
any proof of the $\kvar$-free proposition is also a
proof of the original CHC, yielding an end-to-end
guarantee.

\begin{figure}[t]
\begin{minipage}[t]{0.48\linewidth}
\vspace{0pt}
$\begin{array}{lll}
\toprule
\textbf{Constraint}  & \textbf{Solution}                            & \textbf{Certificate} \\
\midrule
c_1 \wedge c_2       & \textit{sol}\,c_1 \;\vee\; \textit{sol}\,c_2 & \mathsf{inl} \,/\, \mathsf{inr} \\
\forall x.\ c        & \exists x.\ \textit{sol}\,c                  & \langle x,\, \cdot\rangle \\
\pred \to c  & \pred \;\wedge\; \textit{sol}\,c             & \langle \hyp,\, \cdot\rangle \\
\kappa(\overline{t}) & \textstyle\bigwedge_i z_i = t_i              & \langle \rfl, \dots, \rfl\rangle \\
\bottomrule
\end{array}$
\end{minipage}
\begin{minipage}[t]{0.48\linewidth}
\vspace{0pt}
\small
$\begin{array}{l}
\exists \kvar: \Int \to \Int \to \Int \to \Int \to \Prop \\
\forall \cBind{n}{\Int}. (2 \leq n\ \wedge\ \mathtt{fib}(n) < \mathtt{MAX} \ \wedge\ n < \mathtt{MAX}) \to \\
\quad \invisible{\wedge}\ \kvapp{\kvar}{1, 2, 2, n} \\
\quad \wedge\ \forall \cBind{\mathit{prev}\ \mathit{curr}\ i}{\Int}. \kvapp{\kvar}{\mathit{prev}, \mathit{curr}, i, n}\ \to \\
\quad \quad \invisible{\wedge}\ i \geq n\ \to\ \mathit{curr} = \mathtt{fib}(n) \\
\quad \quad \wedge\             i < n\ \to\ \mathit{prev} + \mathit{curr} \leq \mathtt{MAX}\ \wedge\\
\quad \quad \quad \quad \quad \quad \quad \kvapp{\kvar}{\mathit{curr}, \mathit{prev} + \mathit{curr}, i + 1, n }
\end{array}$
% ∃ k,
% ∀ (n : Int),
%   2 ≤ n ∧ fib n < MAX →
%     n ≤ MAX →
%       k 1 2 2 n ∧
%     ∧ ∀ (prev curr i : Int),
%         k0 prev curr i n →
%           (¬i < n → curr = fib n)
%         ∧ (i < n →
%             prev + curr ≤ MAX
%           ∧ k0 curr (prev + curr) (i+1) n)
\end{minipage}
\caption{(L) Triad of Constraints, Solutions and Certificates (\Cref{sec:certification}); (R) CHC from \texttt{fib\_loop} in \Cref{fig:ex:overflow}.}
\label{fig:duality}
\label{fig:fibloop:chc}
\end{figure}

\subsection{Foundational Verifiers}
\label{sec:overview:verifiers}
\label{sec:overview-4}

Our second contribution is to use the solvers to develop
the first foundational CHC-based verifiers.
First, a Floyd-Hoare style verifier \cite{Floyd1967, Hoare1969}
for an imperative language $\imp$ where the Horn variables
let us synthesize \emph{loop invariants} \cite{bjorner2015horn}.
Second, a verifier for a core functional calculus
extended with refinement types  $\lrk$, which uses
Horn variables to infer unknown \emph{refinements}.
In each case, we formalize the source language's
operational semantics and implement a
CHC generator that is proved sound against the semantics.
\sys can then discharge the generated CHCs, highlighting
two key benefits: first that it can be used
as a foundational "back-end" verifier for stateful
or functional programs \emph{in} \sclean; second
that it can reuse \sclean's proof ecosystem to
check specifications over arbitrary user-defined
functions.

\begin{figure}[t]
\begin{minipage}[t]{0.51\linewidth}
\vspace{0pt}
\begin{lstlisting}[style=leanhl, xleftmargin=.01in, basicstyle=\ttfamily\footnotesize]
@[grind] def fib (n : Int) : Int :=
  if n ≤ 1 then 1 else fib(n-1) + fib(n-2)
  termination_by n.toNat

@[qual] def q1 (v i:Int) := v = fib i
@[qual] def q2 (v i:Int) := v = fib (i-1)

theorem fibLoop (a : Int) :
  (*$\models$*) (λs => s "n" = a ∧ a ≥ 2)
     <| prev := 1; x := 2; i := 2;
        while i < n do (
          next := prev + x ;
          prev := x ;
          x := next ;
          i := i + 1) |>
     (λs => s "x" = fib a)
  := by generate; fix; (*$\texttt{grind}$*)
\end{lstlisting}
\end{minipage}
\begin{minipage}[t]{0.48\linewidth}
\small
\vspace{3pt}
$\begin{array}{l}
\exists \cBind{\kvar}{\Int \to \Int \to \Int \to \Int \to \Prop}. \\
\invisible{\wedge}\ \forall \cBind{n}{\Int}. \\
\enskip \quad\ n = a\ \rightarrow\ 2 \leq a\ \rightarrow\ \kvapp{\kvar}{n, 1, 2, 2} \\
           \wedge\ \forall \cBind{n\ p\ x\ i}{\Int}. \\
\enskip \quad\ \kvapp{\kvar}{n, p, x, i}\ \rightarrow\ i < n\ \rightarrow\ \kvapp{\kvar}{n, x, p+x, i+1} \\
           \wedge\ \forall \cBind{n\ p\ x\ i}{\Int}. \\
\enskip \quad\ \kvapp{\kvar}{n, p, x, i}\ \rightarrow\ n \leq i \rightarrow\ x = \mathtt{fib}(a)\\[1em]
\xrightarrow[\quad\quad]{\tFix} \\[1em]
\invisible{\wedge}\ \forall \cBind{n}{\Int}. \\
\enskip \quad\ n = a\ \rightarrow\ 2 \leq a\ \rightarrow\ \cFRM{\top} \\
           \wedge\ \forall \cBind{n\ p\ x\ i}{\Int}. \\
\enskip \quad\ \ldots\ \rightarrow\ i < n\ \rightarrow\ \cFRM{\top} \\
           \wedge\ \forall \cBind{n\ p\ x\ i}{\Int}. \\
\enskip \quad \cFRM{n = a \wedge i \leq n \wedge p = \mathtt{fib}(i-1) \wedge x = \mathtt{fib}(i)}\ \rightarrow \\
\enskip \quad \quad n \leq i \rightarrow\ x = \mathtt{fib}(a)\\[0.5em]
\end{array}$
\end{minipage}
\caption{End-to-end Verification for $\imp$: (L) Specification of \lean{fib} (top) and imperative implementation (bot),
(R) Generated VC (top) and reduced constraint (bot).}
\label{fig:ex:fib:imp}
\end{figure}

\mypara{Loop Invariant Synthesis for \imp}
\Cref{fig:ex:fib:imp} shows our verifier for \imp in action.
%
% On the left, we have the textbook loop implementing the Fibonacci
% function.
On the left, we first have a \emph{specification} for \lean{fib}
as a plain \sclean function, and below, a textbook imperative
loop that computes the $n^\mathit{th}$ Fibonacci number.
We use \sys to prove the (partial) Floyd-Hoare triple that says
that when the loop is executed from a state where the value
of the variable \lean{n} equals $a$ with $a \geq 2$
then (if it exits) the value of \lean{x} equals \lean{fib}$(a)$.

The proof of the triple is in two steps.
First, the \lean{generate} tactic reduces the assertion
to the CHC shown on the top right.
This CHC has a single cyclic Horn variable $\kvar$ that
represents the unknown loop invariant.
The classical weakest-precondition based VC generation
\cite{dijkstra1976} yields a CHC with constraints that
check
(1)~the invariant holds upon loop entry;
(2)~the invariant is inductive (re-established by the body) and
(3)~the specified postcondition holds on exit.
Second, \sys's \Fixname~ tactic reduces the
CHC using predicate abstraction to synthesize
a solution for the $\kvar$ using the qualifiers
shown above, yielding the constraint
shown below, discharged by \lean{grind}.

\mypara{Refinement Synthesis for \lrk}
\Cref{fig:ex:max:lrk} shows our verifier for $\lrk$ in action.
On the left, we have a function \lean{max} that returns
the larger of its two inputs with a return type carrying
an unknown refinement $\kvar(\nu, y, x)$.
The expression \lean{ex} applies \lean{max} to $6$ and $7$.
We use \sys to prove that \lean{ex} can be typed
as $\{\nu:\bint \mid \nu = 7\}$ under some typing
environment $K$ that assigns $\kvar$ a concrete
predicate.

The proof proceeds in two steps.
First, the \lean{generate} tactic runs $\lrk$'s
bidirectional type checker to emit the CHC shown
on the top right.
This CHC has a single \emph{acyclic} Horn variable
$\kvar : \Int \to \Int \to \Int \to \Prop$
representing the unknown return refinement.
Type checking the body of \lean{max} yields two
head constraints: if $x \le y$ then the return
value $y$ must satisfy $\kvapp{\kvar}{y, y, x}$,
and symmetrically $\kvapp{\kvar}{x, y, x}$ in the
else branch.
The call site \lean{max 6 7} checked against
$\{\nu:\bint \mid \nu = 7\}$ contributes the final
clause: any $v$ satisfying $\kvapp{\kvar}{v, 7, 6}$
must equal $7$.
Second, \sys's \Zapname~ tactic reduces the
CHC: since $\kvar$ is acyclic, it computes
its strongest solution, replaces
each head occurrence with $\top$ and each
body occurrence with the computed solution
(highlighted in gray on the bottom right),
yielding a plain VC that \lean{grind} dispatches.

\begin{figure}[t]
\begin{minipage}[t]{0.5\linewidth}
\vspace{0pt}
\begin{lstlisting}[style=leanhl,basicstyle=\ttfamily\footnotesize]
max : x:Int -> y:Int -> {ν:Int | (*$\kappa$*)(v,y,x)}
max (*$\eqdef$*) λ x y ->
        let c = x ≤ y in
        if c then y else x

ex  (*$\eqdef$*) let a = 6 in
(*$\invisible{\eqdef}$*)     let b = 7 in
(*$\invisible{\eqdef}$*)     max a b

example: ∃ (*$K$*), (*$\emptyset$*) (*$\vdash_K$*) ex (*$\Leftarrow$*) {ν:Int | ν = 7}
  := by
     generate
     zap
     (*$\texttt{grind}$*)
\end{lstlisting}
\end{minipage}
\hfill
\begin{minipage}[t]{0.48\linewidth}
\small
\vspace{3pt}
$\begin{array}{l}
\exists \cBind{\kvar}{\Int \to \Int \to \Int \to \Prop}. \\
\invisible{\wedge}\ \forall \cBind{x\ y}{\Int}. \\
\quad \invisible{\wedge}\  x \le y \rightarrow \kvapp{\kvar}{y, y, x} \\
\quad            \wedge\  \neg (x \le y) \rightarrow \kvapp{\kvar}{x, y, x} \\
\wedge\  \forall \cBind{v}{\Int}. \\
\quad \invisible{\wedge}\  \kvapp{\kvar}{v, 7, 6} \rightarrow v = 7 \\[0.7em]
\xrightarrow[\quad\quad]{\mathtt{zap}} \\[0.7em]
\invisible{\wedge}\ \forall \cBind{x\ y}{\Int}. \\
\quad \invisible{\wedge}\  x \le y \rightarrow \cFRM{\top} \\
\quad            \wedge\  \neg (x \le y) \rightarrow \cFRM{\top} \\
\wedge\  \forall \cBind{v}{\Int}. \\
\quad \invisible{\wedge}\  \cFRM{(6 \leq 7 \wedge v = 7) \vee (\neg 6 \leq 7 \wedge v = 6)} \rightarrow v = 7\\[0.2em]
\end{array}$
\end{minipage}
\caption{End-to-end Verification for $\lrk$: (L) Typing an $\lrk$ that computes the larger of two inputs, (R) CHC generated by typing.}
\label{fig:ex:max:lrk}
\end{figure}

\subsection{Expressive Refinement Typing}
\label{sec:overview:expressiveness}

Our last contribution shows how embedding CHCs
into \sclean liberates refinement types from the
limitations of SMT-solver theories, enabling their
use for proving functional correctness properties
of Rust using the \scflux refinement type checker
(\Cref{sec:evaluation}).

\mypara{Specifying Non-Overflow}
The left of \Cref{fig:ex:overflow} shows a Rust
function that implements the same imperative
\lean{fib}onacci computation from \Cref{fig:ex:fib:imp}.
However, in the Rust setting, all the variables are
fixed-sized (unsigned) integers \inlineflux{usize}
and hence, \scflux additionally requires us to
prove that various arithmetic operations, notably
\inlineflux{prev + curr}, do not overflow.
The operation \emph{will} overflow unless
the \inlineflux{n} is sufficiently small,
as stated by the additional precondition.
Thus, when run on this code, \scflux generates a CHC
\texttt{FibLoopCHC} shown on the right in \Cref{fig:fibloop:chc},
that is similar to that in \Cref{fig:ex:fib:imp}
but with an additional \emph{assumption} that
$\mathtt{fib}(n) < \mathtt{MAX}$ and a \emph{head}
that checks $\mathit{prev} + \mathit{curr} \leq \mathtt{MAX}$.

\mypara{Verifying Non-Overflow}
The right side of \Cref{fig:ex:overflow} shows how we can
verify the code by proving \texttt{FibLoopCHC} in \sclean:
we need simply \lean{unfold} and use the \sys tactic \lean{fix},
which uses the qualifiers in \Cref{fig:ex:fib:imp} to solve for
and reduce the $\kvar$, leaving behind the single goal
$$2 \leq i < n, \mathtt{fib}(n) < \mathtt{MAX}, \ldots \vdash \mathtt{fib} (i-1) + \mathtt{fib}(i) \leq \mathtt{MAX}$$
that we can discharge by proving and applying a lemma that
\lean{fib} is monotonic (\lean{fib\_mono}) in \sclean.
In the SMT based setting \cite{dafny, verus}
verification failures are opaque.
The programmer would have to stare hard
at the code to understand the root cause,
then prove the monotonicity lemma using
the spare (relative to \sclean) affordances
of the verifier, and finally instantiate
the lemma by cluttering the source with
assertions.
In contrast, \sclean's interactive context
% shows what the solver knows, and hence,
makes it easy to spot what is \emph{missing},
and its ecosystem of theories and tactics
makes it easy to then close the gap.
% , without  modifying the source.
% \TODO{You CAN do this in SMT but blah blah ...}

% We evaluate our system on a range of Rust case studies, including a ring
% buffer, sorting algorithms, and a hash map, that demonstrate the
% expressiveness it brings to \scflux (\Cref{subsec:rq1-expr}). Its \sclean
% backend then discharges up to $95.7$\% of the CHCs generated across our
% benchmark suite automatically (\Cref{subsec:rq3-auto}).

\begin{figure}[t]
\begin{minipage}[t]{.53\textwidth}
\vspace{0pt}
\begin{fluxsh}[basicstyle=\ttfamily\footnotesize]
fn fib_loop(n: usize) -> usize[fib(n)]
  requires 2 <= n && fib(n) < usize::MAX
{
    let mut prev = 1;
    let mut curr = 2;
    let mut i = 2;
    while i < n {
        let next = prev + curr;
        prev = curr;
        curr = next;
        i += 1;
    }
    return curr;
}
\end{fluxsh}
\end{minipage}
\begin{minipage}[t]{.46\textwidth}
\begin{lstlisting}[style=leanhl,basicstyle=\ttfamily\footnotesize]
theorem fib_mono: ∀ (i j : Int),
  0 ≤ i → i ≤ j → fib i ≤ fib j
  := by
  intros i j hi hij
  induction j using fib_spec_fib.induct
  generalizing i with grind

def FibLoop_proof : FibLoopCHC := by
  unfold FibLoopCHC
  fix
  have _ : fib (i + 1) <= fib n := by
    apply fib_mono <;> (*$\texttt{grind}$*)
  (*$\texttt{grind}$*)
\end{lstlisting}
\end{minipage}
\caption{Expressive Refinement Typing in \scflux}
\label{fig:ex:overflow}
\end{figure}

%% file: sections/03-constraints.tex
\section{Constraints and Proofs}
\label{sec:chc}

\sys represents a system of \emph{Constrained Horn Clauses} (CHCs)
in a restricted fragment of \sclean terms of type \Prop.
The syntactic structure preserves the information needed
for Horn solving, while ensuring that key logical connectives
like $\forall$, $\wedge$, $\to$ are native to \sclean, thereby
allowing us to implement the solver as an ordinary metaprogram
over \sclean's datatype \Expr.

This representation provides two key benefits.
First, it gives us \emph{theories for free} as a \sclean
proposition can use formulas over arbitrary \sclean theories
to write specifications, allowing us to use arbitrary
\sclean definitions to synthesize invariants and use
\sclean proof automation --- \eg \lean{omega}, \lean{bv\_decide},
\lean{decide}, \lean{grind} or lemmas \cite{mathlib} ---
to simplify verification, and, we do not need to
maintain a deeply embedded AST for each theory, re-using
\Expr allows for inheriting theories of \sclean, keeping
the solver simple, but expressive.
Second, it provides \emph{soundness for free}, by letting us
implement CHC solvers (\Cref{sec:solver}) that emit proof
terms which can then be checked by the kernel, guaranteeing
the soundness of verification without trusting any code
beyond the kernel itself.

As a running example throughout this section, consider a
constraint that asserts $x \ge 0$ and an unknown refinement
$\kvar_1$ relating $x$ to its successor $\nu$:
\begin{equation}
\cstr_0 \ \doteq\ \exists \kvar_1.\
\forall x.\ 0 \le x \to
(\forall \nu.\ \nu = x+1 \to \kvapp{\kvar_1}{\nu, x} ) \wedge
(\forall a.\ \kvapp{\kvar_1}{a, x} \to 1 \le a)
\label{ex:running}
\end{equation}
We will point back to $\cstr_0$ as we introduce each piece of
notation below.

\subsection{Syntax and Semantics}
\label{sec:chc-syntax}

\figcstrsyntax{}

\Cref{fig:chc-grammar} shows the syntax for CHCs,
which follows a nested structure that preserves
scoping information our solvers later use to
recover the context of each Horn application (\cref{sec:zap}).

\mypara{Sorts and Terms}
We assume a base family of \emph{sorts} $b$ which
range over arbitrary \sclean types like
integers, bit-vectors, sets, maps \etc
We write $x_1, x_2, x_3, \ldots$ for \emph{variables} of sort $b$,
and write $t_1, t_2, t_3, \ldots$ for arbitrary \sclean terms of
the appropriate sort.
In $\cstr_0$ above, $x$, $\nu$, and $a$ are variables of sort $\Int$,
while $x+1$ is a term of sort $\Int$ and $1 \le a$ is a term of sort $\Prop$.
By convention, when a term is of sort $\Prop$, we use the metavariable $\varphi$
to denote it.
Similarly, we use $\hyp$ to range over variables naming hypotheses, \ie,
of sort $\Prop$.
We assume terms include the trivial propositions $\top$ and $\bot$
(of sort $\Prop$), which we will use to simplify constraints during
solving (\Cref{sec:solver}).

\mypara{Horn Variables}
A \emph{Horn (refinement) variable} is a \sclean variable
$\kvar : b_1 \to \cdots \to b_{n_\kappa} \to \Prop$, and occurs
in a constraint as horn applications $\kvapp{\kvar}{\overline{t}}$.
We assume variables $\kvar$ do not appear inside arbitrary
terms $\leanterm$.
We will introduce such variables to represent \emph{unknown}
refinements, by existentially quantifying them in the constraints.
For any constraint $\cstr$, we write $\KVars(\cstr)$ for the set
of Horn variables occurring in $\cstr$.
In $\cstr_0$, $\kvar_1 : \Int \to \Int \to \Prop$ is a horn variable relating
two integers.

\mypara{Atoms}
An \emph{atom} $\pred$ is either
a $\kvar$-free term $\leanterm$ of sort \Prop, or
a \emph{horn application} $\kvapp{\kvar}{\overline{t}}$.
% of a variable $\kvar$ to some terms $\overline{t}$.
%
In $\cstr_0$, $0 \le x$, $\nu = x+1$, and $1 \le a$ are
$\kvar$-free terms, while $\kvapp{\kvar_1}{\nu, x}$ and
$\kvapp{\kvar_1}{a, x}$ are horn applications.

\mypara{Constraints}
A \emph{Horn constraint} $\cstr$ is either
(1)~a \emph{head} atom $\pred$,
(2)~a \emph{conjunction} $\cAnd{\cstr_1}{\cstr_2}$,
(3)~an \emph{implication} $\cImp{\pred}{\cstr}$, or
(4)~a \emph{quantification} $\cAll{x}{b}{\cstr}$
over a variable $x$ of sort $b$.
A \emph{closed constraint} $\Cstr$ is a \sclean proposition
of the form $\exists \kvar_1, \ldots, \kvar_n. \cstr$,
where all the refinement variables in $\cstr$ are
existentially bound at the top.
A \emph{verification condition} (VC) is a closed constraint
that has \emph{no} Horn variables.
The running example $\cstr_0$ is itself a closed constraint.
In the sequel we abuse notation and write $\cstr$ for both open and
closed constraints, as the intended meaning is always clear from context.

\mypara{Predicates and Interpretations}
A \emph{predicate} in \Preds{}  is a \sclean term of sort
$b_1 \to \ldots \to b_n \to \Prop$ (\ie, $\Preds{} \subset \Terms$).
For a horn variable $\kvar$, an \emph{interpretation} $\asgn \in \Preds$
is a predicate of the same type as $\kvar$.
For $\cstr_0$, the predicate $\asgn_1 \doteq \lambda \nu\, x.\ \nu \ge 1$
is an interpretation for $\kvar_1$, since $\asgn_1 : \Int \to \Int \to \Prop$.

\mypara{Satisfiability}
We \emph{apply} an interpretation $\asgn$ to a constraint
$\cstr$ by replacing each $\kvar$-application
$\kvapp{\kvar}{\overline{t}}$ with
$\kvapp{\asgn}{\overline{t}}$; we write $\subasgn{\cstr}{\kvar}{\asgn}$
for the result.
Applying an interpretation for each $\kvar$ in a closed constraint
instantiates the existentials, leaving a $\kvar$-free proposition.
%
% The \sclean kernel then \emph{normalises} the result,
% $\beta$-reducing those redexes and unfolding the theory
% symbols the atoms mention, to a $\kvar$-free proposition
% denoted by $\cApp{c}{\asgn}$.
%
% An interpretation $\asgn$ \emph{satisfies} a constraint $\cstr$,
% written $\cSat{\asgn}{\cstr}$, if $\cApp{\cstr}{\asgn}$ is a
% valid (\ie, provable) proposition.
%
% A constraint $\cstr$ is \emph{satisfiable} if there exist
% interpretations, one per $\kvar$, each satisfying $\cstr$.
A constraint $\cstr$ is \emph{satisfiable} if there exist
an interpretation for each $\kvar$ such that applying
them makes the constraint a valid proposition (\ie, provable in \sclean).
If an interpretation makes a constraint satisfiable we call it
a \emph{solution}.
For example, applying $\asgn_1$ to $\cstr_0$ (and $\beta$-reducing) replaces
$\kvapp{\kvar_1}{\nu, x}$ with $\nu \ge 1$ and
$\kvapp{\kvar_1}{a, x}$ with $a \ge 1$ leaving the
following verification condition:
$$
\forall x.\ 0 \le x \to
(\forall \nu.\ \nu = x+1 \to \nu \ge 1)\wedge
(\forall a.\ a \ge 1 \to 1 \le a)
$$
This VC is valid; hence $\asgn_1$ is a solution of $\kvar_1$ and $\cstr_0$
is satisfiable.

\mypara{Theories for free}
As terms and predicates are \sclean terms,
they can use any theory expressible in \sclean.
For example, arithmetic inequalities
$x \le y$ over $\Int$ sorted variables $x$ and $y$,
mask equations over $\BitVec{n}$, or
equations over user-defined functions
$v = \lean{fib}(i)$.

\subsection{Structural Operations}
\label{sec:chc-props}

To solve a constraint, the solvers in \Cref{sec:solver} need to
know, for each Horn variable, the smallest enclosing scope and the
hypotheses visible there; we now define the machinery to compute this.

\mypara{Prefixes, Contexts, Heads}
A \emph{prefix} $\pfx$ is a list of \emph{steps}
used to navigate a path through a constraint's syntax tree.
A \emph{step} $s$ is either a left \pLeft or right \pRight
choice in a conjunction,
a variable binding \pBind{x}{b},
or an assumption \pAsm{\pred}.
A \emph{context} $\ctx$ is an ordered list
of variable bindings $\cBind{x}{b}$ (one per $\forall$),
named hypotheses $\cBind{\hyp}{\pred}$ (one per $\to$),
or a $\pLeft$ or $\pRight$ choice gathered along a prefix.
$$\begin{array}{l@{\quad}r@{\;}c@{\;}l}
\textbf{\textit{Prefixes}}  & \pfx & ::= & \varepsilon \mid s \cdot \pfx \\[0.2em]
\textbf{\textit{Steps}}     & s    & ::= & \pLeft \mid \pRight \mid \pBind{x}{b} \mid \pAsm{\pred} \\[0.2em]
\textbf{\textit{Contexts}}  & \ctx & ::= & \varepsilon \mid \cBind{x}{b}; \ctx \mid \cBind{\hyp}{\pred}; \ctx \mid \pLeft; \ctx \mid \pRight; \ctx
\end{array}$$
The procedure \CtxOf{\cstr}{\pfx} (\cref{fig:ctx})
computes the \emph{context} of a
constraint $\cstr$ at prefix $\pfx$,
by accumulating the bindings and guards along $\pfx$.
The procedure \HeadOf{\cstr}{\pfx} (\cref{fig:head})
returns the sub-constraint of $\cstr$
at the end of $\pfx$.
The \emph{truncated context} $\ctx \setminus \pfx$
is the context obtained by dropping the prefix of
$\ctx$ that matches $\pfx$.

\mypara{Goals}
A prefix \pfx{} is a \emph{head-prefix} of $\cstr$ if \HeadOf{\cstr}{\pfx}
is a head atom $\pred$.
We write $\Heads{\cstr}$ for the set of Horn variables with a \emph{head}
application in $\cstr$, \ie,
$\Heads{\cstr} \doteq \{ \kvar \mid \kvapp{\kvar}{\overline{t}} = \HeadOf{\cstr}{\pfx}\ \mbox{for some head-prefix}\ \pfx \}$.
A \emph{goal} $\ctx \vdash \pred$
is a context $\ctx$ paired with a head $\pred$.
For a head-prefix $\pfx$, the goal of $\cstr$ at $\pfx$ is defined
as $\GoalOf{\cstr}{\pfx} \doteq \CtxOf{\cstr}{\pfx} \vdash \HeadOf{\cstr}{\pfx}$.
For example, in the running constraint $\cstr_0$, we have:
$$
\newcommand{\mypfx}{\pBind{x}{\Int} \cdot \pAsm{0\leq x} \cdot \pLeft \cdot \pBind{\nu}{\Int} \cdot \pAsm{\nu = x+1}}
\begin{array}{l}
% \pfx_0 \;\doteq\; \pBind{x}{\Int} \cdot \pAsm{0\leq x} \cdot \pLeft \cdot \pBind{\nu}{\Int} \cdot \pAsm{\nu = x+1} \\[0.4em]
% \GoalOf{\cstr_0}{\pfx_0} \;=\;
%   \cBind{x}{\Int}; \cBind{h_0}{0 \le x}; \pLeft; \cBind{\nu}{\Int}; \cBind{h_1}{\nu = x+1} \vdash \kvapp{\kvar_1}{\nu, x}
\GoalOf{\cstr_0}{\mypfx} \;=\;
  \cBind{x}{\Int}; \cBind{\hyp_0}{0 \le x}; \pLeft; \cBind{\nu}{\Int}; \cBind{\hyp_1}{\nu = x+1} \vdash \kvapp{\kvar_1}{\nu, x}
\end{array}
$$
Given a context $\ctx$ and term $\termprop$,
we say the goal $\ctx \vdash \termprop$ is \emph{valid} if
$\termprop$ is derivable under the bindings and guards in $\ctx$.
A goal is a \emph{$\kvar$-goal} if its head is $\kvapp{\kvar}{\overline{t}}$;
an interpretation $\asgn$ is a solution for such a goal if the goal
$\ctx \vdash \asgn(\overline{t})$ is valid.
For example, the interpretation $\asgn_1 \doteq \lambda \nu\, x.\ \nu \ge 1$
is a solution for the goal above, since $\nu \ge 1$ is valid under
the assumptions $0 \le x$ and $\nu = x + 1$.

\mypara{Scopes and Well-formedness}
The procedure \Scope{\kvar}{\cstr},
shown in \Cref{fig:scope} computes
the \emph{scope} of a variable
$\kvar$ in a constraint $\cstr$.
Informally, $\Scope{\kvar}{\cstr}$ returns a prefix $\pfx$
corresponding to a path from the root of $\cstr$ to
the \emph{smallest sub-constraint} of $\cstr$ that
contains \emph{all} the applications of $\kvar$ in $\cstr$.
The scope is more interesting once a constraint has more than one
Horn variable, so consider extending $\cstr_0$ with a second
variable $\kvar_2$ and a sub-constraint nested under the guard $\kvar_1(a,x)$.
For the constraint $\cstr_1$ below, we get the scopes shown on the right
($\wedge$ is right associative).
% \begin{equation}
% \begin{array}{c}
% \cstr_1 \ \doteq\
% \begin{array}[t]{l}
% \exists \kvar_2, \kvar_1.\ \forall x.\ 0 \le x \to \\
% \quad \quad \quad \ \invisible{\wedge}\ \forall \nu.\ \nu = x+1 \to \kvapp{\kvar_1}{\nu, x} \\
% \quad \quad \quad \ \wedge\ \forall a.\ \kvapp{\kvar_1}{a, x} \to \\
% \quad \quad \quad \quad \ \invisible{\wedge}\ 1 \le a \\
% \quad \quad \quad \quad \ \wedge\ \kvapp{\kvar_2}{a+1, a, x} \\
% \quad \quad \quad \quad \ \wedge\ \forall \nu.\ \kvapp{\kvar_2}{\nu, a, x} \to 2 \le \nu
% \end{array}
% \\ \\
% \begin{array}{rcl}
% \Scope{\kvar_1}{\cstr_1} & \doteq & \pBind{x}{\Int} \cdot \pAsm{0 \le x} \\
% \Scope{\kvar_2}{\cstr_1} & \doteq & \pBind{x}{\Int} \cdot \pAsm{0 \le x} \cdot \pRight \cdot \pBind{a}{\Int} \cdot \pAsm{\kvapp{\kvar_1}{a, x}} \cdot \pRight
% \end{array}
% \end{array}
% \label{ex:scope}
% \end{equation}

\begin{equation}
\begin{array}{rc@{}ll}
\cstr_1 & \doteq &
\begin{array}{l}
\exists \kvar_2, \kvar_1.\ \forall x.\ 0 \le x \to \\
\quad \quad \quad \quad \invisible{\wedge}\ \forall \nu.\ \nu = x+1 \to \kvapp{\kvar_1}{\nu, x} \\
\quad \quad \quad \quad \wedge\             \forall a.\ \kvapp{\kvar_1}{a, x} \to \\
\quad \quad \quad \quad \quad \invisible{\wedge}\ 1 \le a \\
\quad \quad \quad \quad \quad \wedge\ \kvapp{\kvar_2}{a+1, a, x} \\
\quad \quad \quad \quad \quad \wedge\ \forall \nu.\ \kvapp{\kvar_2}{\nu, a, x} \to 2 \le \nu
\end{array}
&
\begin{array}{rcl}
\quad \Scope{\kvar_1}{\cstr_1} & \doteq & \pBind{x}{\Int} \cdot \pAsm{0 \le x} \\
\quad \Scope{\kvar_2}{\cstr_1} & \doteq & \pBind{x}{\Int} \cdot \pAsm{0 \le x} \cdot \pRight \cdot \pBind{a}{\Int} \cdot \pAsm{\kvapp{\kvar_1}{a, x}} \cdot \pRight
\end{array}
\end{array}
\label{ex:scope}
\end{equation}

\mypara{Well-formedness}
The solver in \Cref{sec:solver} assumes that each Horn
variable's arguments are passed consistently with the binders
in scope; we call this property well-formedness.
Formally, let ${\pfx \doteq \Scope{\kvar}{\cstr}}$,
$\ctx \doteq \CtxOf{\cstr}{\pfx}$, and
$\Binders{\ctx} \doteq x_0,\ldots,x_n$ the binders of $\ctx$ listed innermost first.
We say that $\cstr$ is \emph{well-formed} for $\kvar$ if every application of
$\kvar$ in $\cstr$ has the form $\kvapp{\kvar}{\ldots,x_0,\ldots,x_n}$,
\ie, its trailing arguments are exactly the binders $\Binders{\ctx}$ in scope.
Its remaining leading arguments are arbitrary terms,
we write $\Rank{\kvar}$ for their number.
A constraint $\cstr$ is \emph{well-formed}
if it is well-formed for all $\kvar \in \KVars(\cstr)$.
It is straightforward to ensure
well-formedness by simply \emph{padding}
the parameters of each $\kvar$.
Thus, in the sequel, for simplicity, we assume
that all constraints are well-formed.
Constraint $\cstr_1$ in \Cref{ex:scope} is well-formed
as $x$ is the only binder in the scope of $\kvar_1$,
and each application of $\kvar_1$ has $x$ as argument,
while the binders in the scope of $\kvar_2$ are $a$ and $x$,
and each application of $\kvar_2$ has them as arguments.

\figscope{}

\subsection{Proof Terms}
\label{sec:proof-terms}

% \figproofterms{}

The solvers in \Cref{sec:solver} emit proofs to justify
their rewrites.
\Cref{fig:zap-proofterms} shows the fragment of \sclean's
terms that the solvers emit as proofs.
The fragment comprises exactly the introduction and
elimination forms of the constraint
logic (\Cref{fig:chc-grammar}) and
the disjunctions and existentials
that arise in the solutions $\asgn$
computed by our solvers.
Specifically, the fragment contains terms \leanterm;
$\lambda$-terms and applications for introducing and eliminating $\forall$
and implication;
% ($\pf\,\leanterm$ at a $\forall$, $\pf\,\pf'$ at an implication);
pairs $\langle \pf, \pf'\rangle$ and projections $\pf.1$, $\pf.2$
for conjunction, together with $\mathsf{inl}$ and $\mathsf{inr}$ for disjunction;
existential witnesses $\langle t, \pf\rangle$, unpacking terms
$\letin{\langle x, \hyp\rangle = \pf}{\pf'}$, and the constants $\rfl$
and $\langle\rangle$ for equality and $\top$.
The typing of these terms is inherited from the \sclean kernel,
which re-checks every certificate our solvers emit.
These are the standard notations for \sclean's logical connectives,
and the same core is shared by other type-theoretic proof assistants
such as \scrocq \cite{Coq-refman} and \textsc{Agda} \cite{Norell2009},
so the certifying solvers of \Cref{sec:solver}
are not tied to \sclean.

%% file: sections/04-solvers.tex
\section{Solvers}
\label{sec:sec4}
\label{sec:solver}

\sys uses a three-step approach
to solving Horn constraints.
First, we \emph{partition} the horn
variables into a cut set and an
acyclic set (\Cref{sec:partition}).
Second, we \emph{eliminate} the acyclic
variables by substituting them with
their strongest solutions (\Cref{sec:zap}).
Third, we perform a \emph{fixpoint} computation
to find a conservative approximation for the
remaining cyclic variables (\Cref{sec:fixpoint}).
These steps are implemented in two \sclean
tactics --- \tZap and \tFix\ --- which \emph{rewrite}
the original horn constraint by eliminating the
acyclic and cut variables respectively, leaving
behind a $\kvar$-free VC which can be discharged
using \sclean tactics like \lean{simp}, \lean{omega}
or SMT-inspired ones like \lean{grind}.
While the algorithms used in \tZap and \tFix
are inspired by prior work, in the
foundational setting they must also
\emph{certify} the soundness of the rewrites.
Next, we describe how \sys implements
partitioning (\Cref{sec:partition}),
and the two tactics, \tZap (\Cref{sec:zap}) and
\tFix (\Cref{sec:fixpoint}),
in a certifying fashion, to obtain
a foundational CHC solver.

\subsection{Partitioning Horn Variables}
\label{sec:partition}

\mypara{Dependencies}
The \emph{dependencies} of a constraint $\cstr$ is a subset of
$\kvars \times \kvars$ defined by the function
$$
\deps{\cstr}\ \doteq\ \{ (\kvar, \kvar') \mid \exists \pfx. \kvar \in {\CtxOf{\cstr}{\pfx}} \ \mbox{and}\ \kvapp{\kvar'}{\overline{t}} = \HeadOf{\cstr}{\pfx} \}
$$
We say $\kvar'$ \emph{depends on} $\kvar$ in $\cstr$
if $(\kvar, \kvar') \in \deps{\cstr}$.
Informally, this means that $\kvar$ occurs in a guard $\pred$
along some path to a head containing $\kvar'$.
The \emph{dependency graph} of a constraint $\cstr$
is the directed graph $\dg{\cstr}$ whose vertices are
$\KVars(c)$ and which has an edge from $\kvar$ to $\kvar'$
if $\kvar'$ depends on $\kvar$ in $\cstr$.
Intuitively, if $\kvar'$ depends on $\kvar$, then we need to
solve for $\kvar$ \emph{before} we can solve for $\kvar'$.

\mypara{Cycles, Cuts and Partitions}
A set of variables $\overline{\kvar} \subseteq \KVars(\cstr)$ is a \emph{cut set}
for \cstr\ if it corresponds to a \emph{feedback vertex set} of $\dg{\cstr}$,
\ie, if removing the $\overline{\kvar}$ vertices and their incident edges
renders $\dg{\cstr}$ acyclic \cite{Karp72}.
We say that $\kcycs{\kvar}, \kacys{\kvar}$ \emph{partitions} the horn variables
of a constraint $\cstr$, if
(1) $\kcycs{\kvar}$ and $\kacys{\kvar}$ are disjoint,
(2) $\kcycs{\kvar} \cup \kacys{\kvar} = \KVars(\cstr)$,
(3) $\kcycs{\kvar}$ is a cut set for $\cstr$.
We say informally that $\kacys{\kvar}$ is the set
of \emph{acyclic} variables.
We define $\Partition{\overline{\kvar}}{\cstr}$
to be a function that computes the dependencies
of $\cstr$, builds a dependency graph, and computes
a feedback vertex set to return a partition of the
horn variables of $\cstr$.
The acyclic set $\kacys{\kvar}$ is returned in \emph{topological}
order: each variable is bound outside the variables it depends on,
so the innermost-first elimination of \Cref{sec:zap} removes a
variable's dependencies before the variable itself, and every
computed solution mentions only cut variables.
Computing a \emph{minimum} feedback vertex set is
NP-complete~\cite{Karp72}, so $\Partition{\overline{\kvar}}{\cstr}$
uses a greedy heuristic based on Tarjan's algorithm~\cite{Tarjan72};
minimality affects only the residual CHC size, not soundness.

\subsection{Solving Acyclic Variables}
\label{sec:acyclic}
\label{sec:zap}

\figzapelim{}

\begin{figure}[t]
\small
\centering
$\begin{array}{@{}l@{}c@{}l@{}c@{}l}
\begin{array}{l}
\exists \kvar_2, \kvar_1. \\
\forall x.\ 0 \le x \to \\
\ \invisible{\wedge}\ \forall \nu.\ \nu = x+1 \to \kvapp{\kvar_1}{\nu, x} \\
\ \wedge\ \forall a.\ \kvapp{\kvar_1}{a, x} \to \\
\quad\ \invisible{\wedge}\ 1 \le a       \\
\quad\ \wedge\ \kvapp{\kvar_2}{a+1,a,x}  \\
\quad\ \wedge\ \forall \nu.\ \kvapp{\kvar_2}{\nu,a,x} \to \\
\quad\quad\quad\ 2 \le \nu  \\
\end{array}
&
\xrightarrow[\cLam{z_0\,x}{z_0 = x + 1}]{\ \mbox{reduce}\ \kvar_1}
&
\begin{array}{l}
\exists \kvar_2. \\
\forall x.\ 0 \le x \to \\
\ \invisible{\wedge}\ \forall \nu.\ \nu = x+1 \to \top \\
\ \wedge\ \forall a.\ a = x+1 \to \\
\quad\ \invisible{\wedge}\ 1 \le a       \\
\quad\ \wedge\ \kvapp{\kvar_2}{a+1,a,x}  \\
\quad\ \wedge\ \forall \nu.\ \kvapp{\kvar_2}{\nu,a,x} \to \\
\quad\quad\quad\ 2 \le \nu  \\
\end{array}
&
\xrightarrow[\cLam{z_0\,a\,x}{z_0 = a + 1}]{\ \mbox{reduce}\ \kvar_2}
&
\begin{array}{l}
\\
\forall x.\ 0 \le x \to \\
\ \invisible{\wedge}\ \forall \nu.\ \nu = x+1 \to \top \\
\ \wedge\ \forall a.\ a = x+1 \to \\
\quad\ \invisible{\wedge}\ 1 \le a  \\
\quad\ \wedge\ \top  \\
\quad\ \wedge\ \forall \nu.\ \nu = a+1 \to \\
\quad\quad\quad\ 2 \le \nu  \\
\end{array}
\end{array}$
\caption{Eliminating acyclic variables with \Zapname.
(L) initial constraint from \Cref{ex:scope},
(C) result of reducing $\kvar_1$, and
(R) result of reducing $\kvar_2$ yielding a VC with no Horn variables.}
\label{fig:ex:zap}
\end{figure}

\mypara{Top-level Algorithm (\Zapname{}):}
\Cref{fig:algo:zap} defines the top-level \Zapname{}
algorithm which takes as input a closed constraint
$\ecstr{\kvar}{\cstr}$, and returns a pair
of a \emph{new} constraint $\ecstr{\kcyc{\kvar}}{\cstr'}$
where all occurrences of the acyclic variables have
been removed, and a \emph{proof term} $\pf'$ that
shows that the output constraint \emph{entails}
the original constraint.
First, \Zapname{} partitions the horn variables
$\overline{\kvar}$ into a cut set $\kcycs{\kvar}$
and an acyclic set $\kacys{\kvar}$.
Next, \Zapname{} invokes \Zapsname{}, which recurses over the
acyclic prefix: it peels the outermost $\kvar$, recursively
eliminates the inner variables to obtain a residual $\cstr_0$,
and invokes \Zaponame{} to eliminate $\kvar$ from $\cstr$.
\Zapsname{} then composes the proofs
$\pf_2 : \cstr_2 \to \exists\kvar.\,\cstr_1$ and $\pf_1 : \cstr_1 \to \cstr$
into a proof $\pf_3 : \cstr_2 \to \exists\kvar.\,\cstr$
by unpacking and repacking the $\kvar$-witness.
At the top, \Zapname{} re-bundles the cut and acyclic witnesses
in the original binder order of $\overline{\kvar}$.
The \Zaponame{} function eliminates a \emph{single}
acyclic variable $\kvar$ from a constraint $\cstr$
in two steps.
In the first step, \Zaponame{} invokes the \Elimname{} procedure
to \emph{eliminate} the variable $\kvar$ by computing a triple comprising
(1) the \emph{strongest solution} $\asgn$ for $\kvar$ in \cstr,
(2) a prefix $\pfx$ corresponding to the scope of $\kvar$ in \cstr, and
(3) a constraint $\cstr'$ where $\kvar$ has been eliminated.
In the second, \Zaponame{} invokes \Certname with the triple
to construct the term
$\cLam{\hyp}{\langle \asgn,\ \Cert{\varepsilon}{\hyp}{\cstr} \rangle}$
that \emph{certifies} the elimination (\Cref{sec:certification}):
it maps any proof $\hyp$ of $\cstr'$ to a proof of
$\exists\kvar.\,\cstr$, with $\asgn$ as the witness.
\Cref{fig:ex:zap} illustrates this process
on the constraint $\cstr_1$ from \Cref{ex:scope}.

\mypara{Variable Elimination (\Elimname):}
The procedure $\Elim{\kvar}{\cstr}$, shown
in \Cref{fig:algo:elim}, eliminates $\kvar$ from
$\cstr$ in three steps.
First, it computes the \emph{scope}
\pfx{} of $\kvar$ in $\cstr$.
Next, it calls $\Solname$
on $\HeadOf{\cstr}{\pfx}$ --- the
sub-constraint of $\cstr$ where $\kvar$
occurs --- to compute $\asgn$: the
\emph{strongest solution} for $\kvar$.
% in $\cstr$.
%
Finally, the procedure invokes
$\Rem{\kvar}{\asgn}{\cstr}$,
% summarized in \Cref{fig:algo:rem}
to compute the \emph{reduced constraint}
where all \emph{head} applications
of $\kvar$ are replaced by $\top$
--- as they are guaranteed to hold by
virtue of how $\asgn$ was computed ---
and all other non-head occurrences of
$\kvapp{\kvar}{\overline{t}}$ are replaced by
the solution $\asgn(\overline{t})$, yielding
a constraint \emph{without} $\kvar$.

\mypara{Strongest Solution (\Solname):}
A solution $\asgn$ is \emph{valid} for $\kvar$ in $\cstr$ if it is
valid for every $\kvar$-goal (defined in \Cref{sec:chc}).
The procedure $\Sol{\kvar}{\cstr}$, shown in \Cref{fig:algo:sol},
computes the \emph{body} of such a solution: the
\emph{disjunction} of the contexts at each head-prefix for $\kvar$,
existentially quantifying the variables not in the scope;
$\Elimname$ closes the body under $\overline{z}$ and $\overline{x}$
to obtain $\asgn$.
The guard $\kvar \notin \Heads{\cstr}$ collapses each
sub-constraint without a $\kvar$ head to an inert $\bot$,
keeping dead-branch horn applications out of the solution.
For example, \Cref{fig:ex:zap} shows below each arrow, the solution
computed for the $\kvar$ reduced in the respective step.
In the sub-constraint for $\kvar_1$ there is exactly one
head application, with the (truncated) context $\cBind{\nu}{\Int}; \nu = x+1$;
the conjunct with the \emph{uses} of $\kvar_1$ collapses to $\bot$.
Thus, the computed solution is
$\cLam{z_0\, x}{(\exists \nu.\; \nu = x + 1 \wedge z_0 = \nu) \vee \bot}$
which we simplify to $\cLam{z_0\, x}{z_0 = x + 1}$.
As such, we can check that this procedure returns
the \emph{strongest solution} for $\kvar$ in $\cstr$,
meaning, it returns a solution $\asgn$ that entails
\emph{every} other valid solution for $\kvar$ in $\cstr$.

\mypara{Scope avoids Exponential Blowup}
The procedure $\Elimname$ is analogous
to the similarly named one from \cite{Cosman17},
which introduces the scope optimization
to sidestep an exponential blowup that
otherwise occurs in practice.
Specifically, if we compute $\Sol{\kvar}{\cstr}$
\emph{without} considering the scope $\pfx$, then
the assumptions from $\CtxOf{\cstr}{\pfx}$ get
(unnecessarily) included in the strongest solution,
which causes an exponential duplication of assumptions
as we eliminate multiple variables.

\subsection{Certification}
\label{sec:certification}

\figcert{}

In a foundational setting, one does not simply rewrite a constraint.
The \sclean kernel, ultimately, demands a proof of the original
constraint, not the rewritten one.
To this end, \tZap uses \Certname to produce
a term that \emph{converts} any proof of the rewritten constraint
into a proof of the original one.

\mypara{Certifying an Assignment (\Certname):}
The procedure $\Cert{\ctx}{\pf}{\cstr}$ (\Cref{fig:certify})
takes two sets of inputs.
First, a parametric set comprising $\kvar$, its solution $\asgn$
and its scope $\pfx$ in $\cstr$.
Second, a set with a context $\ctx$ of accumulated bindings,
the original sub-constraint $\cstr$,
and a proof $\pf$ of the $\asgn$-reduced sub-constraint
(\ie, $\cstr' \doteq \Rem{\kvar}{\asgn}{\cstr}$).
The procedure then traverses $\cstr$, transforming $\pf$
at each step into a proof of the corresponding sub-constraint.
For the $\cAll{x}{b}{\cstr}$ (resp. $\cImp{\pred}{\cstr}$)
case, the procedure emits a $\lambda$-term that
applies the supplied input $x$ (resp. $\hyp_\pred$) to $\pf$
to recursively generate the certificate for the sub-constraint $\cstr$.
For a $\cAnd{\cstr_1}{\cstr_2}$ the procedure emits a pair term
comprising the sub-terms obtained from the proofs of $\cstr_1$
and $\cstr_2$ constructed using the proofs $\pf.1$ and $\pf.2$
respectively.
The original and reduced constraints diverge only when the
original head is of the form $\kvapp{\kvar}{\overline{t}}$
but the reduced head is $\top$;
in that case, $\pf$ is vacuous.
Other heads $\pred$ are unchanged so we can directly reuse
the corresponding proof term $\pf$.

\mypara{Reconciling divergent heads (\Navname):}
At the divergent heads in $\Certname$ we call
$\Navname$ to furnish us with a proof for the
original goal $\kvapp{\kvar}{\overline{t}}$
post-substitution, \ie, for $\asgn(\overline{t})$.
Recall that the solution $\asgn$ was
a disjunction of contexts at each
head-prefix for $\kvar$ in the
sub-constraint $\HeadOf{\cstr}{\rho}$
where $\kvar$ occurs.
Thus, we use the \emph{truncated} context
($\ctx \setminus \rho$) that led to the
divergent head to
\emph{reconstruct} the proof term for that
disjunction from the context.
$\Navname$ replays the truncated context against the structure of $\asgn$,
whose $\vee/\exists$ shape mirrors the scoped sub-constraint.
Each context entry selects the matching introduction form:
a branch selects $\mathsf{inl}$ or $\mathsf{inr}$,
a bound variable supplies an existential witness,
and a guard hypothesis supplies a conjunct.
The descent bottoms out at the slot equalities $\bigwedge_i z_i = t_i$,
using $\rfl{}$ to discharge them.
%
% $\navop$ replays the dropped path against $\widehat\sigma_\kappa$, whose
% $\vee/\exists/\wedge$ shape mirrors the scoped sub-constraint. Each step selects the
% matching introduction form: $\mathsf{L}$ or $\mathsf{R}$
% injects with $\mathsf{inl}/\mathsf{inr}$, a bound variable supplies the existential
% witness, and a guard hypothesis supplies the conjunct (\Cref{fig:zap-proofterms}). The
% descent bottoms out at the slot equalities $\bigwedge_i z_i = t_i$, discharged by
% $\langle\, \rfl, \dots, \rfl \,\rangle$, or by $\langle\rangle$ at a $\top$ leaf; so
% where $\solopI$ reads a predicate off the constraint, $\navop$ reads the matching proof
% off the solution.

% \medskip\noindent\emph{Example.}\;
\mypara{Example}
In the example in \Cref{fig:ex:zap}, $\kvar_1$ has the scope prefix
$\rho \doteq \pBind{x}{\Int} \cdot \pAsm{0 \le x}$.
Thus, at the head $\kvapp{\kvar_1}{\nu, x}$, $\Certname$ has accumulated
the context $\ctx \doteq \cBind{x}{\Int}; \cBind{\hyp_0}{0 \le x}; \pLeft; \cBind{\nu}{\Int}; \cBind{\hyp_1}{\nu = x + 1}$
(naming the guard hypotheses $\hyp_0, \hyp_1$),
and the truncated context
is $\ctx \setminus \rho \doteq \pLeft; \cBind{\nu}{\Int}; \cBind{\hyp_1}{\nu = x + 1}$.
Given the solution
$\asgn \doteq \cLam{z_0\, x}{(\exists \nu'.\, \nu' = x + 1 \wedge z_0 = \nu') \vee \bot}$
(the bound variable renamed to $\nu'$), $\Navname$
replays the path, as follows, consuming one entry per node:
\[
\begin{array}{@{}c@{\;\;}l@{\qquad}l@{}}
  & \Nav{\pLeft;\cBind{\nu}{\Int};\cBind{\hyp_1}{\nu = x + 1}}{\asgn(\nu, x)} & \\[3pt]
= & \mathsf{inl}\;\Nav{\cBind{\nu}{\Int};\cBind{\hyp_1}{\nu = x + 1}}{\exists \nu'.\, \nu' = x{+}1 \wedge \nu = \nu'}
    & \pLeft \text{ picks the live disjunct} \\[3pt]
= & \mathsf{inl}\,\big\langle \nu,\; \Nav{\cBind{\hyp_1}{\nu = x + 1}}{(\nu = x{+}1) \wedge \nu = \nu}\big\rangle
    & \nu \text{ witnesses } \exists\nu' \\[3pt]
= & \mathsf{inl}\,\big\langle \nu,\; \big\langle \hyp_1,\; \Nav{\varepsilon}{\nu = \nu}\big\rangle\big\rangle
    & \hyp_1 \text{ discharges } \nu = x{+}1 \\[3pt]
= & \mathsf{inl}\,\big\langle \nu,\; \big\langle \hyp_1,\; \rfl\big\rangle\big\rangle.
    & \text{slot equality is } \rfl \\
\end{array}
\]

\mypara{Correctness}
The correctness of $\Zapname$ is stated via two theorems.
The first one, from \cite{Cosman17}, says that $\Zap{\cstr}$ returns
a constraint $\cstr'$, without any acyclic horn variables, that is
logically equivalent to $\cstr$.
\begin{theorem}[\Zapname-Equivalence] \cite{Cosman17} \label{thm:zap:equiv}
If $(\cstr', \cdot) = \Zap{\cstr}$ then $\cstr$ is satisfiable iff $\cstr'$ is satisfiable.
\end{theorem}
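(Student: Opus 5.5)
The plan is to reduce the statement to a single-variable lemma about $\Zapo{\kvar}{\cstr}$ and then lift it through $\Zapsname$ and $\Zapname$ by induction on the number of acyclic variables. The single-variable lemma reads: if $(\cstr',\cdot)=\Zapo{\kvar}{\cstr}$ and $\kvar\notin\Heads{}$ of any guard in its own scope (acyclicity), then for every interpretation $\asgns$ of the \emph{other} Horn variables, $\exists\kvar.\,\cstr[\asgns]$ holds iff $\cstr'[\asgns]$ holds. Because satisfiability existentially quantifies over all remaining variables, this pointwise equivalence lifts immediately. The cut variables $\kcycs{\kvar}$ are simply carried along existentially, and the reordering performed by $\Zapname$ is harmless because existentials commute.

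The \emph{if} direction ($\cstr'$ satisfiable implies $\cstr$ satisfiable) takes $\asgn$ from $\Elim{\kvar}{\cstr}$ as the witness. We prove by structural induction on $\cstr$ that $\Rem{\kvar}{\asgn}{\cstr}$ entails $\subasgn{\cstr}{\kvar}{\asgn}$. The $\forall$, $\wedge$ and $\to$ cases are congruences: guards are replaced identically by $\subasgn{\pred}{\kvar}{\asgn}$ on both sides. The only nontrivial case is a head $\kvapp{\kvar}{\overline{t}}$, which $\Remname$ maps to $\top$. There we must show that the goal $\ctx\vdash\asgn(\overline{t})$ is valid, where $\ctx$ is the context accumulated so far with $\kvar$ substituted by $\asgn$. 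To do so, split $\ctx$ at the scope $\pfx=\Scope{\kvar}{\cstr}$. Well-formedness makes the trailing arguments of $\overline{t}$ exactly $\Binders{\CtxOf{\cstr}{\pfx}}$, so $\asgn(\overline{t})$ becomes $\Sol{\kvar}{\HeadOf{\cstr}{\pfx}}$ with $z_i:=t_i$. A secondary induction along the truncated context $\ctx\setminus\pfx$ then shows that the matching disjunct is inhabited: we pick the branch at each $\wedge$, witness the $\exists$ with the bound variable, discharge each conjunct with the guard hypothesis, and finish with reflexivity. This is the semantic content of $\Navname$. Acyclicity is needed precisely here: the guards inside the scope along the path must not mention $\kvar$, since otherwise the solution would refer to itself. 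Guards outside the scope may mention $\kvar$, but those are shared verbatim by both sides.

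The \emph{only if} direction is where I expect the main obstacle, namely the \emph{strongest-solution} property. Suppose $\cstr[\kvar:=\asgn_0]$ is valid for some $\asgn_0$. We first show that $\asgn$ entails $\asgn_0$ pointwise, under the scope context. Unfolding $\Solname$, any proof of $\asgn(\overline{z},\overline{x})$ selects a head-prefix for $\kvar$ together with witnesses for the $\exists$-bound variables and proofs of the guards, with $z_i=t_i$. Validity of $\cstr[\asgn_0]$ at that very goal then yields $\asgn_0(\overline{t})$, which is $\asgn_0(\overline{z},\overline{x})$ after rewriting. This argument uses that the guards are $\kvar$-free inside the scope, again by acyclicity, so they coincide in both constraints. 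Next we show $\Rem{\kvar}{\asgn}{\cstr}$ valid by structural induction. Heads become $\top$, which is trivial, and other heads are unchanged. Guards $\kvapp{\kvar}{\overline{t}}$ are replaced by the \emph{stronger} $\asgn(\overline{t})$, so the implications only get easier: we use antitonicity of $\to$ in its hypothesis. The delicate bookkeeping is that the entailment $\asgn\Rightarrow\asgn_0$ holds only under the scope context, so the induction must carry $\CtxOf{\cstr}{\pfx}$. Body occurrences of $\kvar$ all lie inside the scope by definition of $\Scopename$, so this is available.

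Finally, we compose the single-variable steps. $\Zapsname$ peels the acyclic variables in the topological order produced by $\Partitionname$, so every variable still to be eliminated has no dependencies left among the acyclic variables. This keeps the acyclicity hypothesis of the lemma invariant through the recursion, and a straightforward induction on the length of $\kacys{\kvar}$ gives the theorem.
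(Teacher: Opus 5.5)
Your proposal is correct and follows essentially the same route as the paper. The paper's proof simply invokes the Fusion result of \citet{Cosman17} (each \Remname\ step substitutes the strongest valid solution for an acyclic variable and so preserves satisfiability), and your single-variable lemma, its strongest-solution half, and the induction along the topological elimination order (cf.\ Lemma~\ref{lem:afree}) spell that argument out, with one economy available: your \emph{if} direction is immediate from \Cref{thm:zap:sound}, whose proof via Lemmas~\ref{lem:nav} and~\ref{lem:cert} is your \Navname-style reconstruction turned into a proof term.
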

In the foundational setting, we also crucially depend on the following
soundness result, which says that the proof returned by $\Zap{\cstr}$
explicitly witnesses the correctness of the rewrite, \ie, it allows us
to convert any proof of $\cstr'$ into one of $\cstr$:
\begin{theorem}[\Zapname-Soundness] \label{thm:zap:sound}
If $(\cstr', \pf) = \Zap{\cstr}$ then $\vdash \pf : \cstr' \to \cstr$.
\end{theorem}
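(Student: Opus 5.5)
The plan is to prove the statement bottom-up, following the call structure of \Zapname: first a typing lemma for \Navname, then one for \Certname, then \Zaponame, then \Zapsname by induction on the number of peeled binders, and finally the top-level repacking in \Zapname. The last three steps are routine. \Zapsname composes $\pf_1 : \cstr_1 \to \cstr$, available by induction, with $\pf_2 : \cstr_2 \to \exists\kvar.\,\cstr_1$ from \Zaponame; the unpack/repack term $\pf_3$ is then well-typed by the standard typing of $\mathsf{let}$ and $\langle\cdot,\cdot\rangle$. The top level of \Zapname has the same shape, except that the acyclic witnesses are threaded through after the cut witnesses $\kcycs{\kvar}$ have been unpacked. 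So the whole content sits in proving, for $(\asgn,\pfx,\cstr') = \Elim{\kvar}{\cstr}$, that $\hyp : \cstr' \vdash \Cert{\varepsilon}{\hyp}{\cstr} : \subasgn{\cstr}{\kvar}{\asgn}$. Here I use that the acyclic variables are eliminated in topological order, so $\asgn$ mentions no variable still to be solved and the substitution is well-scoped.

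For \Certname I will prove a generalized invariant by structural induction on a sub-constraint $\cstr_s$ of $\cstr$ reached along prefix $\pfx_s$. The claim is: if $\ctx$ is the context accumulated along $\pfx_s$ (that is, $\CtxOf{\cstr}{\pfx_s}$, with the hypotheses realized as $\lambda$-bound names), and $\ctx \vdash \pf : \Rem{\kvar}{\asgn}{\cstr_s}$, then $\ctx \vdash \Cert{\ctx}{\pf}{\cstr_s} : \subasgn{\cstr_s}{\kvar}{\asgn}$. The three cases are handled as follows.
\begin{itemize}
\item The $\forall$ and $\wedge$ cases follow directly from the clauses of \Remname and the induction hypothesis.
\item In the $\to$ case, \Remname replaces the guard $\pred$ by $\subasgn{\pred}{\kvar}{\asgn}$. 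This is exactly the type of $\hyp_\pred$ under the substituted goal, so $\pf\ \hyp_\pred$ typechecks. The context must therefore record substituted guards; I will state the invariant with $\ctx$ carrying $\subasgn{\pred}{\kvar}{\asgn}$.
\item At a non-$\kvar$ head, \Remname is the identity, so $\pf$ is reused unchanged.
\end{itemize}

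The only nontrivial case is a head $\kvapp{\kvar}{\overline{t}}$, which is discharged by a \Navname lemma. Because the scope $\pfx = \Scope{\kvar}{\cstr}$ contains every occurrence of $\kvar$, the prefix $\pfx_s$ extends $\pfx$; write $\pfx_s = \pfx \cdot \pfx'$, so that $\ctx \setminus \pfx = \CtxOf{\HeadOf{\cstr}{\pfx}}{\pfx'}$. The lemma is then: if $\pfx'$ is a head-prefix of a constraint $d$ with $\HeadOf{d}{\pfx'} = \kvapp{\kvar}{\overline{t}}$, then $\Nav{\CtxOf{d}{\pfx'}}{\Sol{\kvar}{d}[\overline{z}:=\overline{t}]}$ has that type in $\ctx$. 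I prove it by induction on $\pfx'$, in lockstep with the clauses of \Solname.
\begin{itemize}
\item A $\pLeft$ or $\pRight$ step meets $\Sol{\kvar}{d_1}\vee\Sol{\kvar}{d_2}$. The chosen side contains a $\kvar$ head, so the guard clause of \Solname did not fire and the disjunction is present.
\item A binder step meets $\exists x$ and supplies the context variable as witness.
\item An assumption step meets $\pred \wedge \Sol{\kvar}{\cdot}$ and uses the named hypothesis.
\item At $\varepsilon$ the formula is $\bigwedge_i z_i = t_i$ with $z_i := t_i$, which closes by $\rfl$.
\end{itemize}
Well-formedness is what lets the trailing arguments of $\kvar$ be exactly the scope binders $\overline{x}$. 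This makes $\asgn(\overline{t})$ $\beta$-reduce to $\Sol{\kvar}{\HeadOf{\cstr}{\pfx}}[\overline{z}:=\overline{t}_{<n}]$ with the $\overline{x}$ instantiated to themselves, so it matches the lemma's type up to definitional equality, which the kernel accepts.

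I expect the main obstacle to be that hypotheses on the path inside the scope may themselves mention $\kvar$ in a guard. In that case the conjunct in $\Sol{\kvar}{\cdot}$ contains the raw $\kvar$-application, while the hypothesis in $\ctx$ has type $\subasgn{\pred}{\kvar}{\asgn}$, and the two need not agree. The intended resolution is a cyclicity argument: such a path would give $\kvar$ a self-dependency, contradicting $\kvar \in \kacys{\kvar}$, because acyclic variables have no self-loop in $\dg{\cstr}$ after the cut set is removed. So every guard on a path to a $\kvar$ head is $\kvar$-free, and substitution is the identity on it. I would isolate this as a separate lemma and prove it before the \Navname induction. A minor care point is the capture-avoiding renaming of existential binders such as $\nu'$ versus $\nu$, which I treat up to $\alpha$-equivalence.
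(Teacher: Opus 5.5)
Your proposal is correct and follows essentially the same route as the paper. You build a \Navname\ typing lemma walked in lockstep with \Solname, then a \Certname\ invariant whose context carries the $\asgn$-substituted guards, using the self-loop (acyclicity) argument to make every guard on a path to a $\kvar$-head $\kvar$-free, and finally compose the per-variable bridges by unpacking and repacking the existential witnesses and permuting the binders.

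The one step you assert rather than prove is that topological order keeps each earlier solution free of not-yet-solved acyclic variables. This is what guarantees $\kvar$ is still acyclic in the \emph{residual} constraint, where you actually apply the self-loop argument, and it is exactly the short induction in the paper's ``solutions are cut-only'' lemma.
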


\subsection{Solving Cyclic Variables}
\label{sec:fixpoint}
\label{sec:cyclic}
\label{sec:sec5}

Next, let us see how the $\tFix$
tactic solves the cyclic variables
by \emph{predicate abstraction}~\cite{Graf97},
which works by solving each $\kvar$
to a conjunction of predicates drawn
from a library of user-supplied \emph{qualifiers}.
Hitherto, predicate abstraction has typically
been carried out using SMT solvers \cite{Houdini, Rondon08},
where it is limited to whatever theories
the SMT engine can handle efficiently.
The foundational setting provides two significant benefits.
First, we can use arbitrary \sclean \emph{predicates} as qualifiers,
and, correspondingly, second, we can use arbitrary \sclean automation
(including, of course, SMT-based tactics) to implement user-definable
\emph{oracles} that can allow synthesizing invariants over arbitrary
theories.
Next, we illustrate the above by describing how $\tFix$ ticks.

\figpiglet{}

\mypara{Qualifiers, Instances and Candidates}
The predicate abstraction procedure $\PredAbsname$ uses
a set of qualifiers to compute a solution for each $\kvar$.
Informally, qualifiers are predicate \emph{templates}:
\PredAbsname computes solutions by \emph{conjoining}
all \emph{valid} instances of the templates.
Formally, a \emph{qualifier} $\qual$
is a \sclean predicate
% (tagged with \lean{@[qualif]} in the implementation)
of the form
$
\qual \doteq \cLam{\cBind{z_0}{b_0},\ldots,\cBind{z_n}{b_n}}{\leanterm}
$.
Let $\kvar : b'_0 \to \ldots \to b'_m \to \Prop$ be a Horn variable.
We say that $\inst$ is a $\kqinst{\kvar}{\qual}$
\emph{instance} if
(1)~$\inst$ is a map $[0,n] \to [0,m]$, where
(2)~for each $0 \le j \le n$, we have $b_j = b'_{\inst(j)}$.
That is, an instance $\inst$ maps each parameter
of the qualifier to a parameter of the $\kvar$
of the same sort.
A \emph{candidate} is a pair of a qualifier $\qual$
and an instance $\inst$ written as $\qual_\inst$.
We overload $\qual_\inst$ to also denote
the predicate it induces by instantiating $\qual$ along $\inst$, namely
${\qual_\inst \doteq \cLam{z_0 \cdots z_m}{ \qual(z_{\inst(0)}, \ldots, z_{\inst(n)}) }}$.
Given a \emph{candidate set} $\overline{\qual_\inst}$, we write
$\conjoinsol{\overline{\qual_\inst}}$ for its pointwise conjunction.

% \mypara{Example}
% %
% Consider a Horn variable $\kex: \Int \to \Int \to \Prop$ together with
% the unary qualifier
% $$\qual_{\ge 0} \doteq \cLam{\cBind{z_0}{\Int}}{0 \le z_0}$$
% %
% Its parameter of sort $\Int$ can be mapped to either of the paramters of $\kex$.
% %
% Hence, there are two $\kqinst{\kex}{\qual_{\ge 0}}$ instances
% $\inst_1 \doteq \{0 \mapsto 0\}$ and $\inst_2 \doteq \{0 \mapsto 1\}$,
% yielding the two candidates:
% $(\qual_{\ge 0})_{\inst_1} \doteq \cLam{z_0\, z_1}{0 \le z_0}$
% and
% $(\qual_{\ge 0})_{\inst_2} \doteq \cLam{z_0\, z_1}{0 \le z_1}$

\mypara{Example}
\newcommand{\kex}{\kvar_{\mathsf{ex}}}
Consider a Horn variable $\kex : \Int \to \Int \to \Prop$ together with
the binary qualifier
$$\qual_{\le} \doteq \cLam{\cBind{z_0}{\Int},\, \cBind{z_1}{\Int}}{z_0 \le z_1}$$
Each of its two parameters of sort $\Int$ can be mapped to either of the parameters of $\kex$.
Hence, there are four $\kqinst{\kex}{\qual_{\le}}$ instances
$\inst_1 \doteq \{0 \mapsto 0,\ 1 \mapsto 0\}$,
$\inst_2 \doteq \{0 \mapsto 0,\ 1 \mapsto 1\}$,
$\inst_3 \doteq \{0 \mapsto 1,\ 1 \mapsto 0\}$ and
$\inst_4 \doteq \{0 \mapsto 1,\ 1 \mapsto 1\}$,
yielding the candidate set containing the four candidates
$$
(\qual_{\le})_{\inst_1} \doteq \cLam{z_0\, z_1}{z_0 \le z_0} \quad
(\qual_{\le})_{\inst_2} \doteq \cLam{z_0\, z_1}{z_0 \le z_1} \quad
(\qual_{\le})_{\inst_3} \doteq \cLam{z_0\, z_1}{z_1 \le z_0} \quad
(\qual_{\le})_{\inst_4} \doteq \cLam{z_0\, z_1}{z_1 \le z_1}
% \conjoinsol{\{ (\qual_{\le})_{\inst_1}, (\qual_{\le})_{\inst_2}, (\qual_{\le})_{\inst_3}, (\qual_{\le})_{\inst_4} \}}
% = \cLam{z_0\, z_1}{z_0 \le z_0 \wedge z_0 \le z_1 \wedge z_1 \le z_0 \wedge z_1 \le z_1}
$$

\mypara{Assignments, Application and Validity}
An \emph{assignment} ${\asgns \in \QualAsgnTy}$ maps each $\kvar$
to a candidate set.
We \emph{apply} $\asgns$ to a Horn application,
written $\appasgn{\kvapp{\kvar}{\overline{t}}}{\asgns}$,
by applying the pointwise conjunction of the candidate
set $\asgns(\kvar)$.
We lift application to contexts $\appasgn{\ctx}{\asgns}$
by applying to each hypothesis in $\ctx$.
$$\begin{array}{lclr}
\appasgn{\kvapp{\kvar}{\overline{t}}}{\asgns} & \doteq & (\conjoinsol{\asgns(\kvar)})(\overline{t})& \mbox{horn applications}\\
\appasgn{\leanterm}{\asgns}                   & \doteq & \leanterm & \mbox{other atoms} \\
\appasgn{\cBind{h}{\pred}; \ctx}{\asgns}      & \doteq & \cBind{h}{\appasgn{\pred}{\asgns}}; \appasgn{\ctx}{\asgns} & \mbox{hypotheses} \\
\appasgn{\cBind{x}{b}; \ctx}{\asgns}          & \doteq & \cBind{x}{b}; \appasgn{\ctx}{\asgns} & \mbox{binders} \\
\appasgn{\pLeft; \ctx}{\asgns}                & \doteq & \pLeft; \appasgn{\ctx}{\asgns} & \mbox{marks (\pRight\ alike)} \\
\end{array}$$
An assignment $\asgns$ is \emph{valid} for a $\kvar$-goal
$\ctx \vdash \kvapp{\kvar}{\overline{t}}$ if the ($\kvar$-free) goal
$\appasgn{\ctx}{\asgns} \vdash \appasgn{\kvapp{\kvar}{\overline{t}}}{\asgns}$
is valid.
We write $\cSat{\asgns}{\cstr}$ when $\asgns$ is valid for
every $\kvar$-goal of $\cstr$.

\mypara{Top-level Algorithm ($\Fixname$):}
\Cref{fig:fix} summarizes the top-level $\Fixname$ algorithm,
which takes as input a constraint $\cstr$ and a set of qualifiers
$\quals$ and returns as output a pair $(\cstr', \pf)$ comprising
a rewritten constraint $\cstr'$ where all the horn variables
in $\cstr$ are removed, and a proof $\pf$ that shows that the
rewritten $\cstr'$ implies the original $\cstr$.
The procedure $\Fixname$ is analogous to $\Zapname$: first,
it invokes $\PredAbsname$ on the constraint and qualifiers
to obtain an assignment $\asgns$, this time for \emph{all} the
Horn variables simultaneously, as they are cyclic, and hence
depend upon each other.
Next, it folds $\Remname$ over $\cstr$ and all the horn variables,
to reduce the head applications of $\kvar$ with $\top$ and
the body applications with the computed candidate set $\asgns(\kvar)$.
Finally, it invokes $\Certsname$ to compute the proof $\pf$ that
$\cstr' \rightarrow \cstr$.

\mypara{Predicate Abstraction ($\PredAbsname$):}
The right column in \Cref{fig:predabs} shows how
\PredAbsname takes as input a set of horn variables
$\overline{\kvar}$, a constraint $\cstr$, and a set
of qualifiers $\quals$, to compute an assignment that is
valid for $\cstr$.
This assignment is computed in two steps.
First, we use $\Initname$ to compute an
\emph{initial assignment} that maps each $\kvar$
to the full set of candidates obtained from all
$\kqinst{\kvar}{\qual}$ instances of each qualifier
in $\quals$.
Second, we invoke $\Fixpointname$ to distill
the full set of candidates down into those
that can be proven valid for each head
$\kvar$-application (which lets $\Fixname$
subsequently reduce to $\top$).

\mypara{Fixpoint Computation ($\Fixpointname$):}
The procedure $\Fixpointname$, summarized in \Cref{fig:fixpoint},
takes as input the constraint $\cstr$ and a candidate assignment
$\asgns$, and iteratively invokes $\Weakenname$ to whittle away
from $\asgns$ the candidates that \emph{cannot be proven} valid.
The procedure $\Weakenname$ takes as input the constraint $\cstr$,
head-prefix $\pfx$ and assignment $\asgns$, and uses the \Oraclename~ to
determine the \emph{subset of candidates} of $\asgns(\kvar)$, named $I$,
that can indeed be proven valid at the head corresponding to $\pfx$.
If this subset is strictly smaller than $\asgns(\kvar)$, \ie, some whittling
occurred, then $\Weakenname$ returns the assignment where $\kvar$ is updated
to $I$; otherwise (if the subset is unchanged), $\Weakenname$ returns $\bot$.
Now $\Fixpointname$ checks if there is \emph{any} head-prefix $\pfx$ under which
the (current) assignment $\asgns$ gets weakened to $\asgns'$.
If so, $\Fixpointname$ recurses on $\asgns'$.
Otherwise, $\asgns$ is valid for $\cstr$ and hence, is returned as the assignment.
$\Fixpointname$ terminates: the initial assignment is finite, and each
successful $\Weakenname$ strictly decreases the total number of candidates.
Note that $\Weakenname$, and hence $\Fixname$, is parameterized
by the procedure $\Oraclename$, which attempts to \emph{prove}
a goal, which lets the user plug in theory-specific
oracles to synthesize solutions for CHCs over arbitrary
\sclean propositions.

\mypara{Certification ($\Certsname$):}
Finally, $\Fixname$ invokes $\Certsname$ to justify the rewriting
done by $\Remname_{\asgns}$ (where head $\kvar$-applications are
replaced with $\top$), by coughing up a $\ProofTy$ that shows
that validity of the reduced constraint $\cstr'$ \emph{implies}
that of the original  $\cstr$.
This procedure is nearly identical to the one for $\Zapname$
discussed in \Cref{sec:certification}, in how it traverses the
structure of $\cstr$ replaying in the proof from $\cstr'$.
The only difference is at the head applications
$\kvapp{\kvar}{\overline{t}}$ where (instead of
navigating the structure of the solution) we invoke
the \Oraclename~ to prove the applied conjunction
$(\conjoinsol{\asgns(\kvar)})(\overline{t})$.

\mypara{Correctness}
$\Fixname$ requires that on any goal,
\myopname{Oracle} either returns a
valid proof term, or fails:
\begin{mathpar}
\inferrule*[left=\rname{Oracle-Soundness}]
  {\Oracle{\ctx}{\varphi} = \pf}
  {\ctx \vdash \pf : \varphi}
\and
\inferrule*[left=\rname{Oracle-Completeness}]
  {\ctx \vdash \varphi}
  {\Oracle{\ctx}{\varphi}\ \mbox{succeeds}}
\end{mathpar}

\begin{theorem}[\Fixname-Soundness] \label{thm:fix:sound}
% If $\Fix{\cstr}{\quals} = (\cstr', \pf)$, then $\cstr'$ is a VC and $\vdash \pf : \cstr' \rightarrow \cstr$.
If \Oraclename~ is sound and $\Fix{\cstr}{\quals} = (\cstr', \pf)$, then $\vdash \pf : \cstr' \rightarrow \cstr$.
\end{theorem}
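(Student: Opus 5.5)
The plan is to prove a generalized invariant about $\Certsname$ by structural induction on the constraint, and then instantiate it at the root. Fix the assignment $\asgns$ returned by $\PredAbsname$, and write $\cstr'_{\mathrm{sub}} \doteq \Rems{\kcycs{\kvar}}{\cstr_{\mathrm{sub}}}$ for the reduced form of any sub-constraint $\cstr_{\mathrm{sub}}$. Since each $\Remname$ is a homomorphism on $\wedge$, $\forall$ and $\to$, the fold $\Rems{\kcycs{\kvar}}{\cdot}$ commutes with these constructors.

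The invariant is as follows. For every sub-constraint $\cstr_{\mathrm{sub}}$ reached along a prefix with accumulated context $\ctx$, and every proof term $\pf$ with $\appasgn{\ctx}{\asgns} \vdash \pf : \cstr'_{\mathrm{sub}}$, we have
\[
  \appasgn{\ctx}{\asgns} \vdash \Certs{\ctx}{\pf}{\cstr_{\mathrm{sub}}} : \subasgn{\cstr_{\mathrm{sub}}}{\overline{\kvar}}{\overline{\conjoinsol{\asgns(\kvar)}}}.
\]
The right-hand side is the sub-constraint with every Horn variable instantiated by its conjoined candidate set. The cases proceed as follows.

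\begin{itemize}
\item \textbf{Quantifier.} $\pf\ x$ has type $\cstr'$ under the extended context $\ctx;\cBind{x}{b}$, so the $\lambda$-abstraction typechecks by the induction hypothesis.
\item \textbf{Implication.} The reduced guard is $\appasgn{\pred}{\asgns}$; this holds by definition of $\Remname$ on body atoms, where $\kvar$ is replaced by $\conjoinsol{\asgns(\kvar)}$. So $\hyp_\pred$ has exactly the type demanded after instantiation, and $\pf\ \hyp_\pred$ typechecks.
\item \textbf{Conjunction.} Use the projections and the induction hypothesis; the $\pLeft/\pRight$ marks do not affect typing.
\item \textbf{Non-$\kvar$ head.} Here $\Remname$ leaves the head unchanged, so $\pf$ itself suffices.
\end{itemize}

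At this point one must check that the fold of several $\Remname$ steps does what is claimed. After the first $\kvar$ is removed, later occurrences of other cut variables are still either heads or guards. Because $\Remname$ only replaces $\kvar$-applications, which are disjoint for distinct $\kvar$, the order of the fold does not matter.

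The key case, and the main obstacle, is a head $\kvapp{\kvar}{\overline{t}}$. There $\Certs{\ctx}{\pf}{\cdot}$ calls $\Oracle{\appasgn{\ctx}{\asgns}}{(\conjoinsol{\asgns(\kvar)})(\overline{t})}$, and we must argue that this call succeeds rather than fails, so that $\pf$ is actually defined. This is not automatic from Oracle-Soundness alone. I would argue from the termination condition of $\Fixpointname$: it returns $\asgns$ only when $\Weaken{\cstr}{\asgns}{\pfx} = \bot$ for every head-prefix $\pfx$. That equation means $I = \asgns(\kvar)$, i.e.\ the oracle proved each candidate $q_\inst(\overline{t})$ under $\appasgn{\ctx}{\asgns}$. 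Assuming the oracle is deterministic, or at least that the certification call reuses those per-candidate proofs paired up by $\wedge$-introduction, we obtain a proof of the conjunction. Under the theorem's hypothesis that $\Fix{\cstr}{\quals}$ returns $(\cstr', \pf)$ at all, the call certainly succeeded. Oracle-Soundness then gives $\appasgn{\ctx}{\asgns} \vdash \text{result} : (\conjoinsol{\asgns(\kvar)})(\overline{t})$, which is precisely the instantiated head. So the clean route is to read the hypothesis as "the term was produced", and use soundness only for typing.

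Finally, instantiate the invariant at $\ctx = \varepsilon$ with $\pf = \hyp : \cstr'$. This gives
\[
  \hyp : \cstr' \vdash \Certs{\varepsilon}{\hyp}{\cstr} : \subasgn{\cstr}{\overline{\kvar}}{\overline{\conjoinsol{\asgns(\kvar)}}}.
\]
The existential introduction $\langle \overline{\conjoinsol{\asgns(\kcyc{\kvar})}},\ \cdot\rangle$ then yields $\ecstr{\kcyc{\kvar}}{\cstr}$; its typing rule accepts the body up to the $\beta$-reduction of $(\conjoinsol{\cdot})(\overline{t})$, which the kernel performs by definitional equality. Abstracting over $\hyp$ gives $\vdash \pf : \cstr' \to \ecstr{\kcyc{\kvar}}{\cstr}$, as required.
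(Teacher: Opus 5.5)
Your proposal is correct and follows essentially the same route as the paper: a generalized lemma on $\Certsname$ proved by induction over the sub-constraint at each prefix, then instantiated at the root and finished with iterated $\exists$-introduction and $\to$-introduction. The $\forall$, $\to$, $\wedge$ and $\kvar$-free-head cases mirror the $\Zapname$ certificate lemma, and the $\kvar$-head case uses \textit{(Oracle-Soundness)}; like you, the paper gets success of the $\Certsname$ oracle queries directly from the hypothesis that $\Fixname$ returned, so your detour through the termination condition of $\Fixpointname$ and oracle determinism is unnecessary.
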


\mypara{Relative Completeness}
It would be trivially sound, but not terribly useful
to replace each horn application with $\top$.
Fortunately, $\Fix{\cstr}{\quals}$ does better: it returns
the \emph{strongest} valid solution expressible as conjunctions
of predicates from $\quals$.
Given two $\quals$-assignments, we say $\asgns$
\emph{is stronger than} $\asgns'$, written
$\asgns \preceq \asgns'$, if for every $\kvar$,
we have $\asgns'(\kvar) \subseteq \asgns(\kvar)$.
%
% We prove
% \textit{(Oracle-Completeness)} is an idealization, as no \Oraclename\ is
% complete for arbitrary \sclean\ goals; \Cref{thm:fix:complete} is thus a
% \emph{relative} completeness result.

\begin{theorem}[\Fixname-Completeness \cite{Rondon08}]
\label{thm:fix:complete}
If \Oraclename~ is complete and $\asgns \doteq \PredAbs{\ecstr{\kvar}{c}}{\quals}$,
then $\asgns \preceq \asgns'$ for every $\quals$-assignment $\asgns'$
with $\cSat{\asgns'}{c}$.
\end{theorem}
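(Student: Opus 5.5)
We prove the standard Houdini-style invariant: every $\quals$-assignment $\asgns'$ with $\cSat{\asgns'}{c}$ remains stronger-dominated throughout the run of $\Fixpointname$. That is, if $\asgns$ is the current assignment, then $\asgns \preceq \asgns'$, meaning $\asgns'(\kvar) \subseteq \asgns(\kvar)$ for every $\kvar$. Fix such an $\asgns'$ once and for all.

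\emph{Base case.} The initial assignment maps each $\kvar$ to $\Init{\kvar}{\quals}$, which contains every candidate $q_\inst$ for all qualifiers in $\quals$ and all $\kqinst{\kvar}{q}$ instances. Since $\asgns'$ is a $\quals$-assignment, each $\asgns'(\kvar)$ is a set of exactly such candidates. Hence $\asgns'(\kvar) \subseteq \Init{\kvar}{\quals}$, and the invariant holds initially.

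\emph{Inductive step.} Suppose $\asgns \preceq \asgns'$ and $\Weaken{c}{\asgns}{\pfx} = \asgns[\kvar \mapsto I]$ for some head-prefix $\pfx$ with $\GoalOf{c}{\pfx} = \ctx \vdash \kvapp{\kvar}{\overline{t}}$. We must show $\asgns'(\kvar) \subseteq I$; the other variables are unchanged. Take $q_\inst \in \asgns'(\kvar)$, so $q_\inst \in \asgns(\kvar)$ by the hypothesis.

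\begin{itemize}
\item Since $\cSat{\asgns'}{c}$, the goal $\appasgn{\ctx}{\asgns'} \vdash (\conjoinsol{\asgns'(\kvar)})(\overline{t})$ is valid. Projecting the conjunct gives validity of $\appasgn{\ctx}{\asgns'} \vdash q_\inst(\overline{t})$.
\item We need this goal under $\appasgn{\ctx}{\asgns}$ instead. Horn applications occur in $\ctx$ only as hypotheses $\cBind{h}{\kvapp{\kvar'}{\overline{s}}}$. Under $\asgns$ each becomes $(\conjoinsol{\asgns(\kvar')})(\overline{s})$, a conjunction over a superset of $\asgns'(\kvar')$, which therefore entails $(\conjoinsol{\asgns'(\kvar')})(\overline{s})$. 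Binders and $\kvar$-free hypotheses are identical in the two contexts. So every hypothesis of $\appasgn{\ctx}{\asgns'}$ is derivable from the corresponding hypothesis of $\appasgn{\ctx}{\asgns}$, and a routine induction on $\ctx$ yields validity of $\appasgn{\ctx}{\asgns} \vdash q_\inst(\overline{t})$.
\item By \rname{Oracle-Completeness}, $\Oracle{\appasgn{\ctx}{\asgns}}{q_\inst(\overline{t})}$ succeeds, so $q_\inst \in I$.
\end{itemize}

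This weakening (monotonicity) step is the only real obstacle. Its key point is that stronger assignments make hypotheses stronger, because $\kvar$'s occur in contexts only as guards, positively.

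\emph{Conclusion.} $\Fixpointname$ terminates, since each successful $\Weakenname$ strictly shrinks a finite total candidate count. By induction on the number of iterations, the returned $\asgns = \PredAbs{\ecstr{\kvar}{c}}{\quals}$ satisfies $\asgns \preceq \asgns'$. Because $\asgns'$ was arbitrary, the theorem follows.
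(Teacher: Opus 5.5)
Your proof is correct and follows the paper's argument (\Cref{lem:fix-fp}(c)) essentially step for step. The paper also inducts over the iterates with the invariant $\asgns^k \preceq \asgns'$, takes the base case from $\Init{\kvar}{\quals}$ containing every instance, and handles the step via $\wedge$-projection from the validity of $\asgns'$, transport to the stronger applied context, and \rname{Oracle-Completeness}; you inline the transport step, which the paper factors out as \Cref{lem:fix-mono}(b), and you show directly that each candidate of $\asgns'(\kvar)$ lands in $I$, where the paper argues by contradiction that no such candidate is ever dropped.
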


%% file: sections/05-verifiers.tex
\section{Foundational Verifiers}
\label{sec:lrk}
\label{sec:lrk-overview}
\label{sec:verifiers}

A CHC-based program verifier works in two stages,
both of which are traditionally trusted: a frontend
\emph{generator} that takes a program and generates
a constraint whose validity implies the program's safety;
a backend \emph{solver} that then determines the validity
of the constraint.
The solver in \Cref{sec:solver} removes the backend
from the trusted base by generating a kernel checkable
proof of the constraint's validity.
Next, we show there is no need to trust the frontend either,
by mechanizing the constraint generation such that a proof
from the solver suffices to verify the source program.
To this end, we develop two verifiers.
First, in \Cref{sec:imp-paper}, we present
a Floyd-Hoare style verifier \cite{Floyd1967, Hoare1969}
for an imperative language $\imp$ with mutable state,
where the Horn variables let us synthesize
\emph{loop invariants} \cite{bjorner2013}.
Second, we present functional calculus $\lrk$
with refinement types (\Cref{sec:lrk:syntax}),
% formalize the semantics of refinement types (\Cref{sec:lrk:refine}),
use them to develop a sound declarative type system (\Cref{sec:lrk:decl}),
and use that to implement an algorithmic generator (\Cref{sec:lrk:algo})
that uses Horn variables to infer unknown \emph{refinements}.
In each case, we can then discharge the generated CHCs using
the tactics from \Cref{sec:solver} to obtain an end-to-end
foundational verifier.

\subsection{A Verifier for $\imp$}
\label{sec:imp-paper}

\figimpsyntax{}

To limber up, we implement a CHC based
Floyd-Hoare style verifier for $\imp$
summarized in \Cref{fig:imp-syntax}.
%
% \mypara{Commands, Assertions and Triples}
%
We shallowly embed $\imp$ programs so
states $s$ are functions $\CVar\to\mathbb{Z}$,
expressions $e$ and guards $g$ are state-indexed,
and assertions $P$, $Q$ are predicates $\State\to\Prop$.
A (Floyd-Hoare) \emph{triple} comprising a
pre-condition $P$, command $c$ and post-condition $Q$
is \emph{valid}, written ${\validhoare{P}{c}{Q}}$,
if every terminating run of $c$ from a $P$-state
ends in a $Q$-state.
%
% (The details are standard, see \Cref{sec:imp}.)

\mypara{Constraint Generator}
Let $V \doteq \{ x_1,\ldots,x_n \}$ be a set
of $n$ ordered variables that occur in commands.
The generator $\vcgen{P}{c}{Q}$ takes as input
a triple $P,c,Q$ (where $c$ uses variables from $V$)
and outputs a CHC.
The implementation is the textbook
\emph{weakest precondition}-based
VC-generation method \cite{dijkstra1976},
except in two places.
First, we do not require explicitly
provided invariants for loops: when
the generator hits a $\mathsf{while}$
it introduces a Horn \emph{variable}
$\kvar$ for the unknown invariant ---
a $|V|$-ary predicate --- and \emph{constrains}
it to satisfy the usual initial, body
and exit obligations:
$$\begin{array}{rcll}
\vcgen{P}{\while{g}{c}}{Q}
    & \doteq & \exists\,\kvar: \Int^{|V|} \to \Prop.                    & \\
    &        & \quad \invisible{\wedge}\ \forall s.\; P(s) \rightarrow \liftk{\kvar}{V}(s)                                       & \mbox{(initial)} \\
    &        & \quad            \wedge\  \vcgen{\cLam{s}{\liftk{\kvar}{V}(s) \wedge g(s)}}{c}{\liftk{\kvar}{V}} & \mbox{(body)} \\
    &        & \quad            \wedge\  \forall s.\; \liftk{\kvar}{V}(s) \rightarrow \neg g(s) \rightarrow Q(s) & \mbox{(exit)} \\[0.2em]
    \mbox{where} \quad \liftk{\kvar}{V} & \doteq & \cLam{s}{\kvar(s(x_1), \ldots, s(x_n))} & \\
\end{array}$$
The above constraints are not in the syntax from \Cref{fig:chc-grammar}
as we are quantifying over states $s$ and not base-sorted values.
Fortunately, \sclean's \lean{simp} tactic suffices to reduce
them to our grammar.

\mypara{Soundness}
We prove in \sclean (\Cref{thm:whileCHC-sound})
that whenever the CHC returned by $\vcgen{P}{c}{Q}$
is satisfiable, that the corresponding triple is valid.

\begin{theorem}[VC-Generation soundness]\label{thm:imp-safety}
If $\vcgen{P}{c}{Q}$ then $\validhoare{P}{c}{Q}$.
\end{theorem}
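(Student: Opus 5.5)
We prove the theorem by structural induction on the command $c$, generalizing over both $P$ and $Q$. The statement ``$\vcgen{P}{c}{Q}$'' is read as the generated proposition holding, i.e.\ the CHC being satisfiable with witnesses for its existentially bound Horn variables. Validity $\validhoare{P}{c}{Q}$ is defined against the big-step semantics of $\imp$. For each constructor we therefore assume the CHC holds, take a run $c, s \Downarrow s'$ with $P(s)$, and must show $Q(s')$. We fix the induction hypothesis in its general form: for all $P, Q$, $\vcgen{P}{c_i}{Q}$ implies $\validhoare{P}{c_i}{Q}$ for each subcommand $c_i$.

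The non-loop cases follow the textbook weakest-precondition recipe and should each reduce to a short lemma.
\begin{itemize}
\item For $\cskip$, the generated constraint is $\forall s.\,P(s)\to Q(s)$, and the only run has $s'=s$.
\item For $x := e$, the generator produces $\forall s.\,P(s)\to Q(s[x\mapsto e(s)])$. Inverting the run gives $s' = s[x\mapsto e(s)]$.
\item For $c_1;c_2$, the generator yields $\vcgen{P}{c_1}{R}$ for an intermediate assertion $R$, which is either the weakest precondition of $c_2$ or is itself Horn-variable-quantified. We destructure the conjunction, apply the induction hypothesis to $c_1$ to reach an $R$-state, and then apply it to $c_2$. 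If $R$ hides existentials for inner loops, we first unpack the witnesses; this is why the hypothesis must be generalized over $P$ and $Q$.
\item For the conditional, the two branches are generated under $P\wedge g$ and $P\wedge\neg g$. Inverting the run selects the branch, and the corresponding induction hypothesis closes it.
\end{itemize}

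The main case is $\while{g}{c}$. From the CHC we obtain a witness $\kappa$ and set $I \doteq \liftk{\kvar}{V}$. We then have three facts:
\begin{itemize}
\item (initial) $P\Rightarrow I$;
\item (body) $\vcgen{\lambda s.\, I(s)\wedge g(s)}{c}{I}$, which by the induction hypothesis on $c$ gives $\validhoare{I\wedge g}{c}{I}$;
\item (exit) $I\wedge\neg g\Rightarrow Q$.
\end{itemize}
We next prove an auxiliary lemma by induction on the big-step derivation of $\while{g}{c}, s\Downarrow s'$ (not on the syntax): if $I(s)$ holds, then $I(s')\wedge\neg g(s')$.
\begin{itemize}
\item In the exit rule, $g(s)$ is false and $s'=s$.
\item In the iteration rule, $g(s)$ holds and $c, s\Downarrow s''$. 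The body triple gives $I(s'')$, and the inner derivation's induction hypothesis finishes.
\end{itemize}
Combining the lemma with (initial) and (exit) yields $Q(s')$.

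We expect the main obstacle to be bookkeeping rather than logic. The first issue is that the generated constraint quantifies over states $s$ instead of base-sorted values. The theorem must be stated and proved at this state-level form, before the \lean{simp} normalization into the grammar of \Cref{fig:chc-grammar}, so that the induction hypotheses line up syntactically. The second issue is that nested loops produce nested existentials. We must be careful that the body's intermediate postcondition $I$ is a fixed predicate once $\kappa$ is chosen, and the generalized induction hypothesis handles this. A further point is that the big-step relation for while loops is inductive, so the loop lemma needs induction on the derivation with $c$ and $g$ held fixed. In Lean we would first state it with the loop equality generalized and only then induct.
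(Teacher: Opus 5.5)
Your $\cskip$, assignment and loop cases match the generator, provided you also generalize the hypothesis over the variable scope $V$. Your loop lemma, by induction on the big-step derivation, is what the paper packages as \textsc{H-While} in Lemma~\ref{lem:imp-hoare}. The gap is in sequencing and the conditional. There your hypothesis ``for all $P,Q$, $\vcgen{P}{c_i}{Q}$ implies $\validhoare{P}{c_i}{Q}$'' does not match the constraint the generator actually emits.

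The generator is a continuation-passing weakest-precondition computation (Figure~\ref{fig:imp-vcgen}), so that a loop's Horn variable can scope over the obligations of the code \emph{preceding} the loop:
\begin{itemize}
\item $\mathsf{VC}_V(P,c,Q) \doteq \mathsf{WP}_V(c,Q,\lambda w.\,\forall s.\,P(s)\to w(s))$;
\item $\mathsf{WP}_V(c_1;c_2,Q,k) \doteq \mathsf{WP}_{V'}(c_2,Q,\lambda w.\,\mathsf{WP}_V(c_1,w,k))$, where $V'$ extends $V$ with the targets of $c_1$;
\item a loop emits $\exists\kvar.\,(\text{preserve})\wedge(\text{exit})\wedge k(I)$ with $I \doteq \liftk{\kvar}{V}$.
\end{itemize}
So for $c_1;c_2$ there is no conjunction $\vcgen{P}{c_1}{R}\wedge\vcgen{R}{c_2}{Q}$ to destructure. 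What you hold is $\mathsf{WP}_{V'}(c_2,Q,k')$ with the nontrivial continuation $k' \doteq \lambda w.\,\mathsf{WP}_V(c_1,w,k)$. The intermediate assertion is built from invariants of loops in $c_2$, whose existential binders enclose the entire $c_1$ constraint. Those binders are nested through $c_2$'s structure rather than sitting at the top. Hence ``unpacking the witnesses'' never yields a $\mathsf{VC}$ instance for $c_2$ that your hypothesis covers.

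The conditional has the same problem: it does not generate its branches under $P\wedge g$ and $P\wedge\neg g$. It computes branch preconditions $w_1,w_2$ and passes $\lambda s.\,(g(s)\to w_1(s))\wedge(\neg g(s)\to w_2(s))$ to the continuation, inside the scope of both branches' Horn variables. Your argument would work for a Hoare-rule-shaped generator, e.g.\ one with a fresh Horn variable per intermediate assertion, but that is not the generator this theorem is about.

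The missing idea is to generalize over the continuation and make the precondition an \emph{output}. The paper's Lemma~\ref{lem:imp-gen-sound} states, for all $V$, $Q$ and $k$: if $\mathsf{WP}_V(c,Q,k)$, then there is an assertion $w$ with $\validhoare{w}{c}{Q}$ and $k(w)$. It is proved by induction on $c$, and each case follows directly from the Hoare-rule lemmas.
\begin{itemize}
\item For $c_1;c_2$, the hypothesis at $c_2$ (scope $V'$, continuation $\lambda w.\,\mathsf{WP}_V(c_1,w,k)$) yields $w_2$ with $\validhoare{w_2}{c_2}{Q}$ and $\mathsf{WP}_V(c_1,w_2,k)$. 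The hypothesis at $c_1$ then yields $w_1$ with $\validhoare{w_1}{c_1}{w_2}$ and $k(w_1)$, and \textsc{H-Seq} composes them.
\item The conditional is analogous, using \textsc{H-If} plus consequence.
\item For the loop, unpack $\kvar$. The preserve conjunct is $\mathsf{WP}_{V''}(c,I,\lambda w.\,\forall s.\,I(s)\wedge g(s)\to w(s))$, so the hypothesis and consequence give $\validhoare{I\wedge g}{c}{I}$. Then \textsc{H-While} gives $\validhoare{I}{\while{g}{c}}{Q}$, and the continue conjunct $k(I)$ makes $w \doteq I$ the exported witness.
\end{itemize}
The existential in the lemma's conclusion is what lets a Horn-variable witness escape its binder, and your formulation cannot express that. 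The theorem is then the instance $k \doteq \lambda w.\,\forall s.\,P(s)\to w(s)$, followed by \textsc{H-ConsPre}.
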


\subsection{Syntax and Semantics of $\lrk$}
\label{sec:lrk-paper}
\label{sec:lrk:syntax}
\label{sec:sem-interp}
\label{sec:lrk:refine}

\Cref{fig:lrk-syntax} summarizes the syntax of $\lrk$,
which extends the simply typed $\lambda$-calculus
with refinement types~\citep{refinement-tutorial,Borkowski24},
in particular, with Horn variables $\kvar$
that enable refinement inference via CHC solving.

\figlrksyntax

\mypara{Expressions and Evaluation}
The grammar follows a standard call-by-value language with arithmetic
and boolean expressions.
Let bindings are required because the type system enforces
ANF (A-Normal Form~\citep{compiling-with-continuations})
to accommodate dependent function applications~\citep{refinement-tutorial}.
We give expressions a standard, substitution-based,
big-step, call-by-value operational semantics, written
$\evalto{e}{\val}$ ($e$ evaluates to $\val$).
Type annotations are erased at runtime---$\asc{e}{\tau}$
evaluates as $e$---so refinements play no part in evaluation
and serve only for static checking.

\mypara{Refinements}
We build refinement types on top of a \emph{separate}
first-order language, to eschew the circularities in
the meta-theory (where types would depend on expressions,
which would themselves contain types).
This first-order language is stratified into intrinsically
typed \emph{terms} ($\term$) and \emph{formulas} ($\varphi$).
%
% which so that each interprets cleanly into Lean (\Cref{sec:sem-interp}).
%
A refinement $\reft$ is either a first-order formula ($\varphi$)
over integers and booleans, or a Horn application $\kappa(\overline{\term})$
---highlighted in gray in \Cref{fig:lrk-syntax}---representing existentially
quantified predicates over their arguments $\overline{\term}$ that the solver
must infer.

\mypara{Types}
A type $\tau$ is either
a \emph{refined base type} $\rtype{\nu}{b}{\reft}$ or a
\emph{dependent arrow} $x{:}\tau \to \tau$, which
names its argument $x$ so the codomain may mention it.
A refined base type restricts its sort
$b \in \{\bint, \bbool\}$ to the values $\nu$
satisfying the refinement $\reft$ (which may
be an unknown Horn application).

% \subsection{A Semantic Interpretation of Refinement Types}
% \subsection{Semantics of $\lrk$ Refinement Types}

\smallskip
Next, we provide a semantics for refinements
by \emph{interpreting} them as \sclean propositions,
which provides a foundation for declarative typing (\cref{sec:decl-typing})
and constraint generation (\cref{sec:constraint-generation}).

\mypara{Interpreting Formulas}
We interpret each base type $b$ as a \sclean type,
writing $\sem{b}$, where $\sem{\bint}$ (resp. $\sem{\bbool})$
denotes the \sclean integers $\Int$ (resp. booleans $\Bool$).
A term $\term$ denotes an element of $\sem{b}$ under
a closing substitution ${\renv : \Var \to \Val}$ that
maps its free variables to values, written $\sem{\term}_{\renv}$.
A formula denotes a \sclean proposition over values
and we write $\sem{\varphi}_{\renv}$ for its
interpretation as a \sclean \Prop.
The interpretation is the natural one mapping
constructs to their \sclean counterparts.

\mypara{Interpreting Refinements}
To give meaning to a refinement $\reft$
we must also interpret Horn applications
$\kvapp{\kvar}{\overline{\term}}$.
We do this by quantifying at the meta level
over a \emph{\kenv-assignment} which fixes
the meaning of each $\kvar$ as a \sclean
predicate.
Crucially, this \kenv will itself map the \emph{syntactic}
horn variables to existentially bound \sclean predicates
that the solver will then instantiate.
Concretely, a \kenv-assignment has the following \sclean type
${\Kvar \rightarrow \List\,(\Sigma\,b: \Base, \sem{b}) \rightarrow \Prop}$.
Given a $\kenv$-assignment, a Horn application is interpreted
by looking up the predicate $\kenv$ assigns to $\kappa$ and
applying it to the interpreted arguments:
$$
\sem{\kvapp{\kvar}{\overline{\term}}}^{\kenv}_{\renv} \doteq \kenv\;\kappa\;\overline{(b, \sem{\term}_{\renv})}
$$

\mypara{Interpreting Types}
Finally, we interpret types as predicates on values.
We write $\tyden{\renv}{\val}{\tau}$ to mean that the value $\val$ inhabits the interpretation of
$\tau$.
A refined base type denotes the base values satisfying the (interpretation)
of its refinement, and a dependent arrow denotes the lambdas that send every
argument in the domain to a result in the codomain:
\[
\begin{array}{r@{\;}c@{\;}l}
\tyden{\renv}{\val}{\rtype{\nu}{b}{\reft}} & \doteq &
  \val \in \sem{b} \;\wedge\; \sem{\reft}^{\kenv}_{\renv\subst{\nu}{\val}} \\[4pt]
\tyden{\renv}{\val}{x{:}\tau_1 \to \tau_2} & \doteq &
  \val = \lam{x}{e} \;\wedge\; \forall \val_a.\;
    \tyden{\renv}{\val_a}{\tau_1} \Rightarrow{}
    \exists \val_r.\; \evalto{e\esubst{x}{\val_a}}{\val_r} \;\wedge\;
    \tyden{\renv\subst{x}{\val_a}}{\val_r}{\tau_2}
\end{array}
\]

\subsection{Declarative Typing for $\lrk$}
\label{sec:decl-typing}
\label{sec:lrk:decl}

We define a declarative typing judgment $\hasty{\kenv}{\Gamma}{e}{\tau}$ which is
mostly unremarkable except for being parameterized by a $\kenv$-assignment.
We highlight a couple of aspects of the system:

\mypara{Typing Variables with Selfification}
When typing variables we strengthen its refinement by giving its \emph{selfified}
type~\citep{dyn-dependent-types}, which crucially enables path-sensitive ``occurrence''
typing~\citep{typed-scheme}.
\begin{mathpar}
\inferrule*[right=\rname{T-Var}]
  {(x{:}\tau) \in \Gamma}
  {\hasty{\kenv}{\Gamma}{x}{\selfop(x, \tau)}}
  \qquad\qquad
  \begin{array}[b]{r@{\;}c@{\;}l}
  \selfop(x, \rtype{\nu}{b}{p}) & \eqdef & \rtype{\nu}{b}{p \wedge \nu = x} \\[2pt]
  \selfop(x, y:s \to t)           & \eqdef & y:s \to t
  \end{array}
\end{mathpar}

\mypara{Typing Applications}
The typing judgment enforces ANF so that a function is always applied to a \emph{variable}, which can
be substituted into the dependent codomain without placing an arbitrary expression inside a
refinement.
(The alternative is to extend the language with \emph{existential types}~\citep{dependent-contracts},
which exchanges the restriction for more complex subtyping and metatheory.)%; we prefer ANF for its simplicity.
\[
\inferrule*[right=\rname{T-App}]
  {\hasty{\kenv}{\Gamma}{e}{x{:}\tau_1 \to \tau_2} \\
   \hasty{\kenv}{\Gamma}{y}{\tau_1}}
  {\hasty{\kenv}{\Gamma}{e\, y}{\tau_2\,\esubst{x}{y}}}
\]

\mypara{Subtyping}
Most rules in the declarative system are syntax directed; subtyping is the only place
where refinements are actually compared, so it is the place where essentially all of the
interesting checking happens.
A subsumption rule lets an expression of type $\tau_1$ be used at any supertype $\tau_2$, deferring
to a subtyping judgment $\subD{\Gamma}{\tau_1}{\tau_2}$.
For arrows it is the standard rule, contravariant
in the domain and covariant in the codomain; for two refined base types, subtyping holds when
the first refinement implies the second under the assumptions in the context:
\begin{mathpar}
\inferrule*[right=\rname{S-Base}]
  {\forall \renv, \extract{\renv}{\Gamma} \to \reftden{\reft_1} \to \reftden{\reft_2}}
  {\subD{\Gamma}{\rtype{\nu}{b}{\reft_1}}{\rtype{\nu}{b}{\reft_2}}}
\and
\inferrule*[right=\rname{S-Fun}]
  {\subD{\Gamma}{\tau_1'}{\tau_1} \\
   \subD{\Gamma, x{:}\tau_1'}{\tau_2}{\tau_2'}}
  {\subD{\Gamma}{x{:}\tau_1 \to \tau_2}{x{:}\tau_1' \to \tau_2'}}
\end{mathpar}
Here $\extract{\renv}{\Gamma}$ \emph{extracts} the assumptions from $\Gamma$
by conjoining the base refinements % into a \sclean proposition: %of each refined-base binding into a \sclean proposition:
\[
\begin{array}{r@{\;}c@{\;}l@{\qquad\qquad}r@{\;}c@{\;}l@{\qquad\qquad}r@{\;}c@{\;}l}
\extract{\renv}{\cdot} & \eqdef & \top
&
\extract{\renv}{\Gamma, x{:}\rtype{\nu}{b}{\reft}} & \eqdef & \extract{\renv}{\Gamma} \wedge \reftden{\reft\esubst{\nu}{x}}
&
\extract{\renv}{\Gamma, x{:}\tau_1 \to \tau_2} & \eqdef & \extract{\renv}{\Gamma}
\end{array}
\]

\mypara{Soundness}
We prove the declarative system sound: a well-typed program never gets stuck, evaluating
to a value that inhabits the interpretation of its type.

\begin{theorem}[Type soundness]\label{thm:lrk-type-soundness}
If $\hasty{\kenv}{\cdot}{e}{\tau}$, then there exists a value $\val$ such that $\evalto{e}{\val}$ and
$\tyden{\emptyset}{\val}{\tau}$.
\end{theorem}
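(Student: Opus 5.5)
We prove a generalized statement by induction on the typing derivation $\hasty{\kenv}{\Gamma}{e}{\tau}$, since the closed statement is too weak to serve as its own induction hypothesis. The generalization is: for every closing substitution $\renv$ that \emph{satisfies} $\Gamma$, there exists $\val$ with $\evalto{e\,\renv}{\val}$ and $\tyden{\renv}{\val}{\tau}$. Here $\renv$ satisfies $\Gamma$ when every $x{:}\tau' \in \Gamma$ has $\tyden{\renv}{\renv(x)}{\tau'}$; as a consequence, $\extract{\renv}{\Gamma}$ holds. Instantiating with $\Gamma = \cdot$ and $\renv = \emptyset$ gives the theorem.

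Before the main induction we establish three auxiliary lemmas.
\begin{itemize}
\item \emph{Semantic subtyping}: if $\subD{\Gamma}{\tau_1}{\tau_2}$, $\renv$ satisfies $\Gamma$, and $\tyden{\renv}{\val}{\tau_1}$, then $\tyden{\renv}{\val}{\tau_2}$. We prove it by induction on the subtyping derivation.
  \begin{itemize}
  \item \rname{S-Base} is immediate from the premise instantiated at $\renv\subst{\nu}{\val}$, using that $\renv$ satisfies $\Gamma$.
  \item \rname{S-Fun} unfolds the arrow interpretation. Given an argument $\val_a$ in $\tau_1'$, we use the domain IH to move it into $\tau_1$. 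We obtain $\val_r$ from the hypothesis, and then use the codomain IH under the extended context $\Gamma, x{:}\tau_1'$, which $\renv\subst{x}{\val_a}$ satisfies.
  \end{itemize}
\item \emph{Substitution/weakening of denotations}: $\tyden{\renv}{\val}{\tau\esubst{x}{y}}$ iff $\tyden{\renv\subst{x}{\renv(y)}}{\val}{\tau}$. This reduces to a substitution lemma for terms and formulas, $\sem{\term\esubst{x}{y}}_\renv = \sem{\term}_{\renv\subst{x}{\renv(y)}}$, proved by structural induction. Horn applications are handled by the same lemma, since $\kenv$ only consumes the interpreted arguments.
\item \emph{Irrelevance}: denotations depend only on the free variables of $\tau$.
\end{itemize}

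The main induction then proceeds case by case.
\begin{itemize}
\item \textbf{Constants.} Evaluation is reflexive on values; the refinements checked by the typing rules (e.g.\ $\nu = z$) hold by computation.
\item \textbf{\rname{T-Var}.} $x\,\renv = \renv(x)$ is a value. Its denotation in $\selfop(x,\tau)$ follows from the satisfaction of $\Gamma$ at $x$, plus the trivially true conjunct $\nu = x$ under $\renv\subst{\nu}{\renv(x)}$.
\item \textbf{Lambda.} The value is the closed lambda itself. The arrow obligation follows from the IH for the body at the context extended with $x{:}\tau_1$, together with the standard commutation $(e\,\renv)\esubst{x}{\val_a} = e\,(\renv\subst{x}{\val_a})$.
\item \textbf{\rname{T-App}.} From the IH for $e$ we get a lambda satisfying the arrow denotation. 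From the IH for $y$ we get $\renv(y)$ in $\tau_1$. Applying the arrow denotation gives $\val_r$ with the evaluation, and the big-step application rule assembles $\evalto{(e\,y)\renv}{\val_r}$. The result type $\tau_2\esubst{x}{y}$ is handled by the substitution lemma.
\item \textbf{Let.} We compose the two IHs, extending $\renv$ with the value of the bound expression. A well-formedness side condition (the bound variable does not escape into $\tau$) lets irrelevance remove it from the final denotation.
\item \textbf{If.} We use the IH for the guard to obtain a boolean. Path sensitivity must be justified by extending the context with the guard fact, which holds in the chosen branch.
\item \textbf{Primitive operators.} Evaluation produces the arithmetic result, matching the operator's refinement.
\item \textbf{Ascription and subsumption.} Both use the semantic subtyping lemma.
\end{itemize}

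The main obstacle I expect is the interaction between substitution-based evaluation and environment-based denotations in the binder cases (lambda, let, app). Commuting closing substitutions with $e\esubst{x}{\val}$ requires closedness invariants on values in $\renv$, and dependent types mean the denotation for the arrow's codomain is evaluated in an extended environment. I would isolate these as a standalone lemma about closed substitutions, proved once, before the main induction. For the if-rule, how the branch fact enters $\Gamma$ (as a fresh binder refined by $c = \ttrue$) must be matched so the extended $\renv$ still satisfies the context.
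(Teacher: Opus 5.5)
Your proposal is correct and follows the paper's route: the paper likewise proves a Fundamental Lemma over closing substitutions of closed values that satisfy $\Gamma$, backed by a semantic-inclusion lemma for subtyping (which the paper states under the weaker hypothesis $\extract{\renv}{\Gamma}$), and then instantiates it at the empty context and empty substitution. Two small slips to fix: \rname{T-If} has no typing premise for the guard, so the boolean value of $x$ comes from $\renv$ satisfying $\Gamma$ rather than from an induction hypothesis; and \rname{T-Ann} needs only the induction hypothesis plus \rname{E-Ann}, not the subtyping lemma.
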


\subsection{A Verifier for $\lrk$}
\label{sec:constraint-generation}
\label{sec:lrk:algo}

The declarative system assumes a $\kenv$-assignment
with the solutions for Horn variables that make a program safe.
Now we describe a constraint generation algorithm
that produces a CHC that can be discharged by our solver
to find such $\kenv$-assignment.

First, we define the syntax of constraints $\cstr$
as described in \cref{fig:chc-grammar} instantiating atoms
as formulas $\varphi$.
Next, we define constraint generation as a \emph{bidirectional}
typechecking procedure implemented by three judgments:
\emph{synthesis} $\synJ{\Gamma}{e}{\tau}{c}$,
\emph{checking} $\chkJ{\Gamma}{e}{\tau}{c}$, and
\emph{subtyping} $\subJ{\Gamma}{s}{t}{c}$.
Like the declarative system, most rules are syntax directed
and accumulate subconstraints by conjoining them.
For example, function application mirrors \rname{T-App} producing
a constraint for each subexpression and conjoining them:
\[
  \inferrule*[right=\rname{Syn-App}]
    { \synJ{\Gamma}{e}{x{:}\tau_1 \to \tau_2}{c_1}
    \\
      \chkJ{\Gamma}{y}{\tau_1}{c_2} }
    { \synJ{\Gamma}{e\;y}{\tau_2\esubst{x}{y}}{\cAnd{c_1}{c_2}} }
\]
The interesting case is again subtyping on base refinements, which produces a constraint
requiring that all values of base type $b$ satisfying $\reft_1$ also satisfy $\reft_2$:
\[
  \inferrule*[right=\rname{S-Base}]
    { }
    {\subJ{\Gamma}{\rtype{\nu}{b}{\reft_1}}{\rtype{\nu}{b}{\reft_2}}{\;\cAll{\nu}{b}{\cImp{\reft_1}{\reft_2}}}}
\]

\mypara{Soundness}
The constraint generated by our procedure must imply the safety of the program.
We formalize this guarantee by extending the interpretation of refinements to
constraints as follows:
\[
\begin{array}{r@{\;}c@{\;}l@{\qquad\qquad}r@{\;}c@{\;}l}
\cstrden{\top}                    & \eqdef & \top
&
\cstrden{\cAll{x}{b}{\cstr'}}   & \eqdef & \forall v \in \sem{b}.\, \cstrden[\kenv][\renv\subst{x}{v}]{\cstr'} \\[2pt]
\cstrden{\cImp{\reft_1}{\reft_2}} & \eqdef & \reftden{\reft_1} \to \reftden{\reft_2}
&
\cstrden{\cAnd{c_1}{c_2}}         & \eqdef & \cstrden{c_1} \wedge \cstrden{c_2}
\end{array}
\]
Then we prove two theorems.
The first connects constraint generation to declarative typing.

\begin{theorem}[Constraint Generation]\label{thm:lrk-vcgen}
If $\chkJ{\cdot}{e}{\tau}{\cstr}$ and $\cstrden[\kenv][\cdot]{\cstr}$
is satisfiable, then $\hasty{\kenv}{\cdot}{e}{\tau}$.
\end{theorem}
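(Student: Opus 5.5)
We prove a strengthened statement by mutual induction on the three algorithmic judgments. For an arbitrary context $\Gamma$ it says the following. If $\synJ{\Gamma}{e}{\tau}{\cstr}$ (resp.\ $\chkJ{\Gamma}{e}{\tau}{\cstr}$) and $\forall \renv,\ \extract{\renv}{\Gamma} \to \cstrden[\kenv][\renv]{\cstr}$, then $\hasty{\kenv}{\Gamma}{e}{\tau}$. Likewise, if $\subJ{\Gamma}{s}{t}{\cstr}$ and the constraint holds under every $\renv$ satisfying $\Gamma$, then $\subD{\Gamma}{s}{t}$. The theorem is the instance $\Gamma = \cdot$, where $\extract{\renv}{\cdot} = \top$. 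Here we read ``$\cstrden[\kenv][\cdot]{\cstr}$ is satisfiable'' as providing the $\kenv$-assignment whose existential witness the solver returns, so $\kenv$ stays fixed throughout the induction.

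The induction needs two auxiliary lemmas.
\begin{enumerate}
\item \emph{Constraint decomposition.} Interpretation commutes with the constraint formers. So $\cstrden{\cAnd{c_1}{c_2}}$ yields both conjuncts, which discharges every rule that merely conjoins premise constraints, such as \rname{Syn-App}. For binders, a universally closed $\cAll{x}{b}{c}$ together with $\extract{\renv}{\Gamma}$ gives $c$ under every extended $\renv\subst{x}{v}$. This is exactly the hypothesis needed when a rule checks a body under $\Gamma, x{:}\rtype{\nu}{b}{\reft}$, because the generator wraps the body constraint in $\forall x.\, \reft\esubst{\nu}{x} \to \cdot$ and $\extract{\renv}{}$ of the extended context adds precisely that conjunct. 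Function-typed bindings contribute nothing to $\extract{\renv}{}$, and the generator contributes no guard for them, so that case is trivial.
\item \emph{Substitution commutes with interpretation.} $\sem{\reft\esubst{x}{y}}_{\renv} = \sem{\reft}_{\renv\subst{x}{\renv(y)}}$. This handles the ANF application rule and selfification in \rname{T-Var}.
\end{enumerate}

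With these in hand, the rule cases go as follows.
\begin{itemize}
\item \emph{Base subtyping.} The generated $\cAll{\nu}{b}{\cImp{\reft_1}{\reft_2}}$ unfolds directly to the premise of declarative \rname{S-Base}.
\item \emph{Function subtyping.} Apply the induction hypothesis to both premises, extending the context with $x{:}\tau_1'$ for the codomain as \rname{S-Fun} requires.
\item \emph{Checking switched from synthesis.} The algorithmic rule that checks $e$ against $\tau$ by synthesizing $\tau'$ and emitting a subtyping constraint becomes declarative subsumption, using the induction hypothesis for both parts.
\item \emph{\texttt{let}, \texttt{if}, and lambda.} These follow the binder pattern from the decomposition lemma. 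For \texttt{if}, path sensitivity enters as an implication guard on the branch condition, which we match with a declarative branch rule that extends $\Gamma$ with the guard. If the declarative system lacks such a rule, we would instead bind a fresh variable whose refinement records the guard.
\item \emph{Ascription.} This reduces directly to the checking judgment via the induction hypothesis.
\end{itemize}

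The main obstacle is keeping the algorithmic and declarative contexts in sync. Whenever the algorithmic rules emit a guard or binder into the constraint, the declarative derivation must carry a corresponding context entry, and $\extract{\renv}{}$ must reproduce exactly that guard. Horn applications inside refinements raise no special difficulty, since $\kenv$ interprets them uniformly on both sides. Variable-capture and freshness side conditions, as in \rname{T-App}'s substitution $\tau_2\esubst{x}{y}$, are routine but fiddly; we would discharge them with a well-scopedness invariant stating that the free variables of generated constraints lie in $\mathrm{dom}(\Gamma)$.
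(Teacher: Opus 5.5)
Your proposal is correct and follows the paper's route. The paper likewise generalizes to arbitrary contexts via the entailment $\entl{\kenv}{\Gamma}{\cstr}$ (Lemmas~\ref{lem:stlc:sub-sound} and~\ref{lem:stlc:gen-sound}), proves it by induction on the generation derivations with each algorithmic rule mapped to its declarative twin (the guard introduced by $\bind{x}{\tau}$ matching the conjunct that $\extract{\renv}{\Gamma, x{:}\tau}$ adds, and \rname{Chk-If} handled exactly by your fallback of a fresh $y$ refined by $x = \etrue$ or $x = \efalse$), and then specializes to $\Gamma = \cdot$. One simplification: your substitution lemma is not needed here, because \rname{Syn-App}/\rname{T-App} and \rname{Syn-Var}/\rname{T-Var} produce syntactically identical types, so that lemma belongs to the type-soundness proof rather than to generation soundness.
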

\noindent
Second, composing constraint generation soundness with declarative type soundness
(\Cref{thm:lrk-type-soundness}) yields our end-to-end guarantee: if a program passes constraint
generation and the resulting constraints are satisfiable, then the program evaluates to
a value in its type's interpretation:
\begin{theorem}[Verifier Soundness]\label{thm:lrk-safety}
If $\chkJ{\cdot}{e}{\tau}{\cstr}$ and $\cstrden[\kenv][\cdot]{\cstr}$
then $\exists \val$ s.t. $\evalto{e}{\val}$ and $\tyden{\cdot}{\val}{\tau}$.
% is , then there exists a value $\val$ such that $\evalto{e}{\val}$ and $\tyden{\cdot}{\val}{\tau}$.
\end{theorem}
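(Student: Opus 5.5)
The proof composes the two preceding results. First we apply \Cref{thm:lrk-vcgen} to the hypotheses $\chkJ{\cdot}{e}{\tau}{\cstr}$ and $\cstrden[\kenv][\cdot]{\cstr}$, which gives the declarative derivation $\hasty{\kenv}{\cdot}{e}{\tau}$. We then apply \Cref{thm:lrk-type-soundness} to that derivation. It yields a value $\val$ with $\evalto{e}{\val}$ and $\tyden{\emptyset}{\val}{\tau}$, and the empty closing substitution $\emptyset$ is the $\cdot$ of the statement.

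The one real step is matching the hypotheses. \Cref{thm:lrk-vcgen} speaks of $\cstrden[\kenv][\cdot]{\cstr}$ being \emph{satisfiable}, meaning there is a $\kenv$-assignment under which the interpretation holds. The verifier theorem instead takes the proposition $\cstrden[\kenv][\cdot]{\cstr}$ to hold for a fixed $\kenv$.

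\begin{itemize}
\item I would read the $\kenv$ in the verifier statement as the witness assignment, namely the existential witnesses produced when \sys discharges the closed CHC.
\item Satisfiability is then immediate by taking that $\kenv$ as the witness.
\item We also need \Cref{thm:lrk-vcgen} to produce the typing derivation at that \emph{same} $\kenv$. Otherwise the final interpretation $\tyden{\cdot}{\val}{\tau}$, which depends on $\kenv$ through Horn applications in $\tau$, would refer to a different assignment.
\end{itemize}

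If the mechanized form of \Cref{thm:lrk-vcgen} quantifies the $\kenv$ inside the existential, I would restate it as $\forall \kenv.\ \cstrden[\kenv][\cdot]{\cstr} \to \hasty{\kenv}{\cdot}{e}{\tau}$. Its proof would be an induction on the checking/synthesis/subtyping derivations in which each \rname{S-Base} constraint $\cAll{\nu}{b}{\cImp{\reft_1}{\reft_2}}$ supplies the premise of declarative \rname{S-Base} under the same $\kenv$, and conjunctions split as in \rname{Syn-App}. This alignment of $\kenv$ is the only place I expect any friction. The rest is direct function composition in \sclean.
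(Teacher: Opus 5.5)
Your proposal is correct and is exactly the paper's argument: the theorem is obtained by composing \Cref{thm:lrk-vcgen} (the empty-context instance of \Cref{lem:stlc:gen-sound}) with \Cref{thm:lrk-type-soundness}. The $\kenv$-alignment you flag is resolved the way you suggest, because \Cref{lem:stlc:gen-sound} is stated for a fixed $\kenv$ with hypothesis $\entl{\kenv}{\Gamma}{\cstr}$ over an arbitrary context $\Gamma$. That arbitrary-$\Gamma$ form is also the generalization your induction would need, since the premises of \rname{Chk-Lam}, \rname{Chk-Let}, \rname{Chk-If} and \rname{Sub-Fun} extend the context.
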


%% file: sections/07-evaluation.tex
\section{Implementation and Evaluation}
\label{sec:evaluation}

We implemented \sys in about 2700 lines of \sclean,
and extended \flux with a backend that emits
constraints as \sclean propositions in about
1800 lines of Rust.
Next, we describe experiments, using a workstation
running Ubuntu 24.04.4 LTS with an AMD Ryzen 7
8845HS processor (8 cores, 16 threads) and 27GB
of RAM, to evaluate \sys on three questions
%
% resp 2668, 1772
%
\begin{itemize}
  \item \textbf{RQ1:} How \emph{expressive} is the system compared to SMT? (\hyperref[subsec:rq1-expr]{\S\ref*{subsec:rq1-expr}})
  \item \textbf{RQ2:} How \emph{efficient} is \Zapname~ over search-based tactics in \sclean? (\hyperref[sec:rq2]{\S\ref*{sec:rq2}})
  \item \textbf{RQ3:} How \emph{scalable}, \emph{automated}, and \emph{costly} is the \sclean backend? (\hyperref[subsec:rq3-auto]{\S\ref*{subsec:rq3-auto}})
\end{itemize}

\subsection{RQ1: Expressiveness}
\label{subsec:rq1-expr}

\begin{table}[t]
\centering\small
\begin{tabular}{lrrrr}
\toprule
Case study & \flux~LoC & \#CHCs & Spec LoC & Proof LoC \\
\midrule
\texttt{Sorting}     & 316 & 15 &  96 &  620 \\
\texttt{RingBuffer}  & 257 & 10 &  75 &  501 \\
\texttt{TickTock}    &  15 &  5 & --- &  302 \\
\texttt{HashTable}   & 562 & 22 & 245 & 1361 \\
\midrule
\textbf{All}  & 1250 & 52 & 416 & 2784\\
\bottomrule
\end{tabular}
\caption{Case studies: lines of verified Rust (LoC), number of constraints (\#CHCs),
lines of manually written \sclean specs (Spec LoC), and lines of proof (Proof LoC).}
\label{tab:case_studies}
\end{table}

Through a suite of case studies,
summarized in \Cref{tab:case_studies},
we demonstrate how \sys expands the
scope of provable specifications
by providing \scflux access to
a foundational logic.
The studies include
\texttt{Sorting}:
we verify the functional correctness of in-place insertion sort, quicksort, and merge sort,
proving that each returns a sorted permutation of the input vector;
\texttt{RingBuffer}:
we verify memory safety of a ring-buffer-backed deque
adapted from the \textsc{Tock} embedded OS kernel ~\citep{tock-proper},
where we prove that the implementation never reads from uninitialized memory,
and that the Rust code implements a queue interface;
\texttt{TickTock}:
we verify various facts about modular arithmetic and bit-vectors
needed to verify that the address segments calculated by the
\textsc{Tock} embedded kernel implement process
isolation \citep{TOCK};
\texttt{HashTable}:
we verify the correctness of a hash table with chaining based
collision resolution, by proving the Rust code implements
a functional \sclean specification.

Previously, to ensure predictable
verification, \scflux was restricted
to quantifier- and recursion-free
SMT-decidable specifications, which
precluded expressing \emph{any}
of the case studies.
For example, defining sortedness
requires universally quantifying over
the elements of an array, and
the hash-table invariants
quantify over bucket contents.
Even when a property \emph{is}
expressible as it falls in the
nominally decidable fragment
--- \eg the bitvector facts
from TickTock --- the resulting
formulas can make the SMT solver
time out.
This highlights a key benefit
of \sys: the ability to fall
back to interactive proof when
automation falls short.

\begin{figure}[t]
\centering
\begin{minipage}[t]{0.44\linewidth}
\textit{\footnotesize (a) The ring buffer (struct and impl)}
\begin{fluxsh}[basicstyle=\ttfamily\scriptsize]
#[refined_by(cap:int, hd:int, tl:int,
             init: Map<int, bool>)]
#[invariant((*\khl{init\_inv(self)}*))]
struct RingBuffer<'a, T: Copy + 'a> {
    buf: FSlc<'a, T>[cap, init],
    head: usize[hd],
    tail: usize[tl],
}

impl<'a, T: Copy> RingBuffer<'a, T> {
  ...
  #[proven_externally]
  fn dequeue(&mut self) -> Option<T> {
    if self.head != self.tail {
      let v = self.buf.get(self.head);
      self.head = (self.head+1) %
                     self.buf.len();
      Some(v)
    } else {
      None
    }
  }
}
\end{fluxsh}
\end{minipage}
\begin{minipage}[t]{0.55\linewidth}
\textit{\footnotesize (b) Refinement-level specifications}
\begin{fluxsh}[basicstyle=\ttfamily\scriptsize]
fn rb_len(rb: RingBuffer) -> int {
  if rb.tl > rb.hd { rb.tl - rb.hd }
  else if rb.tl < rb.hd { rb.cap - rb.hd + rb.tl }
  else { 0 }
}

fn valid_idx(rb: RingBuffer, idx: int) -> bool {
  (idx + rb.cap - rb.hd) % rb.cap < rb_len(rb)
}

// Defined in Lean
fn init_inv(rb: RingBuffer) -> bool;
\end{fluxsh}

\smallskip
\textit{\footnotesize (c) Trusted wrapper over maybe uninitialized slices}
\begin{fluxsh}[basicstyle=\ttfamily\scriptsize]
#[opaque]
#[refined_by(len: int, init: Map<int,bool>)]
struct FSlc<'a,T>(&'a mut [MaybeUninit<T>]);

impl<'a, T: Copy> FSlc<'a, T> {
  #[trusted]
  fn get(self: &FSlc<T>[@n, @f],
         idx: usize{(*\khl{idx < n \&\& map\_get(f, idx)}*)}) -> T;
}
\end{fluxsh}
\end{minipage}
\caption{Ring buffer implementation adapted from Tock,
verified to never read uninitialized memory.
This requires a \sclean invariant, \texttt{init\_inv}, that quantifies over
valid indices and is not expressible in \flux.}
\label{fig:flux-lean-example}
\end{figure}

\mypara{Example: \texttt{RingBuffer} Invariant}
\Cref{fig:flux-lean-example} shows snippets of
the fixed-capacity \inlineflux{RingBuffer}
implementation of a double-ended queue
(deque) adapted from the \textsc{Tock}
codebase, and simplified for exposition.
The structure maintains \emph{head} and
\emph{tail} indices that delimit the range
of slots holding valid data.
As elements are pushed and popped at
either end, these indices move through
the fixed-size buffer and may eventually
\emph{wrap around} (necessitating careful
modular arithmetic reasoning.)
Reading a slot outside the valid range
is undefined behavior as it may contain
uninitialized data.
We track exactly which slots are initialized
with a map \texttt{init} and maintain an
invariant saying all valid indices are initialized.
We can define the length (\texttt{rb\_len})
and what indices are valid (\texttt{valid\_idx})
as ordinary \flux refinement functions that
translate directly into \sclean definitions.
The invariant \texttt{init\_inv} requires quantifying
over valid indices, which was not expressible as a
\flux refinement, but can now be, in \sclean:
\begin{lstlisting}[style=leanhl]
  def init_inv (rb : RingBuffer) : Prop :=
    ∀ i : Int, 0 ≤ i ∧ i < rb.cap → valid_idx rb i → rb.init i
\end{lstlisting}
Annotating the \texttt{dequeue} method with \texttt{proven\_externally}
generates a \sclean constraint in which this definition is available to
prove two key facts: that at the call to \texttt{FSlc::get} the
precondition is satisfied, and that \texttt{dequeue} preserves \texttt{init\_inv}.
%
%
% This is what lets \texttt{get\_internal} discharge the precondition of
% \texttt{get} --- the \texttt{FSlc} method (\Cref{fig:flux-lean-example}~(c))
% that requires the slot being read to be initialized.

% \paragraph{Two proof strategies.}
% The case studies illustrate two complementary approaches to verification, which
% may be mixed as needed.
% In the \emph{direct} style (used for the sorting algorithms), refinement
% predicates are stated over the Rust implementation itself and discharged locally.
% This is the most natural path when invariants are simple enough to express
% directly in \flux~and the proof effort per VC is manageable.
% In the \emph{model} style (used for the hash table), the user first provides a
% Lean function that models the Rust implementation, proves that the Rust code
% refines that model, and then derives all desired properties from the model alone.
% This incurs higher upfront cost to prove the spec and the model agree, but it
% scales better for larger case studies. Because \flux~checks each function locally,
% a strong model spec lets the user add new properties without re-examining the
% implementation; the direct style, by contrast, may require adding refinement
% predicates incrementally as new properties are needed.

The \sclean backend does not eliminate all manual work.
The \Fixname algorithm finds solutions automatically for simple cyclic
$\kappa$-variables, but complex invariants (e.g., in \lean{merge} and \lean{quicksort\_range})
must still be supplied by hand.
Sometimes adding the right qualifiers is sufficient; other
times, even with the correct invariant provided,
manual \sclean proof is needed to show it is maintained.

\subsection{RQ2: Efficiency of \Zapname~}
\label{sec:rq2}

% \begin{table}[t]
% \centering
% \small
% \begin{tabular}{lrrrrrrrrrrr}
% \toprule
%  & & \multicolumn{3}{c}{\zap} & \multicolumn{3}{c}{\fusiong} & \multicolumn{3}{c}{\fusiona} \\
% \cmidrule(lr){3-5}\cmidrule(lr){6-8}\cmidrule(lr){9-11}
% Suite & \#CHCs & \#ok & hb & ms & \#ok & $\times$hb & $\times$ms & \#ok & $\times$hb & $\times$ms \\
% \midrule
% \fluxmedium & 80 & 80 & 283 & 26 & 49 & 32$\times$ & 22$\times$ & 66 & 96$\times$ & 60$\times$ \\
% \wavesuite & 35 & 35 & 518 & 50 & 20 & 22$\times$ & 16$\times$ & 29 & 41$\times$ & 25$\times$ \\
% \fluxtests & 235 & 235 & 100 & 11 & 163 & 12$\times$ & 9$\times$ & 220 & 23$\times$ & 16$\times$ \\
% \liquidfixpoint & 11 & 11 & 106 & 11 & 7 & 14$\times$ & 10$\times$ & 10 & 17$\times$ & 13$\times$ \\
% hashtable & 4 & 4 & 684 & 63 & 2 & 28$\times$ & 22$\times$ & 2 & 68$\times$ & 43$\times$ \\
% sorting & 10 & 10 & 550 & 61 & 9 & 407$\times$ & 275$\times$ & 9 & 92$\times$ & 44$\times$ \\
% \midrule
% \textbf{All} & 375 & 375 & 156 & 16 & 250 & 18$\times$ & 13$\times$ & 336 & 33$\times$ & 22$\times$ \\
% \bottomrule
% \end{tabular}
% \caption{RQ2 (\flux): acyclic-$\kappa$ elimination — aggregate per suite. \#ok = CHCs solved; hb = geomean total proof heartbeats; ms = geomean total proof wall time (ms). \zap~ columns are absolute; grind/aesop columns are geomean ratio vs.\ \zap~ over jointly-solved CHCs.}
% \label{tab:rq2-flux-agg}
% \end{table}

\begin{table}[t]
\centering
\small
\begin{tabular}{lrrrrrrrrr}
\toprule
 & & \multicolumn{2}{c}{\zap} & \multicolumn{3}{c}{\fusiong} & \multicolumn{3}{c}{\fusiona} \\
\cmidrule(lr){3-4}\cmidrule(lr){5-7}\cmidrule(lr){8-10}
Suite & \#CHCs & hb & ms & \#reduced & $\times$hb & $\times$ms & \#reduced & $\times$hb & $\times$ms \\
\midrule
\fluxmedium & 78 & 268 & 24 & 48 (62\%) & 33$\times$ & 23$\times$ & 65 (83\%) & 99$\times$ & 62$\times$ \\
\wavesuite & 35 & 518 & 49 & 20 (57\%) & 22$\times$ & 16$\times$ & 29 (83\%) & 41$\times$ & 25$\times$ \\
\fluxtests & 231 & 100 & 11 & 161 (70\%) & 12$\times$ & 9$\times$ & 217 (94\%) & 23$\times$ & 16$\times$ \\
\lfsuite & 11 & 106 & 11 & 7 (64\%) & 14$\times$ & 10$\times$ & 10 (91\%) & 17$\times$ & 12$\times$ \\
\flexbench & 14 & 585 & 62 & 11 (79\%) & 251$\times$ & 178$\times$ & 11 (79\%) & 87$\times$ & 44$\times$ \\
\midrule
\textbf{All} & 369 & 154 & 16 & 247 (67\%) & 18$\times$ & 13$\times$ & 332 (90\%) & 34$\times$ & 22$\times$ \\
\bottomrule
\end{tabular}
\caption{RQ2: acyclic $\kappa$-variables reduction.
\#CHCs is the number of constraints in each suite; \zap~ can reduce all of them.
For \zap, hb and ms are average heartbeats and wall-clock time (ms).
For \fusiong~ and \fusiona, \#reduced is the number of constraints reduced,
and $\times$hb/$\times$ms report average cost relative to \zap.}
\label{tab:rq2-flux-agg}
\end{table}

% Auto-generated by make_rq2_table.py — do not edit by hand.
\begin{table}[t]
\centering
\small
\begin{tabular}{lrrrrrrrrr}
\toprule
 & \multicolumn{3}{c}{\zap~ (baseline)} & \multicolumn{3}{c}{\fusiong/\zap~ ratio} & \multicolumn{3}{c}{\fusiona/\zap~ ratio} \\
\cmidrule(lr){2-4}\cmidrule(lr){5-7}\cmidrule(lr){8-10}
Suite & depth & \#consts & ker$\mu$s & depth & \#consts & ker$\mu$s & depth & \#consts & ker$\mu$s \\
\midrule
flux-medium & 87 & 47 & 5\,834 & 1.0$\times$ & 1.0$\times$ & 1.1$\times$ & 1.2$\times$ & 1.4$\times$ & 1.7$\times$ \\
wave & 131 & 54 & 12\,065 & 1.0$\times$ & 1.0$\times$ & 1.1$\times$ & 1.1$\times$ & 1.3$\times$ & 1.4$\times$ \\
flux-tests & 62 & 42 & 1\,996 & 1.0$\times$ & 1.0$\times$ & 1.0$\times$ & 1.1$\times$ & 1.3$\times$ & 1.4$\times$ \\
liquid-fixpoint & 63 & 43 & 2\,177 & 1.0$\times$ & 1.0$\times$ & 1.0$\times$ & 1.1$\times$ & 1.3$\times$ & 1.5$\times$ \\
\flexbench & 121 & 54 & 13\,565 & 1.0$\times$ & 1.0$\times$ & 1.2$\times$ & 1.0$\times$ & 1.3$\times$ & 1.4$\times$ \\
\midrule
\textbf{All} & 73 & 45 & 3\,202 & 1.0$\times$ & 1.0$\times$ & 1.0$\times$ & 1.1$\times$ & 1.3$\times$ & 1.5$\times$ \\
\bottomrule
\end{tabular}
\caption{Elimination proof-term shape (geomean per suite). \zap~ columns show geomean raw values; grind/aesop columns show geomean ratio vs.\ \zap~ over jointly-solved CHCs. depth = expression-tree depth; \#consts = distinct constants; ker$\mu$s = kernel re-check time ($\mu$s).}
\label{tab:rq2-flux-agg-term}
\end{table}

\Zapname{} (\Cref{sec:zap}) eliminates an acyclic
$\kvar$-variable by instantiating the existential
with the strongest solution for $\kvar$, and
constructing a proof term that justifies
replacing head occurrences of $\kappa$ with $\top$.
We demonstrate that deterministic
proof construction makes \Zapname~
faster \emph{and} robuster than
search-based tactics, and hence, crucial in practice.

% The second step is precisely what a
% general-purpose search tactic, such as
% Lean's \texttt{grind} or \texttt{aesop},
% could instead be asked to find.
% %
% RQ2 compares these two routes, deterministic
% proof-term construction versus search,
% along three metrics:

% \begin{itemize}
%   \item \textbf{Heartbeats:} Lean's deterministic, machine-independent count of
%     elaboration steps, proxy for the efficiency of the proof.
%   \item \textbf{Depth:} the nesting depth of the resulting proof term, bounding
%     the recursion the kernel performs when checking it.
%   \item \textbf{Proof-term size:} the number of nodes in that term, i.e.\ the
%     size of the proof term that the kernel must check.
% \end{itemize}

\mypara{Experiment design.}
To evaluate \zap we implemented two variants, \fusiong{}
and \fusiona{}, which use the same $\Solname$ procedure
to compute the strongest solution, but replace
the $\Certname$ proof-term construction procedure
by calls to \texttt{grind} and \texttt{aesop},
respectively, each run with default settings.
We ran each tactic on constraints with one or more
acyclic $\kappa$-variables, to measure the cost of
reduction using \zap (\cref{sec:zap}).
Our benchmarks are divided into five suites:
% \begin{itemize}[leftmargin=1.2em, topsep=3pt, itemsep=2pt, parsep=0pt]
% \item
\fluxtests: the full \flux positive regression test suite;
% \item
\lfsuite: \liquidfix's regression tests, the SMT-based solver used by
many refinement type systems, including \flux;
% \item
\wavesuite: a verified WebAssembly runtime previously ported to \flux~\cite{wave,FluxRS};
% \item
\fluxmedium: medium-size benchmarks including implementations of
  vector-manipulating algorithms and programs from the \flux guide and tutorials;
% \item
\flexbench: the case studies from \cref{subsec:rq1-expr}.
% \end{itemize}

%
% (1) the number of constraint that could be discharged,
% (2) the \emph{hearbeats} count, and
% (3) wall-clock time in milliseconds.
% recorded \emph{heartbeats}, \emph{depth}, and \emph{proof-term size}.
% %
% To isolate $\kappa$-elimination, we do
% not discharge the residual VC that remains afterwards; it is left as
% \texttt{sorry}, removing the noise of proving the rest.

%
% All three compute the same solutions for the acyclic
% $\kappa$ and perform the same transformation; they differ only in how they
% discharge the clauses with $\kappa$ in the head: \zap~ by direct
% proof-term construction, the other two by \texttt{grind} and \texttt{aesop}.
% The question is whether deterministic construction gives a \emph{significant}
% benefit over Lean's general-purpose automation. We run each tactic independently
% on every benchmark VC with an acyclic $\kappa$ and record its \emph{heartbeats},
% \emph{depth}, and \emph{proof-term size}. To isolate $\kappa$-elimination, we do
% not discharge the residual VC that remains afterwards; it is left as
% \texttt{sorry}, removing the noise of proving the rest.

\mypara{Results.}
\Cref{tab:rq2-flux-agg} summarizes the results across a total
of $369$ constraints.
For each suite the table reports three metrics: the number of constraints
successfully reduced and, over those, the average wall-clock time (ms)
and heartbeats --- \sclean's deterministic proxy for proof effort.
\zap{} can successfully reduce \emph{all} constraints with
a modest running time (max $62$ ms).
The search-based alternatives fare considerably worse.
\fusiong~ reduces only $247$ of the $369$ constraints ($67\%$), and
on the constraints it can reduce, it consumes $18\times$ more
heartbeats and it is $13\times$ slower on average; the performance is worst
on \flexbench ($11$s per CHC), which has the largest constraints.
\fusiona~ fares better in coverage, successfully reducing $90\%$
of the constraints, but it is 22 times slower on average than \zap.
Furthermore, Proof terms produced by \myopname{aesop} are on average
10\% deeper and 50\% \emph{slower} for the kernel to check, shown
in \Cref{tab:rq2-flux-agg-term}.

\subsection{RQ3: Scalability and Cost}
\label{subsec:rq3-auto}

% Refinement type checkers traditionally rely on SMT for automation.
SMT solvers are a marvelous feat of engineering.
Replacing them with a \emph{foundational} constraint solver in \sclean raises
a couple of questions: does it \emph{scale} to the constraints that verifiers
generate in practice, and how much automation does it \emph{trade off}?
To answer these questions we gave \sys free rein, running the full solver
pipeline described in \cref{sec:solver} --- including \zap and \Fixname\ ---
on all of our benchmarks.
For \Fixname, we instantiate the oracle that prunes instance candidates with
a combination of \lean{grind} and \lean{aesop}.
This leaves a $\kvar$-free VC, whose remaining goals we then attempt to close
with the same \lean{grind}/\lean{aesop} combination.
We raise \sclean's default heartbeat, from $200{,}000$ to $5{,}000{,}000$,
so that the search-based tactics have ample room to succeed.
We exclude \flexbench{} from this experiment, as its case studies were chosen
precisely because they require reasoning beyond what an SMT solver can reasonably
automate.

\begin{table}[t]
\centering\small
\begin{tabular}{lrrrrrr}
\toprule
Benchmark & \#CHCs & Success & \#Cyclic & Success (cyclic) & LF time & \sys Time \\
\midrule
\fluxmedium & 148 & 89.2\% & 60 & 47 (78.3\%) & 4.2s & 9.5m \\
\wavesuite & 77 & 96.1\% & 10 & 7 (70.0\%) & 2.8s & 4.0m \\
\fluxtests & 608 & 97.3\% & 80 & 67 (83.8\%) & 20.9s & 7.4m \\
\lfsuite & 47 & 95.7\% & 14 & 14 (100.0\%) & 3.1s & 58.0s \\
\midrule
\textbf{Total} & 880 & 95.7\% & 164 & 135 (82.3\%) & 31s & 21.8m \\
\bottomrule
\end{tabular}
\caption{Automation coverage of the full solver pipeline (\zap{} followed by
  \Fixname{} with a \lean{grind}/\lean{aesop} oracle), run with no human input.
  \#CHCs is the number of constraints in the suite; Success is the fraction
  discharged, and Success (cyclic) the fraction discharged among those
  constraints with a cyclic $\kappa$-variable.
  LF time and \sys Time are the total wall-clock time to solve the entire suite
  with Liquid Fixpoint (SMT) and \sys, respectively.}
\label{tab:benchmarks}
\end{table}

\mypara{Results.}
\sys automatically discharges $95.7\%$ of all constraints across
the benchmark suite (\Cref{tab:benchmarks}).
Failures concentrate on constraints with cyclic $\kappa$-variables,
where the success rate drops to $82.3\%$ due to addressable limitations
in the current implementation of $\tFix$.
%
% , as expected, since predicate abstraction is (necessarily) incomplete.
% for cyclic $\kvar$-variables.
%
% This is expected: unlike \zap, which eliminates acyclic $\kappa$-variables
% completely,
%
The $37$ failures decompose into three categories.
\textbf{(i)~Recursion/Heartbeat limits:} $21$ failures are caused by exceeding the
default recursion depth limit during predicate abstraction, and a further $4$
by exceeding the heartbeat limit; both parameters are tunable, and raising
them can let additional constraints pass automatically at the cost of longer
running times.
\textbf{(ii)~Missing qualifiers:} $4$ failures occur because \flux~discovers
some qualifiers by scraping the constraint, and we have not yet ported this
inference to the \sclean backend; adding the missing qualifier manually
allows our tactic to close these constraints. Even when all needed
qualifiers are present, predicate abstraction can still fail to find the
correct invariant, depending on the oracle it relies on.
\textbf{(iii)~Concrete Goal closure:} $8$ failures have no $\kappa$-variables at all,
and simply come down to \lean{grind}/\lean{aesop} being unable to close the
goal automatically; adding annotations to relevant theorems can enable \sys to
handle more complex goals. This failure mode can also arise when
% $\kappa$-variables are present and
\Fixname finds the correct invariant, as the verification condition
remaining after $\kappa$-elimination must still be discharged by
\lean{grind}/\lean{aesop}.

This demonstrates that our solver together with standard tactics can handle
a variety of constraints automatically.
Crucially, failures are \emph{recoverable}: a user can fall back to an
interactive \sclean proof rather than being blocked with only an opaque SMT
failure to go on.
Moreover, improvements to proof automation in \sclean would automatically transfer.

The expected cost is speed: \Cref{tab:benchmarks} shows that
the \sclean backend is roughly $2$ orders of magnitude slower than
the \liquidfix{} SMT-based backend \flux currently uses.
For a foundational system whose proofs are machine-checked,
this is the current price to pay.
We expect that incorporating known optimizations from \liquidfix{}
into \sys will shrink the performance gap.

%% file: sections/08-related.tex
\section{Related Work}
\label{sec:related}

\sys relates to a vast literature on program-verification; these are the lines closest to ours.

\mypara{SMT-based Verifiers}
Our work is inspired by program logic~\cite{Floyd1967, Hoare1969}
based verifiers, in particular, those that emit
a \emph{constraint} that must be then proven
to verify the program \cite{dijkstra1976}.
Typically these constraints are (horn variable free)
VCs that are discharged by SMT solvers~\cite{Luckham1979, escjava, dafny, framac}.
\scprusti \cite{prusti} and \scverus \cite{verus} implement
such Floyd-Hoare style verifiers for Rust.
\fstar generalizes the approach to the higher order setting
using Dijkstra monads \cite{F*-dijkstra-monads}, while
\stainless \cite{stainless} uses refinement types.
In contrast to the above, \sclh \cite{LiquidHaskell},
\scflux \cite{FluxRS} and \scthrust \cite{Thrust} use
CHC, which generalize VCs with existentially quantified
variables that delegate the synthesis of refinements
or invariants to an SMT-based CHC solver.
However, instead of relying upon incomplete, SMT heuristics
\cite{leino2016trigger}, that can be opaque and hard to use
\cite{mariposa}, \sys provides the full arsenal of interactive
proof, when required, and provides foundational guarantees about
the satisfiability of the CHC.

\mypara{Proving or Certifying VCs}
Work on proof-carrying code
showed how to avoid trusting SMT solvers
by modifying decision procedures to emit
certificates \cite{necula1998compiling}.
This idea was extended to model checkers
\cite{blastpcc}, and refinement types
\cite{fine}, and several modern theorem
provers emit independently checkable
certificates \cite{bártek2025vampirediary,vampire-lean,smt-certificates, SMTCoq,lean-smt}.
A dual approach is to use foundational,
interactive proof assistants to discharge
VCs.
For example, \textsc{Jahob} \cite{jahob}
delegates VCs to a variety of backends
including \scisabelle.
Similarly the \sccreusot Rust verifier \cite{Creusot}
emits VCs in the \scWhy format, after which they can
be discharged using either SMT or interactive proof \cite{Why3}.
%
% - Foundational Verifast
%
The above techniques apply to plain VCs:
our work shows how to represent such provers
as \emph{oracles} and thus, lift such proofs
to obtain certificates for CHC satisfiability.

\mypara{Transpiling to/from Interactive Provers}
The above techniques use verification obligations
as the medium of communication between
implementation (\eg C, Java, Rust, Haskell) and
verification (\eg SMT or interactive proof).
An alternative is to \emph{transpile} the implementation
\emph{into} the prover's language, and then prove properties
about the transpiled code, as done by \cite{scalabelle}
and \cite{hs2coq} which translate Scala and Haskell
into \scisabelle or \scrocq respectively, or
\cite{Aeneas, hax} which translate Rust into \sclean.
In general, though, the impedance mismatch between the
languages can make such translation challenging.
Of course, one can also go in the opposite direction, \ie
to \emph{extract} the executable implementation from source
written directly in a proof assistant, as done in systems
like \cite{Appel2014, RefinedC} for C, or \cite{RefinedRust}
for Rust or \cite{Loom} for Dafny, which deeply embed program
logic and refinement type based verifiers for the corresponding
languages inside \scrocq or \sclean, after which the prover's
extraction mechanism can be used to get executable code.
\sys is complementary to the above, in that we provide
a foundational CHC solver, that can be used either to build
end-to-end verifiers (\Cref{sec:verifiers}) \emph{or}
to verify expressive constraints generated by (trusted)
compiler plugins like \scflux (\Cref{sec:evaluation}).
Indeed, it would be interesting to explore using \sys to
build a CHC-based verifier on top of \sclean's \lean{MVCGen}
monad or the \scloom system.

\mypara{CHC Solvers}
\sys uses \cite{Cosman17} to solve acyclic variables,
and predicate abstraction \cite{Houdini} to compute (overapproximate)
solutions for cyclic ones.
In contrast, Spacer \cite{Spacer1,Spacer2}, Eldarica \cite{ELDARICA},
and SeaHorn \cite{SeaHorn} solve CHCs by inferring invariants using
IC3 \cite{IC3} or Lazy Abstraction and Craig Interpolation
\cite{BLAST, IMPACT}, which do not require qualifiers,
but which may diverge.
It would be interesting to explore ways to extend \sys with iterative
refinement techniques to reduce the reliance on qualifiers.

\mypara{Mechanizing Refinement Types}
Our mechanization of $\lambda_{RK}$ is influenced by
earlier mechanizations including simple refinement types
\cite{LehmannTanter16}, the work on System FR \cite{stainless},
Refined Featherweight Java \cite{RFJ} and most closely,
the polymorphic refinement calculus of \cite{Borkowski24}.
Unlike the prior work, we use a big-step semantics,
and more importantly, mechanize refinement \emph{inference}
using Horn variables, algorithmic constraint generation and
solving.

%% file: sections/A-zap-proofs.tex
\newcommand{\restr}[2]{#1 \setminus #2}            % truncation  Gamma \ rho

\section{Zap: Soundness}
\label{app:zap-proofs}

This appendix proves the correctness of the \Zapname\ algorithm of
\Cref{sec:zap}, whose two results are \emph{Soundness}
(\Cref{thm:zap:sound}) and \emph{Equivalence} (\Cref{thm:zap:equiv}).

\mypara{Overview}
Equivalence, together with the fact that \Solname\ computes the strongest
valid solution, is the \textsc{Fusion} result of \citet{Cosman17}.
Soundness we prove as a pure typing result: the proof term \pf\ emitted by
$\Zap{\cstr}$ is a closed, well-typed term of type $\cstr' \to \cstr$, so
the \sclean\ kernel turns any proof of the residual $\cstr'$ into a proof
of the original $\cstr$.

\mypara{Setup}
Throughout the appendix, we fix a closed constraint
\[
  \cstr \;\equiv\; \ecstr{\kvar}{c},
\]
whose Horn variables $\overline{\kvar}$ are partitioned
(\Cref{sec:partition}) into the cut set $\kcycs{\kvar}$ and the acyclic
set $\kacys{\kvar}$.

\mypara{Organization}
\begin{itemize}
  \item \Cref{app:zap-prelim} fixes the conventions and the proof-term calculus;
  \item \Cref{app:zap-sol} establishes the shape of the synthesized solution $\asgn$;
  \item \Cref{app:zap-elim} characterizes \Remname\ and the elimination order;
  \item \Cref{app:zap-soundness} types \Navname, \Certname, and the emitted bridge.
\end{itemize}

% ===========================================================================
\subsection{Conventions}
\label{app:zap-prelim}
% ===========================================================================

\mypara{Definitions and uses}
A \emph{$\kvar$-definition} of a constraint $c$ is a head occurrence of
\kvar, i.e., a sub-constraint $\HeadOf{c}{\pfx} = \kvapp{\kvar}{\overline{t}}$
for some head-prefix $\pfx$; a \emph{$\kvar$-use} is an occurrence of
\kvar\ in a guard.
Since guards are atoms (\Cref{fig:chc-grammar}), every use is a single
application $\kvapp{\kvar}{\overline{t}}$ occurring negatively, and Horn
variables occur nowhere else (\Cref{sec:chc-syntax}); in particular they
never occur inside argument terms.

\mypara{Validity}
Validity is \sclean\ provability: we write $\ctx \vdash \varphi$ for
derivability under a context and use the connectives' introduction and
elimination laws without comment.
The judgment reads the binders and hypotheses of $\ctx$ and ignores the
routing marks.

\mypara{Scope, contexts, truncation}
For an acyclic \kvar\ in $c$, let $\pfx = \Scope{\kvar}{c}$
(\Cref{fig:scope}). We use the following facts.
\begin{itemize}
  \item $\HeadOf{c}{\pfx}$ is the sub-constraint at the lowest common
        ancestor (LCA) of \kvar's occurrences, and $\CtxOf{c}{\pfx}$ is
        the context above it.
  \item A context (\Cref{sec:chc-props}) lists binders $\cBind{x}{b}$,
        hypotheses $\cBind{\hyp}{\pred}$, and marks $\pLeft/\pRight$; each
        entry \emph{matches} the prefix step that produced it:
        $\cBind{x}{b}$ matches $\pBind{x}{b}$, $\cBind{\hyp}{\pred}$ matches
        $\pAsm{\pred}$, and a mark matches itself.
  \item Given a context $\ctx$ along a path extending $\pfx$, the
        \emph{truncation} $\restr{\ctx}{\pfx}$ drops the first $|\pfx|$
        entries; this is the context \Certname\ hands \Navname\ at a
        divergent head (\Cref{fig:certify}).
  \item By well-formedness (\Cref{sec:chc-props}), every application of
        \kvar\ has $\Rank{\kvar}$ leading arguments followed by the
        binders $\overline{x} = \Binders{\CtxOf{c}{\pfx}}$, innermost
        first, as trailing arguments.
\end{itemize}

\begin{figure}[t]
\begin{mathpar}
\inferrule{\cBind{x}{b} \in \ctx}{\ctx \vdash x : b}
\and
\inferrule{\cBind{\hyp}{\varphi} \in \ctx}{\ctx \vdash \hyp : \varphi}
\and
\inferrule{\ctx, \cBind{x}{b} \vdash \pf : \varphi}{\ctx \vdash \cLam{x}{\pf} : \cAll{x}{b}{\varphi}}
\and
\inferrule{\ctx \vdash \pf : \cAll{x}{b}{\varphi} \\ \ctx \vdash t : b}{\ctx \vdash \pf\,t : \varphi\sub{x}{t}}
\and
\inferrule{\ctx, \cBind{\hyp}{g} \vdash \pf : \varphi}{\ctx \vdash \cLam{\hyp}{\pf} : \cImp{g}{\varphi}}
\and
\inferrule{\ctx \vdash \pf : \cImp{g}{\varphi} \\ \ctx \vdash \pf' : g}{\ctx \vdash \pf\,\pf' : \varphi}
\and
\inferrule{\ctx \vdash \pf : \varphi_1 \\ \ctx \vdash \pf' : \varphi_2}{\ctx \vdash \langle \pf, \pf'\rangle : \varphi_1 \wedge \varphi_2}
\and
\inferrule{\ctx \vdash \pf : \varphi_1 \wedge \varphi_2}{\ctx \vdash \pf.i : \varphi_i}\;(i \in \{1,2\})
\and
\inferrule{\ctx \vdash \pf : \varphi_1}{\ctx \vdash \mathsf{inl}\,\pf : \varphi_1 \vee \varphi_2}
\and
\inferrule{\ctx \vdash \pf : \varphi_2}{\ctx \vdash \mathsf{inr}\,\pf : \varphi_1 \vee \varphi_2}
\and
\inferrule{\ctx \vdash t : b \\ \ctx \vdash \pf : \varphi\sub{z}{t}}{\ctx \vdash \langle t, \pf\rangle : \cExi{z}{b}{\varphi}}
\and
\inferrule{\ctx \vdash \pf : \cExi{z}{b}{\varphi} \\ \ctx, \cBind{z}{b}, \cBind{\hyp}{\varphi} \vdash \pf' : \psi}{\ctx \vdash \letin{\langle z, \hyp\rangle = \pf}{\pf'} : \psi}\;(z,\hyp \notin \mathrm{FV}(\psi))
\and
\inferrule{\ }{\ctx \vdash \rfl : t = t}
\and
\inferrule{\ }{\ctx \vdash \langle\rangle : \top}
\end{mathpar}
\caption{Typing of the proof-term fragment of \Cref{fig:zap-proofterms}.}
\label{fig:zap-typing}
\end{figure}

\mypara{Typing the proof-term fragment}
\Cref{fig:zap-typing} fixes the typing judgment $\ctx \vdash \pf : \varphi$
for the proof terms of \Cref{fig:zap-proofterms}: the introduction and
elimination forms of the constraint logic.
The following conventions apply:
\begin{enumerate}
    \item In the $\exists$-introduction and $\exists$-elimination rules, the
        sort $b$ annotating the existential binder $\cExi{z}{b}{\varphi}$
        ranges over base sorts and predicate types
        $b_1 \to \cdots \to b_n \to \Prop$.
  \item A bundle $\langle t_1,\ldots,t_k;\, \pf\rangle$ abbreviates
        iterated $\exists$-introduction.
  \item A multi-binder
        $\letin{\langle z_1,\ldots,z_k, \hyp\rangle = \pf}{\pf'}$ abbreviates
        iterated $\exists$-elimination: $k$ nested $\kword{let}$s that
        bind the witnesses $z_1,\ldots,z_k$ one at a time, with $\hyp$
        naming the proof of the fully unpacked body.
  \item The premise $\ctx \vdash t : b$ is \sclean's term-typing
        judgment.
\end{enumerate}
These rules are not the \sclean\ kernel itself but our model of it,
restricted to this fragment and following the kernel formalization of
\citet{lean4lean}; every rule is admissible in the kernel, so a term
well-typed here is accepted by the kernel, and Soundness is proved
against this system.

\subsection{The synthesized solution}
\label{app:zap-sol}

\begin{figure}[t]
\small
\[
\begin{array}{@{}l@{\;\;}c@{\;\;}l@{}}
\toprule
\Solname & \;:\; & (\kvars \times \Cstrs) \to \Terms \\
\midrule
\Sol{\kvar}{\cstr} & & \\
\quad\mid\; \kvar \notin \Heads{\cstr}    & \doteq & \bot \\
\Sol{\kvar}{\cAnd{\cstr_1}{\cstr_2}}      & \doteq & \Sol{\kvar}{\cstr_1} \vee \Sol{\kvar}{\cstr_2} \\
\Sol{\kvar}{\cAll{x}{b}{\cstr}}           & \doteq & \cExi{x}{b}{\Sol{\kvar}{\cstr}} \\
\Sol{\kvar}{\cImp{\pred}{\cstr}}          & \doteq & \cAnd{\pred}{\Sol{\kvar}{\cstr}} \\
\Sol{\kvar}{\kvapp{\kvar}{\overline{t}}}  & \doteq & \textstyle\bigwedge_{i < \Rank{\kvar}} z_i = t_i \\
\bottomrule
\end{array}
\]
\caption{The \Solname\ procedure, reproduced from \Cref{fig:algo:sol}.}
\label{fig:app:sol}
\end{figure}

Fix an acyclic \kvar\ in a constraint $c$, and let $\pfx = \Scope{\kvar}{c}$.
For this \kvar, \Elimname\ (\Cref{fig:algo:elim}) computes the solution
using \Solname\ as in \Cref{fig:app:sol}, with the slot names
$\overline{z}$ fresh for the binders of $c$:
\[
  \asgn \;=\; \cLam{\overline{z}}{\cLam{\overline{x}}{\Sol{\kvar}{\HeadOf{c}{\pfx}}}},
  \qquad
  \overline{z} = z_0, \ldots, z_{\Rank{\kvar}-1},
  \quad
  \overline{x} = \Binders{\CtxOf{c}{\pfx}}.
\]

The soundness proof needs two properties of \asgn:
\begin{enumerate}
  \item \asgn\ must be \kvar-free: it is substituted at the uses of \kvar\
and supplied as the witness for $\exists\kvar$.
    \item The body of \asgn\ must mirror the structure of
    $\HeadOf{c}{\pfx}$: along every path to a \kvar-definition it is
    the same tree with each connective dualized, so \Navname\ can
    walk both in lockstep.
\end{enumerate}
The next lemma establishes both.

\begin{lemma}[Solution shape]
\label{lem:sol-shape}
\leavevmode
\begin{enumerate}
\item[(a)] $\Sol{\kvar}{c^\ast} = \bot$ for every sub-constraint
  $c^\ast$ of $\HeadOf{c}{\pfx}$ containing no $\kvar$-definition.
\item[(b)] On a path from the root of $\HeadOf{c}{\pfx}$ to a
  $\kvar$-definition, \Solname\ dualizes each connective, preserving
  binder names, and emits the slot equalities at the definition:
  \begin{align*}
    \cAnd{c^\ast_1}{c^\ast_2} &\;\mapsto\;
      \Sol{\kvar}{c^\ast_1} \vee \Sol{\kvar}{c^\ast_2} \\
    \cAll{x}{b}{c^\ast}  &\;\mapsto\; \cExi{x}{b}{\Sol{\kvar}{c^\ast}} \\
    \cImp{g}{c^\ast}     &\;\mapsto\; \cAnd{g}{\Sol{\kvar}{c^\ast}} \\
    \kvapp{\kvar}{\overline{t}} &\;\mapsto\;
      \textstyle\bigwedge_{i<\Rank{\kvar}} z_i = t_i
  \end{align*}
\item[(c)] $\mathrm{FV}(\asgn) \subseteq \kvs{c} \setminus \{\kvar\}$;
  in particular, \asgn\ never mentions \kvar.
\end{enumerate}
\end{lemma}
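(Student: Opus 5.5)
The plan is a structural induction on sub-constraints $c^\ast$ of $\HeadOf{c}{\pfx}$, following the clauses of \Solname\ in \Cref{fig:app:sol}. The one subtlety is that the guarded first clause takes precedence over the structural clauses. All three parts rest on one observation. By the definition of $\Heads{\cdot}$ through head-prefixes, $\kvar \in \Heads{c^\ast}$ holds iff $c^\ast$ contains a $\kvar$-definition. For a node $\cAnd{c^\ast_1}{c^\ast_2}$, $\cAll{x}{b}{c^\ast_1}$ or $\cImp{g}{c^\ast_1}$, this holds iff some child contains one, since head-prefixes of $c^\ast$ are exactly a step followed by a head-prefix of a child.

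\textbf{Part (a).} If $c^\ast$ contains no $\kvar$-definition, then $\kvar \notin \Heads{c^\ast}$, so the guarded clause fires and returns $\bot$ immediately. No induction is needed.

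\textbf{Part (b).} Take a path from the root of $\HeadOf{c}{\pfx}$ to a $\kvar$-definition. Every node on the path contains that definition, so the guard never fires along the path. Hence \Solname\ falls through to the structural clause matching the node's connective, and each such clause is literally the dualization displayed in (b), with the binder name $x$ reused in $\cExi{x}{b}{\cdot}$. At the end of the path the node is $\kvapp{\kvar}{\overline{t}}$. This matches the last clause and yields $\bigwedge_{i<\Rank{\kvar}} z_i = t_i$. Off-path siblings of disjunctions are handled by (a), or by the same argument if they too contain definitions. Formally this is an induction on the length of the path.

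\textbf{Part (c).} I would first show by induction that the term variables free in $\Sol{\kvar}{c^\ast}$ are among the binders of $\CtxOf{c}{\pfx}$ together with the $z_i$:
\begin{itemize}
\item binders introduced inside $\HeadOf{c}{\pfx}$ are re-bound by the $\exists$ emitted in the $\forall$ clause;
\item the $z_i$ come from the leaf clause;
\item everything else is free in $c$ above the scope, hence in $\overline{x}$, by well-formedness and the scoping of $c$.
\end{itemize}
Abstracting $\cLam{\overline{z}}{\cLam{\overline{x}}{\cdot}}$ then closes these. The remaining free variables are Horn variables. They can only enter through a guard $g$ in an implication clause on a path to a $\kvar$-definition, because the $t_i$ are $\kvar$-free by convention and $\bot$ branches contribute nothing. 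So they lie in $\kvs{c}$.

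The main obstacle is excluding $\kvar$ itself. Suppose such a guard $g$ contained $\kvar$. Then $\pfx$ extended by the path down to the definition is a head-prefix $\pfx'$ with $\kvar \in \CtxOf{c}{\pfx'}$ and $\HeadOf{c}{\pfx'} = \kvapp{\kvar}{\overline{t}}$. That gives $(\kvar,\kvar) \in \deps{c}$, a self-loop in $\dg{c}$. Since $\kvar$ is acyclic, it is not in the cut set, and removing the cut set leaves an acyclic graph, so this self-loop is impossible.

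One further point needs care. Within \Zapsname, $c$ is a residual produced by eliminating the inner acyclic variables first. I would argue that \Remname\ preserves acyclicity of the remaining graph and introduces no new occurrences of $\kvar$. The reason is that substituted solutions mention only cut variables, by the topological order in \Cref{sec:partition}, and replacing heads by $\top$ only deletes edges.
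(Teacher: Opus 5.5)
Your proposal is correct and follows essentially the same route as the paper: (a) and (b) follow because the guarded first clause of \Solname\ fires exactly on subtrees with no $\kvar$-definition, so it never fires along a path to one. Part (c) is a structural induction on free variables, whose only nontrivial case, a guard mentioning $\kvar$ above a $\kvar$-definition, is excluded by the self-loop argument, and your ``scoping of $c$'' is the paper's appeal to closedness of $c$. Your closing remark about residual constraints goes beyond the lemma, which already assumes $\kvar$ acyclic in $c$; the paper proves it separately in \Cref{lem:afree} by induction on the elimination index, applying (c) at earlier steps, and that is the induction your claim that substituted solutions are cut-only implicitly relies on.
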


\begin{proof}\leavevmode
\begin{enumerate}
\item[(a)] Immediate from the guarded equation of \Cref{fig:app:sol}:
$\kvar \notin \Heads{c^\ast}$ (\Cref{sec:chc-props}) holds exactly when
$c^\ast$ contains no \kvar-definition.

\item[(b)] Each node on the path has a \kvar-definition in its subtree,
so the guarded equation does not apply: at each node above the
definition, the equation matching the connective applies and copies its
binder or guard verbatim; at the definition itself, the head equation
emits the slot equalities.

\item[(c)] By induction on the sub-constraints $c^\ast$ of
$\HeadOf{c}{\pfx}$, following the equations of \Cref{fig:app:sol},
\[
  \mathrm{FV}(\Sol{\kvar}{c^\ast}) \;\subseteq\;
  (\mathrm{FV}(c^\ast) \setminus \{\kvar\}) \cup \overline{z}.
\]
\begin{itemize}
  \item The guarded equation emits $\bot$, with no free variables.
  \item The $\wedge$ and $\forall$ equations preserve the binding
        structure: the disjunction takes the same union, and the
        $\exists$ binds the same $x$.
  \item The guard equation copies $g$, and $\kvar \notin \mathrm{FV}(g)$:
        a \kvar-use above a \kvar-definition is a self-edge at \kvar\ in
        $\dg{c}$ (\Cref{sec:partition}), so every cut set would contain
        \kvar, contradicting that \kvar\ is acyclic.
  \item The head equation emits the slots $\overline{z}$ and the
        arguments $\overline{t}$, which never contain Horn variables
        (\Cref{app:zap-prelim}).
\end{itemize}
Since $\cstr$ is closed, every free variable of $\HeadOf{c}{\pfx}$ is a
Horn variable or a binder of $\CtxOf{c}{\pfx}$, i.e., among
$\overline{x}$; the $\cLam{\overline{z}}{\cLam{\overline{x}}{\cdot}}$
closure of \Elimname\ then leaves free only Horn variables other than
\kvar.
\end{enumerate}
\end{proof}

\subsection{Reduction and the elimination order}
\label{app:zap-elim}

We write $\subasgn{c}{\kvar}{\asgn}$ for the substitution of \asgn\ for
every occurrence of \kvar\ in $c$.

\begin{lemma}[Reduction]\label{lem:red}
$\Rem{\kvar}{\asgn}{c}$ (\Cref{fig:algo:rem}) is \kvar-free and
coincides with $\subasgn{c}{\kvar}{\asgn}$, except that each definition
head $\kvapp{\kvar}{\overline{t}}$ becomes $\top$ rather than
$\asgn(\overline{t})$.
\end{lemma}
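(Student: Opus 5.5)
We argue by structural induction on $c$, following the equations of $\Remname$ in \Cref{fig:algo:rem} clause by clause, and prove both claims simultaneously. The first claim is that $\Rem{\kvar}{\asgn}{c}$ is $\kvar$-free. The second is that it agrees with $\subasgn{c}{\kvar}{\asgn}$ everywhere except at definition heads $\kvapp{\kvar}{\overline{t}}$, which become $\top$. We state the second claim as a relation $c_1 \approx_\kvar c_2$, defined inductively. It holds when $c_1$ and $c_2$ have the same tree of $\wedge$, $\forall$, $\to$ nodes with identical binders and identical guards. At the leaves, either both heads are equal, or $c_1$'s head is $\top$ and $c_2$'s head is $\asgn(\overline{t})$ for an original head $\kvapp{\kvar}{\overline{t}}$. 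The goal is then $\Rem{\kvar}{\asgn}{c} \approx_\kvar \subasgn{c}{\kvar}{\asgn}$.

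\textbf{Structural cases.} Both $\Remname$ and substitution commute with $\wedge$ and $\forall$, so these cases follow directly from the induction hypotheses. For $\forall$ we use the convention that bound names are fresh for $\asgn$. This is legitimate because $\asgn$ is closed except for Horn variables, by \Cref{lem:sol-shape}(c), so no capture can occur.

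\textbf{Guard case.} For $\cImp{\pred}{c^\ast}$, $\Remname$ produces $\cImp{\subasgn{\pred}{\kvar}{\asgn}}{\Rem{\kvar}{\asgn}{c^\ast}}$. The guard is therefore literally the substituted guard, and the induction hypothesis handles the body. For $\kvar$-freeness of the guard we split on the shape of $\pred$. If $\pred = \kvapp{\kvar}{\overline{t}}$, the substituted guard is $\asgn(\overline{t})$, which is $\kvar$-free by \Cref{lem:sol-shape}(c) together with the convention that argument terms $\overline{t}$ contain no Horn variables (\Cref{app:zap-prelim}). If $\pred$ is any other atom, it does not mention $\kvar$, since Horn variables occur only as whole atoms, and substitution leaves it unchanged.

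\textbf{Head cases.} A head $\kvapp{\kvar}{\overline{t}}$ is a definition. $\Remname$ maps it to $\top$, which is $\kvar$-free, while substitution gives $\asgn(\overline{t})$; this is precisely the permitted discrepancy. Any other head $\pred$ is either $\kvar$-free or a horn application of some $\kvar' \neq \kvar$. In both situations $\Remname$ returns $\pred$ unchanged, and substitution does the same, so the two sides agree exactly.

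The main (mild) obstacle is the guard case. $\kvar$-freeness there relies on \Cref{lem:sol-shape}(c), so $\asgn$ must not mention $\kvar$. That lemma in turn relies on acyclicity: $\kvar$ never guards its own definition, which is what makes $\Sol{\kvar}{\cdot}$ $\kvar$-free. We also need a short remark that $\Remname$ rewrites guards by full substitution, whereas it only truncates heads, so the single point of divergence is exactly the definition heads.
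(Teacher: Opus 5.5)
Your proposal is correct and follows essentially the paper's proof: structural induction on $c$, comparing each equation of $\Remname$ with substitution, with definition heads as the sole divergence and $\kvar$-freeness of substituted guards obtained from \Cref{lem:sol-shape}(c). The explicit agreement relation and the capture-avoidance remark for $\forall$ only make the same argument more careful.
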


\begin{proof}
By structural induction on $c$, comparing each equation of
\Cref{fig:algo:rem} with the substitution:
\begin{itemize}
  \item at $\cAnd{c_1}{c_2}$ and $\cAll{x}{b}{c^\ast}$, both sides
        recurse componentwise; at $\cImp{\pred}{c^\ast}$, both also
        substitute the guard $\pred$ identically;
  \item at a definition head $\kvapp{\kvar}{\overline{t}}$, \Remname\
        gives $\top$ and the substitution gives $\asgn(\overline{t})$:
        the sole divergence;
  \item at any other head, both are the identity, since \kvar\ occurs
        neither in a \kvar-free atom nor inside the arguments of a
        foreign application $\kvapp{\kvar'}{\overline{s}}$
        (\Cref{app:zap-prelim}).
\end{itemize}
The result is \kvar-free: heads become $\top$, and uses are replaced by
\asgn, which is \kvar-free (\Cref{lem:sol-shape}).
\end{proof}

The fold of \Cref{fig:algo:zap} eliminates the acyclic variables
innermost first: writing $\kacys{\kvar} = [\kvar_1, \ldots, \kvar_n]$ in
elimination order, $c_0 = c$ and $c_i = \Rem{\kvar_i}{\asgn_i}{c_{i-1}}$,
with $\asgn_i$ synthesized on the residual $c_{i-1}$. The partition
orders the acyclic set topologically (\Cref{sec:partition}): every
variable that $\kvar_i$ depends on is some $\kvar_j$ with $j < i$ or a
cut variable.

\begin{lemma}[Solutions are cut-only]\label{lem:afree}
For each $i$, $\mathrm{FV}(\asgn_i) \subseteq \kcycs{\kvar}$.
\end{lemma}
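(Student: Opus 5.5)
We argue by strong induction on $i$, proving simultaneously the invariant that $\KVars(c_{i-1}) \subseteq \kcycs{\kvar} \cup \{\kvar_i, \ldots, \kvar_n\}$. This holds because each $c_{j}$ is $\kvar_j$-free by \Cref{lem:red}, and substituting $\asgn_j$ introduces only variables in $\mathrm{FV}(\asgn_j)$, which by the induction hypothesis are cut variables. For the invariant step we therefore use the IH for $j < i$.

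By \Cref{lem:sol-shape}(c), applied to $\kvar_i$ in the residual $c_{i-1}$, we have $\mathrm{FV}(\asgn_i) \subseteq \KVars(c_{i-1}) \setminus \{\kvar_i\}$. It remains to exclude the later acyclic variables $\kvar_k$ with $k > i$. We sharpen the inductive claim in the proof of \Cref{lem:sol-shape}(c). The only Horn variables that reach $\Sol{\kvar_i}{\HeadOf{c_{i-1}}{\pfx}}$ are those occurring in guards $g$ copied along a path from the scope root to a $\kvar_i$-definition. Branches without a definition collapse to $\bot$ by \Cref{lem:sol-shape}(a), and heads contribute only slot equalities over $\kvar$-free terms. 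Hence any such $\kvar'$ lies in the context $\CtxOf{c_{i-1}}{\pfx'}$ of some head-prefix $\pfx'$ whose head is $\kvapp{\kvar_i}{\overline{t}}$. So $(\kvar', \kvar_i) \in \deps{c_{i-1}}$, meaning $\kvar_i$ depends on $\kvar'$ in $c_{i-1}$.

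The main obstacle is transporting this dependency from the residual $c_{i-1}$ back to the original $c$, since the topological-order guarantee of \Cref{sec:partition} is stated for $\dg{c}$. We show that elimination only ever \emph{adds} to a dependency. By \Cref{lem:red}, $c_{i-1}$ has the same tree shape as $c$, with definition heads of eliminated variables turned into $\top$ and uses replaced by solutions. A guard in $c_{i-1}$ mentioning $\kvar'$ therefore either already mentions $\kvar'$ in $c$, or arose from substituting some $\asgn_j$ ($j<i$) at a use of $\kvar_j$. The latter case is impossible for acyclic $\kvar'$, because $\mathrm{FV}(\asgn_j) \subseteq \kcycs{\kvar}$ by the IH. In the former case the same head-prefix in $c$ ends at $\kvapp{\kvar_i}{\overline{t}}$, since $\kvar_i$-heads are untouched before step $i$. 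Thus $(\kvar', \kvar_i) \in \deps{c}$.

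By the topological ordering, every acyclic variable that $\kvar_i$ depends on is some $\kvar_j$ with $j<i$. All such $\kvar_j$ are already absent from $c_{i-1}$ by the invariant, so $\kvar'$ cannot be one of them. Hence $\kvar' \in \kcycs{\kvar}$. Finally, the non-Horn free variables of the body are bound by the $\cLam{\overline{z}}{\cLam{\overline{x}}{\cdot}}$ closure, as in \Cref{lem:sol-shape}(c). Together these give $\mathrm{FV}(\asgn_i) \subseteq \kcycs{\kvar}$.
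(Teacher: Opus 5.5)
Your proof is correct and follows essentially the paper's argument: induction on $i$, using \Cref{lem:sol-shape} to place every Horn variable of $\asgn_i$ in a guard above a $\kvar_i$-definition of $c_{i-1}$, then transporting that dependency back to $\dg{c}$ (because \Remname\ preserves the skeleton and the spliced $\asgn_j$ are cut-only), and finally combining the topological order with the $\kvar_j$-freeness of $c_{i-1}$. The one step the paper states that you leave implicit is that $\kvar_i$ remains acyclic in $c_{i-1}$, which is the hypothesis under which you invoke \Cref{lem:sol-shape}(c) there; your transport argument already supplies it, since a self-edge at $\kvar_i$ in $\dg{c_{i-1}}$ would transport to a self-edge in $\dg{c}$.
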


\begin{proof}
By induction on $i$: assuming the lemma for all $j < i$, we prove
$\mathrm{FV}(\asgn_i) \subseteq \kcycs{\kvar}$ in two steps:

\begin{itemize}
  \item \emph{$\kvar_i$ is acyclic in $c_{i-1}$.} \Remname\ preserves the
  skeleton, and the spliced $\asgn_j$ ($j < i$) are cut-only by induction;
  so every edge of $\dg{c_{i-1}}$ out of an acyclic variable is already in
  $\dg{c}$. Hence $\kcycs{\kvar}$ remains a cut set of $c_{i-1}$, $\kvar_i$
  remains acyclic, and \Cref{lem:sol-shape} applies at step $i$.
  \item \emph{No acyclic variable is free in $\asgn_i$.} By
  \Cref{lem:sol-shape}(c), the free variables of $\asgn_i$ are Horn
  variables of $c_{i-1}$; by (b), each occurs in a guard above a
  $\kvar_i$-definition. An acyclic one is impossible: its use is inherited
  from $c$ (first step), so $\kvar_i$ depends on it in $c$ and, by
  topological order, it is some $\kvar_j$ with $j < i$; but $c_{i-1}$ is
  $\kvar_j$-free (\Cref{lem:red}; the splices are cut-only). Hence all are
  cut variables.
\end{itemize}

\end{proof}

\subsection{Soundness}
\label{app:zap-soundness}
Finally, we bring our attention to Soundness (\Cref{thm:zap:sound}) for \Zapname.
It is stated under the typing relation, shown in \Cref{fig:zap-typing} for the fragment
of \sclean proof terms in \Cref{fig:zap-proofterms}.

\Certname\ accumulates one context entry per node it descends past,
whereas $\pfx = \Scope{\kvar}{c}$ was computed on $c$ up front. The next
lemma aligns the two at every \kvar-definition: truncating $\pfx$ from
the accumulated context recovers the definition's address inside
$\HeadOf{c}{\pfx}$, the directions \Navname\ walks by.

\begin{lemma}[Prefix correspondence]\label{lem:prefix}
Let $\pfx = \Scope{\kvar}{c}$. If the root call $\Cert{\varepsilon}{\hyp}{c}$
(\Cref{fig:certify}) reaches $\Cert{\ctx}{\pf}{\kvapp{\kvar}{\overline{t}}}$,
then $\kvapp{\kvar}{\overline{t}} = \HeadOf{c}{\pfx \cdot \pi}$ for some
path $\pi$, and
\begin{enumerate}
\item[(a)] $\ctx$ matches $\pfx \cdot \pi$;
\item[(b)] $\restr{\ctx}{\pfx}$ matches $\pi$.
\end{enumerate}
\end{lemma}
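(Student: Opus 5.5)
The proof goes by induction on the recursion of \Certname, strengthening the statement to \emph{every} recursive call $\Cert{\ctx}{\pf}{c^\ast}$ reached from the root $\Cert{\varepsilon}{\hyp}{c}$. The invariant is that there is a path $\sigma$ with $c^\ast = \HeadOf{c}{\sigma}$ such that $\ctx$ matches $\sigma$ entry by entry. At the root this holds with $\sigma = \varepsilon$, since $\ctx = \varepsilon$ and $\HeadOf{c}{\varepsilon} = c$.

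For the step, each of the three recursive equations of \Cref{fig:certify} extends both components in lockstep. At $\cAll{x}{b}{c'}$, \Certname\ appends $\cBind{x}{b}$ to the context, and $\HeadOf{c}{\sigma \cdot \pBind{x}{b}} = c'$ by the corresponding \Headname\ equation. At $\cImp{\pred}{c'}$, it appends $\cBind{\hyp_\pred}{\pred}$, which matches $\pAsm{\pred}$. At $\cAnd{c_1}{c_2}$, it appends the marks $\pLeft$ and $\pRight$ respectively, matching the same steps. In every case the new entry matches the new step, and \Headname\ is compositional along concatenated prefixes: $\HeadOf{c}{\sigma\cdot s} = \HeadOf{\HeadOf{c}{\sigma}}{s}$. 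This compositionality is a trivial induction on $\sigma$ using the equations of \Cref{fig:head}. It follows that at any reached call $\Cert{\ctx}{\pf}{\kvapp{\kvar}{\overline{t}}}$, the head is $\HeadOf{c}{\sigma}$ with $\ctx$ matching $\sigma$.

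It remains to show that $\pfx$ is a prefix of $\sigma$, i.e.\ $\sigma = \pfx\cdot\pi$, which gives (a). I expect this to be the only substantive step. I would prove a lemma by induction on the equations of \Cref{fig:scope}: every path $\sigma$ with $\HeadOf{c}{\sigma}$ an application of $\kvar$ extends $\Scope{\kvar}{c}$. The cases are as follows.
\begin{itemize}
\item In the $\forall$ case, $\sigma$ must begin with the same binder step, and the claim follows by induction.
\item In the $\wedge$ case with $\kvar$ only in $c_1$, a $\kvar$-head cannot lie under $c_2$, so $\sigma$ begins with $\pLeft$; the $\pRight$ case is symmetric. If $\kvar$ occurs in both conjuncts, the scope is $\varepsilon$ and the claim is trivial.
\item In the $\cImp{\pred}{c'}$ case with $\kvar\notin\pred$, the head is not the guard, so $\sigma$ begins with $\pAsm{\pred}$, and the claim follows by induction.
\item Every remaining case returns $\varepsilon$, where the claim is trivial.
\end{itemize}
The fact $\kvar\in\cstr_i$ needs care here: it refers to occurrences, while we reason about head occurrences. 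However, a head occurrence is in particular an occurrence, so the side conditions of \Scopename\ exclude exactly the branches that cannot contain $\sigma$.

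Finally, (b) follows from (a). The truncation $\restr{\ctx}{\pfx}$ drops the first $|\pfx|$ entries of $\ctx$. Since matching is entrywise and $\ctx$ has one entry per step of $\pfx\cdot\pi$, the first $|\pfx|$ entries are exactly those matching $\pfx$, and the remaining entries match $\pi$.
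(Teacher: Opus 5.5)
Your proof is correct and takes the same approach as the paper. You use the same invariant on every reached \Certname\ call, namely that the sub-constraint is the head at some path that the accumulated context matches, the same one-step extension per recursive equation, and the same truncation argument for (b). The paper simply takes the fact that every $\kvar$-head path extends the scope prefix from its convention that the scope addresses the lowest common ancestor of $\kvar$'s occurrences, whereas you prove that step explicitly by induction on the \Scopename\ equations.
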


\begin{proof}
By induction on the recursion of \Certname\ (\Cref{fig:certify}), every
call $\Cert{\ctx}{\pf}{c^\ast}$ reached from $\Cert{\varepsilon}{\hyp}{c}$
satisfies the invariant: $c^\ast = \HeadOf{c}{\tau}$ for some path
$\tau$, and $\ctx$ matches $\tau$. At the root, $\tau = \ctx =
\varepsilon$; each recursive equation extends $\tau$ by one step and
$\ctx$ by the matching entry:
\begin{itemize}
  \item at $\cAll{x}{b}{c_0}$: the step $\pBind{x}{b}$, the entry
        $\cBind{x}{b}$;
  \item at $\cImp{\pred}{c_0}$: the step $\pAsm{\pred}$, the entry
        $\cBind{\hyp}{\pred}$;
  \item at $\cAnd{c_1}{c_2}$: the mark $\pLeft$ for $c_1$ and $\pRight$
        for $c_2$, as both step and entry.
\end{itemize}
At $c^\ast = \kvapp{\kvar}{\overline{t}}$, this \kvar-definition lies
inside $\HeadOf{c}{\pfx}$ (\Cref{app:zap-prelim}), so $\tau$ extends
$\pfx$: $\tau = \pfx \cdot \pi$ for some path $\pi$. The invariant at
this call is (a), and dropping the first $|\pfx|$ entries gives (b).
\end{proof}

When \Remname\ rewrites a definition head $\kvapp{\kvar}{\overline{t}}$
to $\top$, \Navname\ supplies the proof that justifies it: a proof of
the applied solution $\asgn(\overline{t})$. The next lemma states that
the term \Navname\ builds under the truncated context
$\restr{\Gamma}{\pfx}$ is indeed well-typed (valid proof) at $\asgn(\overline{t})$.

\begin{lemma}[\Navname]\label{lem:nav}
Let $\pi$ be a path with $\HeadOf{c}{\pfx \cdot \pi} =
\kvapp{\kvar}{\overline{t}}$ and $\Gamma = \CtxOf{c}{\pfx \cdot \pi}$. Then
\[
  \Gamma \;\vdash\; \Nav{\restr{\Gamma}{\pfx}}{\asgn(\overline{t})}
  \;:\; \asgn(\overline{t}).
\]
\end{lemma}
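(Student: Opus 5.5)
We prove a generalized statement by induction on the length of the path $\pi$, walking the sub-constraint $\HeadOf{c}{\pfx}$ and the solution body $\Sol{\kvar}{\HeadOf{c}{\pfx}}$ in lockstep. Lemma~\ref{lem:sol-shape}(b) guarantees that the two trees have dual connectives along $\pi$. First we unfold $\asgn(\overline{t})$. By well-formedness, $\overline{t} = \overline{s},\overline{x}$, where $\overline{s}$ are the $\Rank{\kvar}$ leading arguments and $\overline{x} = \Binders{\CtxOf{c}{\pfx}}$. The trailing arguments are therefore \emph{exactly} the $\lambda$-bound $\overline{x}$ of $\asgn$, and $\beta$-reduction turns $\asgn(\overline{t})$ into $\Sol{\kvar}{\HeadOf{c}{\pfx}}\sub{\overline{z}}{\overline{s}}$. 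This identification of $\overline{x}$ with themselves is the reason the scope binders need no witnesses.

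The generalized claim is as follows. For every decomposition $\pi = \pi_1\cdot\pi_2$, let $c^\ast = \HeadOf{c}{\pfx\cdot\pi_1}$ and let $\theta$ be the substitution accumulated so far. Then
\[
\Gamma \vdash \Nav{\restr{\Gamma}{(\pfx\cdot\pi_1)}}{\Sol{\kvar}{c^\ast}\sub{\overline{z}}{\overline{s}}\theta} : \Sol{\kvar}{c^\ast}\sub{\overline{z}}{\overline{s}}\theta .
\]
Here $\theta$ renames each existential binder introduced along $\pi_1$ to itself; these renamings are trivial because binder names are preserved. We argue by induction on $|\pi_2|$ and case on the first step of $\pi_2$, which determines the head entry of the truncated context by Lemma~\ref{lem:prefix}(b):
\begin{itemize}
\item \textbf{Step $\pLeft$.} The solution is $\Sol{\kvar}{c_1}\vee\Sol{\kvar}{c_2}$, and $\Navname$ emits $\mathsf{inl}$; the $\vee$-introduction rule applies. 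Case $\pRight$ is symmetric.
\item \textbf{Step $\pBind{x}{b}$.} The solution is $\cExi{x}{b}{\cdot}$, and $\Navname$ emits $\langle x,\cdot\rangle$. The premise $\Gamma\vdash x:b$ holds because $\cBind{x}{b}\in\Gamma$, and the substitution $\sub{x}{x}$ is the identity.
\item \textbf{Step $\pAsm{\pred}$.} The solution is $\pred\wedge\cdot$, and $\Navname$ emits $\langle\hyp,\cdot\rangle$ with $\cBind{\hyp}{\pred}\in\Gamma$, using the hypothesis rule. Note that $\pred$ contains no $\overline{z}$, since the slots are fresh, so the substitution leaves it unchanged.
\end{itemize}
The base case $\pi_2=\varepsilon$ is the definition itself. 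There the solution is $\bigwedge_i z_i = t_i$, which after substitution becomes $\bigwedge_i s_i = s_i$ and is closed by $\langle\rfl,\dots,\rfl\rangle$. Instantiating the claim at $\pi_1=\varepsilon$ gives the lemma.

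The main obstacle is bookkeeping for variable capture and freshness, not the logical content. We must check three things. First, the $\exists$-binders in $\Sol{\kvar}{\cdot}$ reuse the constraint's binder names, so the witness $x$ supplied from $\Gamma$ is the same variable as in the body; this needs the preservation of names in Lemma~\ref{lem:sol-shape}(b). Second, the substitution $\sub{\overline{z}}{\overline{s}}$ does not capture binders inside $\HeadOf{c}{\pfx}$, which is why the slots $\overline{z}$ are chosen fresh. Third, the arguments $\overline{s}$ may mention binders bound along $\pi$, and these become free only after the corresponding $\exists$ has been instantiated with the same name; they are therefore well-scoped in $\Gamma$. A lesser point is that the guarded $\bot$ branches of Lemma~\ref{lem:sol-shape}(a) are never entered, because $\Navname$ follows only the live marks of $\pi$.
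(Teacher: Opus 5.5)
Your skeleton matches the paper's: $\beta$-reduce $\asgn(\overline{t})$ using well-formedness, so that the trailing $\overline{x}$ are substituted by themselves, then walk the path in lockstep with the dual tree of \Cref{lem:sol-shape}(b), discharging each \Navname\ equation with one typing rule. The gap is in the bookkeeping you yourself flag as the main obstacle. You claim that the accumulated renaming $\theta$ is trivial, that $[\overline{z} := \overline{s}]$ ``does not capture binders'' because the slots are fresh, and hence that the $\exists$-introduction substitution is the identity $[x := x]$. These claims are false whenever the leading arguments $\overline{s}$ mention a binder bound along $\pi$, which is the typical case.

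Freshness of $\overline{z}$ only ensures that no slot occurrence is shadowed and that guards are left untouched. It says nothing about the free variables of $\overline{s}$ being captured by the $\exists$-binders of \Solname. By the very name preservation you invoke, those binders carry exactly those names. Take the running example:
\begin{itemize}
\item The head $\kvapp{\kvar_1}{\nu, x}$ has $s_0 = \nu$, and the solution body is $(\exists \nu.\, \nu = x+1 \wedge z_0 = \nu) \vee \bot$.
\item The capture-avoiding $\beta$-reduct $\asgn(\nu, x)$ is $(\exists \nu'.\, \nu' = x+1 \wedge \nu = \nu') \vee \bot$, i.e.\ $\nu = x+1$.
\item Your third point (the $\overline{s}$ ``become free only after the corresponding $\exists$ has been instantiated'') instead describes the captured formula $(\exists \nu.\, \nu = x+1 \wedge \nu = \nu) \vee \bot$. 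That is a different proposition, which holds for every $\nu$.
\end{itemize}
So under the capturing reading, your generalized claim at $\pi_1 = \varepsilon$ is not about $\asgn(\overline{t})$. Under the capture-avoiding reading, the binder you instantiate is a fresh $y$, not $x$, so the $\exists$-introduction premise is not an identity substitution. The capture-avoiding reading is the one you need to identify the root with $\asgn(\overline{t})$, and it is what the kernel performs.

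The missing idea is the paper's. $\beta$-reduction renames each $\exists$-binder on the walked branch to a fresh name, and the witness substitution in \Navname's $\exists$ equation, $\Nav{\cBind{x}{b};\ctx}{\cExi{z}{b}{\pred}} = \langle x,\, \Nav{\ctx}{\pred[z := x]}\rangle$, undoes it. Concretely, suppose $(\exists x.\, S)[\overline{z} := \overline{s}] \equiv \exists y.\, S[x := y][\overline{z} := \overline{s}]$ with $y$ fresh for $S$, $\overline{s}$ and $\overline{z}$. Then $S[x := y][\overline{z} := \overline{s}][y := x] = S[\overline{z} := \overline{s}]$ up to $\alpha$-equivalence, and the witness $x$ is typed by $\cBind{x}{b} \in \Gamma$.

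Use that check in place of ``$[x := x]$ is the identity''. Then your invariant, with $\theta$ dropped so that the current type is $\Sol{\kvar}{c^\ast}[\overline{z} := \overline{s}]$, is re-established at every $\exists$ step. Your guard and leaf cases then go through as you wrote them. You should also cover $\Rank{\kvar} = 0$, where the leaf is the empty conjunction $\top$ and \Navname\ emits $\langle\rangle$.
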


\begin{proof}
The $\beta$-reduction on $\asgn(\overline{t})$ performs substitution on
the leading arguments among $\overline{t}$ by the params $\overline{z}$.
By well-formedness, we are left with the binders $\overline{x}$, each of
$x_i$ gets replaced by the corresponding $\forall x_i$ in the context.
Finally, we get fully reduced $\asgn{}$ in the head.

By \Cref{lem:sol-shape}(b),
\begin{enumerate}
  \item along the branch that $\pi$ addresses, the $\beta$-reduced
        $\asgn(\overline{t})$ is the dual tree of $\HeadOf{c}{\pfx}$;
  \item each $\exists$-binder on that branch is named after a binder
        of $\Gamma$.
\end{enumerate}
Since the substituted terms also mention binders of $\Gamma$, the
reduction is capture-avoiding: it renames each $\exists$-binder on the
walked branch to a fresh name. \Navname\ emits no binders: only pairs, 
injections, witnesses, and hypothesis references. The context therefore 
never grows, and every judgment below is under the same $\Gamma$.

By induction on the suffix $\ctx'$ of $\restr{\Gamma}{\pfx}$, we prove
\[
  \Gamma \;\vdash\; \Nav{\ctx'}{\pred} \;:\; \pred,
\]
where $\pred$ is the subtree of the $\beta$-reduced
$\asgn(\overline{t})$ that $\ctx'$ addresses: $\Gamma =
\CtxOf{c}{\pfx \cdot \pi}$ matches $\pfx \cdot \pi$ entry by entry
(\Cref{sec:chc-props}), so $\restr{\Gamma}{\pfx}$ matches $\pi$.
The lemma is the instance
$\ctx' = \restr{\Gamma}{\pfx}$. Each \Navname\ equation
(\Cref{fig:nav}) is matched by one typing rule.

\begin{itemize}
\item \emph{Case} $\ctx' = \pLeft; \ctx''$ at $\pred_L \vee \pred_R$:
  the branch continues in $\pred_L$. The induction hypothesis
  certifies $\pred_L$, and $\vee$-introduction with $\mathsf{inl}$
  gives the disjunction; the untaken $\pred_R$ appears only in the
  type.
\item \emph{Case} $\ctx' = \pRight; \ctx''$: by symmetry.
\item \emph{Case} $\ctx' = \cBind{x}{b}; \ctx''$ at
  $\cExi{z}{b}{\pred}$: $z$ is the fresh rename of the walk binder
  $x$, and the witness substitution $\subasgn{\pred}{z}{x}$
  (\Cref{fig:nav}) undoes the renaming; on the walked branch the two
  substitutions compose to the identity. The induction hypothesis
  certifies the body, and $\exists$-introduction with the witness
  $x \in \Gamma$ gives the existential.
\item \emph{Case} $\ctx' = \cBind{\hyp}{g}; \ctx''$ at
  $\cAnd{g}{\pred}$: the substitutions leave $g$ unchanged (the slots
  are fresh, each $x_i$ replaced itself, the renamings above were
  undone), so the first conjunct is syntactically the guard named by
  $\hyp$: $\cBind{\hyp}{g} \in \Gamma$. $\wedge$-introduction pairs $\hyp$ with
  the recursive certificate.
\item \emph{Case} $\ctx' = \varepsilon$ at
  $\bigwedge_{i<\Rank{\kvar}} z_i = t_i$: this leaf belongs to the
  definition that supplied $\overline{t}$, so every conjunct reads
  $t_i = t_i$, and $\langle \rfl, \ldots, \rfl\rangle$ types it. When
  $\Rank{\kvar} = 0$, the empty conjunction is $\top$ and
  $\langle\rangle$ does.
\end{itemize}
\end{proof}

We write $\subasgn{\ctx}{\kvar}{\asgn}$ for the context $\ctx$ with
\asgn\ substituted in each hypothesis type; binders and marks are
unchanged.

\begin{lemma}[\Certname]\label{lem:cert}
For every prefix $\pi$, let $c^\ast = \HeadOf{c}{\pi}$ and
$\ctx = \CtxOf{c}{\pi}$. If
$\subasgn{\ctx}{\kvar}{\asgn} \vdash \pf : \Rem{\kvar}{\asgn}{c^\ast}$,
then
\[
  \subasgn{\ctx}{\kvar}{\asgn} \;\vdash\; \Cert{\ctx}{\pf}{c^\ast}
  \;:\; \subasgn{c^\ast}{\kvar}{\asgn}.
\]
\end{lemma}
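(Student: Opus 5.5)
We prove the lemma by structural induction on $c^\ast$, with $\pi$ generalized. The structure follows the equations of \Certname{} in \Cref{fig:certify}. In each compound case we extend $\pi$ by the step that \Headname{} and \Ctxname{} use to descend into the sub-constraint. We then apply the induction hypothesis to the corresponding sub-constraint, with the transformed proof term $\pf\,x$, $\pf\,\hyp_\pred$, $\pf.1$ or $\pf.2$ (the premise of each call). Finally we close the case with the introduction rule of \Cref{fig:zap-typing} matching the emitted term. Throughout, we use that \Remname{} and substitution commute with the connectives, as established in \Cref{lem:red}.

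\emph{Compound cases.}

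For $c^\ast = \cAll{x}{b}{c_0}$, we have $\Rem{\kvar}{\asgn}{c^\ast} = \cAll{x}{b}{\Rem{\kvar}{\asgn}{c_0}}$. So $\forall$-elimination gives $\subasgn{\ctx}{\kvar}{\asgn}, \cBind{x}{b} \vdash \pf\,x : \Rem{\kvar}{\asgn}{c_0}$. The extended context is exactly $\subasgn{\CtxOf{c}{\pi\cdot\pBind{x}{b}}}{\kvar}{\asgn}$, so the IH applies, and $\lambda$-introduction gives $\cAll{x}{b}{\subasgn{c_0}{\kvar}{\asgn}}$.

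For $c^\ast = \cImp{\pred}{c_0}$, we have $\Rem{\kvar}{\asgn}{c^\ast} = \cImp{\subasgn{\pred}{\kvar}{\asgn}}{\Rem{\kvar}{\asgn}{c_0}}$. We bind $\hyp_\pred : \subasgn{\pred}{\kvar}{\asgn}$, which is the substituted form of the context entry $\cBind{\hyp_\pred}{\pred}$ that \Ctxname{} adds. We apply $\pf$ to it, use the IH, and abstract.

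For conjunction, we take the projections and use pairing, with the marks $\pLeft/\pRight$ recorded in both $\pi$ and $\ctx$.

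\emph{Head cases.}

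For a head $\pred$ that is not a $\kvar$-definition, \Cref{lem:red} gives $\Rem{\kvar}{\asgn}{\pred} = \pred = \subasgn{\pred}{\kvar}{\asgn}$, since $\kvar$ occurs neither in $\kvar$-free atoms nor in foreign applications. So returning $\pf$ is correct.

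The main obstacle is the divergent case $c^\ast = \kvapp{\kvar}{\overline{t}}$. Here $\pf : \top$ is useless, and we must show
\[
\subasgn{\ctx}{\kvar}{\asgn} \vdash \Nav{\ctx\setminus\pfx}{\asgn(\overline{t})} : \asgn(\overline{t}),
\]
where $\pfx = \Scope{\kvar}{c}$. We proceed in three steps.

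\begin{enumerate}
\item By \Cref{lem:prefix}, $\pi = \pfx\cdot\pi'$ with $\kvapp{\kvar}{\overline{t}} = \HeadOf{c}{\pfx\cdot\pi'}$, and $\ctx\setminus\pfx$ matches $\pi'$.
\item \Cref{lem:nav} then yields $\ctx \vdash \Nav{\ctx\setminus\pfx}{\asgn(\overline{t})} : \asgn(\overline{t})$, but under the \emph{unsubstituted} context $\ctx$.
\item To transfer this to $\subasgn{\ctx}{\kvar}{\asgn}$, we argue as follows. \Navname{} only references hypotheses introduced in the truncated part $\ctx\setminus\pfx$ along $\pi'$, namely guards on the path inside $\HeadOf{c}{\pfx}$ above a $\kvar$-definition. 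By the acyclicity argument in \Cref{lem:sol-shape}(c), those guards are $\kvar$-free. Hence each such hypothesis has the same type in $\ctx$ and in $\subasgn{\ctx}{\kvar}{\asgn}$. Binders and marks are unchanged by substitution.
\end{enumerate}

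Step 3 is a weakening/strengthening argument. Since the derivation in \Cref{lem:nav} uses only variable and hypothesis rules for entries of the truncated context (plus binders), it carries over verbatim. Finally, $\subasgn{\kvapp{\kvar}{\overline{t}}}{\kvar}{\asgn} = \asgn(\overline{t})$ up to the $\beta$-conversion that the kernel's typing accepts, which completes the case.

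I expect the care to lie in this transfer step: hypotheses above $\pfx$ may mention $\kvar$, so their types do differ. We must make sure \Navname{} never touches them. This holds because the truncation removes precisely the $|\pfx|$ entries contributed by $\pfx$.
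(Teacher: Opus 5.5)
Your proof is correct and follows the paper's route: induction on $c^\ast$ matched against the \Remname\ equations, with \Cref{lem:prefix} and \Cref{lem:nav} at the divergent head. The only difference is your strengthening step there, which the paper avoids because the guards on $\pfx$ are also $\kvar$-free (\Scopename\ only steps past $\pAsm{\pred}$ when $\kvar \notin \pred$), so together with acyclicity below $\pfx$ one gets $\subasgn{\ctx}{\kvar}{\asgn} = \ctx$ outright. Your closing remark that hypotheses above $\pfx$ may mention $\kvar$ is therefore false, though your argument never relies on it.
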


\begin{proof}
By induction on $c^\ast$, pairing each equation of \Cref{fig:certify}
with the matching \Remname\ equation (\Cref{lem:red}) and one typing
rule.
\begin{itemize}
\item \emph{Case} $\cAll{x}{b}{c_0}$: both operators pass the
  quantifier through. Under $\ctx; \cBind{x}{b}$, the term $\pf\,x$
  proves $\Rem{\kvar}{\asgn}{c_0}$. The induction hypothesis and
  $\forall$-introduction give $\cLam{x}{\cdot}$.
\item \emph{Case} $\cImp{g}{c_0}$: both operators substitute the guard
  to $\subasgn{g}{\kvar}{\asgn}$. Given $\hyp : \subasgn{g}{\kvar}{\asgn}$,
  the term $\pf\,\hyp$ proves $\Rem{\kvar}{\asgn}{c_0}$. The induction
  hypothesis and $\to$-introduction give $\cLam{\hyp}{\cdot}$.
\item \emph{Case} $\cAnd{c_1}{c_2}$: the context extends by $\pLeft$,
  and $\pf.1$ proves the reduced $c_1$, so the induction hypothesis
  certifies $c_1$; symmetrically for $c_2$ with $\pRight$ and $\pf.2$.
  $\wedge$-introduction pairs the two certificates.
\item \emph{Case} $\kvapp{\kvar}{\overline{t}}$: here
  $\Rem{\kvar}{\asgn}{c^\ast} = \top$ and
  $\subasgn{c^\ast}{\kvar}{\asgn} = \asgn(\overline{t})$. The vacuous
  \pf\ is discarded, and $\Cert{\ctx}{\pf}{c^\ast} =
  \Nav{\restr{\ctx}{\pfx}}{\asgn(\overline{t})}$. The head lies inside
  $\HeadOf{c}{\pfx}$, so $\pi = \pfx \cdot \pi'$ for some path $\pi'$
  (\Cref{lem:prefix}). Every guard on this path is \kvar-free: above
  the LCA by the side condition of \Scopename, below it by acyclicity,
  as in \Cref{lem:sol-shape}(c). Hence
  $\subasgn{\ctx}{\kvar}{\asgn} = \ctx$, and \Cref{lem:nav} types the
  term at $\asgn(\overline{t})$.
\item \emph{Case} $\pred$, any other head: $\Rem{\kvar}{\asgn}{\pred} =
  \subasgn{\pred}{\kvar}{\asgn} = \pred$, and
  $\Cert{\ctx}{\pf}{\pred} = \pf$.
\end{itemize}
\end{proof}

Now, we state soundness of \Zaponame. The composition in
\Cref{app:zap-soundness} additionally needs a bound on the free
variables of the emitted proof term, which we record as a second
conclusion.

\begin{lemma}[\Zaponame]\label{lem:zap1}
If $\Zapo{\kvar}{c} = (c', \pf)$ then $\pf : c' \to \exists\kvar.\,c$
and $\mathrm{FV}(\pf) \subseteq \kvs{c} \setminus \{\kvar\}$.
\end{lemma}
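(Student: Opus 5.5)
We unfold $\Zapo{\kvar}{c}$ (\Cref{fig:algo:zap}): $(\asgn, \pfx, c') = \Elim{\kvar}{c}$, with $c' = \Rem{\kvar}{\asgn}{c}$, and $\pf = \cLam{\hyp}{\langle \asgn,\ \Cert{\varepsilon}{\hyp}{c}\rangle}$. The typing claim then reduces to checking, by the $\to$-introduction and $\exists$-introduction rules of \Cref{fig:zap-typing}, that
\[
  \cBind{\hyp}{c'} \;\vdash\; \Cert{\varepsilon}{\hyp}{c} \;:\; \subasgn{c}{\kvar}{\asgn}.
\]
Here the $\exists$-introduction is at a predicate sort, as permitted by convention (1). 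We instantiate \Cref{lem:cert} at the empty prefix $\pi = \varepsilon$. Then $c^\ast = \HeadOf{c}{\varepsilon} = c$ and $\ctx = \CtxOf{c}{\varepsilon} = \varepsilon$, so $\subasgn{\ctx}{\kvar}{\asgn} = \varepsilon$. The hypothesis of the lemma becomes $\vdash \hyp : \Rem{\kvar}{\asgn}{c}$, which holds once $\hyp : c'$ is in scope. We weaken every judgment of \Cref{lem:cert} by the outer binder $\cBind{\hyp}{c'}$; this is harmless because $\hyp$ is fresh and no rule in \Cref{fig:zap-typing} is sensitive to extra context entries. The lemma then yields the displayed judgment. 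The witness $\asgn$ is well-typed at $\kvar$'s predicate type because it is the closure $\cLam{\overline{z}}{\cLam{\overline{x}}{\cdot}}$ whose binders match $\kvar$'s arity by well-formedness. We also need $\asgn$ not to mention the bound $\kvar$, which is \Cref{lem:sol-shape}(c).

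For the free-variable bound, we collect the free variables of each component of $\pf$. The witness satisfies $\mathrm{FV}(\asgn) \subseteq \kvs{c}\setminus\{\kvar\}$ by \Cref{lem:sol-shape}(c). For $\Cert{\varepsilon}{\hyp}{c}$ we argue by induction along the \Certname\ equations, in the same shape as \Cref{lem:cert}, that every variable it introduces is bound by an enclosing $\lambda$. The $\lambda x$ and $\lambda\hyp_\pred$ equations bind their own names, and projections and pairs add nothing. Hypothesis references and witnesses emitted by \Navname\ come from the truncated context, which is bound by those enclosing $\lambda$s. The slot equalities contribute only $\rfl$. The root $\hyp$ is bound by the outer $\lambda$. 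The one subtle source is the type annotations: the injections $\mathsf{inl}/\mathsf{inr}$ and the existential packages carry the type $\asgn(\overline{t})$ or its subtrees. These mention the untaken disjuncts and therefore any Horn variables occurring in $\asgn$. By \Cref{lem:sol-shape}(c) these lie in $\kvs{c}\setminus\{\kvar\}$. The terms $\overline{t}$ are Horn-variable-free (\Cref{app:zap-prelim}) and their object variables are bound by the enclosing $\lambda$s. Finally, the type annotation of $\hyp$ is $c'$, which is $\kvar$-free by \Cref{lem:red} and otherwise mentions only Horn variables of $c$.

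The main obstacle is this free-variable bookkeeping rather than the typing, which is a direct corollary of \Cref{lem:cert}. In particular, one must make sure that $\kvar$ itself never leaks into $\pf$. The only places it could appear are a type annotation of the form $\subasgn{c^\ast}{\kvar}{\asgn}$ or a guard copied into a $\lambda\hyp_\pred$ binder type. We close off both. The annotations are always stated post-substitution, as in \Cref{lem:cert}. The binder types are the substituted guards $\subasgn{g}{\kvar}{\asgn}$, which are $\kvar$-free by \Cref{lem:red} and \Cref{lem:sol-shape}(c).
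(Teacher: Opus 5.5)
Your proposal is correct and follows the paper's proof. The typing claim is \Cref{lem:cert} at $\pi = \varepsilon$, then $\exists$-introduction at predicate sort and $\to$-introduction over $\hyp$, and the free-variable bound rests on \Cref{lem:sol-shape}(c). The only difference is that you also track type annotations on binders, injections and packages: the paper's proof-term calculus (\Cref{fig:zap-typing}) has none, so it stops at $\mathrm{FV}(\pf) \subseteq \mathrm{FV}(\asgn)$, whereas your extra bookkeeping is what the fully annotated Lean terms would require.
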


\begin{proof}
\Zaponame\ (\Cref{fig:algo:zap}) emits
$\pf = \cLam{\hyp}{\langle \asgn,\ \Cert{\varepsilon}{\hyp}{c}\rangle}$, so
suppose $\hyp : c' = \Rem{\kvar}{\asgn}{c}$.

\emph{Typing.} Instantiate \Cref{lem:cert} at $\pi = \varepsilon$:
there $c^\ast = \HeadOf{c}{\varepsilon} = c$ and $\ctx =
\CtxOf{c}{\varepsilon} = \varepsilon$, so it gives
$\Cert{\varepsilon}{\hyp}{c} : \subasgn{c}{\kvar}{\asgn}$. The witness
\asgn\ is \kvar-free and of \kvar's predicate sort
(\Cref{lem:sol-shape}), so $\exists$-introduction yields
$\langle \asgn,\ \Cert{\varepsilon}{\hyp}{c}\rangle : \exists\kvar.\,c$,
and $\to$-introduction over $\hyp$ gives $\pf : c' \to \exists\kvar.\,c$.

\emph{Free variables.} Every binder in \pf\ (the outer $\hyp$ and the
$\lambda$s emitted by \Certname) is unannotated, and every variable
\Navname\ emits is bound by one of them. The only other constituent of
\pf\ is the witness \asgn, so
$\mathrm{FV}(\pf) \subseteq \mathrm{FV}(\asgn) \subseteq
\kvs{c} \setminus \{\kvar\}$ by \Cref{lem:sol-shape}(c).
\end{proof}

In the proof of \Cref{thm:zap:sound}, each step bridge delivers
$\exists\kvar_i.\,c_{i-1}$ while the accumulated bridge expects
$c_{i-1}$, so bridges must compose underneath an existential binder.
The next lemma justifies this. Here \kvar\ has a predicate sort
$b_1 \to \cdots \to b_n \to \Prop$, the case the $\exists$-rules of
\Cref{fig:zap-typing} admit.

\begin{lemma}[Existential lifting]\label{lem:emono}
Let \kvar\ be a variable of predicate sort, and let
\[
  \mathsf{lift}\,\pf \;\eqdef\;
  \cLam{\hyp_1}{\letin{\langle \kvar, \hyp_2\rangle = \hyp_1}{\langle \kvar,\
  \pf\,\hyp_2\rangle}}.
\]
If $\Gamma, \kvar \vdash \pf : A \to B$ then
$\Gamma \vdash \mathsf{lift}\,\pf :
(\exists\kvar.\,A) \to (\exists\kvar.\,B)$.
\end{lemma}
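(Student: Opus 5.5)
The proof is a direct typing derivation in the system of \Cref{fig:zap-typing}: build the judgment for $\mathsf{lift}\,\pf$ bottom-up, one rule per constructor, from the innermost subterm outward. First I will weaken the hypothesis $\Gamma, \kvar \vdash \pf : A \to B$. Binders are $\alpha$-renamable, so I can assume the names $\hyp_1, \hyp_2$ are fresh for $\Gamma$, $\kvar$, $A$, $B$, and $\pf$. Standard weakening for this fragment then gives
\[
  \Gamma, \cBind{\hyp_1}{\exists\kvar.\,A}, \cBind{\kvar}{\kappa\text{'s sort}}, \cBind{\hyp_2}{A} \vdash \pf : A \to B .
\]
Weakening is admissible because every rule of \Cref{fig:zap-typing} only looks entries up in the context.

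The derivation itself has four steps.
\begin{enumerate}
\item By the hypothesis rule, $\hyp_2 : A$. Implication elimination then gives $\pf\,\hyp_2 : B$.
\item The existential-introduction rule builds $\langle \kvar,\ \pf\,\hyp_2\rangle : \exists\kvar.\,B$. It needs two premises. The first is that the witness $\kvar$ has the predicate sort, which comes from its binder in the context; convention (1) of \Cref{fig:zap-typing} says this rule ranges over predicate sorts. The second is that $\pf\,\hyp_2$ has type $B\sub{\kvar}{\kvar} = B$.
\item The existential-elimination rule applied to $\hyp_1 : \exists\kvar.\,A$ types the \kword{let}, with conclusion $\psi = \exists\kvar.\,B$.
\item Implication introduction over $\hyp_1$ yields $\Gamma \vdash \mathsf{lift}\,\pf : (\exists\kvar.\,A) \to (\exists\kvar.\,B)$.
\end{enumerate}

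The only delicate point, and the main obstacle, is the side condition of existential elimination in step 3: the bound names $z = \kvar$ and $\hyp_2$ must not be free in $\psi$. Here $\psi = \exists\kvar.\,B$. Its only possible occurrences of $\kvar$ are bound by its own $\exists$, and $\hyp_2$ is fresh by choice. Hence $\mathrm{FV}(\psi) \cap \{\kvar, \hyp_2\} = \emptyset$.

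A second subtlety concerns the binder $\kvar$. The $\kvar$ bound by the \kword{let} shadows any outer $\kvar$. It is exactly the $\kvar$ the hypothesis on $\pf$ was stated under, so no renaming is needed. If $\Gamma$ already contained a $\kvar$, I would first $\alpha$-rename the existential binders, using that $\exists$-types are equal up to $\alpha$. The kernel model of \citet{lean4lean} works up to $\alpha$-equivalence, so this step is harmless.
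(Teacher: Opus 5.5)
Your proof is correct and is essentially the paper's argument. Under the \kword{let}-bound $\kvar$ and $\hyp_2 : A$ you obtain $\pf\,\hyp_2 : B$, repack it with $\exists$-introduction at $\kvar$, and discharge the side condition of $\exists$-elimination because $\kvar$ is bound in $\exists\kvar.\,B$. Your weakening, freshness, and $\alpha$-renaming remarks are bookkeeping the paper leaves implicit rather than a different route.
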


\begin{proof}
The \textsc{let} rule binds \kvar\ and $\hyp_2 : A$; then $\pf\,\hyp_2 : B$, and
$\exists$-introduction at \kvar\ gives $\langle \kvar, \pf\,\hyp_2\rangle :
\exists\kvar.\,B$, which does not mention the bound \kvar, discharging the
side condition.
\end{proof}

Lifting leaves the composed bridge's $\exists$-prefix in elimination
order; a final permutation restores the binder order of $\cstr$, justified
with following lemma:

\begin{lemma}[Prefix permutation]\label{lem:eperm}
Let $\overline{\kvar}'$ be a reordering of $\overline{\kvar}$, and let
\[
  \mathsf{perm} \;\eqdef\;
  \cLam{\hyp_1}{\letin{\langle \overline{\kvar},\, \hyp_2\rangle = \hyp_1}
  {\langle \overline{\kvar}';\ \hyp_2\rangle}}.
\]
Then $\vdash \mathsf{perm} :
(\exists\overline{\kvar}.\,c) \to (\exists\overline{\kvar}'.\,c)$.
\end{lemma}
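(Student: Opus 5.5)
The plan is to type $\mathsf{perm}$ directly in the system of \Cref{fig:zap-typing}, reading the multi-binder $\kword{let}$ as iterated $\exists$-elimination and the bundle $\langle \overline{\kvar}';\, \hyp_2\rangle$ as iterated $\exists$-introduction, following the conventions of \Cref{app:zap-prelim}. First apply $\to$-introduction, which puts $\hyp_1 : \exists\overline{\kvar}.\,c$ in the context. It then suffices to show that the body has type $\exists\overline{\kvar}'.\,c$ under this hypothesis.

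For the elimination phase, write $\overline{\kvar} = \kvar_1,\ldots,\kvar_n$ and argue by induction on $n$. We peel one existential per nested $\kword{let}$. After $k$ steps the context contains $\kvar_1{:}b_1,\ldots,\kvar_k{:}b_k$ together with a proof of $\exists \kvar_{k+1}\ldots\kvar_n.\,c$, with the convention that at $k=n$ this is $\hyp_2 : c$. Each $\kvar_i$ has predicate sort, which is the case the $\exists$-rules admit. Each step needs the side condition that the bound witness and hypothesis do not occur free in the result type $\exists\overline{\kvar}'.\,c$. This holds because $\overline{\kvar}'$ is a reordering of $\overline{\kvar}$, so every $\kvar_i$ is bound again by the target prefix. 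If needed, we $\alpha$-rename the target binders apart, exactly as in \Cref{lem:emono}.

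For the introduction phase, the context now contains all of $\overline{\kvar}$ and $\hyp_2 : c$. Let $\overline{\kvar}' = \kvar_{\sigma(1)},\ldots,\kvar_{\sigma(n)}$. Iterated $\exists$-introduction from the outside in uses witness $\kvar_{\sigma(j)}$ at position $j$, since each $\kvar_{\sigma(j)}$ is in context at its sort. The remaining obligation after substituting all witnesses is $c\sub{\kvar_{\sigma(1)}}{\kvar_{\sigma(1)}}\cdots$, which is syntactically $c$. So $\hyp_2$ discharges it.

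The main obstacle is the bookkeeping of variable capture in the introduction phase. We must check that substituting the witness $\kvar_{\sigma(j)}$ for the $j$-th target binder does not get captured by a later target binder with the same name. We handle this by first $\alpha$-renaming the target prefix to fresh names $\overline{\kvar}''$, so the substitutions are simultaneous and identity-like. The claim then reduces to $c\sub{\overline{\kvar}''}{\overline{\kvar}'}$ with the renaming undone, which is just $c$.
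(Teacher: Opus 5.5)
Your proposal is correct and follows the paper's argument. You apply $\to$-introduction, then iterated $\exists$-elimination to bind the distinct $\overline{\kvar}$ and expose $\hyp_2 : c$; the side condition holds because $\exists\overline{\kvar}'.\,c$ rebinds every $\kvar_i$. You then use iterated $\exists$-introduction to re-bundle the same variables in the order $\overline{\kvar}'$. Your $\alpha$-renaming of the target prefix is harmless but unnecessary: since the $\kvar_i$ are distinct, each witness substitution $\kvar_{\sigma(j)} := \kvar_{\sigma(j)}$ is literally the identity, and no later binder can capture it.
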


\begin{proof}
The nested \textsc{let} binds the distinct witnesses $\overline{\kvar}$ and
exposes $\hyp_2 : c$; iterated $\exists$-introduction re-bundles the same
variables in the order $\overline{\kvar}'$.
\end{proof}

\begin{theorem}[Soundness, \Cref{sec:certification}]
\label{thm:zap-sound-app}
If $(\cstr', \pf) = \Zap{\cstr}$ then $\vdash \pf : \cstr' \to \cstr$, and
\pf\ is closed.
\end{theorem}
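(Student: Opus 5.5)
The proof composes the single-variable bridges of \Cref{lem:zap1} along the recursion of \Zapsname, using \Cref{lem:emono} to push each bridge under the outer existentials and \Cref{lem:eperm} to restore the binder order. Write $\kacys{\kvar} = [\kvar_1,\ldots,\kvar_n]$ in elimination order, with $c_0 = c$ and $c_i = \Rem{\kvar_i}{\asgn_i}{c_{i-1}}$ as in \Cref{app:zap-elim}.

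\textbf{Invariant for \Zapsname.} We first prove by induction on the number of peeled existentials that if $\Zaps{\exists \kvar_n \cdots \exists \kvar_1.\,c} = (c_n, \pf)$, then
\[
  \kcycs{\kvar} \vdash \pf : c_n \to \exists \kvar_n \cdots \exists\kvar_1.\,c
  \quad\text{and}\quad
  \mathrm{FV}(\pf) \subseteq \kcycs{\kvar}.
\]
\begin{itemize}
  \item \emph{Base case.} This is the identity $\pfid : c \to c$, which has no free variables.
  \item \emph{Inductive step.} For $\exists\kvar.\,\cstr$, the recursive call gives $\pf_1 : \cstr_1 \to \cstr$ under the context $\kcycs{\kvar}, \kvar$; here $\kvar$ is still bound from outside. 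Then \Cref{lem:zap1} gives $\pf_2 : \cstr_2 \to \exists\kvar.\,\cstr_1$.
  \item \emph{Composing the two.} The term $\pf_3$ of \Cref{fig:algo:zap} is exactly $\cLam{\hyp_1}{(\mathsf{lift}\,\pf_1)\,(\pf_2\,\hyp_1)}$ written out inline. Hence \Cref{lem:emono}, followed by $\to$-introduction and elimination, types it at $\cstr_2 \to \exists\kvar.\,\cstr$.
\end{itemize}

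\textbf{Side conditions.} The step that needs care is checking that $\pf_2$ is typable in a context where the inner acyclic variables are already eliminated. By \Cref{lem:zap1}, $\mathrm{FV}(\pf_2) \subseteq \kvs{\cstr_1}\setminus\{\kvar\}$. By \Cref{lem:red}, $\cstr_1$ is free of the inner acyclic variables, and \Cref{lem:afree} gives $\mathrm{FV}(\asgn_i) \subseteq \kcycs{\kvar}$. Together these give $\mathrm{FV}(\pf_2) \subseteq \kcycs{\kvar}$. So $\pf_2$ lives under $\kcycs{\kvar}$ alone, and in particular it does not capture the $\kvar$ bound by the \kword{let} in $\pf_3$.

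I also need to confirm that $\kvar_i$ is acyclic in the residual $c_{i-1}$, so that \Cref{lem:zap1} (through \Cref{lem:sol-shape}) applies at each step. This is exactly the first bullet of the proof of \Cref{lem:afree}. I expect this bookkeeping to be the main obstacle: matching the \Zapsname\ recursion, which peels the outermost variable first but eliminates innermost first, against the indexing $c_i$ of \Cref{lem:afree}. I would fix this by proving the invariant with the acyclic binders listed in the topological order that \Partitionname\ returns.

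\textbf{Top level.} \Zapname\ receives $\ecstr{\kvar}{c}$. The term $\pf'$ unpacks $\hyp_1 : \exists\kcycs{\kvar}.\,c'$ into witnesses $\kcycs{\kvar}$ and a proof $\hyp_2 : c'$. It then applies the \Zapsname\ bridge, which is typable under exactly $\kcycs{\kvar}$, to obtain $\exists\kacys{\kvar}.\,c$, unpacks that, and repacks all witnesses in the original order $\overline{\kvar}$. This is iterated $\exists$-elimination and introduction composed with \Cref{lem:eperm}, so $\vdash \pf' : \cstr' \to \cstr$. The eigenvariable side conditions of the \kword{let} rule hold because the target $\ecstr{\kvar}{c}$ binds all of $\overline{\kvar}$.

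\textbf{Closedness.} The only free variables of the \Zapsname\ bridge lie in $\kcycs{\kvar}$, and these are bound by the outer \kword{let}. Every other binder is introduced by the term itself. Hence $\pf'$ is closed.
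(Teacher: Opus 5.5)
Your proposal is correct and follows essentially the paper's proof, which likewise composes the single-variable bridges of \Cref{lem:zap1} through $\mathsf{lift}$ (as $B_i = (\mathsf{lift}\,B_{i-1})\circ\pf_i$), relies on \Cref{lem:afree} for acyclicity of each residual and cut-only solutions, and finishes with $\mathsf{lift}$ over the cut variables followed by $\mathsf{perm}$; the only difference is that you induct along the \Zapsname\ recursion itself, so each emitted $\pf_3$ is matched as you go, whereas the paper builds the composite first and identifies it with the emitted term at the end. One small repair: state your invariant under a context containing the cut variables \emph{and} the not-yet-eliminated outer acyclic variables, as the paper does, because the inner recursive call's type mentions the outer $\kvar$ (which you already use implicitly); and with your labeling $\exists\kvar_n\cdots\exists\kvar_1$ the indexing obstacle you anticipate does not arise, since \Zapsname\ eliminates $\kvar_1$ first, exactly matching the paper's $c_i$.
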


\begin{proof}
With $c_i$ as in \Cref{app:zap-elim}, $c' = c_n$ and $\cstr' =
\ecstr{\kcyc{\kvar}}{c'}$. By \Cref{lem:zap1}, step $i$ emits
$\pf_i : c_i \to \exists\kvar_i.\,c_{i-1}$. Define $B_0 = \pfid$ and
$B_i = (\mathsf{lift}\,B_{i-1}) \circ \pf_i$, where $g \circ f$
abbreviates $\cLam{\hyp}{g\,(f\,\hyp)}$:
\[
\begin{array}{@{}r@{\;\;}l@{\qquad}l@{}}
\pf_i                  & : c_i \to \exists\kvar_i.\,c_{i-1} & \text{\Cref{lem:zap1}} \\
\mathsf{lift}\,B_{i-1} & : \exists\kvar_i.\,c_{i-1} \to \exists\kvar_i{\cdots}\kvar_1.\,c_0 & \text{\Cref{lem:emono}} \\
B_i                    & : c_i \to \exists\kvar_i{\cdots}\kvar_1.\,c_0 & \\
\end{array}
\]
All three judgments hold under the context of the variables still in
scope at step $i$: $\kvar_{i+1}, \ldots, \kvar_n$ and \kcycs{\kvar}. By
\Cref{lem:zap1}, that is where $\pf_i$ lives; \Cref{lem:emono} lifts
exactly the $\kvar_i$ that $\pf_i$ introduced. By \Cref{lem:afree}, each
$\asgn_j$ is cut-only, so no lift captures an acyclic variable.

At $i = n$, $B_n : c' \to \exists\kvar_n{\cdots}\kvar_1.\,c$ holds under
the cut variables alone. Applying \Cref{lem:emono} once per cut
variable, written $\mathsf{lift}^{\,\kcycs{\kvar}}$, binds the cut
prefix outside, since the $\asgn_j$ may mention cut variables. By
\Cref{lem:eperm}, $\mathsf{perm}$ restores $\cstr$'s binder order:
\[
  \pf \;=\; \mathsf{perm} \circ \mathsf{lift}^{\,\kcycs{\kvar}} B_n
  \;:\; \ecstr{\kcyc{\kvar}}{c'} \to \ecstr{\kvar}{c}
  \;=\; \cstr' \to \cstr.
\]
This judgment holds under the empty context, so \pf\ is closed.

Up to flattening the intermediate $\kword{let}$-bindings, \pf\ is the
bridge \Cref{fig:algo:zap} assembles: \Zapsname's unpack/repack is
$\mathsf{lift}$, and \Zapname's top-level $\letin{}{}$ is the composite
of $\mathsf{lift}^{\,\kcycs{\kvar}}$ and $\mathsf{perm}$.
\end{proof}

\begin{theorem}[Equivalence, \Cref{sec:certification}]
\label{thm:zap-equiv-app}
If $(\cstr', \cdot) = \Zap{\cstr}$ then $\cstr$ is satisfiable iff
$\cstr'$ is satisfiable.
\end{theorem}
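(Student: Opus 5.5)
The two directions are asymmetric, and I will treat them separately. For the backward direction ($\cstr'$ satisfiable implies $\cstr$ satisfiable) I reuse \Cref{thm:zap-sound-app} directly. Suppose interpretations $\overline{\asgn^{\mathrm{cut}}}$ for the cut variables make $c'$ valid. Then $\ecstr{\kcyc{\kvar}}{c'}$ is provable by $\exists$-introduction. Applying the closed bridge $\pf : \cstr' \to \cstr$ yields a proof of $\ecstr{\kvar}{c}$. Unpacking it, or equivalently reading off the witnesses from the structure of $\pf$, namely the cut witnesses together with $\asgn_1,\ldots,\asgn_n$ with the cut interpretations substituted, gives a satisfying interpretation for every Horn variable of $\cstr$. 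So this direction is essentially a corollary of Soundness and requires no new argument.

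For the forward direction ($\cstr$ satisfiable implies $\cstr'$ satisfiable) I argue one elimination step at a time and then induct along the fold $c_0,\ldots,c_n$ of \Cref{app:zap-elim}. The key lemma is \emph{strongest solution}. Fix $c_{i-1}$, the acyclic $\kvar_i$, its scope $\pfx$, and its synthesized solution $\asgn_i$. If an interpretation $\theta$ (assigning every remaining Horn variable, $\kvar_i$ included) validates $c_{i-1}$, then under the context $\subasgn{\CtxOf{c_{i-1}}{\pfx}}{}{\theta}$ we have $\forall \overline{z}.\ \asgn_i(\overline{z},\overline{x}) \to \theta(\kvar_i)(\overline{z},\overline{x})$.

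I will prove this by induction on $\HeadOf{c_{i-1}}{\pfx}$, using the dual shape from \Cref{lem:sol-shape}(b).
\begin{itemize}
  \item A $\bot$ disjunct is trivial.
  \item A disjunction $\Sol{}{c_1}\vee\Sol{}{c_2}$ splits into cases, and each case uses validity of the corresponding conjunct $c_j$.
  \item An $\exists x$ is eliminated and instantiates the $\forall x$ in the valid constraint.
  \item A conjunct $g \wedge \cdot$ supplies the guard $g$ to discharge the implication. Here $g$ is $\kvar_i$-free by acyclicity, exactly as in \Cref{lem:sol-shape}(c), so $\theta$ interprets it identically on both sides.
  \item At a definition leaf $\bigwedge z_j = t_j$, rewriting turns the goal into $\theta(\kvar_i)(\overline{t})$, which the valid head supplies. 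Well-formedness makes the trailing arguments coincide with $\overline{x}$.
\end{itemize}

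With monotonicity in hand, I show that $\theta$ restricted to the variables other than $\kvar_i$ validates $c_i = \Rem{\kvar_i}{\asgn_i}{c_{i-1}}$. By \Cref{lem:red}, $c_i$ differs from $\subasgn{c_{i-1}}{\kvar_i}{\asgn_i}$ only at definition heads, which become $\top$. Uses of $\kvar_i$ sit negatively in guards, so replacing $\theta(\kvar_i)$ by the stronger $\asgn_i$ there only weakens the obligations. I will make this precise with a structural induction on $c_{i-1}$ that carries the context and tracks polarity. Every use lies inside the scope, so the monotonicity lemma applies there, because the context at a use extends $\CtxOf{c_{i-1}}{\pfx}$. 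Other Horn variables are interpreted by $\theta$ unchanged, since by \Cref{lem:afree} $\asgn_i$ mentions only cut variables, which $\theta$ fixes.

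Iterating from $i=1$ to $n$ leaves the original cut interpretations validating $c_n = c'$, so $\cstr'$ is satisfiable. I expect the main obstacle to be the monotonicity lemma. The bookkeeping is delicate in three places: the $\lambda\overline{z}.\lambda\overline{x}$ closure, variable capture when $\exists$-binders in $\asgn_i$ are matched against $\forall$-binders, and the fact that validity is relative to the outer context above the scope. I would reuse the capture-avoidance and lockstep arguments from \Cref{lem:nav}.
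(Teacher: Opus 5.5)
Your proposal is correct and is essentially the paper's argument. The paper's proof simply invokes the \textsc{Fusion} result of \citet{Cosman17}: each \Remname\ step substitutes the strongest valid solution for an acyclic variable and so preserves satisfiability. You make both halves of that explicit. For the backward direction you obtain validity of the synthesized solutions from the bridge of \Cref{thm:zap-sound-app}, and for the forward direction your strongest-solution induction over the dual shape of \Cref{lem:sol-shape}(b), iterated along the fold $c_0,\ldots,c_n$, does the work.
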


\begin{proof}
This is the \textsc{Fusion} result of \citet{Cosman17}: each \Remname\
step substitutes the strongest valid solution for an acyclic variable
and preserves satisfiability.
\end{proof}

%% file: sections/B-fix-proofs.tex
% ===========================================================================
\section{Fix: Soundness and Completeness}
\label{app:fix-proofs}
% ===========================================================================

This appendix proves the correctness of the \Fixname\ algorithm of
\Cref{sec:fixpoint}, whose two results are \emph{Soundness}
(\Cref{thm:fix:sound}) and \emph{Completeness} (\Cref{thm:fix:complete}).

\mypara{Overview}
Soundness is again a typing result. Completeness is the argument of
\citet{Rondon08}, with \sclean\ derivability in place of semantic
entailment.

\mypara{Setup}
Throughout, we fix a closed constraint
\[
  \cstr \;\equiv\; \ecstr{\kcyc{\kvar}}{c}.
\]
In the pipeline, $\cstr$ is \Zapname's residual and $\kcycs{\kvar}$ its
cut set, but nothing below requires this: predicate abstraction applies
to any closed constraint. We reuse the proof-term calculus of
\Cref{fig:zap-typing} and the conventions of \Cref{app:zap-prelim}.

\mypara{Organization}
\begin{itemize}
  \item \Cref{app:fix-prelim} recalls candidates, assignments, and the
        \Oraclename;
  \item \Cref{app:fix-fixpoint} characterizes the fixpoint computation;
  \item \Cref{app:fix-soundness} types \Certsname\ and the emitted
        bridge.
\end{itemize}

% ===========================================================================
\subsection{Preliminaries}
\label{app:fix-prelim}
% ===========================================================================

\mypara{Candidates and assignments}
Recall (\Cref{sec:fixpoint}) that a candidate $\qual_\inst$ instantiates a
qualifier along an instance into a \kvar-free predicate of \kvar's type
(closed, qualifiers being top-level definitions), that an assignment \asgns\
maps each \kvar\ to a finite candidate set,
and that $\appasgn{\cdot}{\asgns}$ substitutes $\conjoinsol{\asgns(\kvar)}$
for each application of \kvar, extended homomorphically to atoms, contexts,
and constraints. The fold $\Rems{\kcycs{\kvar}}{c}$ of \Cref{fig:fix} agrees
with $\appasgn{c}{\asgns}$ except at heads, which it sends to $\top$;
because candidates are \kvar-free, the fold order is immaterial.

\mypara{Validity and order}
\asgns\ is \emph{valid} for $c$, written $\cSat{\asgns}{c}$, if at every
head-prefix $\pfx$ with $\HeadOf{c}{\pfx} = \kvapp{\kvar}{\overline{t}}$ the
\kvar-free goal $\appasgn{\CtxOf{c}{\pfx}}{\asgns} \vdash
\appasgn{\kvapp{\kvar}{\overline{t}}}{\asgns}$ is derivable. Assignments are
ordered by $\asgns \preceq \asgns'$ ($\asgns$ \emph{stronger}) iff
$\asgns'(\kvar) \subseteq \asgns(\kvar)$ for every \kvar: fewer candidates
is a weaker conjunction.

\mypara{The Oracle}
\Weakenname\ and \Certsname\ call an \Oraclename\ subject to the two laws of
\Cref{sec:fixpoint}: by \textit{(Oracle-Soundness)} a returned term proves
the queried goal, and by \textit{(Oracle-Completeness)} the \Oraclename\
succeeds on every derivable goal. Soundness uses \textit{(Oracle-Soundness)}
only; \textit{(Oracle-Completeness)} is used only for completeness, without
which the completeness claim weakens to the strongest
\emph{\Oraclename-certifiable} assignment.

% ===========================================================================
\subsection{The fixpoint computation}
\label{app:fix-fixpoint}
% ===========================================================================

\begin{lemma}[Monotonicity]\label{lem:fix-mono}\leavevmode
Let $\asgns \preceq \asgns'$.
\begin{enumerate}
\item[(a)] For every atom $\pred$, $\appasgn{\pred}{\asgns}$ entails
  $\appasgn{\pred}{\asgns'}$.
\item[(b)] For every context $\ctx$ and goal $\psi$, if
  $\appasgn{\ctx}{\asgns'} \vdash \psi$ then
  $\appasgn{\ctx}{\asgns} \vdash \psi$.
\end{enumerate}
\end{lemma}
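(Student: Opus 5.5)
For part (a), I will split on the shape of the atom $\pred$. If $\pred$ is a $\kvar$-free term $\leanterm$, then $\appasgn{\pred}{\asgns} = \leanterm = \appasgn{\pred}{\asgns'}$, and the identity proof works. If $\pred = \kvapp{\kvar}{\overline{t}}$, then $\appasgn{\pred}{\asgns} = (\conjoinsol{\asgns(\kvar)})(\overline{t})$, the conjunction of $q_\inst(\overline{t})$ over $q_\inst \in \asgns(\kvar)$, and similarly for $\asgns'$. Since $\asgns \preceq \asgns'$ means $\asgns'(\kvar) \subseteq \asgns(\kvar)$, every conjunct on the right also occurs on the left. The entailment is then built by projections: for each $q_\inst \in \asgns'(\kvar)$, select the matching component of a proof of the larger conjunction with a chain of $\pf.1/\pf.2$, and reassemble the results with pairs. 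If $\asgns'(\kvar)$ is empty, the conjunction is $\top$ and $\langle\rangle$ proves it. The only point needing care is fixing a canonical ordering and association for $\conjoinsol{\cdot}$ so that the projection paths are well defined. Given finiteness of candidate sets, this is bookkeeping.

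For part (b), I will prove the stronger statement that every hypothesis of $\appasgn{\ctx}{\asgns'}$ is derivable from $\appasgn{\ctx}{\asgns}$, and then finish by substitution (cut). Concretely, I proceed by induction on $\ctx$, using the homomorphic definition of $\appasgn{\ctx}{\cdot}$. Binders $\cBind{x}{b}$ and marks $\pLeft/\pRight$ are left unchanged by application, so they carry over directly. A hypothesis $\cBind{\hyp}{\pred}$ becomes $\cBind{\hyp}{\appasgn{\pred}{\asgns}}$ on the strong side. By (a), under the remaining prefix context there is a term $e_\hyp$ with $e_\hyp : \appasgn{\pred}{\asgns'}$ built from $\hyp$. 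Given a derivation $\appasgn{\ctx}{\asgns'} \vdash \pf : \psi$, I replace each free hypothesis name $\hyp$ in $\pf$ by $e_\hyp$. This gives $\appasgn{\ctx}{\asgns} \vdash \pf\sub{\hyp}{e_\hyp} : \psi$, by the standard substitution lemma for the calculus of \Cref{fig:zap-typing}. Equivalently, I abstract $\pf$ over the hypotheses with $\to$-introduction and apply the result to the $e_\hyp$.

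I expect the main obstacle to be justifying the substitution/cut step within the typing system of \Cref{fig:zap-typing}, since that system is presented as a fragment rather than with a stated substitution lemma. I would avoid this by working at the level of \sclean derivability, where validity is plain provability and cut is admissible, and by noting that the dependencies are well scoped: the binders occurring in each $\pred$ precede $\hyp$ in $\ctx$ and are untouched by application. The entailment term from (a) therefore typechecks in the same prefix, so the replacement is scope-correct.
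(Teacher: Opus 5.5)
Your proposal is correct and follows the paper's argument: for (a) the paper also observes that $\kvar$-free atoms are unchanged, and that at a Horn application the $\asgns$-conjunction contains every conjunct of the $\asgns'$-conjunction, so one projects and re-pairs; for (b) it also derives each hypothesis of the $\asgns'$-applied context from the corresponding $\asgns$-applied hypothesis via (a) and cuts these derivations into the given one. Your handling of the empty conjunction and of the scoping of the substitution spells out details the paper leaves implicit.
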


\begin{proof}\leavevmode
\begin{enumerate}
\item[(a)] A \kvar-free atom is unchanged. At
  $\kvapp{\kvar}{\overline{t}}$ the two sides are conjunctions over
  $\asgns(\kvar) \supseteq \asgns'(\kvar)$, so the first contains every
  conjunct of the second; project and re-pair.
\item[(b)] Every hypothesis of $\appasgn{\ctx}{\asgns'}$ is an applied
  atom. By (a), each is derivable from the corresponding hypothesis of
  $\appasgn{\ctx}{\asgns}$; cut these derivations into the given one.
\end{enumerate}
\end{proof}

Part (c) below assumes \textit{(Oracle-Completeness)}; parts (a) and
(b) do not.

\begin{lemma}[Fixpoint]\label{lem:fix-fp}
Let $\asgns^\star = \Fixpoint{c}{\asgns_0}$ with
$\asgns_0 = [\kvar \mapsto \Init{\kvar}{\quals} \mid \kvar \in
\kcycs{\kvar}]$.
\begin{enumerate}
\item[(a)] \emph{Termination.} $\Fixpointname$ terminates.
\item[(b)] \emph{Validity.} $\cSat{\asgns^\star}{c}$.
\item[(c)] \emph{Strength.} $\asgns^\star \preceq \asgns'$ for every
  $\quals$-assignment $\asgns'$ with $\cSat{\asgns'}{c}$.
\end{enumerate}
\end{lemma}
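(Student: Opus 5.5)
We prove the three parts separately, using only the definitions of \Fixpointname\ and \Weakenname\ (\Cref{fig:fixpoint}) and \Cref{lem:fix-mono}.

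For (a), the measure is the total candidate count $N(\asgns) = \sum_{\kvar} |\asgns(\kvar)|$. The initial assignment $\asgns_0$ is finite, because there are finitely many qualifiers and finitely many instance maps $[0,n]\to[0,m]$ for each. \Weakenname\ returns an assignment only when $I \subset \asgns(\kvar)$ strictly, so each recursive call of \Fixpointname\ strictly decreases $N$. Since $N$ is a natural number, the recursion is well-founded. (The existential choice of $\pfx$ ranges over the finitely many head-prefixes of $c$, so each step is effective.)

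For (b), \Fixpointname\ returns $\asgns^\star$ only when \Weakenname$(c,\asgns^\star,\pfx)=\bot$ for every head-prefix $\pfx$. Fix a $\kvar$-goal $\ctx \vdash \kvapp{\kvar}{\overline{t}}$ at $\pfx$. Since no weakening occurs there, $I = \asgns^\star(\kvar)$, so the \Oraclename\ succeeded on every goal $\appasgn{\ctx}{\asgns^\star} \vdash q_\inst(\overline{t})$ with $q_\inst \in \asgns^\star(\kvar)$. By \textit{(Oracle-Soundness)} each such goal is derivable. Pairing these proofs by $\wedge$-introduction gives a proof of $(\conjoinsol{\asgns^\star(\kvar)})(\overline{t}) = \appasgn{\kvapp{\kvar}{\overline{t}}}{\asgns^\star}$. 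If the candidate set is empty, the conjunction is $\top$ and $\langle\rangle$ proves it. Doing this for every head-prefix gives $\cSat{\asgns^\star}{c}$.

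For (c), fix a valid $\quals$-assignment $\asgns'$ with $\cSat{\asgns'}{c}$. We prove by induction along the iteration the invariant $\asgns \preceq \asgns'$, i.e.\ $\asgns'(\kvar)\subseteq\asgns(\kvar)$ for every $\kvar$. The base case holds because every $\quals$-assignment draws its candidates from $\Init{\kvar}{\quals}$, so $\asgns'(\kvar)\subseteq\asgns_0(\kvar)$. For the step, suppose $\asgns \preceq \asgns'$ and \Weakenname\ at $\pfx$, with goal $\ctx\vdash\kvapp{\kvar}{\overline{t}}$, produces $\asgns[\kvar\mapsto I]$. We must show $\asgns'(\kvar)\subseteq I$. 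Take $q_\inst\in\asgns'(\kvar)$; it lies in $\asgns(\kvar)$ by the induction hypothesis. Validity of $\asgns'$ gives $\appasgn{\ctx}{\asgns'}\vdash(\conjoinsol{\asgns'(\kvar)})(\overline{t})$, and projecting the conjunct gives $\appasgn{\ctx}{\asgns'}\vdash q_\inst(\overline{t})$. \Cref{lem:fix-mono}(b) then transports this to $\appasgn{\ctx}{\asgns}\vdash q_\inst(\overline{t})$. By \textit{(Oracle-Completeness)} the \Oraclename\ succeeds on this goal, so $q_\inst\in I$. All other $\kvar$ are unchanged, so the invariant is preserved, and at termination $\asgns^\star\preceq\asgns'$.

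The main obstacle is the transport step in (c). The \Oraclename\ is queried under the \emph{current}, stronger context $\appasgn{\ctx}{\asgns}$, not under $\appasgn{\ctx}{\asgns'}$. \Cref{lem:fix-mono}(b) bridges the two, and it applies in the needed direction because hypotheses that occur negatively only get stronger as candidates are added. A second point needs care: the invariant must be stated for the $\kvar$ being weakened while the others stay fixed, and this is exactly what the single-variable update $\asgns[\kvar\mapsto I]$ provides.
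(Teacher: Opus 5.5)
Your proposal is correct and follows the paper's proof essentially step for step. Termination uses the strictly decreasing total candidate count. Validity combines \emph{(Oracle-Soundness)} with $\wedge$-introduction at every head-prefix where Weaken returns $\bot$. Strength maintains the invariant $\asgns'(\kvar) \subseteq \asgns^k(\kvar)$ using validity of $\asgns'$, $\wedge$-projection, Lemma~\ref{lem:fix-mono}(b) and \emph{(Oracle-Completeness)}. The differences are cosmetic: you argue the step of (c) directly, showing each candidate of $\asgns'(\kvar)$ lands in $I$, where the paper derives a contradiction from a hypothetically dropped one, and you add small details such as the empty-conjunction case and the finiteness of head-prefixes and instances.
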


\begin{proof}\leavevmode

\emph{(a) Termination.} $\asgns_0$ is finite, and each recursive call
strictly decreases $\sum_\kvar |\asgns(\kvar)| \in \Nat$: \Weakenname\
returns either a strictly smaller candidate set or $\bot$
(\Cref{fig:fixpoint}).

\emph{(b) Validity.} On exit, $\Weaken{c}{\asgns^\star}{\pfx} = \bot$
at every head-prefix $\pfx$: the surviving subset $I$ equals
$\asgns^\star(\kvar)$, so the \Oraclename\ succeeded on
$\qual_\inst(\overline{t})$ under
$\appasgn{\CtxOf{c}{\pfx}}{\asgns^\star}$ for every
$\qual_\inst \in \asgns^\star(\kvar)$. By \textit{(Oracle-Soundness)},
each success is a derivation of $\qual_\inst(\overline{t})$, and
$\wedge$-introduction assembles them into validity at $\pfx$.

\emph{(c) Strength.} By induction on the iterates $\asgns^k$, with
invariant $\asgns'(\kvar) \subseteq \asgns^k(\kvar)$ for every \kvar,
i.e.\ $\asgns^k \preceq \asgns'$. At the base, $\asgns_0$ contains all
instances. For the step, suppose \Weakenname\ at $\pfx$ drops some
$\qual_\inst \in \asgns'(\kvar)$: the \Oraclename\ failed on
$\qual_\inst(\overline{t})$ under
$\appasgn{\CtxOf{c}{\pfx}}{\asgns^k}$. By validity of $\asgns'$ and
$\wedge$-projection,
$\appasgn{\CtxOf{c}{\pfx}}{\asgns'} \vdash \qual_\inst(\overline{t})$.
By the invariant and \Cref{lem:fix-mono}(b), this derivation transports
to $\appasgn{\CtxOf{c}{\pfx}}{\asgns^k}$, and by
\textit{(Oracle-Completeness)} the \Oraclename\ succeeds, a
contradiction. Hence no candidate of $\asgns'$ is ever dropped, and
$\asgns^\star \preceq \asgns'$.
\end{proof}

% ===========================================================================
\subsection{Soundness and completeness}
\label{app:fix-soundness}
% ===========================================================================

\begin{lemma}[\Certsname]\label{lem:certs}
Let $\asgns^\star$ be as in \Cref{lem:fix-fp}, and suppose every
\Oraclename\ query made by \Certsname\ succeeds. For every prefix $\pi$,
let $c^\ast = \HeadOf{c}{\pi}$ and $\ctx = \CtxOf{c}{\pi}$. If
$\appasgn{\ctx}{\asgns^\star} \vdash \pf : \Rems{\kcycs{\kvar}}{c^\ast}$
then $\appasgn{\ctx}{\asgns^\star} \vdash \Certs{\ctx}{\pf}{c^\ast} :
\appasgn{c^\ast}{\asgns^\star}$.
\end{lemma}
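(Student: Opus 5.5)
We prove the lemma by structural induction on $c^\ast$, mirroring the proof of \Cref{lem:cert}. Each equation of \Certsname\ (\Cref{fig:certs}) is paired with the matching equation of the fold $\Rems{\kcycs{\kvar}}{\cdot}$ and with one typing rule of \Cref{fig:zap-typing}. The governing fact, recalled in \Cref{app:fix-prelim}, is that $\Rems{\kcycs{\kvar}}{c^\ast}$ agrees with $\appasgn{c^\ast}{\asgns^\star}$ everywhere except at $\kvar$-heads, which it sends to $\top$. Because candidates are closed and \kvar-free, the fold order is immaterial, and the fold commutes with every connective. We therefore never need to reason about the intermediate constraints of the fold. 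This is simpler than the \Zapname\ setting, where a single $\kvar$ is eliminated at a time.

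The structural cases proceed as follows.
\begin{itemize}
\item \emph{Case $\cAll{x}{b}{c_0}$.} Here $\pi$ extends to $\pi\cdot\pBind{x}{b}$ and the context to $\ctx;\cBind{x}{b}$. Applying the hypothesis gives $\pf\,x : \Rems{\kcycs{\kvar}}{c_0}$, and the induction hypothesis followed by $\forall$-introduction closes the case.
\item \emph{Case $\cImp{\pred}{c_0}$.} The fold rewrites the guard to exactly $\appasgn{\pred}{\asgns^\star}$, since guards are never heads. This is the same type the extended context $\appasgn{\ctx;\cBind{\hyp_\pred}{\pred}}{\asgns^\star}$ assigns to $\hyp_\pred$. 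So $\pf\,\hyp_\pred$ typechecks, and $\to$-introduction closes the case.
\item \emph{Case $\cAnd{c_1}{c_2}$.} The marks $\pLeft/\pRight$ do not affect derivability, so the projections $\pf.1$ and $\pf.2$ feed the two induction hypotheses, and $\wedge$-introduction pairs the results.
\item \emph{Case of a $\kvar$-free head $\pred$.} The fold, the assignment and \Certsname\ all act as the identity, so $\pf$ itself has the required type.
\end{itemize}
In each case, the invariant that $c^\ast=\HeadOf{c}{\pi}$ and $\ctx=\CtxOf{c}{\pi}$ is preserved by construction, exactly as in \Cref{lem:prefix}.

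The only case that differs from \Cref{lem:cert} is the divergent head $c^\ast=\kvapp{\kvar}{\overline{t}}$ with $\kvar\in\kcycs{\kvar}$. Here the reduced type is $\top$, so $\pf$ is vacuous and is discarded. The target type is $\appasgn{c^\ast}{\asgns^\star}=(\conjoinsol{\asgns^\star(\kvar)})(\overline{t})$, and \Certsname\ emits $\Oracle{\appasgn{\ctx}{\asgns^\star}}{(\conjoinsol{\asgns^\star(\kvar)})(\overline{t})}$. By the lemma's hypothesis this query succeeds. By \textit{(Oracle-Soundness)}, the returned term has exactly that type under $\appasgn{\ctx}{\asgns^\star}$, which is the context of the conclusion.

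This case is where the main obstacle lies, but it is discharged by the assumption that every query succeeds. It is worth noting why that assumption is reasonable: by \Cref{lem:fix-fp}(b), each conjunct was already certified by the \Oraclename\ at this very goal, so a conjunction-aware oracle succeeds. The only care needed is to check that the context handed to the \Oraclename\ is syntactically $\appasgn{\ctx}{\asgns^\star}$, and that the routing marks in $\ctx$ are ignored by the judgment, as stipulated in \Cref{app:zap-prelim}. Unlike the \Zapname\ setting, no truncation by a scope and no \Navname\ walk are needed, so neither \Cref{lem:prefix} nor \Cref{lem:nav} is invoked.
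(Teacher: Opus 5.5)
Your proof is correct and follows the paper's argument. Both proceed by structural induction on $c^\ast$, reuse the $\forall$, $\to$, $\wedge$ and $\kvar$-free-head cases verbatim from \Cref{lem:cert}, and close the divergent $\kvar$-head case with the success hypothesis plus \textit{(Oracle-Soundness)}; your remark tying that hypothesis to \Cref{lem:fix-fp}(b) is motivation only and not a step the proof depends on.
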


\begin{proof}
By induction on $c^\ast$, as in \Cref{lem:cert}: the $\forall$, $\to$,
$\wedge$, and \kvar-free-head cases are verbatim, with both operators
substituting the guards identically.

At a head $c^\ast = \kvapp{\kvar}{\overline{t}}$,
$\Rems{\kcycs{\kvar}}{c^\ast} = \top$ and
$\appasgn{c^\ast}{\asgns^\star} =
(\conjoinsol{\asgns^\star(\kvar)})(\overline{t})$. \Certsname\
(\Cref{fig:certs}) discards the vacuous \pf\ and queries the
\Oraclename\ on this applied conjunction under the applied context. The
query succeeds by hypothesis, and by \textit{(Oracle-Soundness)} the
returned term proves
$(\conjoinsol{\asgns^\star(\kvar)})(\overline{t})$.
\end{proof}

\begin{theorem}[Soundness, \Cref{sec:fixpoint}]\label{thm:fix-sound-app}
If $\Fix{\cstr}{\quals} = (\cstr', \pf)$ then $\cstr'$ is a VC and
$\vdash \pf : \cstr' \to \cstr$, and \pf\ is closed.
\end{theorem}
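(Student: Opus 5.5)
The proof mirrors that of \Cref{thm:zap-sound-app}, but is simpler: \Fixname\ eliminates all of $\kcycs{\kvar}$ at once with a single witness bundle, so no lifting through intermediate existentials is needed.

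The argument has three parts, in order.

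\textbf{The residual is a VC.} Write $\asgns^\star = \PredAbs{\ecstr{\kcyc{\kvar}}{c}}{\quals}$. By \Cref{lem:fix-fp}(a) this is a well-defined finite assignment. The residual is
\[
  c' = \Rems{\kcycs{\kvar}}{c}.
\]
By \Cref{lem:red}, applied once per $\kvar \in \kcycs{\kvar}$ with $\asgn = \conjoinsol{\asgns^\star(\kvar)}$, each fold step removes its variable. Every head application of $\kvar$ becomes $\top$, and every use becomes a conjunction of candidates. Candidates are closed and \kvar-free (\Cref{app:fix-prelim}), so later steps do not reintroduce any variable removed earlier. Hence $c'$ mentions no Horn variables. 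Since $\Fix{\cstr}{\quals}$ returns $c'$ with no existential prefix, $\cstr' = c'$ is a VC.

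\textbf{Typing.} \Fixname\ emits
\[
  \pf = \cLam{\hyp}{\langle \overline{\conjoinsol{\asgns^\star(\kcyc{\kvar})}},\ \Certs{\varepsilon}{\hyp}{c}\rangle}.
\]
Assume $\hyp : c'$ and instantiate \Cref{lem:certs} at $\pi = \varepsilon$. There $c^\ast = c$ and $\ctx = \varepsilon$, and $\appasgn{\varepsilon}{\asgns^\star} = \varepsilon$. This gives
\[
  \Certs{\varepsilon}{\hyp}{c} : \appasgn{c}{\asgns^\star}.
\]
The proposition $\appasgn{c}{\asgns^\star}$ is exactly the body $c$ with each $\kvar$ replaced by its witness $\conjoinsol{\asgns^\star(\kvar)}$. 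This holds up to $\beta$-reduction, which the kernel performs definitionally. Each witness is a predicate of $\kvar$'s sort, because every candidate instance is well-sorted by the definition of $\kqinst{\kvar}{\qual}$. Iterated $\exists$-introduction (the bundle convention of \Cref{fig:zap-typing}) therefore yields a proof of $\ecstr{\kcyc{\kvar}}{c} = \cstr$. Then $\to$-introduction over $\hyp$ gives $\vdash \pf : \cstr' \to \cstr$.

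\textbf{Closedness and the oracle hypothesis.} \Cref{lem:certs} assumes that every \Oraclename\ query made by \Certsname\ succeeds, and this is the one real subtlety. Its justification is that $\Fix{\cstr}{\quals} = (\cstr',\pf)$ is given as returning, so all queries inside $\pf$ succeeded. Alternatively, and more robustly, \Cref{lem:fix-fp}(b) shows these queries are exactly those already answered during the last $\Weakenname$ pass. In that pass the oracle succeeded on each candidate $\qual_\inst(\overline{t})$ under $\appasgn{\CtxOf{c}{\pfx}}{\asgns^\star}$, and the conjunction is assembled by $\wedge$-introduction.

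For closedness, the witnesses are closed because qualifiers are top-level definitions. The $\lambda$s emitted by \Certsname\ bind all context variables. Each oracle-returned term is typed under the applied context, so its free variables are among those binders. Hence the judgment holds in the empty context, and $\pf$ is closed.
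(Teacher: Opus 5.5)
Your proposal is correct and follows the paper's proof essentially step for step. The residual is a VC because heads become $\top$ and guards receive closed, $\kvar$-free candidate conjunctions. \Cref{lem:certs} at $\pi = \varepsilon$ then types the certificate at $\appasgn{c}{\asgns^\star}$; it applies because \Fixname\ returned, so every \Oraclename\ query made by \Certsname\ succeeded. Iterated $\exists$-introduction and $\to$-introduction then give a closed $\pf : \cstr' \to \cstr$.

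One small caveat concerns your ``more robust'' alternative: it does not involve the same query. \Certsname\ asks the \Oraclename\ about the whole conjunction $(\conjoinsol{\asgns^\star(\kvar)})(\overline{t})$, whereas \Weakenname\ asked about each candidate separately. So \Cref{lem:fix-fp}(b) shows only that a proof could be assembled by $\wedge$-introduction, not that \Certsname's own query succeeds, unless you assume something like \textit{(Oracle-Completeness)}. Your primary justification already coincides with the paper's, so the main argument is unaffected.
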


\begin{proof}
By \Cref{fig:fix}, $\cstr' = \Rems{\kcycs{\kvar}}{c}$ and, writing
$\kcycs{\kvar} = [\kvar_1, \ldots, \kvar_r]$,
\[
  \pf \;=\; \cLam{\hyp}{\big\langle \conjoinsol{\asgns^\star(\kvar_1)},
  \ldots, \conjoinsol{\asgns^\star(\kvar_r)};\;
  \Certs{\varepsilon}{\hyp}{c}\big\rangle}.
\]
$\cstr'$ is a VC: heads go to $\top$, and guards go to
$\appasgn{\cdot}{\asgns^\star}$ with \kvar-free candidates.

\emph{Typing.} Since \Fixname\ returned, every \Oraclename\ query made
by \Certsname\ succeeded, so \Cref{lem:certs} applies. Given
$\hyp : \cstr'$, \Cref{lem:certs} at $\pi = \varepsilon$ gives
$\Certs{\varepsilon}{\hyp}{c} : \appasgn{c}{\asgns^\star}$, which is $c$
with each $\kvar_j$ replaced by the closed, \kvar-free predicate
$\conjoinsol{\asgns^\star(\kvar_j)}$ of $\kvar_j$'s sort. Iterated
$\exists$-introduction at predicate sort (\Cref{fig:zap-typing})
rebinds the prefix, yielding $\ecstr{\kcyc{\kvar}}{c} = \cstr$, and
$\to$-introduction over $\hyp$ gives $\pf : \cstr' \to \cstr$. The
judgment holds under the empty context, so \pf\ is closed.
\end{proof}

Completeness assumes \textit{(Oracle-Completeness)}, as scoped in
\Cref{sec:fixpoint}.

\begin{theorem}[Completeness, \Cref{sec:fixpoint}]\label{thm:fix-complete-app}
If $\asgns^\star = \PredAbs{\cstr}{\quals}$, then
$\asgns^\star \preceq \asgns'$ for every $\quals$-assignment $\asgns'$
with $\cSat{\asgns'}{c}$.
\end{theorem}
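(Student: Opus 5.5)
This is essentially a corollary of \Cref{lem:fix-fp}(c); the work is in matching definitions. By \Cref{fig:predabs}, $\PredAbs{\cstr}{\quals}$ unfolds to $\Fixpoint{c}{\asgns_0}$, where $\asgns_0 = [\kvar \mapsto \Init{\kvar}{\quals} \mid \kvar \in \kcycs{\kvar}]$ is the initial assignment. This is literally the $\asgns^\star$ of \Cref{lem:fix-fp}. I will therefore first unfold $\PredAbsname$ to identify $\asgns^\star$ with that fixpoint. Termination, from \Cref{lem:fix-fp}(a), guarantees that the call returns, so $\asgns^\star$ is well defined.

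Next I fix an arbitrary $\quals$-assignment $\asgns'$ with $\cSat{\asgns'}{c}$ and invoke \Cref{lem:fix-fp}(c) under \textit{(Oracle-Completeness)}, as the theorem's scoping requires. This directly yields $\asgns^\star \preceq \asgns'$. The one side condition to check is that $\asgns'$ is a $\quals$-assignment in the sense the lemma's base case needs: every candidate $\qual_\inst \in \asgns'(\kvar)$ is built from some $\qual \in \quals$ and some $\kqinst{\kvar}{\qual}$ instance $\inst$. Then $\qual_\inst \in \Init{\kvar}{\quals} = \asgns_0(\kvar)$, which establishes the invariant $\asgns'(\kvar) \subseteq \asgns_0(\kvar)$ at the start of the induction.

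The main obstacle, if there is one, sits inside the inductive step of \Cref{lem:fix-fp}(c), not in this theorem. That step transports the derivation $\appasgn{\CtxOf{c}{\pfx}}{\asgns'} \vdash \qual_\inst(\overline{t})$ to the stronger context $\appasgn{\CtxOf{c}{\pfx}}{\asgns^k}$ using \Cref{lem:fix-mono}(b). Monotonicity applies because $\kvar$-applications occur in guards only as whole atoms, and a larger candidate set gives a conjunction with more conjuncts, from which the weaker one is recovered by projection. Since that lemma is already established, the theorem follows by composing the unfolding step with \Cref{lem:fix-fp}(c).
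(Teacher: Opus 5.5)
Your proposal is correct and matches the paper, whose proof of this theorem is simply ``This is \Cref{lem:fix-fp}(c)'' once $\PredAbsname$ is unfolded to $\Fixpointname$ on the initial assignment. Your check that every candidate of a $\quals$-assignment lies in the initial candidate set, which gives the base case, and your note that \textit{(Oracle-Completeness)} must be assumed, are exactly the hypotheses that lemma relies on.
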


\begin{proof}
This is \Cref{lem:fix-fp}(c).
\end{proof}

%% file: sections/C-stlc-rules.tex
% ===========================================================================
\section{Full Rules for $\lrk$}
\label{sec:stlc}
\label{app:stlc-rules}
% ===========================================================================

This appendix gives the complete definitions and rules for $\lrk$,
expanding \Cref{sec:lrk-paper}. We state them with named binders for
readability; the mechanization is locally nameless, with $\nu$ and every
quantifier a de Bruijn index and each binding rule carrying an explicit
freshness side condition. The two agree up to this change of convention,
so we omit the freshness bookkeeping below.

% ===========================================================================
\subsection{Syntax and Semantics}
\label{app:stlc:syntax}
% ===========================================================================

\Cref{fig:stlc-syntax} summarizes the full syntax of $\lrk$,
which extends the simply typed $\lambda$-calculus
with refinement types~\citep{refinement-tutorial,Borkowski24},
in particular, with Horn variables $\kvar$
that enable refinement inference via CHC solving.

\begin{figure}[t]
\centering
\[
\begin{array}{l@{\quad}r@{\,}r@{\;}c@{\;}l}
\multicolumn{5}{c}{x,y,\nu \in \Var \qquad \kappa \in \Kvar \qquad z \in \Int}\\[2pt]
\textbf{\textit{Base sort}}  & \Base \ni & b       & ::= & \bint \mid \bbool \\[2pt]
\textbf{\textit{Term}}       &           & \term   & ::= & \var \mid z \mid \ttrue \mid \tfalse \mid \term[\bint] + \term[\bint]
                                                      \mid \neg\,\term[\bbool] \mid \term[\bbool] \wedge \term[\bbool] \\[2pt]
\textbf{\textit{Formula}}    &           & \varphi & ::= & \top \mid \bot \mid \inteq{\term}{\term} \mid \term[\bint] \le \term[\bint]
                                                      \mid \neg\varphi \mid \varphi \wedge \varphi \mid \varphi \vee \varphi \\[2pt]
                             &           &         & \mid & \varphi \to \varphi \mid \forall x{:}b.\,\varphi \mid \exists x{:}b.\,\varphi \\[2pt]
\textbf{\textit{Refinements}}&           & \reft   & ::= & \varphi \mid \kappabox{\kappa(\overline{\term})} \\[2pt]
\textbf{\textit{Type}}       &           & \tau    & ::= & \rtype{\nu}{b}{\reft} \mid x{:}\tau \to \tau \\[2pt]
\textbf{\textit{Primitives}} &           & \oplus  & ::= & + \mid \le \mid \neg \mid \wedge \\[2pt]
\textbf{\textit{Expression}} &           & e       & ::= & \val \mid x \mid \oplus(\overline{e}) \mid
                                                       e\;e \mid \letin{x = e}{e} \mid \ite{e}{e}{e} \mid \asc{e}{\tau} \\[2pt]
\textbf{\textit{Values}}     &  \Val \ni & \val    & ::= & z \mid \etrue \mid \efalse \mid \lam{x}{e} \\[2pt]
\textbf{\textit{Type env.}}  &           & \Gamma  & ::= & \cdot \mid \Gamma,\, x{:}\tau
\end{array}
\]
\caption{Full syntax of $\lrk$, extending \Cref{fig:lrk-syntax} with the
term and formula constructs elided in the body.}
\label{fig:stlc-syntax}
\end{figure}

\mypara{Expressions and Evaluation}
The grammar follows a standard call-by-value language with arithmetic
and boolean expressions.
Let bindings are required because the type system enforces
ANF (A-Normal Form~\citep{compiling-with-continuations})
to accommodate dependent function applications~\citep{refinement-tutorial}.
We give expressions a standard, substitution-based,
big-step, call-by-value operational semantics, written
$\evalto{e}{\val}$ ($e$ evaluates to $\val$).
The full rules appear in \Cref{fig:stlc-bigstep},
where $u$ ranges over the boolean values $\{\etrue, \efalse\}$.
Type annotations are erased at runtime---$\asc{e}{\tau}$
evaluates as $e$---so refinements play no part in evaluation
and serve only for static checking.

\begin{figure}[t]
\begin{mathpar}
\inferrule*[right=\rname{E-Val}]
  { }
  {\evalto{\val}{\val}}
\and
\inferrule*[right=\rname{E-Ann}]
  {\evalto{e}{\val}}
  {\evalto{\asc{e}{\tau}}{\val}}
\and
\inferrule*[right=\rname{E-Let}]
  {\evalto{e_1}{\val_1} \\ \evalto{e_2\esubst{x}{\val_1}}{\val_2}}
  {\evalto{\letin{x = e_1}{e_2}}{\val_2}}
\and
\inferrule*[right=\rname{E-App}]
  {\evalto{e_1}{\lam{x}{e}} \\ \evalto{e_2}{\val_a} \\ \evalto{e\esubst{x}{\val_a}}{\val}}
  {\evalto{e_1\;e_2}{\val}}
\and
\inferrule*[right=\rname{E-Add}]
  {\evalto{e_1}{z_1} \\ \evalto{e_2}{z_2}}
  {\evalto{e_1 + e_2}{z_1 + z_2}}
\and
\inferrule*[right=\rname{E-Leq}]
  {\evalto{e_1}{z_1} \\ \evalto{e_2}{z_2}}
  {\evalto{e_1 \le e_2}{(z_1 \le z_2)}}
\and
\inferrule*[right=\rname{E-Not}]
  {\evalto{e}{u}}
  {\evalto{\neg e}{\neg u}}
\and
\inferrule*[right=\rname{E-And}]
  {\evalto{e_1}{u_1} \\ \evalto{e_2}{u_2}}
  {\evalto{e_1 \wedge e_2}{u_1 \wedge u_2}}
\and
\inferrule*[right=\rname{E-IfT}]
  {\evalto{e_0}{\etrue} \\ \evalto{e_1}{\val}}
  {\evalto{\ite{e_0}{e_1}{e_2}}{\val}}
\and
\inferrule*[right=\rname{E-IfF}]
  {\evalto{e_0}{\efalse} \\ \evalto{e_2}{\val}}
  {\evalto{\ite{e_0}{e_1}{e_2}}{\val}}
\end{mathpar}
\caption{Big-step operational semantics of $\lrk$.}
\label{fig:stlc-bigstep}
\label{fig:bigstep}
\end{figure}

\begin{lemma}[Determinism]\label{lem:stlc:det}
If $\evalto{e}{\val_1}$ and $\evalto{e}{\val_2}$ then $\val_1 = \val_2$.
\end{lemma}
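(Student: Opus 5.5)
We prove the statement by induction on the derivation of $\evalto{e}{\val_1}$, generalizing over $\val_2$. In each case we invert the second derivation $\evalto{e}{\val_2}$. The big-step rules of \Cref{fig:stlc-bigstep} are syntax-directed on the head constructor of $e$, with two exceptions: values, and the conditional, which has two rules. So inversion leaves only a small number of candidate last rules for the second derivation.

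\emph{Values.} If $e$ is a value $\val$, only \rname{E-Val} applies. No other rule has a value as its conclusion's subject: values are literals or $\lambda$-abstractions, and none of the other rules' subjects have that form. Hence $\val_1 = \val = \val_2$.

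\emph{Structural cases.} For \rname{E-Ann}, \rname{E-Add}, \rname{E-Leq}, \rname{E-Not}, and \rname{E-And}, the second derivation must end with the same rule. The induction hypotheses on the premises identify the intermediate results (for example $z_1$, $z_2$, or $u$). The conclusions are then equal because they are the same function (integer addition, comparison, boolean negation or conjunction) applied to equal arguments.

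\emph{Sequential cases.} For \rname{E-Let}, the induction hypothesis on the first premise gives the same $\val_1$. The substituted bodies $e_2\esubst{x}{\val_1}$ are therefore syntactically identical, and the induction hypothesis on the second premise finishes the case. \rname{E-App} is analogous:
\begin{itemize}
\item the first premise gives equal abstractions $\lam{x}{e}$, which forces equal bodies;
\item the second premise gives equal arguments $\val_a$;
\item the third premise then concerns the identical expression $e\esubst{x}{\val_a}$.
\end{itemize}
The induction must be on the derivation, not on $e$, because these substituted expressions are not subterms of $e$.

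\emph{Conditional.} This is the only case needing care. If the first derivation ends in \rname{E-IfT} and the second in \rname{E-IfF}, the induction hypothesis on the guard gives $\etrue = \efalse$, which is impossible since they are distinct constructors. The mixed cases are therefore refuted, and the matching cases follow from the induction hypothesis on the chosen branch.

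\emph{Main obstacle.} The only real obstacle is in the mechanization: the substitution-based semantics in the locally nameless encoding. The equal-inputs step in \rname{E-Let} and \rname{E-App} requires that opening a body with equal values yields equal terms. This is immediate, since opening is a function, but the induction must be set up with $\val_2$ universally quantified before inducting. Otherwise the hypothesis is too weak to apply at the substituted body.
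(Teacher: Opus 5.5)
Your proof is correct. The paper states Determinism without a written proof, leaving it to the Lean mechanization, and your approach is the standard one: induct on the first derivation with $\val_2$ generalized, invert the second derivation, and refute the mixed \rname{E-IfT}/\rname{E-IfF} cases because $\etrue$ and $\efalse$ are distinct constructors.
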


\mypara{Refinements}
We build refinement types on top of a \emph{separate}
first-order language, to eschew the circularities in
the meta-theory (where types would depend on expressions,
which  would themselves contain types.)
This first-order language is stratified into intrinsically
typed \emph{terms} ($\term$) and \emph{formulas} ($\varphi$).
A refinement $\reft$ is either a first-order formula ($\varphi$)
over integers and booleans, or a Horn application $\kappa(\overline{\term})$
---highlighted in gray in \Cref{fig:stlc-syntax}---representing existentially
quantified predicates over their arguments $\overline{\term}$ that the solver
must infer.

\mypara{Types}
A type $\tau$ is either
a \emph{refined base type} $\rtype{\nu}{b}{\reft}$ or a
\emph{dependent arrow} $x{:}\tau \to \tau$, which
names its argument $x$ so the codomain may mention it.
A refined base type restricts its sort
$b \in \{\bint, \bbool\}$ to the values $\nu$
satisfying the refinement $\reft$ (which may
be an unknown Horn application).

\smallskip
Next, we provide a semantics for refinements
by \emph{interpreting} them as \sclean propositions,
which provides a foundation for declarative typing (\cref{app:stlc:decl})
and constraint generation (\cref{app:stlc:algo}).

\mypara{Interpreting Formulas}
We interpret each base type $b$ as a \sclean type,
writing $\sem{b}$, where $\sem{\bint}$ denotes the
\sclean integers $\Int$ and $\sem{\bbool}$ denotes
the \sclean booleans $\Bool$.
A term $\term$ denotes an element of $\sem{b}$ under
a closing substitution ${\renv : \Var \to \Val}$ that
maps its free variables to values, written $\sem{\term}_{\renv}$.
A formula denotes a \sclean proposition over values
and we write $\sem{\varphi}_{\renv}$ for its
interpretation as a \sclean \Prop.
The interpretation is the natural one that maps
constructs to \sclean counterparts:
\[
\begin{array}{r@{\;}c@{\;}l@{\qquad\quad}r@{\;}c@{\;}l}
\sem{x}_{\renv}                    & \eqdef & \renv(x) &
\sem{\inteq{t_1}{t_2}}_{\renv}           & \eqdef & \sem{t_1}_{\renv} = \sem{t_2}_{\renv} \\[2pt]
\sem{z}_{\renv}                    & \eqdef & z &
\sem{t_1 \le t_2}_{\renv}                & \eqdef & \sem{t_1}_{\renv} \le \sem{t_2}_{\renv} \\[2pt]
\sem{\ttrue}_{\renv},\ \sem{\tfalse}_{\renv} & \eqdef & \etrue,\ \efalse &
\sem{\neg \varphi}_{\renv}               & \eqdef & \neg\, \sem{\varphi}_{\renv} \\[2pt]
\sem{t_1 + t_2}_{\renv}            & \eqdef & \sem{t_1}_{\renv} + \sem{t_2}_{\renv} &
\sem{\varphi_1 \wedge \varphi_2}_{\renv} & \eqdef & \sem{\varphi_1}_{\renv} \wedge \sem{\varphi_2}_{\renv} \\[2pt]
\sem{\neg\, t}_{\renv}             & \eqdef & \neg\, \sem{t}_{\renv} &
\sem{\varphi_1 \vee \varphi_2}_{\renv}   & \eqdef & \sem{\varphi_1}_{\renv} \vee \sem{\varphi_2}_{\renv} \\[2pt]
\sem{t_1 \wedge t_2}_{\renv}       & \eqdef & \sem{t_1}_{\renv} \wedge \sem{t_2}_{\renv} &
\sem{\varphi_1 \to \varphi_2}_{\renv}    & \eqdef & \sem{\varphi_1}_{\renv} \to \sem{\varphi_2}_{\renv} \\[2pt]
\sem{\top}_{\renv}         & \eqdef & \top &
\sem{\forall x{:}b.\, \varphi}_{\renv}   & \eqdef & \forall v \in \sem{b}.\, \sem{\varphi}_{\renv\subst{x}{v}} \\[2pt]
\sem{\bot}_{\renv}         & \eqdef & \bot &
\sem{\exists x{:}b.\, \varphi}_{\renv}   & \eqdef & \exists v \in \sem{b}.\, \sem{\varphi}_{\renv\subst{x}{v}}
\end{array}
\]
Formulas are $\kvar$-free, so their interpretation does not consult the
$\kenv$-assignment; only refinements do.

\mypara{Interpreting Refinements}
To give meaning to a refinement $\reft$
we must also interpret Horn applications
$\kvapp{\kvar}{\overline{\term}}$.
We do this by quantifying at the meta level
over a \emph{\kenv-assignment} which fixes
the meaning of each $\kvar$ as a \sclean
predicate.
Crucially, this \kenv will itself map the \emph{syntactic}
horn variables to existentially bound \sclean predicates
that that solver will then instantiate.
Concretely, a \kenv-assignment has the following \sclean type
${\Kvar \rightarrow \List\,(\Sigma\,b: \Base, \sem{b}) \rightarrow \Prop}$.
Given a $\kenv$-assignment, a Horn application is interpreted
by looking up the predicate $\kenv$ assigns to $\kappa$ and
applying it to the interpreted arguments:
$$
\sem{\kvapp{\kvar}{\overline{\term}}}^{\kenv}_{\renv} \doteq \kenv\;\kappa\;\overline{(b, \sem{\term}_{\renv})}
$$

\mypara{Interpreting Types}
Finally, we interpret types as predicates on values.
We write $\tyden{\renv}{\val}{\tau}$ to mean that the value $\val$ inhabits the interpretation of
$\tau$.
A refined base type denotes the base values satisfying the (interpretation)
of its refinement, and a dependent arrow denotes the lambdas that send every
argument in the domain to a result in the codomain:
\[
\begin{array}{r@{\;}c@{\;}l}
\tyden{\renv}{\val}{\rtype{\nu}{b}{\reft}} & \doteq &
  \val \in \sem{b} \;\wedge\; \sem{\reft}^{\kenv}_{\renv\subst{\nu}{\val}} \\[4pt]
\tyden{\renv}{\val}{x{:}\tau_1 \to \tau_2} & \doteq &
  \val = \lam{x}{e} \;\wedge\; \forall \val_a.\;
    \tyden{\renv}{\val_a}{\tau_1} \Rightarrow{}
    \exists \val_r.\; \evalto{e\esubst{x}{\val_a}}{\val_r} \;\wedge\;
    \tyden{\renv\subst{x}{\val_a}}{\val_r}{\tau_2}
\end{array}
\]

% ===========================================================================
\subsection{Declarative Typing}
\label{app:stlc:decl}
% ===========================================================================

We define a declarative typing judgment $\hasty{\kenv}{\Gamma}{e}{\tau}$ which is
mostly unremarkable except for being parameterized by a $\kenv$-assignment.
The full rules are given in \Cref{fig:stlc-decl-typing},
with subtyping in \Cref{fig:stlc-decl-sub}.
We highlight a couple of aspects of the system:

\mypara{Typing Variables with Selfification}
When typing variables we strengthen its refinement by giving its \emph{selfified}
type~\citep{dyn-dependent-types}, which crucially enables path-sensitive ``occurrence''
typing~\citep{typed-scheme}.
\begin{mathpar}
\inferrule*[right=\rname{T-Var}]
  {(x{:}\tau) \in \Gamma}
  {\hasty{\kenv}{\Gamma}{x}{\selfop(x, \tau)}}
  \qquad\qquad
  \begin{array}[b]{r@{\;}c@{\;}l}
  \selfop(x, \rtype{\nu}{b}{\reft}) & \eqdef & \rtype{\nu}{b}{\nu = x} \\[2pt]
  \selfop(x, y:s \to t)           & \eqdef & y:s \to t
  \end{array}
\end{mathpar}
The mechanization synthesizes the bare singleton $\rtype{\nu}{b}{\nu = x}$ rather than
$\rtype{\nu}{b}{\reft \land \nu = x}$, dropping $\reft$ so that synthesized types stay
$\kvar$-free; no information is lost, since $\reft$ remains recorded against $x$ in the
typing context and is recovered from there for subtyping.

\mypara{Typing Applications}
The typing judgment enforces ANF so that a function is always applied to a \emph{variable}, which can
be substituted into the dependent codomain without placing an arbitrary expression inside a
refinement.
(The alternative is to extend the language with \emph{existential types}~\citep{dependent-contracts},
which exchanges the restriction for more complex subtyping and metatheory.)
\[
\inferrule*[right=\rname{T-App}]
  {\hasty{\kenv}{\Gamma}{e}{x{:}\tau_1 \to \tau_2} \\
   \hasty{\kenv}{\Gamma}{y}{\tau_1}}
  {\hasty{\kenv}{\Gamma}{e\, y}{\tau_2\,\esubst{x}{y}}}
\]

\begin{figure}[t]
\begin{mathpar}
\inferrule*[right=\rname{T-Var}]
  {(x{:}\tau) \in \Gamma}
  {\hasty{\kenv}{\Gamma}{x}{\selfop(x, \tau)}}
\and
\inferrule*[right=\rname{T-Int}]
  { }
  {\hasty{\kenv}{\Gamma}{z}{\rtype{\nu}{\bint}{\nu = z}}}
\and
\inferrule*[right=\rname{T-Bool}]
  {u \in \{\etrue, \efalse\}}
  {\hasty{\kenv}{\Gamma}{u}{\rtype{\nu}{\bbool}{\nu = u}}}
\and
\inferrule*[right=\rname{T-Lam}]
  {x \notin \domop(\Gamma) \\
   \hasty{\kenv}{\Gamma,\, x{:}\tau_1}{e}{\tau_2}}
  {\hasty{\kenv}{\Gamma}{\lam{x}{e}}{x{:}\tau_1 \to \tau_2}}
\and
\inferrule*[right=\rname{T-App}]
  {\hasty{\kenv}{\Gamma}{e}{x{:}\tau_1 \to \tau_2} \\
   \hasty{\kenv}{\Gamma}{y}{\tau_1}}
  {\hasty{\kenv}{\Gamma}{e\, y}{\tau_2\,\esubst{x}{y}}}
\and
\inferrule*[right=\rname{T-Let}]
  {\hasty{\kenv}{\Gamma}{e_1}{s} \\
   x \notin \domop(\Gamma) \cup \fvop(\tau) \\
   \hasty{\kenv}{\Gamma,\, x{:}s}{e_2}{\tau}}
  {\hasty{\kenv}{\Gamma}{\letin{x = e_1}{e_2}}{\tau}}
\and
\inferrule*[right=\rname{T-Ann}]
  {\hasty{\kenv}{\Gamma}{e}{\tau}}
  {\hasty{\kenv}{\Gamma}{\asc{e}{\tau}}{\tau}}
\and
\inferrule*[right=\rname{T-Sub}]
  {\hasty{\kenv}{\Gamma}{e}{s} \\
   \subD{\Gamma}{s}{\tau}}
  {\hasty{\kenv}{\Gamma}{e}{\tau}}
\and
\inferrule*[right=\rname{T-Add}]
  {(x{:}\rtype{\nu}{\bint}{\reft_1}) \in \Gamma \\
   (y{:}\rtype{\nu}{\bint}{\reft_2}) \in \Gamma}
  {\hasty{\kenv}{\Gamma}{x + y}{\rtype{\nu}{\bint}{\nu = x + y}}}
\and
\inferrule*[right=\rname{T-Leq}]
  {(x{:}\rtype{\nu}{\bint}{\reft_1}) \in \Gamma \\
   (y{:}\rtype{\nu}{\bint}{\reft_2}) \in \Gamma}
  {\hasty{\kenv}{\Gamma}{x \le y}{\rtype{\nu}{\bbool}{(\nu = \etrue \to x \le y) \wedge (x \le y \to \nu = \etrue)}}}
\and
\inferrule*[right=\rname{T-Not}]
  {(x{:}\rtype{\nu}{\bbool}{\reft}) \in \Gamma}
  {\hasty{\kenv}{\Gamma}{\neg x}{\rtype{\nu}{\bbool}{\nu = \neg x}}}
\and
\inferrule*[right=\rname{T-And}]
  {(x{:}\rtype{\nu}{\bbool}{\reft_1}) \in \Gamma \\
   (y{:}\rtype{\nu}{\bbool}{\reft_2}) \in \Gamma}
  {\hasty{\kenv}{\Gamma}{x \wedge y}{\rtype{\nu}{\bbool}{\nu = x \wedge y}}}
\and
\inferrule*[right=\rname{T-If}]
  {(x{:}\rtype{\nu}{\bbool}{\reft}) \in \Gamma \\
   y \notin \domop(\Gamma) \\\\
   \hasty{\kenv}{\Gamma,\, y{:}\rtype{\nu}{\bbool}{x = \etrue}}{e_1}{\tau} \\\\
   \hasty{\kenv}{\Gamma,\, y{:}\rtype{\nu}{\bbool}{x = \efalse}}{e_2}{\tau}}
  {\hasty{\kenv}{\Gamma}{\ite{x}{e_1}{e_2}}{\tau}}
\end{mathpar}
\caption{Declarative typing for $\lrk$. The primitive rules
(\rname{T-Add}, \rname{T-Leq}, \rname{T-Not}, \rname{T-And}) and \rname{T-If}
require variable arguments, enforcing ANF; \rname{T-If} records the path
condition $x = \etrue$ (resp.\ $x = \efalse$) in each branch via a fresh
binding $y$ whose refinement mentions $x$ but not $\nu$, leaving the
scrutinee's own refinement $\reft$ in scope.}
\label{fig:stlc-decl-typing}
\end{figure}

\mypara{Subtyping}
Most rules in the declarative system are syntax directed; subtyping is the only place
where refinements are actually compared, so it is the place where essentially all of the
interesting checking happens.
A subsumption rule lets an expression of type $\tau_1$ be used at any supertype $\tau_2$, deferring
to a subtyping judgment $\subD{\Gamma}{\tau_1}{\tau_2}$.
For arrows it is the standard rule, contravariant
in the domain and covariant in the codomain; for two refined base types, subtyping holds when
the first refinement implies the second under the assumptions in the context:
\begin{mathpar}
\inferrule*[right=\rname{S-Base}]
  {\forall \renv, \extract{\renv}{\Gamma} \to \reftden{\reft_1} \to \reftden{\reft_2}}
  {\subD{\Gamma}{\rtype{\nu}{b}{\reft_1}}{\rtype{\nu}{b}{\reft_2}}}
\and
\inferrule*[right=\rname{S-Fun}]
  {\subD{\Gamma}{\tau_1'}{\tau_1} \\
   \subD{\Gamma, x{:}\tau_1'}{\tau_2}{\tau_2'}}
  {\subD{\Gamma}{x{:}\tau_1 \to \tau_2}{x{:}\tau_1' \to \tau_2'}}
\end{mathpar}
Here $\extract{\renv}{\Gamma}$ \emph{extracts} the assumptions from $\Gamma$
by conjoining the base refinements
\[
\begin{array}{r@{\;}c@{\;}l}
\extract{\renv}{\cdot}                              & \eqdef & \top \\[2pt]
\extract{\renv}{\Gamma, x{:}\rtype{\nu}{b}{\reft}}  & \eqdef &
  \extract{\renv}{\Gamma} \wedge \reftden{\reft\esubst{\nu}{x}} \\[2pt]
\extract{\renv}{\Gamma, x{:}\tau_1 \to \tau_2}      & \eqdef & \extract{\renv}{\Gamma}
\end{array}
\]
Extraction also gives us \emph{entailment} of a refinement implication in
a context, stated semantically so that it applies uniformly whether the
refinements are formulas or Horn applications:
\[
\entl{\kenv}{\Gamma}{\forall \nu{:}b.\, \reft_1 \to \reft_2} \;\eqdef\;
  \forall \renv.\; \extract{\renv}{\Gamma} \to
  \forall v \in \sem{b}.\;
    \reftden[\kenv][{\renv\subst{\nu}{v}}]{\reft_1} \to
    \reftden[\kenv][{\renv\subst{\nu}{v}}]{\reft_2}
\]
This is exactly the premise of \rname{S-Base} in \Cref{fig:stlc-decl-sub}.

\begin{figure}[t]
\begin{mathpar}
\inferrule*[right=\rname{S-Base}]
  {\entl{\kenv}{\Gamma}{\forall \nu{:}b.\; \reft_1 \to \reft_2}}
  {\subD{\Gamma}{\rtype{\nu}{b}{\reft_1}}{\rtype{\nu}{b}{\reft_2}}}
\and
\inferrule*[right=\rname{S-Fun}]
  {\subD{\Gamma}{\tau_1'}{\tau_1} \\
   \subD{\Gamma, x{:}\tau_1'}{\tau_2}{\tau_2'}}
  {\subD{\Gamma}{x{:}\tau_1 \to \tau_2}{x{:}\tau_1' \to \tau_2'}}
\end{mathpar}
\caption{Declarative subtyping for $\lrk$.}
\label{fig:stlc-decl-sub}
\end{figure}

\mypara{Soundness}
We prove the declarative system sound: a well-typed program never gets stuck, evaluating
to a value that inhabits the interpretation of its type.
The proof factors through two lemmas: subtyping is semantic
inclusion of type denotations, and a fundamental lemma that evaluates
every well-typed term into its denotation under a closing substitution.

\begin{lemma}[Subtyping is Semantic Inclusion]\label{lem:stlc:sub-incl}
If $\subD{\Gamma}{s}{t}$, then for every $\renv$ such that
$\extract{\renv}{\Gamma}$ and every value $\val$,
$\tyden{\renv}{\val}{s}$ implies $\tyden{\renv}{\val}{t}$.
\end{lemma}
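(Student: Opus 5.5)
We prove the statement by induction on the derivation of $\subD{\Gamma}{s}{t}$, generalizing over the closing substitution $\renv$ and the value $\val$. The two rules \rname{S-Base} and \rname{S-Fun} give the two cases.

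\emph{Case \rname{S-Base}.} Here $s = \rtype{\nu}{b}{\reft_1}$ and $t = \rtype{\nu}{b}{\reft_2}$. Assume $\extract{\renv}{\Gamma}$ and $\tyden{\renv}{\val}{s}$. Unfolding the interpretation gives $\val \in \sem{b}$ and $\reftden[\kenv][{\renv\subst{\nu}{\val}}]{\reft_1}$. The premise $\entl{\kenv}{\Gamma}{\forall \nu{:}b.\,\reft_1 \to \reft_2}$, instantiated at $\renv$ and at $v = \val$, yields $\reftden[\kenv][{\renv\subst{\nu}{\val}}]{\reft_2}$. Together with $\val \in \sem{b}$, this is $\tyden{\renv}{\val}{t}$. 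The only bookkeeping is checking that the $\renv\subst{\nu}{v}$ in the entailment matches the $\renv\subst{\nu}{\val}$ in the type denotation. In the locally nameless mechanization, this amounts to an opening/instantiation lemma for refinements.

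\emph{Case \rname{S-Fun}.} Here $s = x{:}\tau_1 \to \tau_2$ and $t = x{:}\tau_1' \to \tau_2'$. Assume $\extract{\renv}{\Gamma}$ and $\tyden{\renv}{\val}{s}$. Then $\val = \lam{x}{e}$, and the same lambda will witness $\tyden{\renv}{\val}{t}$. Take $\val_a$ with $\tyden{\renv}{\val_a}{\tau_1'}$.
\begin{enumerate}
\item The first induction hypothesis (contravariant premise $\subD{\Gamma}{\tau_1'}{\tau_1}$, at the same $\renv$) gives $\tyden{\renv}{\val_a}{\tau_1}$.
\item The assumption on $\val$ then provides $\val_r$ with $\evalto{e\esubst{x}{\val_a}}{\val_r}$ and $\tyden{\renv\subst{x}{\val_a}}{\val_r}{\tau_2}$.
\item To use the second premise $\subD{\Gamma, x{:}\tau_1'}{\tau_2}{\tau_2'}$, apply its induction hypothesis at the extended substitution $\renv\subst{x}{\val_a}$. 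This needs $\extract{\renv\subst{x}{\val_a}}{\Gamma, x{:}\tau_1'}$.
\item That conclusion transports $\val_r$ into $\tau_2'$, completing the case.
\end{enumerate}

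The main obstacle is the extraction fact in step 3. There are two subgoals.
\begin{enumerate}
\item \emph{Weakening of extraction.} We must show $\extract{\renv\subst{x}{\val_a}}{\Gamma} = \extract{\renv}{\Gamma}$. This holds because $x$ is fresh for $\Gamma$ (by the freshness side condition, which is implicit in the named presentation). We prove it by induction on $\Gamma$, using that refinement denotations depend only on free variables.
\item \emph{The new conjunct.} If $\tau_1'$ is a function type, extraction adds nothing and we are done. If $\tau_1' = \rtype{\nu}{b}{\reft}$, we need $\reftden[\kenv][{\renv\subst{x}{\val_a}}]{\reft\esubst{\nu}{x}}$. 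This follows from $\tyden{\renv}{\val_a}{\tau_1'}$ via a substitution lemma: evaluating $\reft\esubst{\nu}{x}$ under $\renv\subst{x}{\val_a}$ equals evaluating $\reft$ under $\renv\subst{\nu}{\val_a}$, given freshness of $x$ in $\reft$. For Horn applications, this reduces to the same lemma on the argument terms.
\end{enumerate}

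I would state these substitution and freshness lemmas for terms, formulas and refinements first. Each is proved by structural induction, with the $\forall$/$\exists$ formula cases requiring commuting of substitution updates. The main proof then reduces to the two short cases above.
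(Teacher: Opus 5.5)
Your proof is correct and follows the expected route, which is the same structure the paper's development relies on: induction on the subtyping derivation, generalized over $\renv$ and $\val$. In that induction, \rname{S-Base} is immediate from the entailment premise, and \rname{S-Fun} reduces to weakening of extraction plus a substitution lemma for the new base binding. The one point to state carefully is what the freshness side condition must give you. Weakening needs $x$ to be absent from the free variables of $\Gamma$'s refinements (and of $\tau_1'$), not merely from $\mathrm{dom}(\Gamma)$. If you only have $x \notin \mathrm{dom}(\Gamma)$, you additionally need $\Gamma$ to be well-formed, and the lemma as stated carries no such hypothesis.
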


\begin{lemma}[Fundamental Lemma]\label{lem:stlc:fundamental}
Suppose $\hasty{\kenv}{\Gamma}{e}{\tau}$ and $\renv$ maps each variable
bound in $\Gamma$ to a closed value with
$\tyden{\renv}{\renv(x)}{\tau_x}$ for every $(x{:}\tau_x) \in \Gamma$.
Then there exists a value $\val$ such that $\evalto{e\,\renv}{\val}$ and
$\tyden{\renv}{\val}{\tau}$, where $e\,\renv$ applies $\renv$ as a
substitution to $e$.
\end{lemma}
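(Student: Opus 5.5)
We prove the Fundamental Lemma by induction on the derivation of $\hasty{\kenv}{\Gamma}{e}{\tau}$, generalizing over $\renv$. We assume throughout that $\renv$ is a closing substitution satisfying the stated hypothesis: $\renv(x)$ is a closed value with $\tyden{\renv}{\renv(x)}{\tau_x}$ for each $(x{:}\tau_x) \in \Gamma$. From this hypothesis we first derive $\extract{\renv}{\Gamma}$. For each base binding, the denotation $\tyden{\renv}{\renv(x)}{\rtype{\nu}{b}{\reft}}$ unfolds to $\reftden[\kenv][{\renv\subst{\nu}{\renv(x)}}]{\reft}$. This is $\reftden{\reft\esubst{\nu}{x}}$ by a routine substitution lemma relating syntactic substitution in refinements to updates of $\renv$. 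Having $\extract{\renv}{\Gamma}$ is exactly what \Cref{lem:stlc:sub-incl} needs.

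The value and primitive cases are direct. For \rname{T-Int} and \rname{T-Bool}, \rname{E-Val} evaluates the constant to itself and the singleton refinement holds by reflexivity. For \rname{T-Var}, we have $x\,\renv = \renv(x)$, which evaluates to itself. Its denotation at $\selfop(x,\tau)$ is its denotation at $\tau$ for arrows, and the equation $\nu = x$ for base types, which holds trivially. For \rname{T-Add}, \rname{T-Leq}, \rname{T-Not} and \rname{T-And}, the hypothesis on $\renv$ tells us $\renv(x)$ and $\renv(y)$ are integer (resp.\ boolean) values. The corresponding \textsc{E-} rule then fires, and the result refinement holds by computation. For \rname{T-Ann}, the annotation is erased by \rname{E-Ann} and the induction hypothesis applies. \rname{T-Sub} combines the induction hypothesis with \Cref{lem:stlc:sub-incl}, using $\extract{\renv}{\Gamma}$. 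For \rname{T-If}, $\renv(x)$ is a boolean $u$. We extend $\renv$ with the fresh $y$ bound to $u$; its refinement $x = u$ holds, so the extended substitution satisfies the hypothesis for $\Gamma, y{:}\ldots$. The induction hypothesis on the matching branch, together with \rname{E-IfT}/\rname{E-IfF}, finishes the case. Here we also use that $y$ is fresh and occurs neither in $e_i$ nor in $\tau$, so denotations under $\renv$ and $\renv[y\mapsto u]$ agree (a weakening lemma for denotations).

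The binding cases are where the induction must be phrased carefully. For \rname{T-Lam}, the term $(\lam{x}{e})\,\renv$ is already a value. To show it is in $\tyden{\renv}{\cdot}{x{:}\tau_1\to\tau_2}$, we take any $\val_a$ in $\tau_1$ and apply the induction hypothesis under $\renv[x\mapsto\val_a]$. This requires the substitution-commutation fact $(e\,\renv)\esubst{x}{\val_a} = e\,(\renv[x\mapsto\val_a])$, which holds because $x$ is fresh and values are closed. For \rname{T-Let}, we evaluate $e_1$ by the induction hypothesis, extend $\renv$ with its value, and evaluate $e_2$. The condition $x \notin \fvop(\tau)$ lets us transport the result's denotation back to $\renv$ by weakening. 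For \rname{T-App}, the induction hypothesis on $e$ yields $\lam{x}{e'}$ together with its arrow denotation. Instantiating that denotation at $\val_a = \renv(y)$, which is in $\tau_1$ by the induction hypothesis on $y$, gives $\val_r$ with $\tyden{\renv[x\mapsto\renv(y)]}{\val_r}{\tau_2}$. Evaluation then follows from \rname{E-App}, using \rname{E-Val} for $y\,\renv$.

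We expect the main obstacle to be the \rname{T-App} conclusion. There we must show $\tyden{\renv}{\val_r}{\tau_2\esubst{x}{y}}$ from $\tyden{\renv[x\mapsto\renv(y)]}{\val_r}{\tau_2}$. This is a substitution lemma for type denotations, proved by induction on $\tau_2$. It bottoms out in the corresponding lemma for refinements, and for Horn applications it reduces to $\sem{\term\esubst{x}{y}}_\renv = \sem{\term}_{\renv[x\mapsto \renv(y)]}$, since $\kenv$ is applied to the interpreted arguments. The arrow case of this lemma needs care with binders, which in the locally nameless mechanization become index-shifting lemmas. We would establish this substitution lemma, together with the weakening lemma, as standalone results before starting the main induction.
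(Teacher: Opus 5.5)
The paper gives no written proof of this lemma; it is discharged in the mechanization, and the text only says type soundness factors through \Cref{lem:stlc:sub-incl} and this lemma. Your structure is the standard one and fits that factorization: induction on the typing derivation generalizing over $\renv$, deriving $\extract{\renv}{\Gamma}$ so that \Cref{lem:stlc:sub-incl} closes \rname{T-Sub}, and a substitution lemma for denotations to close \rname{T-App}.

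However, the \rname{T-Lam} case does not go through as written. You apply the induction hypothesis at $\Gamma, x{:}\tau_1$ with $\renv\subst{x}{\val_a}$. The lemma's hypothesis demands that every value in the range of the substitution be closed, and nothing gives you that $\val_a$ is closed. The arrow case of the denotation quantifies over \emph{all} values, and open lambdas can inhabit arrow types. For example, $\lam{w}{q}$ with $q$ free inhabits $w{:}\rtype{\nu}{\bint}{\bot}\to\rtype{\nu}{\bint}{\top}$, because the inner quantifier is vacuous. Your remark that ``values are closed'' is exactly what the statement does not grant, which is why closedness appears as an explicit hypothesis. The other extensions of $\renv$ are fine. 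In \rname{T-If} you add a boolean constant. In \rname{T-Let}, $\val_1$ is the result of evaluating $e_1\,\renv$, which is closed given that typing is scoped ($\fvop(e_1)\subseteq\domop(\Gamma)$) and that evaluation preserves closedness; you should state this.

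The repair is to the invariant, not the case analysis. One option is to prove the stronger statement in which $\renv$ may send the variables of $\Gamma$ to arbitrary values. Closedness of the range is used only to commute $\renv$ with $\esubst{x}{\val_a}$, and that commutation really needs only that $x$ be fresh for the range of $\renv$ and for $e$. The other option is to read the arrow denotation as ranging over closed arguments, as closed logical relations usually do; then your argument goes through unchanged.

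Separately, \rname{T-Lam} also needs the weakening you invoke in the other binding cases. Both $\tyden{\renv\subst{x}{\val_a}}{\val_a}{\tau_1}$ and the hypothesis for the old entries of $\Gamma$ survive the extension only if $x\notin\fvop(\tau_1)$ and $x$ is not free in the types of $\Gamma$. The printed side condition $x\notin\domop(\Gamma)$ does not provide this. As with your assumption $y\notin\fvop(e_i)\cup\fvop(\tau)$ in \rname{T-If}, say explicitly that you rely on the freshness conditions the paper elides from the named rules.
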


\noindent
Instantiating the fundamental lemma with the empty context and empty
substitution yields type soundness, restating \Cref{thm:lrk-type-soundness}.

\begin{theorem}[Type soundness]\label{thm:stlc:type-soundness}
If $\hasty{\kenv}{\cdot}{e}{\tau}$, then there exists a value $\val$ such that $\evalto{e}{\val}$ and
$\tyden{\emptyset}{\val}{\tau}$.
\end{theorem}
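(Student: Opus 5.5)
The statement is a direct instantiation of \Cref{lem:stlc:fundamental} with the empty context and the empty closing substitution. Everything therefore rests on proving the fundamental lemma, which in turn uses \Cref{lem:stlc:sub-incl}.

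\emph{Step 1: subtyping is semantic inclusion.} I would prove \Cref{lem:stlc:sub-incl} by induction on the derivation of $\subD{\Gamma}{s}{t}$.
\begin{itemize}
\item For \rname{S-Base}, the premise is exactly $\entl{\kenv}{\Gamma}{\forall \nu{:}b.\,\reft_1 \to \reft_2}$. Instantiating it at $\renv$ and $\val$ transports $\sem{\reft_1}^{\kenv}_{\renv\subst{\nu}{\val}}$ to $\sem{\reft_2}^{\kenv}_{\renv\subst{\nu}{\val}}$.
\item For \rname{S-Fun}, fix $\val = \lam{x}{e}$ and an argument $\val_a$ in the denotation of $\tau_1'$. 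The first induction hypothesis moves $\val_a$ into $\tau_1$. The assumed denotation of $s$ then yields $\val_r$ in $\tau_2$ under $\renv\subst{x}{\val_a}$. The second induction hypothesis, applied under the extended context $\Gamma, x{:}\tau_1'$, moves $\val_r$ into $\tau_2'$. This requires checking that $\extract{\renv\subst{x}{\val_a}}{\Gamma, x{:}\tau_1'}$ holds, which follows from the membership of $\val_a$ together with $x$ being fresh for $\Gamma$.
\end{itemize}

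\emph{Step 2: the fundamental lemma.} I would prove \Cref{lem:stlc:fundamental} by induction on the typing derivation, generalizing over $\renv$.
\begin{itemize}
\item \rname{T-Var} and the literal rules are immediate from \rname{E-Val}. Selfification only adds the equation $\nu = x$, which holds because $\renv(x)$ is the value itself.
\item For \rname{T-Lam}, the substituted body is a value. Its denotation follows from the induction hypothesis on $e$ at $\renv\subst{x}{\val_a}$.
\item For \rname{T-App}, the induction hypothesis on $e$ gives a lambda. Because $y$ is a variable, $y\,\renv = \renv(y)$, so \rname{E-App} applies. The codomain denotation at $\renv\subst{x}{\renv(y)}$ must be identified with that of $\tau_2\esubst{x}{y}$ at $\renv$. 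This calls for a substitution lemma on denotations, proved by induction on types.
\item \rname{T-Let} is similar. The freshness side condition $x \notin \fvop(\tau)$ lets me drop the $x$ binding from the result's denotation; this needs a weakening/strengthening lemma.
\item \rname{T-Sub} uses Step 1.
\item \rname{T-If} cases on the boolean $\renv(x)$. It extends $\renv$ with a dummy value for the path binding $y$, whose refinement $x = \etrue$ is satisfied in the branch taken.
\item The primitive rules follow from the corresponding evaluation rules and the definition of $\sem{\cdot}$.
\end{itemize}

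\emph{Step 3: conclusion.} Take $\Gamma = \cdot$ and $\renv = \emptyset$. The hypothesis on $\renv$ holds vacuously, $e\,\emptyset = e$, and the fundamental lemma yields the theorem.

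I expect the main obstacle to be the substitution and weakening lemmas for denotations used in the \rname{T-App} and \rname{T-Let} cases. Denotations of dependent arrows mention closing substitutions inside a quantifier over arguments, and the mechanization is locally nameless. I would first prove a single lemma by induction on $\tau$: $\tyden{\renv}{\val}{\tau\esubst{x}{y}}$ iff $\tyden{\renv\subst{x}{\renv(y)}}{\val}{\tau}$, together with the analogue for refinements and the $\kenv$-interpretation of Horn applications. I would then derive weakening as the special case in which $x$ is not free.
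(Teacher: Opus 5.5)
Your proposal is correct and matches the paper, which obtains \Cref{thm:stlc:type-soundness} by instantiating the Fundamental Lemma (\Cref{lem:stlc:fundamental}, itself factored through \Cref{lem:stlc:sub-incl}) at the empty context and empty substitution, leaving the lemma proofs to the mechanization. Your lemma sketches follow the expected lines, and the remaining fine print is bookkeeping the paper also elides: in \rname{T-Lam} an argument $\val_a$ from an arrow-typed domain need not be closed, so the closedness condition on $\renv$ must be stated so that it survives the extension; and weakening by an arbitrary value, $\renv\subst{x}{\val_1}$, is more easily proved as a direct coincidence lemma than derived from your variable-substitution lemma.
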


% ===========================================================================
\subsection{Algorithmic Constraint Generation}
\label{app:stlc:algo}
% ===========================================================================

The declarative system assumes a $\kenv$-assignment
with the solutions for Horn variables that make a program safe.
Now we describe a constraint generation algorithm
that produces a CHC that can be discharged by our solver
to find such $\kenv$-assignment.

First, we define the syntax of constraints $\cstr$
as described in \cref{fig:chc-grammar} instantiating atoms
as formulas $\varphi$.
Next, we define constraint generation as a \emph{bidirectional}
typechecking procedure implemented by three judgments:
\emph{synthesis} $\synJ{\Gamma}{e}{\tau}{c}$,
\emph{checking} $\chkJ{\Gamma}{e}{\tau}{c}$, and
\emph{subtyping} $\subJ{\Gamma}{s}{t}{c}$.
The full rules are given in
\Cref{fig:stlc-syn,fig:stlc-chk,fig:stlc-sub}.
Like the declarative system, most rules are syntax directed
and accumulate subconstraints by conjoining them.
For example, function application mirrors \rname{T-App} producing
a constraint for each subexpression and conjoining them:
\[
  \inferrule*[right=\rname{Syn-App}]
    { \synJ{\Gamma}{e}{x{:}\tau_1 \to \tau_2}{c_1}
    \\
      \chkJ{\Gamma}{y}{\tau_1}{c_2} }
    { \synJ{\Gamma}{e\;y}{\tau_2\esubst{x}{y}}{\cAnd{c_1}{c_2}} }
\]
The interesting case is again subtyping on base refinements, which produces a constraint
requiring that all values of base type $b$ satisfying $\reft_1$ also satisfy $\reft_2$:
\[
  \inferrule*[right=\rname{Sub-Base}]
    { }
    {\subJ{\Gamma}{\rtype{\nu}{b}{\reft_1}}{\rtype{\nu}{b}{\reft_2}}{\;\cAll{\nu}{b}{\cImp{\reft_1}{\reft_2}}}}
\]

\mypara{Implication Binders}
Rules that move under a binder wrap the subconstraint of the body in an
\emph{implication binder} $\bind{x}{\tau}$, which quantifies over the
values of a base-typed binding and assumes its refinement, and is the
identity for function-typed bindings:
\[
\begin{array}{r@{\;}c@{\;}l}
\bind{x}{\rtype{\nu}{b}{\reft}}\; \cstr    & \eqdef & \cAll{x}{b}{\cImp{\reft\esubst{\nu}{x}}{\cstr}} \\[2pt]
\bind{x}{(y{:}\tau_1 \to \tau_2)}\; \cstr  & \eqdef & \cstr
\end{array}
\]

\begin{figure}[t]
\begin{mathpar}
\inferrule*[right=\rname{Syn-Var}]
  {(x{:}\tau) \in \Gamma}
  {\synJ{\Gamma}{x}{\selfop(x, \tau)}{\top}}
\and
\inferrule*[right=\rname{Syn-Int}]
  { }
  {\synJ{\Gamma}{z}{\rtype{\nu}{\bint}{\nu = z}}{\top}}
\and
\inferrule*[right=\rname{Syn-Bool}]
  {u \in \{\etrue, \efalse\}}
  {\synJ{\Gamma}{u}{\rtype{\nu}{\bbool}{\nu = u}}{\top}}
\and
\inferrule*[right=\rname{Syn-Ann}]
  {\chkJ{\Gamma}{e}{\tau}{c}}
  {\synJ{\Gamma}{\asc{e}{\tau}}{\tau}{c}}
\and
\inferrule*[right=\rname{Syn-App}]
  { \synJ{\Gamma}{e}{x{:}\tau_1 \to \tau_2}{c_1}
  \\
    \chkJ{\Gamma}{y}{\tau_1}{c_2} }
  { \synJ{\Gamma}{e\;y}{\tau_2\esubst{x}{y}}{\cAnd{c_1}{c_2}} }
\and
\inferrule*[right=\rname{Syn-Add}]
  {(x{:}\rtype{\nu}{\bint}{\reft_1}) \in \Gamma \\
   (y{:}\rtype{\nu}{\bint}{\reft_2}) \in \Gamma}
  {\synJ{\Gamma}{x + y}{\rtype{\nu}{\bint}{\nu = x + y}}{\top}}
\and
\inferrule*[right=\rname{Syn-Leq}]
  {(x{:}\rtype{\nu}{\bint}{\reft_1}) \in \Gamma \\
   (y{:}\rtype{\nu}{\bint}{\reft_2}) \in \Gamma}
  {\synJ{\Gamma}{x \le y}{\rtype{\nu}{\bbool}{(\nu = \etrue \to x \le y) \wedge (x \le y \to \nu = \etrue)}}{\top}}
\and
\inferrule*[right=\rname{Syn-Not}]
  {(x{:}\rtype{\nu}{\bbool}{\reft}) \in \Gamma}
  {\synJ{\Gamma}{\neg x}{\rtype{\nu}{\bbool}{\nu = \neg x}}{\top}}
\and
\inferrule*[right=\rname{Syn-And}]
  {(x{:}\rtype{\nu}{\bbool}{\reft_1}) \in \Gamma \\
   (y{:}\rtype{\nu}{\bbool}{\reft_2}) \in \Gamma}
  {\synJ{\Gamma}{x \wedge y}{\rtype{\nu}{\bbool}{\nu = x \wedge y}}{\top}}
\end{mathpar}
\caption{Constraint synthesis for $\lrk$.}
\label{fig:stlc-syn}
\end{figure}

\begin{figure}[t]
\begin{mathpar}
\inferrule*[right=\rname{Chk-Lam}]
  {x \notin \domop(\Gamma) \\
   \chkJ{\Gamma,\, x{:}\tau_1}{e}{\tau_2}{c}}
  {\chkJ{\Gamma}{\lam{x}{e}}{x{:}\tau_1 \to \tau_2}{\bind{x}{\tau_1}\, c}}
\and
\inferrule*[right=\rname{Chk-Let}]
  {\synJ{\Gamma}{e_1}{s}{c_1} \\
   x \notin \domop(\Gamma) \cup \fvop(\tau) \\
   \chkJ{\Gamma,\, x{:}s}{e_2}{\tau}{c_2}}
  {\chkJ{\Gamma}{\letin{x = e_1}{e_2}}{\tau}{\cAnd{c_1}{\bind{x}{s}\, c_2}}}
\and
\inferrule*[right=\rname{Chk-If}]
  {(x{:}\rtype{\nu}{\bbool}{\reft}) \in \Gamma \\
   y \notin \domop(\Gamma) \\\\
   \chkJ{\Gamma,\, y{:}\rtype{\nu}{\bbool}{x = \etrue}}{e_1}{\tau}{c_1} \\\\
   \chkJ{\Gamma,\, y{:}\rtype{\nu}{\bbool}{x = \efalse}}{e_2}{\tau}{c_2}}
  {\chkJ{\Gamma}{\ite{x}{e_1}{e_2}}{\tau}
    {\cAnd{\bind{y}{\rtype{\nu}{\bbool}{x = \etrue}}\, c_1}
          {\bind{y}{\rtype{\nu}{\bbool}{x = \efalse}}\, c_2}}}
\and
\inferrule*[right=\rname{Chk-Syn}]
  {\text{$e$ is not a $\lambda$-, $\mathsf{let}$-, or $\mathsf{if}$-expression} \\
   \synJ{\Gamma}{e}{s}{c_1} \\
   \subJ{\Gamma}{s}{\tau}{c_2}}
  {\chkJ{\Gamma}{e}{\tau}{\cAnd{c_1}{c_2}}}
\end{mathpar}
\caption{Constraint checking for $\lrk$.}
\label{fig:stlc-chk}
\end{figure}

\begin{figure}[t]
\begin{mathpar}
\inferrule*[right=\rname{Sub-Base}]
  { }
  {\subJ{\Gamma}{\rtype{\nu}{b}{\reft_1}}{\rtype{\nu}{b}{\reft_2}}{\;\cAll{\nu}{b}{\cImp{\reft_1}{\reft_2}}}}
\and
\inferrule*[right=\rname{Sub-Fun}]
  {\subJ{\Gamma}{\tau_1'}{\tau_1}{c_1} \\
   \subJ{\Gamma,\, x{:}\tau_1'}{\tau_2}{\tau_2'}{c_2}}
  {\subJ{\Gamma}{x{:}\tau_1 \to \tau_2}{x{:}\tau_1' \to \tau_2'}{\cAnd{c_1}{\bind{x}{\tau_1'}\, c_2}}}
\end{mathpar}
\caption{Constraint-emitting subtyping for $\lrk$.}
\label{fig:stlc-sub}
\end{figure}

\mypara{Soundness}
The constraint generated by our procedure must imply the safety of the program.
We formalize this guarantee by extending the interpretation of refinements to
constraints as follows:
\[
\begin{array}{r@{\;}c@{\;}l}
\cstrden{\top}                    & \eqdef & \top \\[2pt]
\cstrden{\cImp{\reft_1}{\reft_2}} & \eqdef & \reftden{\reft_1} \to \reftden{\reft_2} \\[2pt]
\cstrden{\cAll{x}{b}{\cstr'}}   & \eqdef & \forall v \in \sem{b}.\, \cstrden[\kenv][\renv\subst{x}{v}]{\cstr'} \\[2pt]
\cstrden{\cAnd{c_1}{c_2}}         & \eqdef & \cstrden{c_1} \wedge \cstrden{c_2}
\end{array}
\]
and we write $\entl{\kenv}{\Gamma}{\cstr}$ for
$\forall \renv.\; \extract{\renv}{\Gamma} \to \cstrden{\cstr}$.
The generation judgments are sound for the declarative system
in arbitrary contexts:

\begin{lemma}[Subtyping Generation]\label{lem:stlc:sub-sound}
If $\subJ{\Gamma}{s}{t}{\cstr}$ and $\entl{\kenv}{\Gamma}{\cstr}$,
then $\subD{\Gamma}{s}{t}$.
\end{lemma}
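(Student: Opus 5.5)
We proceed by induction on the derivation of $\subJ{\Gamma}{s}{t}{\cstr}$, generalizing over $\Gamma$, so that the induction hypothesis is available in extended contexts. There are only two rules (\Cref{fig:stlc-sub}), and each maps onto the declarative rule of the same shape in \Cref{fig:stlc-decl-sub}.

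\emph{Case \rname{Sub-Base}.} Here $\cstr = \cAll{\nu}{b}{\cImp{\reft_1}{\reft_2}}$. Unfolding the hypothesis $\entl{\kenv}{\Gamma}{\cstr}$ with the constraint interpretation gives
\[
\forall \renv.\ \extract{\renv}{\Gamma} \to \forall v \in \sem{b}.\ \reftden[\kenv][{\renv\subst{\nu}{v}}]{\reft_1} \to \reftden[\kenv][{\renv\subst{\nu}{v}}]{\reft_2}.
\]
This is literally the refinement entailment $\entl{\kenv}{\Gamma}{\forall \nu{:}b.\,\reft_1 \to \reft_2}$ that serves as the premise of declarative \rname{S-Base}. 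The case therefore closes by applying that rule, modulo unfolding definitions. In the locally nameless mechanization this amounts to checking that opening the $\nu$ binder in $\cstr$ agrees with the update $\renv\subst{\nu}{v}$.

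\emph{Case \rname{Sub-Fun}.} Here $\cstr = \cAnd{c_1}{\bind{x}{\tau_1'}\,c_2}$. From $\entl{\kenv}{\Gamma}{\cstr}$ we project the two conjuncts pointwise in $\renv$. The first gives $\entl{\kenv}{\Gamma}{c_1}$, so the induction hypothesis yields $\subD{\Gamma}{\tau_1'}{\tau_1}$. For the second we need $\entl{\kenv}{\Gamma,x{:}\tau_1'}{c_2}$, which we obtain from an auxiliary \emph{binder lemma}:
\[
\entl{\kenv}{\Gamma}{\bind{x}{\tau}\,c} \ \text{implies}\ \entl{\kenv}{\Gamma, x{:}\tau}{c}, \qquad x \notin \domop(\Gamma).
\]
Given the binder lemma, the induction hypothesis yields $\subD{\Gamma,x{:}\tau_1'}{\tau_2}{\tau_2'}$, and \rname{S-Fun} concludes.

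The binder lemma itself splits on the shape of $\tau$.
\begin{itemize}
\item If $\tau = \rtype{\nu}{b}{\reft}$, fix $\renv$ with $\extract{\renv}{\Gamma, x{:}\tau}$. This unfolds to $\extract{\renv}{\Gamma} \wedge \reftden{\reft\esubst{\nu}{x}}$. Instantiate the hypothesis, which has the form $\forall v.\ \reft\esubst{\nu}{x} \to c$, at $v = \renv(x)$. Then use that $\renv\subst{x}{\renv(x)}$ agrees with $\renv$.
\item If $\tau$ is an arrow, then $\bind{x}{\tau}\,c = c$ and extraction ignores $x$, so the statement is immediate.
\end{itemize}

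The main obstacle is the environment bookkeeping in the base subcase. We need that $\extract{\renv}{\Gamma}$ is unaffected by updating $\renv$ at the fresh $x$, and that $\renv(x)$ lies in $\sem{b}$. The latter is subtle because $\renv : \Var \to \Val$ is untyped, so the instantiation requires an inhabitation fact. We would first try to obtain it from the refinement denotation being well-sorted. If that fails, we would strengthen the entailment definition to range only over well-sorted $\renv$, and prove a small lemma that extraction and denotation are invariant under updates at variables not free in $\Gamma$ (using $x \notin \domop(\Gamma)$).
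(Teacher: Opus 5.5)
Your proof is correct and is the natural argument. The paper gives no written proof of this lemma, but it notes that the semantic entailment is exactly the premise of \rname{S-Base}, so \rname{Sub-Base} closes by unfolding definitions, and \rname{Sub-Fun} closes via the induction hypothesis in $\Gamma, x{:}\tau_1'$ using your implication-binder lemma.

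Two small points. First, because you instantiate the hypothesis at $\renv$ itself and then at $v = \renv(x)$, you never need invariance of $\extract{\renv}{\Gamma}$ under updates, nor freshness of $x$. Second, $\renv(x) \in \sem{b}$ cannot be obtained from the refinement (take $\reft = \top$), so your fallback is the one to use: restrict to environments that are well-sorted on $\domop(\Gamma)$. That is precisely the convention the paper's definition $\sem{x}_{\renv} \eqdef \renv(x) \in \sem{b}$ silently presupposes.
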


\begin{lemma}[Generation Soundness]\label{lem:stlc:gen-sound}
If $\synJ{\Gamma}{e}{\tau}{\cstr}$ and $\entl{\kenv}{\Gamma}{\cstr}$,
then $\hasty{\kenv}{\Gamma}{e}{\tau}$;
and likewise if $\chkJ{\Gamma}{e}{\tau}{\cstr}$
and $\entl{\kenv}{\Gamma}{\cstr}$,
then $\hasty{\kenv}{\Gamma}{e}{\tau}$.
\end{lemma}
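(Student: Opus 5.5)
Both parts of Lemma~\ref{lem:stlc:gen-sound} are proved together by mutual induction on the derivations $\synJ{\Gamma}{e}{\tau}{\cstr}$ and $\chkJ{\Gamma}{e}{\tau}{\cstr}$. In each case we build the matching declarative rule of \Cref{fig:stlc-decl-typing}. The general hypothesis $\entl{\kenv}{\Gamma}{\cstr}$ is split into the entailments that the premises need. For subtyping we rely on Lemma~\ref{lem:stlc:sub-sound}. Two small facts about $\cstrden{\cdot}$ are needed first:
\begin{itemize}
\item Since $\cstrden{\cAnd{c_1}{c_2}}$ is a conjunction, $\entl{\kenv}{\Gamma}{\cAnd{c_1}{c_2}}$ gives both $\entl{\kenv}{\Gamma}{c_1}$ and $\entl{\kenv}{\Gamma}{c_2}$.
\item A binder lemma: $\entl{\kenv}{\Gamma}{\bind{x}{\tau}\,\cstr}$ implies $\entl{\kenv}{\Gamma, x{:}\tau}{\cstr}$ when $x \notin \domop(\Gamma)$.
\end{itemize}

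The binder lemma is argued by cases on $\tau$. For a base type $\rtype{\nu}{b}{\reft}$, take $\renv$ with $\extract{\renv}{\Gamma, x{:}\tau}$. This is $\extract{\renv}{\Gamma} \wedge \reftden{\reft\esubst{\nu}{x}}$. Instantiate the $\forall x$ of $\cstrden{\cAll{x}{b}{\cImp{\reft\esubst{\nu}{x}}{\cstr}}}$ at $\renv(x)$. We need $\renv[x \mapsto \renv(x)] = \renv$, and we need that $\extract{\renv}{\Gamma}$ does not depend on $\renv(x)$. The latter holds because $x$ is fresh for $\Gamma$. For an arrow type the binder is the identity and extraction ignores the binding, so the lemma is immediate. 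One detail: $\renv$ assigns values, so $\renv(x) \in \sem{b}$ must be extracted from the refinement. If extraction does not record it, we restrict attention to well-sorted substitutions.

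Axiom-like synthesis rules (\rname{Syn-Var}, \rname{Syn-Int}, \rname{Syn-Bool}, and the primitives) emit $\top$. They map directly to \rname{T-Var}, \rname{T-Int}, \rname{T-Bool}, \rname{T-Add}, and so on, with identical side conditions. The compositional rules go as follows.
\begin{itemize}
\item \rname{Syn-Ann} uses the induction hypothesis for checking, then \rname{T-Ann}.
\item \rname{Syn-App} splits the conjunction, applies the two induction hypotheses (the second for the checked variable $y$), and uses \rname{T-App}.
\item \rname{Chk-Lam} uses the binder lemma to move to $\Gamma, x{:}\tau_1$, then \rname{T-Lam}.
\item \rname{Chk-Let} splits the conjunction, uses the synthesis induction hypothesis on $e_1$, applies the binder lemma to the second conjunct, and uses \rname{T-Let}. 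The freshness conditions are copied over.
\item \rname{Chk-If} is handled like \rname{Chk-Let}, applying the binder lemma once per branch at the path-condition binding $y$, then \rname{T-If}.
\item \rname{Chk-Syn} splits the conjunction. The synthesis induction hypothesis gives $\hasty{\kenv}{\Gamma}{e}{s}$. Lemma~\ref{lem:stlc:sub-sound} gives $\subD{\Gamma}{s}{\tau}$. \rname{T-Sub} closes the case.
\end{itemize}

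The main obstacle is the binder lemma in the locally nameless mechanization, where named binders become de Bruijn indices. Two facts must be established there:
\begin{itemize}
\item opening $\reft$ at $x$ and then evaluating under $\renv$ agrees with evaluating under $\renv\subst{\nu}{\renv(x)}$;
\item $\extract{\renv}{\Gamma}$ and the constraint denotation are invariant under updating a variable that is fresh for them.
\end{itemize}
I would prove these first as standalone substitution and weakening lemmas for $\sem{\cdot}$, $\reftden{\cdot}$ and $\cstrden{\cdot}$. I would then check that every freshness side condition in the generation rules provides exactly the freshness those lemmas require. With these in place, the case analysis above is routine bookkeeping.
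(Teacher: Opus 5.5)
Your proposal is correct and matches the intended argument. The paper gives no prose proof, since the result is mechanized, but it states the lemma for arbitrary $\Gamma$ so that it goes by mutual induction on the synthesis and checking derivations: split the conjunctions, discharge each implication binder by instantiating its $\forall$ at $\renv(x)$, and close \rname{Chk-Syn} with Lemma~\ref{lem:stlc:sub-sound} and \rname{T-Sub}. Freshness of $x$ is not actually needed in the binder step, because you apply the hypothesis at the same $\renv$. The one real care point is the one you flagged: extraction does not supply $\renv(x)\in\sem{b}$ (e.g.\ for the path-condition binder $y$ of \rname{Chk-If}, whose refinement never mentions $\nu$), so your well-sortedness fallback must be adopted uniformly, reading the $\forall\renv$ in the entailment, and hence in \rname{S-Base}, as ranging over well-sorted substitutions; this is harmless downstream, since the Fundamental Lemma only instantiates it at substitutions drawn from type denotations.
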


\noindent
Specializing \Cref{lem:stlc:gen-sound} to the empty context
restates \Cref{thm:lrk-vcgen}:

\begin{theorem}[Constraint Generation]\label{thm:stlc:vcgen}
If $\chkJ{\cdot}{e}{\tau}{\cstr}$ and $\cstrden[\kenv][\cdot]{\cstr}$
is satisfiable, then $\hasty{\kenv}{\cdot}{e}{\tau}$.
\end{theorem}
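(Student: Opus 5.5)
The plan is to obtain \Cref{thm:stlc:vcgen} as the empty-context instance of \Cref{lem:stlc:gen-sound}, which in turn rests on \Cref{lem:stlc:sub-sound}. The heart of the argument is therefore the two generation lemmas, which I would prove in an arbitrary context $\Gamma$.

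For the theorem itself, note that $\extract{\renv}{\cdot} = \top$. Hence $\entl{\kenv}{\cdot}{\cstr}$ unfolds to $\forall \renv.\,\top \to \cstrden{\cstr}$. Since $\cstr$ is closed, its interpretation does not depend on $\renv$, so this is equivalent to $\cstrden[\kenv][\cdot]{\cstr}$. Applying the checking half of \Cref{lem:stlc:gen-sound} with $\Gamma = \cdot$ then yields $\hasty{\kenv}{\cdot}{e}{\tau}$.

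\Cref{lem:stlc:sub-sound} goes by induction on the derivation of $\subJ{\Gamma}{s}{t}{\cstr}$.
\begin{itemize}
\item In the \rname{Sub-Base} case, the hypothesis $\entl{\kenv}{\Gamma}{\cAll{\nu}{b}{\cImp{\reft_1}{\reft_2}}}$ is literally the premise of declarative \rname{S-Base}.
\item In the \rname{Sub-Fun} case, split the conjunction. The first conjunct gives $\subD{\Gamma}{\tau_1'}{\tau_1}$ by induction.
\item For the second conjunct, we need an \emph{extension lemma}: from $\entl{\kenv}{\Gamma}{\bind{x}{\tau}\,c}$ conclude $\entl{\kenv}{\Gamma, x{:}\tau}{c}$. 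Its proof unfolds the binder. For a base type, any $\renv$ satisfying $\extract{\renv}{\Gamma, x{:}\tau}$ supplies both a value for the $\forall x$ and the refinement hypothesis. For an arrow type the binder is the identity, and extraction ignores the binding. One caveat here is that $x$ must be fresh, so updating $\renv$ at $x$ does not disturb $\extract{\renv}{\Gamma}$; in the locally nameless mechanization this follows from the freshness side conditions.
\end{itemize}

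\Cref{lem:stlc:gen-sound} goes by mutual induction on the synthesis and checking derivations, mapping each generation rule to its declarative counterpart.
\begin{itemize}
\item The leaf rules (\rname{Syn-Var}, \rname{Syn-Int}, \rname{Syn-Bool}, and the primitive rules) carry the constraint $\top$ and are immediate.
\item \rname{Syn-App} splits the conjunction and applies \rname{T-App}.
\item \rname{Syn-Ann} uses the checking IH and \rname{T-Ann}.
\item \rname{Chk-Lam}, \rname{Chk-Let}, and \rname{Chk-If} use the extension lemma to pass the constraint under the new binding, then apply \rname{T-Lam}, \rname{T-Let}, and \rname{T-If}.
\item \rname{Chk-Syn} combines the synthesis IH with \Cref{lem:stlc:sub-sound} via \rname{T-Sub}.
\end{itemize}

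The main obstacle I expect is the extension lemma and the binding bookkeeping around it. There are three sources of trouble: the refinement substitution $\reft\esubst{\nu}{x}$ must agree with the environment update $\renv\subst{x}{v}$; the \rname{Chk-If} binding $y$ must be fresh; and we must rely on selfification dropping $\reft$ so that synthesized types stay $\kvar$-free. I would first establish a substitution lemma of the form $\reftden{\reft\esubst{\nu}{x}} \iff \reftden[\kenv][\renv\subst{\nu}{\renv(x)}]{\reft}$, together with a weakening lemma showing $\extract{\renv}{\Gamma}$ is invariant under updates to fresh variables. With these in hand, each case becomes routine unfolding.
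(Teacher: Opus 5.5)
Your proposal matches the paper's argument: the paper obtains the theorem as the empty-context instance of \Cref{lem:stlc:gen-sound} (using $\extract{\renv}{\cdot} = \top$), which in turn relies on \Cref{lem:stlc:sub-sound}, and your inductive sketches of those two lemmas, including the binder-extension step, fill in detail the paper does not spell out. The one step you add, passing from $\cstrden[\kenv][\cdot]{\cstr}$ to $\entl{\kenv}{\cdot}{\cstr}$ because $\cstr$ is closed, is treated by the paper as definitional; as you wrote it, it also needs the annotation $\tau$ to be well-scoped, so it is simpler to read the hypothesis directly as $\entl{\kenv}{\cdot}{\cstr}$.
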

\noindent
Second, composing constraint generation soundness with declarative type soundness
(\Cref{thm:stlc:type-soundness}) yields our end-to-end guarantee, restating
\Cref{thm:lrk-safety}: if a program passes constraint
generation and the resulting constraints are satisfiable, then the program evaluates to
a value in its type's interpretation:
\begin{theorem}[Verifier Soundness]\label{thm:stlc:safety}
If $\chkJ{\cdot}{e}{\tau}{\cstr}$ and $\cstrden[\kenv][\cdot]{\cstr}$
then $\exists \val$ s.t. $\evalto{e}{\val}$ and $\tyden{\cdot}{\val}{\tau}$.
\end{theorem}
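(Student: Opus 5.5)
This theorem is the composition of the two earlier soundness results, \Cref{thm:stlc:vcgen} and \Cref{thm:stlc:type-soundness}, so no new induction is required. Assume $\chkJ{\cdot}{e}{\tau}{\cstr}$ and that $\cstrden[\kenv][\cdot]{\cstr}$ holds for the given $\kenv$-assignment. First I invoke \Cref{lem:stlc:gen-sound} in its checking form, specialized to the empty context. To do so I must discharge the entailment $\entl{\kenv}{\cdot}{\cstr}$. By definition this is $\forall \renv.\; \extract{\renv}{\cdot} \to \cstrden{\cstr}$. Since $\extract{\renv}{\cdot} = \top$, it reduces to $\cstrden{\cstr}$ holding for every $\renv$.

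The one real step is to justify that reduction. The hypothesis only gives $\cstrden[\kenv][\cdot]{\cstr}$ for the empty substitution. We need it for an arbitrary $\renv$. I would prove a small closedness lemma: if $\cstr$ is generated in context $\Gamma$, its free variables lie in $\domop(\Gamma)$. This follows by induction on the generation derivations of \Cref{fig:stlc-syn,fig:stlc-chk,fig:stlc-sub}. The binder-introducing rules (\rname{Chk-Lam}, \rname{Chk-Let}, \rname{Chk-If}, \rname{Sub-Fun}) wrap their subconstraints in $\bind{x}{\tau}$, which quantifies the newly added variable. I would pair this with a standard lemma that $\cstrden{\cstr}$ depends only on $\renv$ restricted to $\mathrm{FV}(\cstr)$. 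Together these make $\cstrden{\cstr}$ independent of $\renv$ when $\Gamma = \cdot$. In the locally nameless mechanization this is essentially the local-closure invariant, and I expect it to be the main (though routine) obstacle. An alternative is to read ``satisfiable'' in \Cref{thm:stlc:vcgen} as already quantifying over all $\renv$, and then simply cite that theorem.

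With the entailment in hand, \Cref{lem:stlc:gen-sound} (equivalently \Cref{thm:stlc:vcgen}) yields the declarative judgment $\hasty{\kenv}{\cdot}{e}{\tau}$. Then \Cref{thm:stlc:type-soundness} applied to this judgment produces a value $\val$ with $\evalto{e}{\val}$ and $\tyden{\emptyset}{\val}{\tau}$. This is exactly the conclusion, since the empty closing substitution $\emptyset$ is the $\cdot$ in the statement.
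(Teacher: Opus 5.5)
Your proposal is correct and is exactly the paper's argument: the paper obtains this theorem by composing \Cref{thm:stlc:vcgen} (which it describes as \Cref{lem:stlc:gen-sound} specialized to the empty context) with \Cref{thm:stlc:type-soundness}. The $\renv$-independence step you isolate is one the paper passes over by tacitly identifying $\cstrden[\kenv][\cdot]{\cstr}$ with $\entl{\kenv}{\cdot}{\cstr}$; if you instead prove it via your closedness lemma, the induction needs the top-level $\tau$, ascriptions, and $\lambda$-binder annotations to be well-scoped and well-sorted, because \rname{Sub-Base} and $\bind{x}{\tau}$ copy their refinements verbatim into $\cstr$ and $\bind{x}{\tau}$ does not quantify a function-typed $x$.
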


%% file: sections/D-imp-rules.tex
% ===========================================================================
\section{Full Rules for $\imp$}
\label{sec:imp}
\label{app:imp-rules}
% ===========================================================================
\newcommand{\vcgenV}[4]{\ensuremath{\mathsf{VC}_{#1}(#2,\, #3,\, #4)}}
\newcommand{\wpV}[4]{\ensuremath{\mathsf{WP}_{#1}(#2,\, #3,\, #4)}}
\newcommand{\liftkV}[2]{\ensuremath{\sem{#1}_{#2}}}

\mypara{Commands, Assertions and Triples}
We represent $\imp$ programs with a
\emph{shallow} embedding where states
$s$ are functions $\CVar\to\mathbb{Z}$,
expressions $e$ and guards $g$ are state-indexed,
and assertions $P$, $Q$ are predicates $\State\to\Prop$.
A (Floyd-Hoare) \emph{triple} comprising a
pre-condition $P$, command $c$ and post-condition $Q$
is \emph{valid}, written ${\validhoare{P}{c}{Q}}$,
if every terminating run of $c$ from a $P$-state
ends in a $Q$-state.

\mypara{Operational Semantics}
The state update $s\subst{x}{v}$ rebinds $x$ to $v$:
\[
s\subst{x}{v} \;\eqdef\; \cLam{y}{\ite{y = x}{v}{s(y)}}
\]
\Cref{fig:imp-bigstep} gives the standard big-step evaluation
judgment $\evalto{\langle c,\, s_1 \rangle}{s_2}$
(command $c$ takes state $s_1$ to state $s_2$),
which makes triple validity formal:
\[
\validhoare{P}{c}{Q} \;\eqdef\;
  \forall s_1\, s_2.\;
    \evalto{\langle c,\, s_1 \rangle}{s_2} \to P(s_1) \to Q(s_2)
\]

\begin{figure}[t]
\begin{mathpar}
\inferrule*[right=\rname{E-Skip}]
  { }
  {\evalto{\langle \cskip,\, s \rangle}{s}}
\and
\inferrule*[right=\rname{E-Asgn}]
  { }
  {\evalto{\langle x := e,\, s \rangle}{s\subst{x}{e(s)}}}
\and
\inferrule*[right=\rname{E-Seq}]
  {\evalto{\langle c_1,\, s_1 \rangle}{s_2} \\
   \evalto{\langle c_2,\, s_2 \rangle}{s_3}}
  {\evalto{\langle c_1 ; c_2,\, s_1 \rangle}{s_3}}
\and
\inferrule*[right=\rname{E-IfT}]
  {g(s_1) \\ \evalto{\langle c_1,\, s_1 \rangle}{s_2}}
  {\evalto{\langle \ite{g}{c_1}{c_2},\, s_1 \rangle}{s_2}}
\and
\inferrule*[right=\rname{E-IfF}]
  {\neg g(s_1) \\ \evalto{\langle c_2,\, s_1 \rangle}{s_2}}
  {\evalto{\langle \ite{g}{c_1}{c_2},\, s_1 \rangle}{s_2}}
\and
\inferrule*[right=\rname{E-WhF}]
  {\neg g(s)}
  {\evalto{\langle \while{g}{c},\, s \rangle}{s}}
\and
\inferrule*[right=\rname{E-WhT}]
  {g(s_1) \\
   \evalto{\langle c,\, s_1 \rangle}{s_2} \\
   \evalto{\langle \while{g}{c},\, s_2 \rangle}{s_3}}
  {\evalto{\langle \while{g}{c},\, s_1 \rangle}{s_3}}
\end{mathpar}
\caption{Big-step operational semantics of $\imp$.}
\label{fig:imp-bigstep}
\end{figure}

\mypara{Floyd-Hoare Proof Rules}
\Cref{fig:imp-hoare} lists the standard proof rules
\cite{Floyd1967, Hoare1969}, stated directly as lemmas about
valid triples.

\begin{lemma}[Hoare Rules]\label{lem:imp-hoare}
Every rule in \Cref{fig:imp-hoare} is valid: if the premises hold,
so does the conclusion.
\end{lemma}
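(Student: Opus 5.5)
Each rule of \Cref{fig:imp-hoare} becomes its own \sclean lemma. Unfold triple validity to $\forall s_1\, s_2.\ \evalto{\langle c,\, s_1 \rangle}{s_2} \to P(s_1) \to Q(s_2)$, and then invert the big-step judgment of \Cref{fig:imp-bigstep} on the command in the conclusion. Because the embedding is shallow, all premises are ordinary \sclean propositions over states. Most cases are therefore direct: invert, then apply the premises.

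\emph{Skip.} Inversion yields $s_2 = s_1$, so the consequent follows from the precondition.

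\emph{Assignment.} Take the backward form $\{\cLam{s}{Q(s\subst{x}{e(s)})}\}\ x := e\ \{Q\}$. Inversion gives $s_2 = s_1\subst{x}{e(s_1)}$, and the goal is then exactly the hypothesis.

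\emph{Sequencing.} Invert \rname{E-Seq} to obtain an intermediate state $s'$. Apply the first premise to reach the midcondition at $s'$, and the second premise to reach $Q(s_2)$.

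\emph{Conditional.} Invert to get either \rname{E-IfT} or \rname{E-IfF}. Each branch supplies $g(s_1)$ or $\neg g(s_1)$, which discharges the strengthened precondition $P \wedge g$ (resp.\ $P \wedge \neg g$) of the corresponding premise.

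\emph{Consequence.} Compose the two entailments around the premise's triple; no inversion is needed.

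\emph{While.} This is the main obstacle, because a single inversion does not suffice: \rname{E-WhT} recursively mentions the same loop. Plain \lean{cases} on the derivation will not produce a usable induction hypothesis. Instead, first generalize the command: prove $\forall c'\, s_1\, s_2.\ \evalto{\langle c',\, s_1 \rangle}{s_2} \to c' = \while{g}{c} \to I(s_1) \to I(s_2) \wedge \neg g(s_2)$. Then do induction on the evaluation derivation, discharging the non-loop constructors by constructor disequality.
\begin{itemize}
\item In the \rname{E-WhF} case, $s_2 = s_1$; the invariant holds by assumption and $\neg g$ comes from the rule's premise.
\item In the \rname{E-WhT} case, the body premise $\{\cLam{s}{I(s) \wedge g(s)}\}\ c\ \{I\}$ carries the invariant to the intermediate state. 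The induction hypothesis for the trailing loop evaluation then finishes.
\end{itemize}
This gives $\{I\}\ \while{g}{c}\ \{\cLam{s}{I(s) \wedge \neg g(s)}\}$, and the version with a general pre/postcondition follows from the consequence rule.

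I expect every case except the loop to close with \lean{cases}, \lean{simp}, and \lean{grind}. The one place needing care is generalizing over the command so that the induction hypothesis applies to the recursive loop premise.
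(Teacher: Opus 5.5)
Your proposal is correct. You unfold validity, invert the big-step judgment for the non-loop rules, prove the loop rule by induction on the evaluation derivation with the command generalized (so that the recursive \rname{E-WhT} premise gets an induction hypothesis), and finish with the consequence rules. The paper gives no written proof here; it only states these rules as \sclean lemmas about valid triples, and your argument is essentially the standard one such a mechanization would follow.
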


\begin{figure}[t]
\begin{mathpar}
\inferrule*[right=\rname{H-Skip}]
  { }
  {\validhoare{P}{\cskip}{P}}
\and
\inferrule*[right=\rname{H-Asgn}]
  { }
  {\validhoare{\cLam{s}{Q(s\subst{x}{e(s)})}}{x := e}{Q}}
\and
\inferrule*[right=\rname{H-Seq}]
  {\validhoare{P}{c_1}{R} \\
   \validhoare{R}{c_2}{Q}}
  {\validhoare{P}{c_1 ; c_2}{Q}}
\and
\inferrule*[right=\rname{H-If}]
  {\validhoare{\cLam{s}{P(s) \wedge g(s)}}{c_1}{Q} \\
   \validhoare{\cLam{s}{P(s) \wedge \neg g(s)}}{c_2}{Q}}
  {\validhoare{P}{\ite{g}{c_1}{c_2}}{Q}}
\and
\inferrule*[right=\rname{H-While}]
  {\forall s.\; P(s) \to I(s) \\
   \validhoare{\cLam{s}{I(s) \wedge g(s)}}{c}{I} \\
   \forall s.\; I(s) \wedge \neg g(s) \to Q(s)}
  {\validhoare{P}{\while{g}{c}}{Q}}
\and
\inferrule*[right=\rname{H-ConsPre}]
  {\validhoare{P'}{c}{Q} \\
   \forall s.\; P(s) \to P'(s)}
  {\validhoare{P}{c}{Q}}
\and
\inferrule*[right=\rname{H-ConsPost}]
  {\validhoare{P}{c}{Q'} \\
   \forall s.\; Q'(s) \to Q(s)}
  {\validhoare{P}{c}{Q}}
\end{mathpar}
\caption{Derived Floyd-Hoare proof rules for $\imp$.}
\label{fig:imp-hoare}
\end{figure}

\mypara{Constraint Generator}
Let $V \doteq \{ x_1,\ldots,x_n \}$ be a set
of $n$ ordered variables that occur in commands.
The generator $\vcgen{P}{c}{Q}$ takes as input
a triple $P,c,Q$ (where $c$ uses variables from $V$)
and outputs a CHC.
The implementation is the textbook
\emph{weakest precondition}-based
VC-generation method \cite{dijkstra1976},
except in two places.
First, we do not require explicitly
provided invariants for loops: when
the generator hits a $\mathsf{while}$
it introduces a Horn \emph{variable}
$\kvar$ for the unknown invariant ---
a $|V|$-ary predicate --- and \emph{constrains}
it to satisfy the usual initial, body
and exit obligations:
$$\begin{array}{rcll}
\vcgen{P}{\while{g}{c}}{Q}
    & \doteq & \exists\,\kvar: \Int^{|V|} \to \Prop.                    & \\
    &        & \quad \invisible{\wedge}\ \forall s.\; P(s) \rightarrow \liftk{\kvar}{V}(s)                                       & \mbox{(initial)} \\
    &        & \quad            \wedge\  \vcgen{\cLam{s}{\liftk{\kvar}{V}(s) \wedge g(s)}}{c}{\liftk{\kvar}{V}} & \mbox{(body)} \\
    &        & \quad            \wedge\  \forall s.\; \liftk{\kvar}{V}(s) \rightarrow \neg g(s) \rightarrow Q(s) & \mbox{(exit)} \\[0.2em]
    \mbox{where} \quad \liftk{\kvar}{V} & \doteq & \cLam{s}{\kvar(s(x_1), \ldots, s(x_n))} & \\
\end{array}$$
The full definition appears in \Cref{fig:imp-vcgen}.
The generator is in continuation-passing weakest-precondition style:
$\wpV{V}{c}{Q}{k}$ computes the weakest precondition $w$ of $c$ with
respect to $Q$, emits proof obligations along the way, and hands $w$ to
the continuation $k$; the top level $\vcgenV{V}{P}{c}{Q}$ instantiates
$k$ with the requirement that $P$ imply the computed precondition.
Only loops introduce a Horn variable; sequencing computes intermediate
assertions by backward substitution.
The scope $V$ makes the invariants' arity explicit and grows through
sequencing: the second command sees the variables $\asgop(c_1)$ assigned
by the first.
Invariants additionally receive the values of a fixed list
$C = t_1, \ldots, t_m$ of integer-valued specification terms harvested
from $P$ and $Q$.

\begin{figure}[t]
\[
\begin{array}{r@{\;}c@{\;}l@{\quad}l}
\wpV{V}{\cskip}{Q}{k}
  & \eqdef & k(Q) & \\[4pt]
\wpV{V}{x := e}{Q}{k}
  & \eqdef & k(\cLam{s}{Q(s\subst{x}{e(s)})}) & \\[4pt]
\wpV{V}{c_1 ; c_2}{Q}{k}
  & \eqdef & \wpV{V'}{c_2}{Q}{\cLam{w}{\wpV{V}{c_1}{w}{k}}} & \\[4pt]
\wpV{V}{\ite{g}{c_1}{c_2}}{Q}{k}
  & \eqdef & \wpV{V}{c_1}{Q}{\cLam{w_1}{\wpV{V}{c_2}{Q}{\cLam{w_2}{k(w_g)}}}} & \\[4pt]
\wpV{V}{\while{g}{c}}{Q}{k}
  & \eqdef & \multicolumn{2}{l}{\exists\,\kvar: \Int^{m + |V|} \to \Prop.} \\
  &        & \multicolumn{2}{l}{\quad \invisible{\wedge}\ \wpV{V''}{c}{\liftkV{\kvar}{V}}
             {\cLam{w}{\forall s.\, \liftkV{\kvar}{V}(s) \wedge g(s) \to w(s)}}
             \quad \mbox{(preserve)}} \\
  &        & \multicolumn{2}{l}{\quad \wedge\ \forall s.\; \liftkV{\kvar}{V}(s) \wedge \neg g(s) \to Q(s)
             \quad \mbox{(exit)}} \\
  &        & \multicolumn{2}{l}{\quad \wedge\ k(\liftkV{\kvar}{V})
             \quad \mbox{(continue)}} \\[8pt]
\vcgenV{V}{P}{c}{Q}
  & \eqdef & \wpV{V}{c}{Q}{\cLam{w}{\forall s.\, P(s) \to w(s)}} & \\[8pt]
\multicolumn{4}{l}{
  \mbox{where}\ \ V' \eqdef V \cup \asgop(c_1), \quad
  V'' \eqdef V \cup \asgop(c), \quad
  w_g \eqdef \cLam{s}{(g(s) \to w_1(s)) \wedge (\neg g(s) \to w_2(s))}
} \\[8pt]
\multicolumn{4}{l}{
  \liftkV{\kvar}{V} \;\eqdef\; \cLam{s}{\kvar(t_1(s), \ldots, t_m(s),\, s(x_1), \ldots, s(x_n))}
  \quad \mbox{for}\ V = x_1, \ldots, x_n \ \mbox{and}\ C = t_1, \ldots, t_m
} \\[4pt]
\multicolumn{4}{l}{
  \asgop(\cskip) \eqdef \emptyset \qquad
  \asgop(x := e) \eqdef \{x\} \qquad
  \asgop(\while{g}{c}) \eqdef \asgop(c)
} \\[2pt]
\multicolumn{4}{l}{
  \asgop(c_1 ; c_2) \eqdef \asgop(c_1) \cup \asgop(c_2) \qquad
  \asgop(\ite{g}{c_1}{c_2}) \eqdef \asgop(c_1) \cup \asgop(c_2)
} \\
\end{array}
\]
\caption{The full constraint generator for $\imp$, in continuation-passing
weakest-precondition style; the spec-term list $C$ is fixed throughout.
$\asgop(c)$ collects the assignment targets of $c$ in order of first
appearance, and $V \cup V'$ appends the new variables of $V'$ to $V$ in
that order.}
\label{fig:imp-vcgen}
\end{figure}

\noindent
The above constraints are not in the syntax from \Cref{fig:chc-grammar}
as we are quantifying over states $s$ and not base-sorted values.
Fortunately, \sclean's \lean{simp} tactic suffices to reduce
them to our grammar.

\mypara{Soundness}
We prove in \sclean\ that whenever the CHC returned by $\vcgenV{V}{P}{c}{Q}$
is satisfiable, that the corresponding triple is valid.
The proof factors through a generic lemma about the generator: any
satisfied $\wpV{V}{c}{Q}{k}$ yields a genuine precondition.

\begin{lemma}[Generator Soundness]\label{lem:imp-gen-sound}
If $\wpV{V}{c}{Q}{k}$, then there is an assertion $w$ such that
$\validhoare{w}{c}{Q}$ and $k(w)$.
\end{lemma}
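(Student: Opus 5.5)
I would prove the statement by structural induction on the command $c$, generalizing over the variable scope $V$, the postcondition $Q$ and the continuation $k$. Each case follows the corresponding clause of \Cref{fig:imp-vcgen} and closes with the matching rule of \Cref{lem:imp-hoare}. The witness $w$ is always the weakest precondition that the generator hands to $k$: $Q$ itself for $\cskip$, and $\cLam{s}{Q(s\subst{x}{e(s)})}$ for an assignment. In both cases $k(w)$ is exactly the hypothesis, and validity of the triple is \rname{H-Skip} or \rname{H-Asgn}.

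\emph{Sequencing.} The hypothesis is $\wpV{V'}{c_2}{Q}{\cLam{w}{\wpV{V}{c_1}{w}{k}}}$. The induction hypothesis for $c_2$ (at scope $V'$) yields $w_2$ with $\validhoare{w_2}{c_2}{Q}$ and $\wpV{V}{c_1}{w_2}{k}$. The induction hypothesis for $c_1$ then yields $w_1$ with $\validhoare{w_1}{c_1}{w_2}$ and $k(w_1)$. \rname{H-Seq} gives $\validhoare{w_1}{c_1;c_2}{Q}$.

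\emph{Conditionals.} Two nested appeals to the induction hypothesis give $w_1, w_2$ with $\validhoare{w_i}{c_i}{Q}$ and $k(w_g)$. It remains to show $\validhoare{w_g}{\ite{g}{c_1}{c_2}}{Q}$. By \rname{H-If} this reduces to $\validhoare{\cLam{s}{w_g(s)\wedge g(s)}}{c_1}{Q}$ and its negated twin. Each follows by \rname{H-ConsPre}, since $w_g(s)\wedge g(s) \to w_1(s)$.

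\emph{Loops.} This is the only nontrivial case and the one I expect to be the main obstacle, because the existential must be opened and the continuation chosen so that the induction hypothesis is applicable. From the hypothesis I would destructure a witness predicate $\kvar$ together with proofs of (preserve), (exit) and (continue). I then set $I \doteq \liftkV{\kvar}{V}$ and take $w \doteq I$, so $k(w)$ is (continue) directly. Applying the induction hypothesis for the body $c$ to (preserve), with postcondition $I$ and continuation $\cLam{w'}{\forall s.\, I(s)\wedge g(s) \to w'(s)}$, yields $w'$ with $\validhoare{w'}{c}{I}$ and $\forall s.\, I(s)\wedge g(s)\to w'(s)$. \rname{H-ConsPre} then gives $\validhoare{\cLam{s}{I(s)\wedge g(s)}}{c}{I}$. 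Finally \rname{H-While} applies with $P \doteq I$: the entry premise is trivial and (exit) is the third premise.

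The subtle points are bookkeeping rather than logic. The induction must quantify over $V$, since the body is generated at $V''$ while $I$ is built from $V$. The extra spec terms $C$ only change the arity of $\kvar$, and play no role in the argument.
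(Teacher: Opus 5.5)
Your proof is correct and takes the natural route. The paper gives no written proof of this lemma; your argument is structural induction on $c$, generalizing $V$, $Q$ and $k$, closing each case with the matching rule of \Cref{lem:imp-hoare} and taking as witness the precondition the generator hands to $k$. The loop case is handled correctly: you open the existential, take the lifted invariant as $w$, and discharge the triple via \rname{H-ConsPre} and \rname{H-While} with $P$ instantiated to the invariant itself.
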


\begin{theorem}[VC-Generation soundness]\label{thm:whileCHC-sound}
For any scope $V$ and spec terms $C$, if $\vcgenV{V}{P}{c}{Q}$ then
$\validhoare{P}{c}{Q}$.
\end{theorem}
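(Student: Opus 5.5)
Theorem~\ref{thm:whileCHC-sound} follows by instantiating Lemma~\ref{lem:imp-gen-sound}, so the plan is to prove that lemma by structural induction on the command $c$, generalizing over the scope $V$, the postcondition $Q$, and the continuation $k$. The theorem then falls out at once. By definition, $\vcgenV{V}{P}{c}{Q} = \wpV{V}{c}{Q}{k_P}$ with $k_P \eqdef \cLam{w}{\forall s.\, P(s) \to w(s)}$. The lemma yields some $w$ with $\validhoare{w}{c}{Q}$ and $k_P(w)$, that is, $\forall s.\,P(s)\to w(s)$. Rule \rname{H-ConsPre} of Lemma~\ref{lem:imp-hoare} then gives $\validhoare{P}{c}{Q}$.

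The inductive cases, taken in order, go as follows.
\begin{itemize}
  \item \emph{Skip.} Take $w = Q$ and apply \rname{H-Skip}.
  \item \emph{Assignment.} Take $w = \cLam{s}{Q(s\subst{x}{e(s)})}$ and apply \rname{H-Asgn}.
  \item \emph{Sequence.} Apply the induction hypothesis for $c_2$ at scope $V'$ with continuation $\cLam{w}{\wpV{V}{c_1}{w}{k}}$. This gives $w_2$ with $\validhoare{w_2}{c_2}{Q}$ and $\wpV{V}{c_1}{w_2}{k}$. The induction hypothesis for $c_1$ then gives $w_1$ with $\validhoare{w_1}{c_1}{w_2}$ and $k(w_1)$. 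Conclude with \rname{H-Seq}.
  \item \emph{Conditional.} Nest the two induction hypotheses in the same way to get $w_1,w_2$ and $k(w_g)$. Then apply \rname{H-If} with precondition $w_g$, after weakening each branch's precondition by \rname{H-ConsPre}: from $w_g(s)\wedge g(s)$ we obtain $w_1(s)$, and from $w_g(s)\wedge\neg g(s)$ we obtain $w_2(s)$.
\end{itemize}
Because $V$ is generalized, the changing scopes $V'$ and $V''$ cause no trouble. They only fix which arguments the invariant is applied to, and the argument never depends on them.

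The while case is the main obstacle, because the existential Horn variable has to be threaded through. From the hypothesis, destructure the witness $\kvar$ together with the three conjuncts, and set $I \eqdef \liftkV{\kvar}{V}$. The candidate precondition is $w \eqdef I$, so the (continue) conjunct supplies $k(I)$ directly. To get $\validhoare{I}{\while{g}{c}}{Q}$, use \rname{H-While} with invariant $I$; its initial premise $\forall s.\,I(s)\to I(s)$ is trivial, and its exit premise is (exit), after uncurrying. For the body premise, apply the induction hypothesis for $c$, at scope $V''$, to the (preserve) conjunct. This yields $w'$ with $\validhoare{w'}{c}{I}$ and $\forall s.\, I(s)\wedge g(s)\to w'(s)$, and \rname{H-ConsPre} then gives $\validhoare{\cLam{s}{I(s)\wedge g(s)}}{c}{I}$.

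The delicate point is that the induction hypothesis must be stated for \emph{every} postcondition, including the $\kvar$-dependent one $I$, and for every continuation. This is exactly why the statement quantifies over arbitrary $Q$ and $k$, and why we eliminate the existential before invoking the induction hypothesis. Lemma~\ref{lem:imp-hoare} itself is assumed; \rname{H-While} is the only rule needing its own induction, on the derivation of $\evalto{\langle \while{g}{c}, s_1\rangle}{s_3}$.
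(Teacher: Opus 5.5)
Your proposal is correct and follows the paper's route. The paper likewise factors the theorem through the Generator Soundness lemma (\Cref{lem:imp-gen-sound}): from any satisfied weakest-precondition obligation it extracts a precondition $w$ that is valid for $c$ and satisfies the continuation, proved by induction on $c$ generalizing the scope, postcondition and continuation. It then closes the theorem by instantiating the continuation with the top-level entailment from $P$ and strengthening the precondition, which is exactly your \rname{H-ConsPre} step. Your while case, where you eliminate the existential $\kappa$ before applying the induction hypothesis at the $\kappa$-dependent postcondition, is the case that this generalization exists to support.
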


%% file: references.bib
@article{Cosman17,
  author     = {Cosman, Benjamin and Jhala, Ranjit},
  title      = {Local refinement typing},
  year       = {2017},
  issue_date = {September 2017},
  publisher  = {Association for Computing Machinery},
  address    = {New York, NY, USA},
  volume     = {1},
  number     = {ICFP},
  url        = {https://doi.org/10.1145/3110270},
  doi        = {10.1145/3110270},
  journal    = {Proc. ACM Program. Lang.},
  month      = aug,
  articleno  = {26},
  numpages   = {27}
}

@incollection{bjorner2015horn,
  author    = {Bj{\o}rner, Nikolaj and Gurfinkel, Arie and McMillan, Ken and Rybalchenko, Andrey},
  title     = {Horn Clause Solvers for Program Verification},
  booktitle = {Fields of Logic and Computation II},
  pages     = {24--51},
  year      = {2015},
  publisher = {Springer},
  doi       = {10.1007/978-3-319-23534-9_2}
}

@book{dijkstra1976,
  title     = {A Discipline of Programming},
  author    = {Dijkstra, Edsger W.},
  year      = {1976},
  publisher = {Prentice-Hall},
  address   = {Englewood Cliffs, N.J.},
  isbn      = {978-0-13-215871-8},
  series    = {Prentice-Hall Series in Automatic Computation}
}

@inproceedings{Rondon08,
  author    = {Rondon, Patrick M. and Kawaguchi, Ming and Jhala, Ranjit},
  title     = {Liquid types},
  year      = {2008},
  isbn      = {9781595938602},
  publisher = {Association for Computing Machinery},
  address   = {New York, NY, USA},
  url       = {https://doi.org/10.1145/1375581.1375602},
  doi       = {10.1145/1375581.1375602},
  booktitle = {Proceedings of the 29th ACM SIGPLAN Conference on Programming Language Design and Implementation},
  pages     = {159–169},
  numpages  = {11},
  location  = {Tucson, AZ, USA},
  series    = {PLDI '08}
}

@INPROCEEDINGS{mariposa,
  author={Zhou, Yi and Bosamiya, Jay and Takashima, Yoshiki and Li, Jessica and Heule, Marijn and Parno, Bryan},
  booktitle={2023 Formal Methods in Computer-Aided Design (FMCAD)},
  title={Mariposa: Measuring SMT Instability in Automated Program Verification},
  year={2023},
  volume={},
  number={},
  pages={178-188},
  doi={10.34727/2023/isbn.978-3-85448-060-0_26}
}

@techreport{Luckham1979,
  title = {Stanford Pascal Verifier user manual},
  author = {Luckham, D. C. and German, S. M. and von Henke, F. W. and Karp, R. A. and Milne, P. W. and Oppen, D. C. and Polak, W. and Scherlis, W. L.},
  institution = {Stanford University, Department of Computer Science},
  type = {Technical Report},
  number = {STAN-CS-79-731},
  year = {1979},
  month = {March}
}

@inproceedings{escjava,
author = {Flanagan, Cormac and Leino, K. Rustan M. and Lillibridge, Mark and Nelson, Greg and Saxe, James B. and Stata, Raymie},
title = {Extended static checking for Java},
year = {2002},
isbn = {1581134630},
publisher = {Association for Computing Machinery},
address = {New York, NY, USA},
url = {https://doi.org/10.1145/512529.512558},
doi = {10.1145/512529.512558},
booktitle = {Proceedings of the ACM SIGPLAN 2002 Conference on Programming Language Design and Implementation},
pages = {234–245},
numpages = {12},
location = {Berlin, Germany},
series = {PLDI '02}
}

@inproceedings{dafny,
author = {Leino, K. Rustan M.},
title = {Dafny: an automatic program verifier for functional correctness},
year = {2010},
isbn = {3642175104},
publisher = {Springer-Verlag},
address = {Berlin, Heidelberg},
booktitle = {Proceedings of the 16th International Conference on Logic for Programming, Artificial Intelligence, and Reasoning},
pages = {348–370},
numpages = {23},
location = {Dakar, Senegal},
series = {LPAR'10}
}

@article{LiquidUsability2025,
author = {Gamboa, Catarina and Reese, Abigail and Fonseca, Alcides and Aldrich, Jonathan},
title = {Usability Barriers for Liquid Types},
year = {2025},
issue_date = {June 2025},
publisher = {Association for Computing Machinery},
address = {New York, NY, USA},
volume = {9},
number = {PLDI},
url = {https://doi.org/10.1145/3729327},
doi = {10.1145/3729327},
journal = {Proc. ACM Program. Lang.},
month = jun,
articleno = {224},
numpages = {26}
}

@article{mugnier2025,
author = {Mugnier, Eric and Zhou, Yuanyuan and Jhala, Ranjit and Coblenz, Michael},
title = {On the Impact of Formal Verification on Software Development},
year = {2025},
issue_date = {October 2025},
publisher = {Association for Computing Machinery},
address = {New York, NY, USA},
volume = {9},
number = {OOPSLA2},
url = {https://doi.org/10.1145/3763181},
doi = {10.1145/3763181},
journal = {Proc. ACM Program. Lang.},
month = oct,
articleno = {403},
numpages = {27}
}

@inproceedings{leino2016trigger,
  author    = {K. Rustan M. Leino and Cl{\'{e}}ment Pit-Claudel},
  title     = {Trigger Selection Strategies to Stabilize Program Verifiers},
  booktitle = {Computer Aided Verification - 28th International Conference, {CAV} 2016, Toronto, ON, Canada, July 17-23, 2016, Proceedings, Part I},
  pages     = {361--381},
  year      = {2016},
  series    = {Lecture Notes in Computer Science},
  volume    = {9779},
  publisher = {Springer},
  doi       = {10.1007/978-3-319-41528-4_20},
  url       = {https://doi.org}
}

@inproceedings{hacl,
  title={{HACL*}: A Verified Modern Cryptographic Library},
  author={Zinzindohou{\'{e}}, Jean-Karim and Bhargavan, Karthikeyan and Protzenko, Jonathan and Beurdouche, Benjamin},
  booktitle={Proceedings of the 2017 ACM SIGSAC Conference on Computer and Communications Security (CCS)},
  pages={1789--1806},
  year={2017},
  organization={ACM}
}

@inproceedings{LiquidHaskell,
  author    = {Vazou, Niki and Seidel, Eric L. and Jhala, Ranjit and Vytiniotis, Dimitrios and Peyton-Jones, Simon},
  title     = {Refinement types for Haskell},
  year      = {2014},
  isbn      = {9781450328739},
  publisher = {Association for Computing Machinery},
  address   = {New York, NY, USA},
  url       = {https://doi.org/10.1145/2628136.2628161},
  doi       = {10.1145/2628136.2628161},
  booktitle = {Proceedings of the 19th ACM SIGPLAN International Conference on Functional Programming},
  pages     = {269–282},
  numpages  = {14},
  location  = {Gothenburg, Sweden},
  series    = {ICFP '14}
}

@article{FluxRS,
  author     = {Lehmann, Nico and Geller, Adam T. and Vazou, Niki and Jhala, Ranjit},
  title      = {Flux: Liquid Types for Rust},
  year       = {2023},
  issue_date = {June 2023},
  publisher  = {Association for Computing Machinery},
  address    = {New York, NY, USA},
  volume     = {7},
  number     = {PLDI},
  url        = {https://doi.org/10.1145/3591283},
  doi        = {10.1145/3591283},
  journal    = {Proc. ACM Program. Lang.},
  month      = jun,
  articleno  = {169},
  numpages   = {25}
}

@inproceedings{F*-dijkstra-monads,
  author    = {Swamy, Nikhil and Weinberger, Joel and Schlesinger, Cole and Chen, Juan and Livshits, Benjamin},
  title     = {Verifying higher-order programs with the dijkstra monad},
  year      = {2013},
  isbn      = {9781450320146},
  publisher = {Association for Computing Machinery},
  address   = {New York, NY, USA},
  url       = {https://doi.org/10.1145/2491956.2491978},
  doi       = {10.1145/2491956.2491978},
  booktitle = {Proceedings of the 34th ACM SIGPLAN Conference on Programming Language Design and Implementation},
  pages     = {387–398},
  numpages  = {12},
  location  = {Seattle, Washington, USA},
  series    = {PLDI '13}
}

@book{concrete-semantics,
  author    = {Tobias Nipkow and Gerwin Klein},
  title     = {Concrete Semantics with Isabelle/HOL},
  publisher = {Springer},
  year      = {2014},
  url       = {http://concrete-semantics.org},
  isbn      = {9783319105420},
}

@book{software-foundations,
  author = {Benjamin C. Pierce and Arthur Azevedo de Amorim and Chris Casinghino and Marco Gaboardi and Michael Greenberg and C\u{a}t\u{a}lin Hri\c{t}cu and Vilhelm Sjöberg and Brent Yorgey},
  title = {Logical Foundations},
  series = {Software Foundations},
  volume = {1},
  year = {2026},
  publisher = {Electronic textbook},
  url = {https://softwarefoundations.cis.upenn.edu/lf-current/}
}

@inproceedings{z3,
  author    = {de Moura, Leonardo
               and Bj{\o}rner, Nikolaj},
  editor    = {Ramakrishnan, C. R.
               and Rehof, Jakob},
  title     = {Z3: An Efficient SMT Solver},
  booktitle = {Tools and Algorithms for the Construction and Analysis of Systems},
  year      = {2008},
  publisher = {Springer Berlin Heidelberg},
  address   = {Berlin, Heidelberg},
  pages     = {337--340},
  isbn      = {978-3-540-78800-3}
}

@inproceedings{STORM_OSDI,
  author    = {Nico Lehmann and Rose Kunkel and Jordan Brown and Jean Yang and Niki Vazou and Nadia Polikarpova and Deian Stefan and Ranjit Jhala},
  title     = {{STORM}: Refinement Types for Secure Web Applications},
  booktitle = {15th {USENIX} Symposium on Operating Systems Design and Implementation ({OSDI} 21)},
  year      = {2021},
  isbn      = {978-1-939133-22-9},
  pages     = {441--459},
  url       = {https://www.usenix.org/conference/osdi21/presentation/lehmann},
  publisher = {{USENIX} Association},
  month     = jul
}

@inproceedings{ic3,
  title={SAT-based model checking without unrolling},
  author={Bradley, Aaron R},
  booktitle={Verification, Model Checking, and Abstract Interpretation (VMCAI)},
  pages={70--87},
  year={2011},
  organization={Springer},
  doi={10.1007/978-3-642-18275-4_7}
}

@book{Appel2014,
  title     = {Program Logics for Certified Compilers},
  author    = {Appel, Andrew W. and Beringer, Lennart and Dockins, Robert and Hobor, Aquinas and Leroy, Xavier and Stewart, Gordon},
  year      = {2014},
  publisher = {Cambridge University Press},
  address   = {New York, NY, USA},
  isbn      = {9781107048010}
}

@InProceedings{RFJ,
  author =	{Sun, Ke and Wang, Di and Chen, Sheng and Wang, Meng and Hao, Dan},
  title =	{{Formalizing, Mechanizing, and Verifying Class-Based Refinement Types}},
  booktitle =	{38th European Conference on Object-Oriented Programming (ECOOP 2024)},
  pages =	{39:1--39:30},
  series =	{Leibniz International Proceedings in Informatics (LIPIcs)},
  ISBN =	{978-3-95977-341-6},
  ISSN =	{1868-8969},
  year =	{2024},
  volume =	{313},
  editor =	{Aldrich, Jonathan and Salvaneschi, Guido},
  publisher =	{Schloss Dagstuhl -- Leibniz-Zentrum f{\"u}r Informatik},
  address =	{Dagstuhl, Germany},
  URL =		{https://drops.dagstuhl.de/entities/document/10.4230/LIPIcs.ECOOP.2024.39},
  URN =		{urn:nbn:de:0030-drops-208881},
  doi =		{10.4230/LIPIcs.ECOOP.2024.39}
}

@inproceedings{hax,
author = {Bhargavan, Karthikeyan and Buyse, Maxime and Franceschino, Lucas and Hansen, Lasse Letager and Kiefer, Franziskus and Schneider-Bensch, Jonas and Spitters, Bas},
title = {hax: Verifying Security-Critical Rust Software Using Multiple Provers},
year = {2024},
isbn = {978-3-031-86694-4},
publisher = {Springer-Verlag},
address = {Berlin, Heidelberg},
url = {https://doi.org/10.1007/978-3-031-86695-1_7},
doi = {10.1007/978-3-031-86695-1_7},
booktitle = {Verified Software. Theories, Tools and Experiments: 16th International Conference, VSTTE 2024, Prague, Czech Republic, October 14–15, 2024, Revised Selected Papers},
pages = {96–119},
numpages = {24},
location = {Prague, Czech Republic}
}

@inproceedings{hs2coq,
  author = {Spector-Zabusky, Antal and Breitner, Joachim and Rizkallah, Christine and Weirich, Stephanie},
  title = {Total Haskell is Reasonable Coq},
  booktitle = {Proceedings of the 7th ACM SIGPLAN International Conference on Certified Programs and Proofs},
  series = {CPP 2018},
  year = {2018},
  pages = {152--164},
  doi = {10.1145/3167092},
  publisher = {Association for Computing Machinery}
}

@inproceedings{fine,
  author       = {Juan Chen and
                  Ravi Chugh and
                  Nikhil Swamy},
  editor       = {Benjamin G. Zorn and
                  Alex Aiken},
  title        = {Type-preserving compilation of end-to-end verification of security
                  enforcement},
  booktitle    = {Proceedings of the 2010 {ACM} {SIGPLAN} Conference on Programming
                  Language Design and Implementation, {PLDI} 2010, Toronto, Ontario,
                  Canada, June 5-10, 2010},
  pages        = {412--423},
  publisher    = {{ACM}},
  year         = {2010},
  url          = {https://doi.org/10.1145/1806596.1806643},
  doi          = {10.1145/1806596.1806643},
  bibsource    = {dblp computer science bibliography, https://dblp.org}
}

@inproceedings{blastpcc,
author = {Henzinger, Thomas A. and Jhala, Ranjit and Majumdar, Rupak and Necula, George C. and Sutre, Gr\'{e}goire and Weimer, Westley},
title = {Temporal-Safety Proofs for Systems Code},
year = {2002},
isbn = {3540439978},
publisher = {Springer-Verlag},
address = {Berlin, Heidelberg},
booktitle = {Proceedings of the 14th International Conference on Computer Aided Verification},
pages = {526–538},
numpages = {13},
series = {CAV '02}
}

@article{stainless,
author = {Hamza, Jad and Voirol, Nicolas and Kun\v{c}ak, Viktor},
title = {System FR: formalized foundations for the stainless verifier},
year = {2019},
issue_date = {October 2019},
publisher = {Association for Computing Machinery},
address = {New York, NY, USA},
volume = {3},
number = {OOPSLA},
url = {https://doi.org/10.1145/3360592},
doi = {10.1145/3360592},
journal = {Proc. ACM Program. Lang.},
month = oct,
articleno = {166},
numpages = {30}
}

@inproceedings{framac,
author = {Cuoq, Pascal and Kirchner, Florent and Kosmatov, Nikolai and Prevosto, Virgile and Signoles, Julien and Yakobowski, Boris},
title = {Frama-C: a software analysis perspective},
year = {2012},
isbn = {9783642338250},
publisher = {Springer-Verlag},
address = {Berlin, Heidelberg},
url = {https://doi.org/10.1007/978-3-642-33826-7_16},
doi = {10.1007/978-3-642-33826-7_16},
booktitle = {Proceedings of the 10th International Conference on Software Engineering and Formal Methods},
pages = {233–247},
numpages = {15},
location = {Thessaloniki, Greece},
series = {SEFM'12}
}

@inproceedings{TOCK,
  author    = {Rindisbacher, Vivien and Johnson, Evan and Lehmann, Nico and Potyondy, Tyler and Pannuto, Pat and Savage, Stefan and Stefan, Deian and Jhala, Ranjit},
  title     = {TickTock: Verified Isolation in a Production Embedded OS},
  year      = {2025},
  isbn      = {9798400718700},
  publisher = {Association for Computing Machinery},
  address   = {New York, NY, USA},
  url       = {https://doi.org/10.1145/3731569.3764856},
  doi       = {10.1145/3731569.3764856},
  booktitle = {Proceedings of the ACM SIGOPS 31st Symposium on Operating Systems Principles},
  pages     = {786–801},
  numpages  = {16},
  location  = {Lotte Hotel World, Seoul, Republic of Korea},
  series    = {SOSP '25}
}

@article{SMTbug1,
  author     = {Winterer, Dominik and Su, Zhendong},
  title      = {Validating SMT Solvers for Correctness and Performance via Grammar-Based Enumeration},
  year       = {2024},
  issue_date = {October 2024},
  publisher  = {Association for Computing Machinery},
  address    = {New York, NY, USA},
  volume     = {8},
  number     = {OOPSLA2},
  url        = {https://doi.org/10.1145/3689795},
  doi        = {10.1145/3689795},
  journal    = {Proc. ACM Program. Lang.},
  month      = oct,
  articleno  = {355},
  numpages   = {24}
}

@inproceedings{SMTbug2,
  author    = {Winterer, Dominik and Zhang, Chengyu and Su, Zhendong},
  title     = {Validating SMT solvers via semantic fusion},
  year      = {2020},
  isbn      = {9781450376136},
  publisher = {Association for Computing Machinery},
  address   = {New York, NY, USA},
  url       = {https://doi.org/10.1145/3385412.3385985},
  doi       = {10.1145/3385412.3385985},
  booktitle = {Proceedings of the 41st ACM SIGPLAN Conference on Programming Language Design and Implementation},
  pages     = {718–730},
  numpages  = {13},
  location  = {London, UK},
  series    = {PLDI 2020}
}

@article{SMTbug3,
  author     = {Winterer, Dominik and Zhang, Chengyu and Su, Zhendong},
  title      = {On the unusual effectiveness of type-aware operator mutations for testing SMT solvers},
  year       = {2020},
  issue_date = {November 2020},
  publisher  = {Association for Computing Machinery},
  address    = {New York, NY, USA},
  volume     = {4},
  number     = {OOPSLA},
  url        = {https://doi.org/10.1145/3428261},
  doi        = {10.1145/3428261},
  journal    = {Proc. ACM Program. Lang.},
  month      = nov,
  articleno  = {193},
  numpages   = {25}
}

@inproceedings{SMTbug4,
  author    = {Bringolf, Mauro and Winterer, Dominik and Su, Zhendong},
  title     = {Finding and Understanding Incompleteness Bugs in SMT Solvers},
  year      = {2023},
  isbn      = {9781450394758},
  publisher = {Association for Computing Machinery},
  address   = {New York, NY, USA},
  url       = {https://doi.org/10.1145/3551349.3560435},
  doi       = {10.1145/3551349.3560435},
  booktitle = {Proceedings of the 37th IEEE/ACM International Conference on Automated Software Engineering},
  articleno = {43},
  numpages  = {10},
  location  = {Rochester, MI, USA},
  series    = {ASE '22}
}

@misc{Coq-refman,
  title        = {The {Coq} Reference Manual -- Release 8.19.0},
  author       = {{The Coq Development Team}},
  year         = {2024},
  howpublished = {\url{https://coq.inria.fr/doc/V8.19.0/refman}}
}

@misc{lean4lean,
  title         = {Lean4Lean: Verifying a Typechecker for Lean, in Lean},
  author        = {Mario Carneiro},
  year          = {2025},
  eprint        = {2403.14064},
  archiveprefix = {arXiv},
  primaryclass  = {cs.PL},
  url           = {https://arxiv.org/abs/2403.14064}
}

@inproceedings{ironfleet,
  author = {Hawblitzel, Chris and Howell, Jon and Kapritsos, Manos and Lorch, Jay R. and Parno, Bryan and Stephenson, Justine and Setty, Srinath and Zill, Brian},
  title = {IronFleet: Proving Practical Distributed Systems Correct},
  booktitle = {Proceedings of the 25th Symposium on Operating Systems Principles},
  series = {SOSP '15},
  year = {2015},
  isbn = {978-1-4503-3834-4},
  location = {Monterey, California, USA},
  pages = {1--16},
  numpages = {16},
  url = {https://doi.org},
  doi = {10.1145/2815400.2815428},
  publisher = {Association for Computing Machinery},
  address = {New York, NY, USA}
}

@inproceedings{Karp72,
  author    = {Richard M. Karp},
  editor    = {Raymond E. Miller and
               James W. Thatcher},
  title     = {Reducibility Among Combinatorial Problems},
  booktitle = {Proceedings of a symposium on the Complexity of Computer Computations,
               held March 20-22, 1972, at the {IBM} Thomas J. Watson Research Center,
               Yorktown Heights, New York, {USA}},
  series    = {The {IBM} Research Symposia Series},
  pages     = {85--103},
  publisher = {Plenum Press, New York},
  year      = {1972},
  url       = {https://doi.org/10.1007/978-1-4684-2001-2\_9},
  doi       = {10.1007/978-1-4684-2001-2\_9},
  bibsource = {dblp computer science bibliography, https://dblp.org}
}

@inproceedings{mathlib,
  author    = {{The mathlib Community}},
  title     = {The {L}ean {M}athematical {L}ibrary},
  booktitle = {Proceedings of the 9th {ACM} {SIGPLAN} International Conference
               on Certified Programs and Proofs},
  series    = {CPP 2020},
  publisher = {ACM},
  address   = {New Orleans, LA, USA},
  year      = {2020},
  month     = jan,
  doi       = {10.1145/3372885.3373824},
  url       = {https://doi.org/10.1145/3372885.3373824}
}

@inproceedings{Graf97,
  author    = {Graf, Susanne
               and Saidi, Hassen},
  editor    = {Grumberg, Orna},
  title     = {Construction of abstract state graphs with PVS},
  booktitle = {Computer Aided Verification},
  year      = {1997},
  publisher = {Springer Berlin Heidelberg},
  address   = {Berlin, Heidelberg},
  pages     = {72--83},
  isbn      = {978-3-540-69195-2}
}

@inproceedings{Houdini,
  author    = {Cormac Flanagan and
               K. Rustan M. Leino},
  editor    = {Jos{\'{e}} Nuno Oliveira and
               Pamela Zave},
  title     = {Houdini, an Annotation Assistant for ESC/Java},
  booktitle = {{FME} 2001: Formal Methods for Increasing Software Productivity, International
               Symposium of Formal Methods Europe, Berlin, Germany, March 12-16,
               2001, Proceedings},
  series    = {Lecture Notes in Computer Science},
  volume    = {2021},
  pages     = {500--517},
  publisher = {Springer},
  year      = {2001},
  url       = {https://doi.org/10.1007/3-540-45251-6\_29},
  doi       = {10.1007/3-540-45251-6\_29},
  bibsource = {dblp computer science bibliography, https://dblp.org}
}

@article{Borkowski24,
  author     = {Borkowski, Michael H. and Vazou, Niki and Jhala, Ranjit},
  title      = {Mechanizing Refinement Types},
  year       = {2024},
  issue_date = {January 2024},
  publisher  = {Association for Computing Machinery},
  address    = {New York, NY, USA},
  volume     = {8},
  number     = {POPL},
  url        = {https://doi.org/10.1145/3632912},
  doi        = {10.1145/3632912},
  journal    = {Proc. ACM Program. Lang.},
  month      = jan,
  articleno  = {70},
  numpages   = {30}
}

@inproceedings{LehmannTanter16,
  author    = {Nico Lehmann and Éric Tanter},
  title     = {Formalizing Simple Refinement Types in {Coq}},
  booktitle = {2nd International Workshop on Coq for Programming Languages (CoqPL'16)},
  address   = {St. Petersburg, FL, USA},
  year      = {2016}
}

@inproceedings{verus2024,
  title={Verus: A Practical Foundation for Systems Verification},
  author={Lattuada, Andrea and Hance, Travis and Bosamiya, Jay and Brun, Matthias and Cho, Chanhee and LeBlanc, Hayley and Srinivasan, Pranav and Achermann, Reto and Chajed, Tej and Hawblitzel, Chris and Howell, Jon and Lorch, Jacob R. and Padon, Oded and Parno, Bryan},
  booktitle={Proceedings of the 30th ACM SIGOPS Symposium on Operating Systems Principles (SOSP)},
  pages={438--454},
  year={2024},
  doi={10.1145/3694715.3695952},
  url={https://dl.acm.org/doi/10.1145/3694715.3695952}
}

@article{verus,
  author     = {Lattuada, Andrea and Hance, Travis and Cho, Chanhee and Brun, Matthias and Subasinghe, Isitha and Zhou, Yi and Howell, Jon and Parno, Bryan and Hawblitzel, Chris},
  title      = {Verus: Verifying Rust Programs using Linear Ghost Types},
  year       = {2023},
  issue_date = {April 2023},
  publisher  = {Association for Computing Machinery},
  address    = {New York, NY, USA},
  volume     = {7},
  number     = {OOPSLA1},
  url        = {https://doi.org/10.1145/3586037},
  doi        = {10.1145/3586037},
  journal    = {Proc. ACM Program. Lang.},
  month      = apr,
  articleno  = {85},
  numpages   = {30}
}

@inproceedings{prusti,
  author    = {Astrauskas, Vytautas and B\'{\i}l\'{a}, Aurea and Fiala, Jon\'{a}\v{s} and Grannan, Zachary and Matheja, Christoph and M\"{u}ller, Peter and Poli, Federico and Summers, Alexander J.},
  title     = {The Prusti Project: Formal Verification for Rust},
  year      = {2022},
  isbn      = {978-3-031-06772-3},
  publisher = {Springer-Verlag},
  address   = {Berlin, Heidelberg},
  url       = {https://doi.org/10.1007/978-3-031-06773-0_5},
  doi       = {10.1007/978-3-031-06773-0_5},
  booktitle = {NASA Formal Methods: 14th International Symposium, NFM 2022, Pasadena, CA, USA, May 24–27, 2022, Proceedings},
  pages     = {88–108},
  numpages  = {21},
  location  = {Pasadena, CA, USA}
}

@inproceedings{Creusot,
  author    = {Denis, Xavier and Jourdan, Jacques-Henri and March\'{e}, Claude},
  title     = {Creusot: A Foundry for the Deductive Verification of Rust Programs},
  year      = {2022},
  isbn      = {978-3-031-17243-4},
  publisher = {Springer-Verlag},
  address   = {Berlin, Heidelberg},
  url       = {https://doi.org/10.1007/978-3-031-17244-1_6},
  doi       = {10.1007/978-3-031-17244-1_6},
  booktitle = {Formal Methods  and Software Engineering: 23rd International Conference on Formal Engineering Methods, ICFEM 2022, Madrid, Spain, October 24–27, 2022, Proceedings},
  pages     = {90–105},
  numpages  = {16},
  location  = {Madrid, Spain}
}

@inproceedings{jahob,
author = {Zee, Karen and Kuncak, Viktor and Rinard, Martin},
title = {Full functional verification of linked data structures},
year = {2008},
isbn = {9781595938602},
publisher = {Association for Computing Machinery},
address = {New York, NY, USA},
url = {https://doi.org/10.1145/1375581.1375624},
doi = {10.1145/1375581.1375624},
booktitle = {Proceedings of the 29th ACM SIGPLAN Conference on Programming Language Design and Implementation},
pages = {349–361},
numpages = {13},
location = {Tucson, AZ, USA},
series = {PLDI '08}
}

@inproceedings{scalabelle,
author = {Hupel, Lars and Kuncak, Viktor},
title = {Translating Scala Programs to Isabelle/HOL},
year = {2016},
isbn = {9783319402284},
publisher = {Springer-Verlag},
address = {Berlin, Heidelberg},
url = {https://doi.org/10.1007/978-3-319-40229-1_38},
doi = {10.1007/978-3-319-40229-1_38},
booktitle = {Proceedings of the 8th International Joint Conference on Automated Reasoning - Volume 9706},
pages = {568–577},
numpages = {10}
}

@article{Thrust,
  author     = {Ogawa, Hiromi and Sekiyama, Taro and Unno, Hiroshi},
  title      = {Thrust: A Prophecy-Based Refinement Type System for Rust},
  year       = {2025},
  issue_date = {June 2025},
  publisher  = {Association for Computing Machinery},
  address    = {New York, NY, USA},
  volume     = {9},
  number     = {PLDI},
  url        = {https://doi.org/10.1145/3729333},
  doi        = {10.1145/3729333},
  journal    = {Proc. ACM Program. Lang.},
  month      = jun,
  articleno  = {230},
  numpages   = {25}
}

@article{Aeneas,
  author     = {Ho, Son and Protzenko, Jonathan},
  title      = {Aeneas: Rust verification by functional translation},
  year       = {2022},
  issue_date = {August 2022},
  publisher  = {Association for Computing Machinery},
  address    = {New York, NY, USA},
  volume     = {6},
  number     = {ICFP},
  url        = {https://doi.org/10.1145/3547647},
  doi        = {10.1145/3547647},
  journal    = {Proc. ACM Program. Lang.},
  month      = aug,
  articleno  = {116},
  numpages   = {31}
}

@article{RefinedRust,
  author     = {G\"{a}her, Lennard and Sammler, Michael and Jung, Ralf and Krebbers, Robbert and Dreyer, Derek},
  title      = {RefinedRust: A Type System for High-Assurance Verification of Rust Programs},
  year       = {2024},
  issue_date = {June 2024},
  publisher  = {Association for Computing Machinery},
  address    = {New York, NY, USA},
  volume     = {8},
  number     = {PLDI},
  url        = {https://doi.org/10.1145/3656422},
  doi        = {10.1145/3656422},
  journal    = {Proc. ACM Program. Lang.},
  month      = jun,
  articleno  = {192},
  numpages   = {25}
}

@inproceedings{RefinedC,
  author    = {Sammler, Michael and Lepigre, Rodolphe and Krebbers, Robbert and Memarian, Kayvan and Dreyer, Derek and Garg, Deepak},
  title     = {RefinedC: automating the foundational verification of C code with refined ownership types},
  year      = {2021},
  isbn      = {9781450383912},
  publisher = {Association for Computing Machinery},
  address   = {New York, NY, USA},
  url       = {https://doi.org/10.1145/3453483.3454036},
  doi       = {10.1145/3453483.3454036},
  booktitle = {Proceedings of the 42nd ACM SIGPLAN International Conference on Programming Language Design and Implementation},
  pages     = {158–174},
  numpages  = {17},
  location  = {Virtual, Canada},
  series    = {PLDI 2021}
}

@article{smt-certificates,
author = {Barbosa, Haniel and Barrett, Clark and Cook, Byron and Dutertre, Bruno and Kremer, Gereon and Lachnitt, Hanna and Niemetz, Aina and N\"{o}tzli, Andres and Ozdemir, Alex and Preiner, Mathias and Reynolds, Andrew and Tinelli, Cesare and Zohar, Yoni},
title = {Generating and Exploiting Automated Reasoning Proof Certificates},
year = {2023},
issue_date = {October 2023},
publisher = {Association for Computing Machinery},
address = {New York, NY, USA},
volume = {66},
number = {10},
issn = {0001-0782},
url = {https://doi.org/10.1145/3587692},
doi = {10.1145/3587692},
journal = {Commun. ACM},
month = sep,
pages = {86–95},
numpages = {10}
}

@article{Spacer1,
  author     = {Komuravelli, Anvesh and Gurfinkel, Arie and Chaki, Sagar},
  title      = {SMT-based model checking for recursive programs},
  year       = {2016},
  issue_date = {June      2016},
  publisher  = {Kluwer Academic Publishers},
  address    = {USA},
  volume     = {48},
  number     = {3},
  issn       = {0925-9856},
  url        = {https://doi.org/10.1007/s10703-016-0249-4},
  doi        = {10.1007/s10703-016-0249-4},
  journal    = {Form. Methods Syst. Des.},
  month      = jun,
  pages      = {175–205},
  numpages   = {31}
}

@inproceedings{Why3,
author = {Filli\^{a}tre, Jean-Christophe and Paskevich, Andrei},
title = {Why3: where programs meet provers},
year = {2013},
isbn = {9783642370359},
publisher = {Springer-Verlag},
address = {Berlin, Heidelberg},
url = {https://doi.org/10.1007/978-3-642-37036-6_8},
doi = {10.1007/978-3-642-37036-6_8},
booktitle = {Proceedings of the 22nd European Conference on Programming Languages and Systems},
pages = {125–128},
numpages = {4},
location = {Rome, Italy},
series = {ESOP'13}
}

@article{Spacer2,
  author     = {Tsukada, Takeshi and Unno, Hiroshi},
  title      = {Inductive Approach to Spacer},
  year       = {2024},
  issue_date = {June 2024},
  publisher  = {Association for Computing Machinery},
  address    = {New York, NY, USA},
  volume     = {8},
  number     = {PLDI},
  url        = {https://doi.org/10.1145/3656457},
  doi        = {10.1145/3656457},
  journal    = {Proc. ACM Program. Lang.},
  month      = jun,
  articleno  = {227},
  numpages   = {24}
}

@inproceedings{ELDARICA,
  author    = {Hojjat, Hossein and Rümmer, Philipp},
  booktitle = {2018 Formal Methods in Computer Aided Design (FMCAD)},
  title     = {The ELDARICA Horn Solver},
  year      = {2018},
  volume    = {},
  number    = {},
  pages     = {1-7},
  doi       = {10.23919/FMCAD.2018.8603013}
}

@article{Hoare1969,
  title={An axiomatic basis for computer programming},
  author={Hoare, Charles Anthony Richard},
  journal={Communications of the ACM},
  volume={12},
  number={10},
  pages={576--580},
  year={1969},
  publisher={ACM New York, NY, USA}
}

@inproceedings{bjorner2013,
  author    = {Nikolaj S. Bj{\o}rner and Kenneth L. McMillan and Andrey Rybalchenko},
  title     = {On Solving Universally Quantified Horn Clauses},
  booktitle = {Static Analysis - 20th International Symposium, SAS 2013, Seattle, WA, USA, June 20-22, 2013. Proceedings},
  pages     = {105--125},
  year      = {2013},
  series    = {Lecture Notes in Computer Science},
  volume    = {7935},
  publisher = {Springer},
  doi       = {10.1007/978-3-642-38856-9_8}
}

@inproceedings{Floyd1967,
  title = {Assigning {{Meanings}} to {{Programs}}},
  booktitle = {Mathematical {{Aspects}} of {{Computer Science}}},
  author = {Floyd, Robert W.},
  editor = {Schwartz, J. T.},
  year = {1967},
  volume = {19},
  pages = {19--32},
  publisher = {American Mathematical Society}
}

@inproceedings{SeaHorn,
  title     = {The SeaHorn Verification Framework},
  author    = {Arie Gurfinkel and Temesghen Kahsai and Anvesh Komuravelli and Jorge A. Navas},
  booktitle = {International Conference on Computer Aided Verification},
  year      = {2015},
  pages     = {343--361},
  doi       = {10.1007/978-3-319-21690-4_20}
}

@article{Loom,
  author     = {Gladshtein, Vladimir and P\^{\i}rlea, George and Zhao, Qiyuan and Kurin, Vitaly and Sergey, Ilya},
  title      = {Foundational Multi-Modal Program Verifiers},
  year       = {2026},
  issue_date = {January 2026},
  publisher  = {Association for Computing Machinery},
  address    = {New York, NY, USA},
  volume     = {10},
  number     = {POPL},
  url        = {https://doi.org/10.1145/3776719},
  doi        = {10.1145/3776719},
  journal    = {Proc. ACM Program. Lang.},
  month      = jan,
  articleno  = {77},
  numpages   = {32}
}

@phdthesis{necula1998compiling,
  author  = {George Ciprian Necula},
  title   = {Compiling with Proofs},
  school  = {Carnegie Mellon University},
  year    = {1998},
  address = {Pittsburgh, PA, USA},
  note    = {Available as Technical Report CMU-CS-98-154},
  url     = {https://www.cs.cmu.edu/~rwh/students/necula.pdf}
}

@inproceedings{SMTCoq,
  TITLE = {{SMTCoq: A plug-in for integrating SMT solvers into Coq}},
  AUTHOR = {Ekici, Burak and Mebsout, Alain and Tinelli, Cesare and Keller, Chantal and Katz, Guy and Reynolds, Andrew and Barrett, Clark},
  URL = {https://hal.science/hal-01669345},
  BOOKTITLE = {{Computer Aided Verification - 29th International Conference}},
  ADDRESS = {Heidelberg, Germany},
  YEAR = {2017},
  MONTH = Jul,
  HAL_ID = {hal-01669345},
  HAL_VERSION = {v1},
}

@inproceedings{lean-smt,
  author    = {Mohamed, Abdalrhman
               and Mascarenhas, Tomaz
               and Khan, Harun
               and Barbosa, Haniel
               and Reynolds, Andrew
               and Qian, Yicheng
               and Tinelli, Cesare
               and Barrett, Clark},
  editor    = {Piskac, Ruzica
               and Rakamari{\'{c}}, Zvonimir},
  title     = {lean-smt: An SMT Tactic for Discharging Proof Goals in Lean},
  booktitle = {Computer Aided Verification},
  year      = {2025},
  publisher = {Springer Nature Switzerland},
  address   = {Cham},
  pages     = {197--212},
  isbn      = {978-3-031-98682-6}
}

@misc{bártek2025vampirediary,
  title         = {The Vampire Diary},
  author        = {Filip B{\'a}rtek and Ahmed Bhayat and Robin Coutelier and Márton Hajdu and Matthias Hetzenberger and Petra Hozzová and Laura Kovács and Jakob Rath and Michael Rawson and Giles Reger and Martin Suda and Johannes Schoisswohl and Andrei Voronkov},
  year          = {2025},
  eprint        = {2506.03030},
  archiveprefix = {arXiv},
  primaryclass  = {cs.LO},
  url           = {https://arxiv.org/abs/2506.03030}
}

@misc{vampire-lean,
  title         = {Lean on Vampire Proofs (Short Paper)},
  author        = {Jonas Bodingbauer and Márton Hajdu and Laura Kovács and Axel Polaczek and Michael Rawson},
  year          = {2026},
  eprint        = {2603.26342},
  archiveprefix = {arXiv},
  primaryclass  = {cs.LO},
  url           = {https://arxiv.org/abs/2603.26342}
}

@article{simplify,
  author  = {Detlefs, David and Nelson, Greg and Saxe, James B.},
  title   = {Simplify: A Theorem Prover for Program Checking},
  journal = {Journal of the ACM},
  volume  = {52},
  number  = {3},
  pages   = {365--473},
  year    = {2005},
  doi     = {10.1145/1066100.1066102}
}

@phdthesis{nelson1980,
  author = {Nelson, Charles Gregory},
  title  = {Techniques for Program Verification},
  school = {Stanford University},
  address = {Stanford, CA, USA},
  year   = {1980}
}

@article{refinement-tutorial,
  author     = {Jhala, Ranjit and Vazou, Niki},
  title      = {Refinement Types: A Tutorial},
  year       = {2021},
  issue_date = {Oct 2021},
  publisher  = {Now Publishers Inc.},
  address    = {Hanover, MA, USA},
  volume     = {6},
  number     = {3–4},
  issn       = {2325-1107},
  url        = {https://doi.org/10.1561/2500000032},
  doi        = {10.1561/2500000032},
  journal    = {Found. Trends Program. Lang.},
  month      = oct,
  pages      = {159–317},
  numpages   = {164}
}

@article{compiling-with-continuations,
  author     = {Flanagan, Cormac and Sabry, Amr and Duba, Bruce F. and Felleisen, Matthias},
  title      = {The essence of compiling with continuations},
  year       = {1993},
  issue_date = {June 1993},
  publisher  = {Association for Computing Machinery},
  address    = {New York, NY, USA},
  volume     = {28},
  number     = {6},
  issn       = {0362-1340},
  url        = {https://doi.org/10.1145/173262.155113},
  doi        = {10.1145/173262.155113},
  journal    = {SIGPLAN Not.},
  month      = jun,
  pages      = {237–247},
  numpages   = {11}
}

@InProceedings{dyn-dependent-types,
author="Ou, Xinming
and Tan, Gang
and Mandelbaum, Yitzhak
and Walker, David",
editor="Levy, Jean-Jacques
and Mayr, Ernst W.
and Mitchell, John C.",
title="Dynamic Typing with Dependent Types",
booktitle="Exploring New Frontiers of Theoretical Informatics",
year="2004",
publisher="Springer US",
address="Boston, MA",
pages="437--450",
isbn="978-1-4020-8141-5"
}

@inproceedings{BLAST,
  author = {Henzinger, Thomas A. and Jhala, Ranjit and Majumdar, Rupak and Sutre, Gr\'{e}goire},
  title = {Lazy abstraction},
  year = {2002},
  isbn = {1581134509},
  publisher = {Association for Computing Machinery},
  address = {New York, NY, USA},
  url = {https://doi.org/10.1145/503272.503279},
  doi = {10.1145/503272.503279},
  booktitle = {Proceedings of the 29th ACM SIGPLAN-SIGACT Symposium on Principles of Programming Languages},
  pages = {58–70},
  numpages = {13},
  location = {Portland, Oregon},
  series = {POPL '02}
}

@inproceedings{IMPACT,
  title={Lazy abstraction with interpolants},
  author={McMillan, Kenneth L},
  booktitle={International Conference on Computer Aided Verification (CAV)},
  pages={123--136},
  year={2006},
  organization={Springer},
  doi={10.1007/11817963_14}
}

@inproceedings{dependent-contracts,
  author    = {Knowles, Kenneth and Flanagan, Cormac},
  title     = {Compositional reasoning and decidable checking for dependent contract types},
  year      = {2009},
  isbn      = {9781605583303},
  publisher = {Association for Computing Machinery},
  address   = {New York, NY, USA},
  url       = {https://doi.org/10.1145/1481848.1481853},
  doi       = {10.1145/1481848.1481853},
  booktitle = {Proceedings of the 3rd Workshop on Programming Languages Meets Program Verification},
  pages     = {27–38},
  numpages  = {12},
  location  = {Savannah, GA, USA},
  series    = {PLPV '09}
}

@article{typed-scheme,
  author     = {Tobin-Hochstadt, Sam and Felleisen, Matthias},
  title      = {The design and implementation of typed scheme},
  year       = {2008},
  issue_date = {January 2008},
  publisher  = {Association for Computing Machinery},
  address    = {New York, NY, USA},
  volume     = {43},
  number     = {1},
  issn       = {0362-1340},
  url        = {https://doi.org/10.1145/1328897.1328486},
  doi        = {10.1145/1328897.1328486},
  journal    = {SIGPLAN Not.},
  month      = jan,
  pages      = {395–406},
  numpages   = {12}
}

@inproceedings{tock-proper,
  title={Tock: From Research to Securing 10 Million Computers},
  author={Schuermann, Leon and Campbell, Brad and Ghena, Branden and Levis, Philip and Levy, Amit and Pannuto, Pat},
  booktitle={Proceedings of the ACM SIGOPS 31st Symposium on Operating Systems Principles},
  pages={36--49},
  year={2025}
}

@inproceedings{wave,
  title={WaVe: a verifiably secure WebAssembly sandboxing runtime},
  author={Johnson, Evan and Laufer, Evan and Zhao, Zijie and Gohman, Dan and Narayan, Shravan and Savage, Stefan and Stefan, Deian and Brown, Fraser},
  booktitle={2023 IEEE Symposium on Security and Privacy (SP)},
  pages={2940--2955},
  year={2023},
  organization={IEEE}
}

@article{tarjan72,
author = {Tarjan, Robert},
title = {Depth-First Search and Linear Graph Algorithms},
journal = {SIAM Journal on Computing},
volume = {1},
number = {2},
pages = {146-160},
year = {1972},
doi = {10.1137/0201010}
}

@Inbook{Norell2009,
author="Norell, Ulf",
editor="Koopman, Pieter
and Plasmeijer, Rinus
and Swierstra, Doaitse",
title="Dependently Typed Programming in Agda",
bookTitle="Advanced Functional Programming: 6th International School, AFP 2008, Heijen, The Netherlands, May 2008, Revised Lectures",
year="2009",
publisher="Springer Berlin Heidelberg",
address="Berlin, Heidelberg",
pages="230--266",
isbn="978-3-642-04652-0",
doi="10.1007/978-3-642-04652-0_5",
url="https://doi.org/10.1007/978-3-642-04652-0_5"
}
